\pgfmathtruncatemacro\distance{1}
\newcommand{\mygray}[1]{{\color{darkgray} #1 }}
\setlist[description]{labelwidth=0pt,style=unboxed}
\setlist[enumerate]{leftmargin=*}
\newcommand\Tstrut{\rule{0pt}{2.6ex}}         
\newcommand\Bstrut{\rule[-1ex]{0pt}{0pt}}   
\newcommand*\tabtopline{\hline\Tstrut}
\newcommand*\tabbotline{\Bstrut\\\hline}
\newcommand*\tabmidline{\Bstrut\\\hline\Tstrut}
\newlength\reptabvspace
\newlength\inreptabvspace
\newlength\tabmedlinesep
\definecolor{TabGray}{gray}{0.9}
\definecolor{AlgoGray}{gray}{0.88}
\definecolor{AlgoLightGray}{gray}{0.95}
\definecolor{Gray}{gray}{0.9}
\definecolor{LightGray}{gray}{0.95}
\newcolumntype{g}{>{\columncolor{Gray}}l}
\newcommand{\tablesize}{\scriptsize}
\newcommand{\source}[1]{\linebreak\tablesize\mygray{#1}}
\colorlet{blcolor}{gray!80}
\newcommand{\mycolorbox}[2]{%
\begin{tikzpicture}[baseline=(O.base)]%
	\node[fill=#1,inner sep=1pt](O){#2\vphantom{$A^I_J$}};%
\end{tikzpicture}%
}
\newcommand{\ie}{i.e.,\ }
\newcommand{\eg}{e.g.,\ }
\newcommand{\wrt}{w.r.t.\ }
\newcommand{\cf}{cf.\ }
\newcommand{\TTypes}{\ensuremath{\mathbf{T}}\xspace} 
\newcommand{\atuple}{\ensuremath{\mathsf{t}}\xspace} 
\newcommand{\aboxtype}[1]{\ensuremath{\Amc^+_{#1}}\xspace}
\newcommand{\typeai}[2]{\ensuremath{\smash{\Tmc^{#1}_{#2}}}\xspace}
\newcommand{\ttypea}[1]{\ensuremath{\smash{\tau^\atuple_{#1}}}\xspace}
\newcommand{\tabox}[2]{\ensuremath{\Amc^{#1}_{#2}}\xspace}
\newcommand{\cond}{Condition~}
\newcommand{\rccond}{C}
\newcommand{\citethm}{Thm.~}
\newcommand{\citelem}{Lem.~}
\newcommand{\citecor}{Cor.~}
\newcommand{\algofont}[1]{\texttt{#1}}
\newcommand{\algoTesting}{RSATISFIABLE}
\newcommand{\algoATMRecursion}{ATMRECURSION}
\newcommand{\algoATMFinal}{ATMFINAL}
\newcommand{\fnow}{{curr}}
\newcommand{\fnext}{{next}}
\newcommand{\fs}{s}
\newcommand{\dllhhornQUnsat}[1]{\ensuremath{\llbracket \bot\rrbracket}\xspace}
\newcommand{\dllhhornPerfRef}[2]{\ensuremath{\llbracket #1\rrbracket}\xspace}
\newcommand{\asymbol}{\ensuremath{X}\xspace}
\newcommand{\vtwo}{\ensuremath{y}\xspace}
\newcommand{\tone}{\ensuremath{s}\xspace}
\newcommand{\ttwo}{\ensuremath{t}\xspace}
\newcommand{\amonoid}{\ensuremath{M}\xspace}
\newcommand{\domain}{\ensuremath{\mathsf{domain}}\xspace}
\newcommand{\range}{\ensuremath{\mathsf{range}}\xspace}
\newcommand{\pa}[1]{\ensuremath{#1^{\smash{\mathrm{pa}}}}\xspace}
\newcommand{\pba}[1]{\ensuremath{#1^{\smash{\mathrm{ba}}}}\xspace}
\newcommand{\ahom}{\ensuremath{\pi}\xspace}
\newcommand{\ahomtwo}{\ensuremath{\pi'}\xspace}
\newcommand{\changed}[1]{{\color{black}#1}}
\newcommand{\ctwo}[1]{{\color{black}#1}}
\renewcommand{\iota}{\lambda}
\renewcommand{\omega}{\nu}
\newcommand{\eps}{\epsilon}
\newcommand{\true}{\ensuremath{\mathit{true}}\xspace}
\newcommand{\false}{\ensuremath{\mathit{false}}\xspace}
\newcommand{\atopform}{\ensuremath{f}\xspace}
\newcommand{\atype}{\ensuremath{T}\xspace}
\newcommand{\atypeset}[1][\Pmc]{\ensuremath{\mathsf{Typ}(#1)}\xspace}
\newcommand{\Clo}{\ensuremath{\mathsf{Clo}}\xspace}
\newcommand{\atmfut}[2]{\ensuremath{\mathrm{Fut}_{#2}}\xspace}
\newcommand{\afformset}{\ensuremath{\mathcal{F}}\xspace}
\newcommand{\apformset}{\ensuremath{\mathcal{P}}\xspace}
\newcommand{\C}[1]{\ensuremath{[#1]}\xspace}
\newcommand{\adlsignature}{\ensuremath{\Sigma}\xspace}
\newcommand{\aname}{\ensuremath{X}\xspace}
\newcommand{\indone}{\ensuremath{a}\xspace}
\newcommand{\indtwo}{\ensuremath{b}\xspace}
\newcommand{\el}{\ensuremath{e}\xspace}
\newcommand{\eladd}{\ensuremath{d}\xspace}
\newcommand{\elone}{\ensuremath{d}\xspace}
\newcommand{\eltwo}{\ensuremath{e}\xspace}
\newcommand{\aaxiom}{\ensuremath{\alpha}\xspace}
\newcommand{\aont}{\ensuremath{\mathcal{O}}\xspace}
\newcommand{\afb}{\ensuremath{\mathcal{A}}\xspace}
\newcommand{\akb}{\ensuremath{\mathcal{K}}\xspace}
\newcommand{\aint}{\ensuremath{\mathcal{I}}\xspace}
\newcommand{\adom}[1][]{\ensuremath{\Delta^{{#1}}}\xspace}
\newcommand{\un}{\ensuremath{\mathsf{anon}}\xspace}
\newcommand{\apath}{\ensuremath{\varrho}\xspace}
\newcommand{\apathtwo}{\ensuremath{\sigma}\xspace}
\newcommand{\ontsub}[1][\aont]{\ensuremath{\mathbf{S}(\aont)}\xspace}
\newcommand{\DB}{\ensuremath{\mathsf{DB}}\xspace}
\newcommand{\TDB}{\ensuremath{\mathsf{TDB}}\xspace}
\newcommand{\apropset}{\ensuremath{\mathcal{P}}\xspace}
\newcommand{\altlsignature}{\apropset}
\newcommand{\altlform}{\ensuremath{\varphi}\xspace}
\newcommand{\altlformtwo}{\ensuremath{\psi}\xspace}
\newcommand{\altlint}{\ensuremath{\Wmf}\xspace}
\newcommand{\aquery}{\ensuremath{\varphi}\xspace}
\newcommand{\aquerytwo}{\ensuremath{\psi}\xspace}
\newcommand{\acq}{\aquery}
\newcommand{\acqtwo}{\aquerytwo}
\newcommand{\aatom}{\ensuremath{\aaxiom}\xspace}
\newcommand{\Q}{\ensuremath{\mathcal{Q}}\xspace}
\newcommand{\cqinst}[1]{\ensuremath{\afb_{#1}}\xspace} 
\newcommand{\Var}{ \NV}
\newcommand{\Ind}{\NI}
\newcommand{\afbs}{\ensuremath{\mathfrak{A}}\xspace}
\newcommand{\atkb}{\ensuremath{\mathcal{K}}\xspace}
\newcommand{\atcq}{\ensuremath{\Phi}\xspace}
\newcommand{\atcqtwo}{\ensuremath{\Psi}\xspace}
\newcommand{\atint}{\ensuremath{\mathfrak{I}}\xspace}
\newcommand{\as}{\ensuremath{\mathcal{W}}\xspace}
\newcommand{\ax}{\ensuremath{W}\xspace}
\newcommand{\pv}{{\ensuremath{\{p_1,\dots,p_m\}}}\xspace}
\newcommand{\acqalpha}{\ensuremath{\acq}\xspace}
\newcommand{\acqalphatwo}{\ensuremath{\acq'}\xspace}
\newcommand{\aalpha}{\acqalpha}
\newcommand{\tcqcqs}[1]{\ensuremath{\mathcal{Q}_{#1}}\xspace} 
\newcommand{\awit}{\ensuremath{\mathcal{B}\rig}\xspace}
\newcommand{\wits}[2]{\ensuremath{\mathsf{Wit}(#2)}\xspace}
\newcommand{\abcset}{\ensuremath{{\mathcal{B}\rig}}\xspace}
\newcommand{\rigcons}[1]{\ensuremath{\Amc\rig_{#1}}\xspace}
\newcommand{\NIA}{\ensuremath{\NI^{\mathsf{aux}}}\xspace}
\newcommand{\NIT}{\ensuremath{\NI^{\mathsf{tree}}}\xspace}
\newcommand{\NITm}{\ensuremath{\NI^{\mathsf{tree}-}}\xspace}
\newcommand{\NIp}{\ensuremath{\NI^*}\xspace}
\newcommand{\AR}{\ensuremath{\Amc\rig}\xspace}
\newcommand{\QR}{\ensuremath{\Q^+}\xspace} %
\newcommand{\QRn}{\ensuremath{\Q^-}\xspace} %
\newcommand{\RF}[1][]{\ensuremath{\Smc_{#1}}\xspace}
\newcommand{\KR}[1][]{\ensuremath{\akb_{\mathsf{rc}}^{#1}}\xspace}
\newcommand{\RC}[2]{\ensuremath{{#1}^{#2^\mathsf{R}}}\xspace}
\newcommand{\adomp}[2][]{\ensuremath{\adom[#1]_{\smash{#2}}}\xspace}
\newcommand{\aux}{\ensuremath{\mathsf{aux}}\xspace}
\newcommand{\tree}{\ensuremath{\mathsf{tree}}\xspace}
\newcommand{\nam}{\ensuremath{\mathsf{n}}\xspace}
\newcommand{\ael}[2][]{\ensuremath{a_{\smash{#2}}^{\smash{#1}}}\xspace}
\newcommand{\aiel}[1]{\ensuremath{a^{\smash i}_{\smash{#1}}}\xspace}
\newcommand{\uel}[2][]{\ensuremath{u_{\smash{#2}}^{\smash{#1}}}\xspace}
\newcommand{\uiel}[1]{\ensuremath{u^{\smash i}_{\smash{#1}}}\xspace}
\newcommand{\ujel}[1]{\ensuremath{u^{\smash j}_{\smash{#1}}}\xspace}
\newcommand{\aautom}[1][]{\ensuremath{\mathfrak{M}_{#1}}\xspace}
\newcommand{\atm}[1][]{\ensuremath{\mathfrak{M}_{#1}}\xspace}
\newcommand{\automQ}{\ensuremath{Q}\xspace}
\newcommand{\automA}{\ensuremath{\Sigma}\xspace}
\newcommand{\automT}{\ensuremath{\Delta}\xspace}
\newcommand{\automqs}{\ensuremath{q_0}\xspace}
\newcommand{\automF}{\ensuremath{F}\xspace}
\newcommand{\rigctypes}{\ensuremath{\mathbf{F}}\xspace}
\newcommand{\rigctype}{\ensuremath{F}\xspace}
\newcommand{\tdboa}{\TDB\xspace}
\newcommand{\tdbfbs}{\tdb{\afbs}}
\newcommand{\tdb}[1]{\ensuremath{\TDB(#1)}\xspace}
\newcommand{\dboa}{{\DB}\xspace}
\newcommand{\db}[1]{\ensuremath{\DB(#1)}\xspace}%
\newcommand{\B}{\ensuremath{\mathsf{B}}\xspace}
\newcommand{\R}{\ensuremath{\mathsf{R}}\xspace}
\newcommand{\Bdb}{\ensuremath{\mathsf{B^\dboa}}\xspace}
\newcommand{\Rdb}{\ensuremath{\mathsf{R^\dboa}}\xspace}
\newcommand{\KRis}[1][i]{\ensuremath{\widetilde{\Kmc}_{\mathsf{rc}}^{#1}}\xspace}
\newcommand{\KRisp}[1][i]{\ensuremath{\widetilde{\Kmc}_{\mathsf{rc,pos}}^{#1}}\xspace}
\newcommand{\AKRs}{\ensuremath{{\widetilde{\Amc}_{\mathsf{rc}}}}\xspace}
\newcommand{\Bphi}{\ensuremath{\Bmc}\xspace}
\newcommand{\QRs}{\ensuremath{\widetilde{\Q}^+}\xspace} %
\newcommand{\QRsn}{\ensuremath{\widetilde{\Q}^-}\xspace} %
\newcommand{\ARs}{\ensuremath{\widetilde{\Amc}\rig}\xspace}
\newcommand{\NBC}{\ensuremath{r}\xspace}
\newcommand{\RFs}[1][]{\ensuremath{\widetilde{\Smc}_{#1}}\xspace}
\newcommand{\RFaux}{\ensuremath{\RFs[\auxp]}\xspace}
\newcommand{\RFphi}{\ensuremath{\RFs[\atcq]}\xspace}
\newcommand{\RFother}{\ensuremath{\RFs[\mathsf{o}]}\xspace}
\newcommand{\auxp}{\ensuremath{\mathsf{aux}}\xspace}
\newcommand{\prefo}[1][\acq]{\ensuremath{\llbracket #1\rrbracket}\xspace}
\newcommand{\prefth}[2][\acq]{\ensuremath{\llbracket #1|\ARs_{#2}\rrbracket}\xspace}
\newcommand{\NIP}{\ensuremath{\NI^{\mathsf{pro}}}\xspace}
\newcommand{\rewQUnsat}[1][i]{\ensuremath{\llbracket\bot|\AKRs\rrbracket(#1)}\xspace}
\newcommand{\rewPRef}[2][i]{\ensuremath{\llbracket #2|\AKRs\rrbracket(#1)}\xspace}
\newcommand{\filter}{\ensuremath{\mathsf{filter}}\xspace}
\newcommand{\repn}[1]{\ensuremath{\llbracket #1\rrbracket}\xspace}
\newcommand{\repo}[1]{\ensuremath{\repn{#1}_\mathsf{rep}}\xspace}
\newcommand{\rep}[2][i]{\ensuremath{\repo{#2}(#1)}\xspace}
\newcommand{\repone}[2][i]{\ensuremath{\repn{#2}_1(#1)}\xspace}
\newcommand{\reptwo}[2][i]{\ensuremath{\repn{#2}_2(#1)}\xspace}
\newcommand{\repthree}[2][i]{\ensuremath{\repn{#2}_3(#1)}\xspace}
\newcommand{\rsat}[1][\ax]{\ensuremath{\mathsf{rSat}_{#1}}\xspace}
\newcommand{\cons}{\ensuremath{f_{\text{\rccond1}}}\xspace}
\newcommand{\answit}{\ensuremath{f_{\text{\rccond5}}}\xspace}
\newcommand{\ansx}{\ensuremath{f_{\text{\rccond2}}}\xspace}
\newcommand{\ground}{\ensuremath{\gamma}\xspace}
\newcommand{\Amc}{\ensuremath{\mathcal{A}}\xspace}
\newcommand{\Bmc}{\ensuremath{\mathcal{B}}\xspace}
\newcommand{\Cmc}{\ensuremath{\mathcal{C}}\xspace}
\newcommand{\Dmc}{\ensuremath{\mathcal{D}}\xspace}
\newcommand{\Fmc}{\ensuremath{\mathcal{F}}\xspace}
\newcommand{\Hmc}{\ensuremath{\mathcal{H}}\xspace}
\newcommand{\Imc}{\ensuremath{\mathcal{I}}\xspace}
\newcommand{\Jmc}{\ensuremath{\mathcal{J}}\xspace}
\newcommand{\Kmc}{\ensuremath{\mathcal{K}}\xspace}
\newcommand{\Mmc}{\ensuremath{\mathcal{M}}\xspace}
\newcommand{\Omc}{\ensuremath{\mathcal{O}}\xspace}
\newcommand{\Pmc}{\ensuremath{\mathcal{P}}\xspace}
\newcommand{\Qmc}{\ensuremath{\mathcal{Q}}\xspace}
\newcommand{\Rmc}{\ensuremath{\mathcal{R}}\xspace}
\newcommand{\Smc}{\ensuremath{\mathcal{S}}\xspace}
\newcommand{\Tmc}{\ensuremath{\mathfrak{T}}\xspace}
\newcommand{\Vmc}{\ensuremath{\mathcal{V}}\xspace}
\newcommand{\Ymc}{\ensuremath{\mathcal{Y}}\xspace}
\newcommand{\Wmf}{\ensuremath{\mathfrak{W}}\xspace}
\newcommand{\Xbb}{\ensuremath{\mathbf{X}}\xspace}
\newcommand{\Asf}{\ensuremath{\mathsf{A}}\xspace}
\newcommand{\EL}{\ensuremath{\mathcal{E\!L}}\xspace}
\newcommand{\ALC}{\ensuremath{\mathcal{ALC}}\xspace}
\newcommand{\ALCI}{\ensuremath{\mathcal{ALCI}}\xspace}
\newcommand{\SHQ}{\ensuremath{\mathcal{SHQ}}\xspace}
\newcommand{\SHOQ}{\ensuremath{\mathcal{SHOQ}}\xspace}
\newcommand{\DLLite}{\textit{DL-Lite}\xspace}
\newcommand{\DLLitehcore}{\ensuremath{\textit{DL-Lite}^\Hmc_{\textit{core}}}\xspace}
\newcommand{\DLLitecore}{\ensuremath{\textit{DL-Lite}_{\textit{core}}}\xspace}
\newcommand{\DLLitebool}{\ensuremath{\textit{DL-Lite}_{\textit{bool}}}\xspace}
\newcommand{\DLLitehbool}{\ensuremath{\textit{DL-Lite}^\Hmc_{\textit{bool}}}\xspace}
\newcommand{\DLLitehkrom}{\ensuremath{\textit{DL-Lite}^\Hmc_{\textit{krom}}}\xspace}
\newcommand{\DLLitekrom}{\ensuremath{\textit{DL-Lite}_{\textit{krom}}}\xspace}
\newcommand{\DLLitehhorn}{\ensuremath{\textit{DL-Lite}^\Hmc_{\textit{horn}}}\xspace}
\newcommand{\DLLitehorn}{\ensuremath{\textit{DL-Lite}_{\textit{horn}}}\xspace}
\newcommand{\ExpTime}{\ensuremath{\textsc{ExpTime}}\xspace}
\newcommand{\NExpTime}{\ensuremath{\textsc{NExpTime}}\xspace}
\newcommand{\TwoExpTime}{\ensuremath{\textsc{2-ExpTime}}\xspace}
\newcommand{\NP}{\ensuremath{\textsc{NP}}\xspace}
\newcommand{\coNP}{\ensuremath{\textsc{co-NP}}\xspace}
\newcommand{\coNExpTime}{\ensuremath{\textsc{co-NExpTime}}\xspace}
\newcommand{\PTime}{\ensuremath{\textsc{P}}\xspace}
\newcommand{\PSpace}{\ensuremath{\textsc{PSpace}}\xspace}
\newcommand{\ExpSpace}{\ensuremath{\textsc{ExpSpace}}\xspace}
\newcommand{\ACzero}{\ensuremath{\textsc{AC}^0}\xspace}
\newcommand{\NCone}{\ensuremath{\textsc{NC}^1}\xspace}
\newcommand{\NLogSpace}{\ensuremath{\textsc{NLogSpace}}\xspace}
\newcommand{\ALogTime}{\ensuremath{\textsc{ALogTime}}\xspace}
\newcommand{\DLogTime}{\ensuremath{\textsc{LogTime}}\xspace}
\newcommand{\LogTime}{\ensuremath{\textsc{LogTime}}\xspace}
\newcommand{\Next}{\ensuremath{\ocircle_F}\xspace}
\newcommand{\Previous}{\ensuremath{\ocircle_P}\xspace}
\DeclareMathOperator*{\Since}{\mathcal{S}}
\DeclareMathOperator*{\Until}{\mathcal{U}}
\newcommand{\Diamondf}{\ensuremath{\Diamond_F}\xspace}
\newcommand{\Boxf}{\ensuremath{\Box_F}\xspace}
\newcommand{\Diamondm}{\ensuremath{\Diamond_P}\xspace}
\newcommand{\Boxm}{\ensuremath{\Box_P}\xspace}
\newcommand{\rig}{^{\mathsf{R}}}
\newcommand{\fle}{^{\mathsf{F}}}
\newcommand{\NC}{\ensuremath{\mathbf{C}}\xspace}
\newcommand{\NRC}{\ensuremath{\NC\rig}\xspace}
\newcommand{\NFC}{\ensuremath{\NC\fle}\xspace}
\newcommand{\NRM}{\ensuremath{\mathbf{R}}\xspace}
\newcommand{\NR}{\ensuremath{\mathbf{P}}\xspace}
\newcommand{\NRR}{\ensuremath{\NR\rig}\xspace}
\newcommand{\NFR}{\ensuremath{\NR\fle}\xspace}
\newcommand{\NRRM}{\ensuremath{\NRM\rig}\xspace}
\newcommand{\NFRM}{\ensuremath{\NRM\fle}\xspace}
\newcommand{\NI}{\ensuremath{\mathbf{I}}\xspace}
\newcommand{\NV}{\ensuremath{\mathbf{V}}\xspace}
\newcommand{\NT}{\ensuremath{\mathbf{T}}\xspace}
\newcommand{\BC}{\ensuremath{\mathbf{B}}\xspace}
\newcommand{\BCr}{\ensuremath{\BC\rig}\xspace}
\newcommand{\BCf}{\ensuremath{\BC\fle}\xspace}
\newcommand{\AS}{\ensuremath{\mathbf{A}}\xspace}
\newcommand{\ASr}{\ensuremath{\AS\rig}\xspace}
\newcommand{\ASf}{\ensuremath{\AS\fle}\xspace}
\DeclareMathSymbol{\Gamma}{\mathalpha}{operators}{0}
\DeclareMathSymbol{\Delta}{\mathalpha}{operators}{1}
\DeclareMathSymbol{\Theta}{\mathalpha}{operators}{2}
\DeclareMathSymbol{\Lambda}{\mathalpha}{operators}{3}
\DeclareMathSymbol{\Xi}{\mathalpha}{operators}{4}
\DeclareMathSymbol{\Pi}{\mathalpha}{operators}{5}
\DeclareMathSymbol{\Sigma}{\mathalpha}{operators}{6}
\DeclareMathSymbol{\Upsilon}{\mathalpha}{operators}{7}
\DeclareMathSymbol{\Phi}{\mathalpha}{operators}{8}
\DeclareMathSymbol{\Psi}{\mathalpha}{operators}{9}
\DeclareMathSymbol{\Omega}{\mathalpha}{operators}{10}
\DeclareMathAlphabet{\mathcal}{OMS}{lmsy}{m}{n}
\DeclareMathAlphabet{\mathbcal}{OMS}{lmsy}{b}{n}
\theoremstyle{plain}
\newtheorem{theorem}{Theorem}[section]
\newtheorem{lemma}[theorem]{Lemma}
\newtheorem{corollary}[theorem]{Corollary}
\newtheorem{fact}[theorem]{Fact}
\newtheorem{claim}[theorem]{Claim}
\theoremstyle{definition}
\newtheorem{definition}[theorem]{Definition}
\newtheorem{example}[theorem]{Example}
\begin{document}

\title{Temporal Conjunctive Query Answering in the Extended \DLLite Family
}


\author{Stefan Borgwardt    \and
        Veronika Thost
}


\institute{Stefan Borgwardt \at
Institute of Theoretical Computer Science\\
Technische Universit\"at Dresden, Germany\\
\email{{stefan.borgwardt@tu-dresden.de}}
           \and
           Veronika Thost \at
           MIT-IBM Watson AI Lab\\
           IBM Research\\
           \email{{veronika.thost@ibm.com}}
}

\date{Received: date / Accepted: date}


\maketitle
\begin{abstract}
	Ontology-based query answering (OBQA) augments classical query answering in databases by domain knowledge encoded in an ontology. Systems for OBQA use the ontological knowledge to infer new information that is not explicitly given in the data. Moreover, they usually employ the open-world assumption, which means that knowledge \ctwo{that is not stated explicitly in the data and that is not inferred} is not assumed to be true or false. Classical OBQA however considers only a snapshot of the data, which means that information about the temporal evolution of the data is not used for reasoning and hence lost.
	
	We investigate temporal conjunctive queries (TCQs) that allow to access temporal data through classical ontologies. In particular, we study combined and data complexity of TCQ entailment for ontologies written in description logics from the extended \DLLite family. Many of these logics allow for efficient reasoning in the atemporal setting and are successfully applied in practice. 
	We show comprehensive complexity results for temporal reasoning with these logics.
	\keywords{Description logics \and Query answering \and Temporal queries \and DL-Lite}
\end{abstract}




\section{Introduction}\label{ch:introduction}

Ontologies play a central role in various applications: by linking data from
heterogeneous sources to high-level concepts and relations, they are used for
automated data integration and processing.
In particular, queries formulated in the abstract vocabulary of the ontology can
then be answered over all the linked datasets.
We focus on lightweight description logics as ontology languages, which are known to allow for efficient reasoning in the classical setting \ctwo{\cite{dllite05,dllrelations,KLTWZ-KR10:dll-combined,Ontop16}} and are successfully applied in practice \changed{\cite{KHS+-JWS17,KMM+-JWS17}}.
%
Well-known medical domain ontologies like 
GALEN%
\footnote{\url{http://www.co-ode.org/ontologies/galen}}
may, for example, capture the facts
that the varicella zoster virus~(VZV) is a virus,
that chickenpox is a VZV infection, and that a negative allergy test implies 
that no allergies are present,
by terminological axioms called \emph{concept inclusions~(CIs)}:
$$\mathsf{VZV}\sqsubseteq\mathsf{Virus},\
\mathsf{Chickenpox}\sqsubseteq\mathsf{VZVInfection},\
\mathsf{NegAllergyTest}\sqsubseteq\lnot\exists\mathsf{AllergyTo}.$$
Here, $\mathsf{Virus}$ is a \emph{concept name} that represents the set of all
viruses, and $\mathsf{AllergyTo}$ is a \emph{role name} that represents a binary
relation connecting patients to allergenes; $\exists\mathsf{AllergyTo}$
refers to the domain of this relation, \ie all patients with allergies.
A possible data source storing patient data is depicted in Figure~\ref{fig:patient-data}.
\begin{figure}%
	\centering
  \resizebox*{\textwidth}{!}{%
	\begin{tabular}{|c|c|}
		\hline &\\[-1.0em]
		PID&Name\\\hline &\\[-1.0em]
		1&Ann \\
		2&Bob \\
		3&Chris \\
		\hline
	\end{tabular}\hspace*{1pt}
	\begin{tabular}{|c|c|c|}
		\hline &&\\[-1.0em]
		PID&AllergyTest&Date\\\hline &&\\[-1.0em]
		1&neg&$16.01.2011$ \\
		2&pos&$06.01.1970$\\
		3&neg&$01.06.2015$ \\
		\hline
	\end{tabular}\hspace*{1pt}
	\begin{tabular}{|c|c|c|}
		\hline &&\\[-1.0em]
		PID&Finding&Date\\\hline &&\\[-1.0em]
		1&Chickenpox&$13.08.2007$\\
		2&VZV-Infection&$22.01.2010$\\
		3&
		VZV-Infection&$01.11.2011$\\
		\hline
	\end{tabular}}
	\caption{Example patient data}
	\label{fig:patient-data}
\end{figure}%
The data is linked to the ontology by mappings~\cite{PCDLLR-JoDS08:linking};
in our example, the tuple $(1,\text{Chickenpox},13.08.2007)$ can be encoded into the facts $\mathsf{HasFinding}(1,\texttt{x})$ and
$\mathsf{Chickenpox}(\texttt{x})$, \changed{where \texttt{x} is a fresh symbol representing the finding and $\mathsf{Chickenpox}$ is the type of this finding, which may be contained in a fact base $\afb_{13.08.2007}$ (i.e., the individual time points are days).}


Ontology-based query answering (OBQA) can then assist in finding appropriate participants for a clinical study,
\ctwo{by formulating the eligibility criteria} as queries over the mapped patient data.
The following are examples of inclusion and exclusion conditions for an existing clinical trial:%
\footnote{\url{https://clinicaltrials.gov/ct2/show/NCT01953900}}
\begin{itemize} 
	\item The patient should have been previously infected with VZV or previously
	vaccinated with VZV vaccine.
	\item The patient should not be allergic to VZV vaccine.
\end{itemize} 
Considering the first condition, OBQA augments standard query answering (\eg in
SQL) in that not only Bob and Chris, but also Ann would be considered as an
appropriate candidate.
However, in standard OBQA, we can neither express negation (\emph{not}) nor relate
several points in time (\emph{previously}), both of which are needed to
faithfully represent the given criteria.
In this article, we study temporal OBQA and allow negation in our query
language.


We consider the \emph{temporal conjunctive queries}~(TCQs) proposed by~\cite{BaBL-CADE13,BaBL-JWS15}, which combine conjunctive queries (CQs)
via the operators of
propositional linear temporal logic LTL~\cite{Pnueli77}.
\ctwo{The flow of time is represented by the sequence of natural numbers, \ie every \emph{point in time} (also \emph{time point} or \emph{moment}) is represented by one number.}
For example, the above criteria can be specified via the following TCQ,
to obtain all eligible patients~$x$:
\begin{align*}
& 
\left(\Diamondm \big(\exists 
y.\mathsf{HasFinding}(x,y)\land\mathsf{VZVInfection}(y)\big)\lor{} \right. 
\\&\left.\phantom{(}
\Diamondm \big(\exists 
y.\mathsf{VaccinatedWith}(x,y)\land\mathsf{VZVVaccine}(y)\big)\right)\land{}\\& 
\lnot \big(\exists 
y.\mathsf{AllergyTo}(x,y)\land\mathsf{VZVVaccine}(y)\big).
\end{align*}
We here use the temporal operator $\Diamondm$ (\enquote{at some time in the past}) and
consider the symbols $\mathsf{AllergyTo}$ and $\mathsf{VZVVaccine}$ to be
\emph{rigid}, which means that their interpretation does not change over time.
Hence, we assume someone having an allergy to VZV vaccine to
have this allergy for his or her life.

We focus on the problem of evaluating a TCQ w.r.t.\ a \emph{temporal knowledge
base} (TKB), which contains the domain ontology and a finite \emph{sequence}
of fact bases. Each fact base contains the data associated to a specific
point in time---from the past until the \emph{current time point} $n$ (``now'').
In contrast, the domain knowledge is assumed to hold \emph{globally}, meaning at every point in time.
In this setting, the information within the ontology and the fact bases does not
explicitly refer to the temporal dimension, but is written in a
\emph{classical} (atemporal) description logic (DL); only the query is temporal.

\subsection{\changed{Related Work}}


%
\begin{figure}[tb]
 \centering
 \begin{tikzpicture}[level distance=6ex]
	 \tikzstyle{level 1}=[sibling distance=10em]
	 \tikzstyle{level 2}=[sibling distance=8em]
	 \node (alci) {$\mathcal{ALC[H]I}$}
		 child{node (alc) {$\ALC[\Hmc]$}
			 child{node (el) {$\EL[\Hmc]$}}
		 }
		 child{node (dllb) {$\DLLite^{[\Hmc]}_\textit{bool}$}
			 child{node (dllh) {$\DLLite^{[\Hmc]}_\textit{horn}$}}
			 child{node (dllk) {$\DLLite^{[\Hmc]}_\textit{krom}$}}
		 };
	 \node[below=9ex of dllb] (dllc) {$\DLLite^{[\Hmc]}_\textit{core}$};
	 \draw
		 (dllc) -- (dllk)
		 (dllc) -- (dllh);
 \end{tikzpicture}
 \caption{Hierarchy of expressivity of several description logics considered in this article.
 Each logic can be augmented with role inclusions, denoted by~\Hmc.}
 \label{fig:dls}
\end{figure}
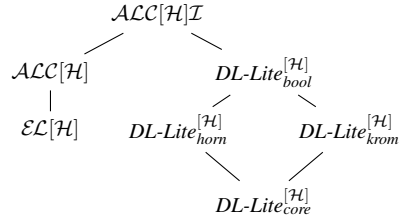

There are various ways to represent time in DL modeling; for example, by considering time points as concrete datatypes~\cite{BaH-IJCAI91:cds,Lutz-IJCAI01:concdomintervals} or formalisms inspired
by action logics~\cite{AF-JAIR98:actions,HCMGMF-JAIR13:actions}. Good overviews of different such approaches are provided in~\cite{AF-AMAI00:survey,AF-Handbook05}. 
We focus on \emph{temporal description logics} that are two-dimensional combinations of standard temporal logics with DLs, which is nowadays the common approach, 
though there is no formal definition of what a temporal description logic should look like.%
\footnote{Temporal description logics represent special kinds of combinations of DLs with modal logics~\cite{GKWZ03:manydimmodal}.} 
Such combinations still offer a wealth of degrees of freedom; for instance \wrt the base DL and temporal logic considered.
Figure~\ref{fig:dls} depicts various description logics that are relevant for this article, and their relations in terms of expressivity.
Earlier works investigate temporal versions of standard reasoning problems \wrt combined complexity and target applications such as terminologies with temporal aspects or temporal conceptual modeling~\cite{LuWZ-TIME08:survey}. In contrast, most recent investigations focus on temporal OBQA with the goal of accessing temporal data and also consider data complexity~\cite{AKK+-TIME17}. 

Schild proposed the first combination of a DL and a point-based temporal logic based on a two-dimensional semantics~\cite{Schild93}.
Subsequent studies have focused on classifying different combinations of LTL and (extensions of) \ALC according to expressivity and complexity, with results mostly in the range of \ExpSpace, if rigid roles are disregarded~\cite{GKWZ03:manydimmodal,AF-Handbook05, dlhandbook07}.
An important outcome of that research is the observation that rigid roles and other forms of temporal roles usually lead to undecidability.
Because decidability represents one major feature of DLs, research since then has been dedicated to the study of decidable temporal DLs. Lower and, in particular, tractable complexities are obtained by restricting the temporal logic or the DL component. 


\ctwo{In contrast to TCQs (which are temporalized queries), many of these logics support temporalizing either concepts, CIs, or facts.
Temporalized concepts allow to describe the temporal development of individuals, such as the collection of individuals that were vaccinated and, at the next ($\Next$) time point, had some allergic reaction: $$\exists\mathsf{VaccinatedWith}\sqcap\Next \exists\mathsf{HasFinding}.\mathsf{AllergicReaction}.%
\footnote{\ALC allows to qualify the existential restriction, \eg to restrict the range of the relation $\mathsf{HasFinding}$. Most logics of the \DLLite family can only express qualified existential restrictions on the right-hand side of CIs.}
$$
Temporalized facts can express, for example, that the patient with ID~$1$ did not have an allergic reaction since ($\Since$) the last vaccination: $$\big((\lnot\exists\mathsf{HasFinding}.\mathsf{AllergicReaction})(1)\big)\Since\big((\exists\mathsf{VaccinatedWith})(1)\big).$$
Temporalized CIs describe concept inclusions that only hold temporarily (instead of globally, \ie at all time points), \eg to describe changing policies using the operators \enquote{always in the past} ($\Boxm$) and \enquote{always in the future} ($\Boxf$):
\begin{align*}
	&\Boxm\big(\exists\mathsf{HasFinding}.\mathsf{Pneumonia}\sqsubseteq\exists\mathsf{Recommend}.\mathsf{Fluoroquinolone}\big)\land{}\\
	&\Next\Boxf\big(\exists\mathsf{HasFinding}.\mathsf{Pneumonia}\sqsubseteq\lnot\exists\mathsf{Recommend}.\mathsf{Fluoroquinolone}\big).
\end{align*}
}%
Without rigid symbols, temporalizing \ALC concepts or CIs does not lead to an increase in complexity from the \ExpTime complexity of satisfiability in \ALC.
%
However, rigid names lead to \ExpSpace-completeness~\cite{LuWZ-TIME08:survey}.


\ctwo{The logics} \ALC-LTL~\cite{BaGL-TOCL12}, \SHOQ-LTL \cite{phdmarcel}, \EL-LTL \cite{BoT-IJCAI15}, and \DLLite-LTL \cite{AKLWZ-TIME07:temporalising,phdthost} allow for combining DL axioms \ctwo{(CIs and facts)} via LTL operators, but no temporal concepts.
%
The setting where the concept inclusions are required to hold globally represents a simplified variant of the TCQ answering scenario, since assertions can be seen as simple CQs (without variables) and the fact bases are empty.
Even more, negated CQs can simulate non-global CIs in \ALC, one main feature of \ALC-LTL.
The satisfiability problem in \ALC-LTL has the same complexity as in the atemporal case, but rigid concepts and roles lead to \NExpTime and \TwoExpTime-completeness, respectively, and thus to a considerable increase in complexity. This increase can be overcome, at least for rigid concepts, if CIs are only allowed to occur globally \cite{BaGL-TOCL12}.
These results have been extended to \SHOQ-LTL~\cite{phdmarcel}.
Rigid names lead to \NExpTime-completeness even in \EL-LTL and \DLLite-LTL \cite{BoT-IJCAI15,phdthost}.

In another line of work, tractable DLs in combination with (subsets of) LTL have been investigated \cite{AKLWZ-TIME07:temporalising,AKRZ-TOCL14:cookbook}.
The KB consistency problem is investigated for temporal extensions of \DLLite that allow for temporalizing concepts, including rigid roles, with
%
%
positive results including containment in \NLogSpace or \PTime, obtained by reduction to 
fragments of propositional temporal logic, but only for formalisms strongly constrained on the temporal side \cite{AKRZ-TOCL14:cookbook}. In most cases, more variety in that direction leads to \NP-completeness and, if more expressive role expressions are allowed (e.g., arbitrary role inclusions), even to undecidability \cite{AKRZ-TOCL14:cookbook}.
%
For both \DLLite and \EL, the integration of temporalized concepts, axioms, and rigid roles yields complexities such as \PSpace for \DLLitekrom; \ExpSpace for \DLLitehorn and \DLLitebool; and even undecidability for \EL \cite{AKLWZ-TIME07:temporalising}.
%
These results are particularly interesting because, in the atemporal case, complexity results for \DLLitekrom are usually worse than corresponding ones for \DLLitehorn \cite{dllrelations}.

There are also recent works that temporalize concepts and/or axioms with temporal logic operators more expressive than LTL, such as from computation tree logic CTL \cite{GuJuL-ECAI12:ctl,GuJuS-KR14:marriage,GuJuS-IJCAI15:tdls-tboxes}
and metric temporal logics \cite{GuJO-ECAI16,BBK+-FroCoS17,Thost-KR18}.
However, this mostly results in quite high complexities and is out of the scope of our work.

Query answering with the goal of retrieving data 
is the focus of recent research in DLs in general, and this is reflected in the latest explorations on reasoning about temporal knowledge. The different works can be classified \wrt the considered ontologies, depending on whether they are also temporal 
or written in a classical DL. 
The former approaches offer more expressivity but, on the other hand, tend to lead to higher reasoning complexities. For this reason, they are usually studied w.r.t.\ lightweight DLs.


Different temporal extensions of \DLLite have been investigated with
the goal of first-order rewritability results for temporal OBQA \cite{AKKRWZ-IJCAI15:omtqs}. The queries are arbitrary combinations of temporalized concept and role atoms using the operators of first-order temporal logic and thus very expressive, which is why epistemic semantics is employed.\footnote{Queries with negation are not tractable, even in the atemporal setting~\cite{GIKK-JWS15}.} 
Although the setting studied is very restricted (roles are disregarded, only a subset of LTL is applied, the ABoxes contain a single individual name only), the results can, amongst others, be applied to show first-order rewritability of instance query answering in temporal \DLLite.
The rewritings are constructed based on temporal canonical interpretations for TKBs in these logics.
%
Also for TKBs based on \EL and allowing for both \Next and \Previous there are canonical models, which can be used to show that the satisfiability problem is tractable w.r.t.\ data complexity if rigid roles are disallowed \cite{GuJuK-IJCAI16:tel}. The latter paper also identifies a certain \emph{periodicity} of the ontology to ensure decidability and proposes several acyclicity notions for ontologies in temporal \EL that yield tractable combined and data complexity.
%

Research in the second direction (temporal query answering with classical ontologies) has been first considered in a general way in 
\cite{GuKla-RR12:approach} for expressive operators on both the temporal and the DL side.
A particular result shows first-order rewritability for \DLLitecore, but this is achieved by considering epistemic semantics for CQs \cite{Klarman-LPAR13:sql,KlaM-RR14:sql}.
{Temporal conjunctive queries} have first been studied for \ALC~\cite{BaBL-CADE13} and later for expressive extensions such as \SHOQ~\cite{BaBL-JWS15,BaBL-AI15}, focusing on the complexity of the entailment problem under open-world semantics, resulting in very high combined complexities. Moreover, for many of the considered DLs this problem is \coNP-complete in data complexity, as in the atemporal case, even in the presence of rigid concept names. 
Subsequent works have investigated TCQ answering and entailment w.r.t.\ the most prominent lightweight description logics \cite{BoLT-JWS15,THO-DL15,BoT-IJCAI15,BoT-GCAI15,Thos-WSP17}.
%
%
There are also some works on non-standard reasoning problems for TCQs that are different from query answering and entailment, such as ABox {abduction}~\cite{KlaM-LPAR13:streams,KlaM-DL14:abduction}, \ctwo{but these are less relevant to our work}. 
%


There are also recent proposals of interval temporal description logics that allow for tractable OBQA \cite{ABM+-ECAI14,
	AKRZ-AAAI15,KPP+-IJCAI16}, but 
their setting is rather different from the one we consider since there the basic units are intervals instead of time points. 

This article can be seen as an extension of the studies on TCQs \cite{BaBL-CADE13,BaBL-JWS15,BaBL-AI15,BoT-GCAI15} to the DLs of the extended \DLLite family.
In particular, it enhances previous results \cite{BoT-GCAI15,Thos-WSP17} by an improved presentation and full and revised versions of all proofs.
From another point of view, we extend a previously considered query language over temporal \DLLite knowledge bases \cite{BoLT-JWS15} with unrestricted negation in TCQs (and consider infinite instead of finite temporal semantics).
Our introductory example clearly shows that, even though most \DLLite logics can only express a weak form of negation, \ie disjointness constraints like $\mathsf{NegAllergyTest}\sqsubseteq\lnot\exists\mathsf{AllergyTo}$, the negation in the query language can be meaningfully used to query for negative information entailed by the ontology.
The impact of negation inside of CQs, \ie nested inside existential quantification, has been shown to cause a large increase in complexity even for atemporal \DLLite, even making it undecidable in very restricted cases \cite{GIKK-JWS15}.
In contrast, our extension is much more well-behaved (see Section~\ref{sec:contrib}), though it clearly does not come for free.
We also study the impact of allowing full negation in the ontology language, \ie the very expressive logic \DLLitebool.
An alternative, more tractable approach is to consider epistemic semantics for negation \cite{GuKla-RR12:approach,KlaM-RR14:sql}, similar to approaches for closed-world reasoning in DLs \cite{LuSW-IJCAI13,AhOS-IJCAI16}. This semantics makes quite strong assumptions on the input data, by presupposing that all missing information is indeed false, which is at odds with the open-world nature of standard DL semantics.
However, both closed-world and open-world semantics (or a combination of both) may be more or less suitable, depending on the application scenario.

\subsection{Contributions}
\label{sec:contrib}

In this article, we study TCQs over the lightweight DLs of the extended \DLLite family, which are depicted in Figure~\ref{fig:dls} and were tailored
for efficient (atemporal) query answering~\cite{dllfamily,Ontop16}. 
Of particular interest in this setting is 
the question to what extent ontology-based temporal query answering is \emph{first-order rewritable}, which means that the
queries can be rewritten into FO queries (\eg in SQL) over a database, and then can be executed using standard database systems. This is possible in the atemporal case for the logic \DLLitehhorn \cite{dllfamily,CDLLR-KR06:dcqadls,CDD+-AI13}.
\changed{Note that $\DLLite_\textit{R}$, the DL closest to the OWL~2 QL profile~\cite{owl2-profiles}, extends
	\DLLitehcore (a subset of \DLLitehhorn) only by disjointness axioms for roles. 
}
\changed{Although TCQ entailment over \DLLitehhorn ontologies turns out to be not first-order rewritable, certain parts of this problem can be solved using FO rewritings (see Section~\ref{sec:dlltcqs-r-sat-rew}).}
We also study related, but more expressive logics such as \DLLitehbool, where reasoning becomes harder~\cite{dllrelations}.

We investigate both combined and data complexity of TCQ entailment and, as usual, distinguish three different settings for the rigid symbols:
\begin{enumerate}[label=(\roman*)]
  \item no symbols are allowed to be rigid,
  \item only rigid concept names are allowed, and
  \item both concept names and role names can be rigid.
\end{enumerate}
As in \ALC-LTL~\cite{BaGL-TOCL12}, the fourth case is irrelevant since rigid
concepts~$C$ can be simulated by rigid roles~$R_C$ via two CIs 
$C\sqsubseteq\exists R_C$ and $C\sqsupseteq\exists R_C$.
Tables~\ref{tab:cc} and~\ref{tab:dc} summarize our results and compare them to
the baseline complexity of atemporal query answering \changed{(the logics are ordered by complexity)}.
\begin{table}[tb]
	\renewcommand*{\arraystretch}{1.2}
	\centering
	\tabulinesep=.7mm
		\caption{Combined complexity of TCQ entailment compared to standard CQ entailment. New results are highlighted
			in gray. The complexity of CQ entailment in
			$\DLLite^{[\Hmc]}_{\textit{krom}}$ is still open (marked by ?).}
		\label{tab:cc}
	\begin{tabu} to \linewidth {cX[1c]X[1c]X[1c]X[1c]} 
		\toprule
		& CQ & (i) & (ii) & (iii) \\
		\midrule
		$\DLLite_{[core\mid horn]}^{[\ |\Hmc]}$
		& \NP
			\source{\cite{dllfamily}}
		& \cellcolor{TabGray} \PSpace
			\source{$\ge$ \cite{SiCl85:ltlpspace}}
		& \cellcolor{TabGray} \PSpace
		& \cellcolor{TabGray} \PSpace
			\source{$\le$ Thm.~\ref{thm:dlltcqs-cc}}
		\\
		\EL
		& \NP
			\source{\cite{Rosa-DL07:elcqa}}
		& \PSpace
			\source{$\ge$ \cite{SiCl85:ltlpspace}}
		& \PSpace
			\source{$\le$ \cite{BoT-IJCAI15}}
		& \coNExpTime
			\source{\cite{BoT-IJCAI15}}
		\\
		$\ALC[\ |\Hmc]$
		& \ExpTime
			\source{\cite{Lutz-IJCAR08,EiOS-JCSS12}}
		& \ExpTime
		\source{$\le$ \cite{BaBL-JWS15}}
		& \coNExpTime
		\source{\cite{BaBL-JWS15}}
		& \TwoExpTime
		\source{\cite{BaBL-JWS15}}
		\\
		$\DLLite_{[krom\mid bool]}$
		& ? $|$ \ExpTime
			\source{\cite{BoMMP-TODS16}}
		& \cellcolor{TabGray} \ExpTime
			\source{$\ge$ Cor.~\ref{cor:dlltcqs-krom-cc:lb-w/o-rigid}}
			\source{$\le$ Thm.~\ref{thm:dlltcqs-krombool-cc:ub-w/o-rigid}}
		& \cellcolor{TabGray} \coNExpTime
			\source{$\ge$ Thm.~\ref{thm:dlltcqs-krom-cc:lb-rigid-concepts}}
			\source{$\le$ Thm.~\ref{thm:dlltcqs-krombool-cc:ub-rigid-concepts}}
		& \cellcolor{TabGray} \TwoExpTime
			\source{$\ge$ Thm.~\ref{thm:krom-cc:lb-rigid-roles}}
		\\
		$\DLLite_{[krom\mid bool]}^\Hmc$
		& ? $|$ \TwoExpTime
			\source{\cite{BoMMP-TODS16}}
		& \cellcolor{TabGray} \TwoExpTime
			\source{$\ge$ Cor.~\ref{cor:dlltcqs-hkrom-cc:lb-w/o-rigid}}
		& \cellcolor{TabGray} \TwoExpTime
		& \cellcolor{TabGray} \TwoExpTime
			\source{$\le$ \cite{BaBL-AI15}}
		\\
		$\ALC[\ |\Hmc]\Imc$
		& \TwoExpTime
			\source{$\ge$ \cite{Lutz-IJCAR08}}
			\source{$\le$ \cite{CaEO-IC14}}
		& \TwoExpTime
		& \TwoExpTime
		& \TwoExpTime
			\source{$\le$ \cite{BaBL-AI15}}
		\\
		\bottomrule
	\end{tabu}
\end{table}
\begin{table}[tb]
	\renewcommand*{\arraystretch}{1.2}
	\centering
	\tabulinesep=.7mm
		\caption{Data complexity of TCQ entailment compared to standard CQ entailment. New results are highlighted in
			gray.}
		\label{tab:dc}
	\begin{tabu} to \linewidth {cX[1c]X[1c]X[1c]X[1c]} 
		\toprule
		& CQ & (i) & (ii) & (iii) \\
		\midrule
		$\DLLite_{[core\mid horn]}^{[\ |\Hmc]}$
		& \ACzero \source{\cite{CDD+-AI13}}
		& \cellcolor{TabGray} \ALogTime
		\source{$\ge$ Thm.~\ref{thm:dlltcqs-dc:lb}}
		& \cellcolor{TabGray} \ALogTime
		& \cellcolor{TabGray} \ALogTime
		\source{$\le$ Thm.~\ref{thm:dlltcqs-dc-alogtime}}
		\\
		\EL
		& \PTime
		\source{$\ge$ \cite{CDD+-AI13}}
		\source{$\le$ \cite{KrLu-LPAR07:eldc,Rosa-ICDT07}}
		& \PTime
		\source{$\le$ \cite{BoT-IJCAI15}}
		& \coNP
		\source{$\ge$ \cite{BoT-IJCAI15}}
		& \coNP
		\source{$\le$ \cite{BoT-IJCAI15}}
		\\
		$\DLLite_{[krom\mid bool]}^{[\ |\Hmc]}$
		& \coNP
		\source{$\ge$ \cite{CDD+-AI13}}
		& \cellcolor{TabGray} \coNP
		& \cellcolor{TabGray} \coNP
		& \cellcolor{TabGray} \coNP
		\\
		$\ALC[\ |\Hmc][\ |\Imc]$
		& \coNP
		\source{$\ge$ \cite{CDD+-AI13}}
		\source{$\le$ \cite{OrCE-JAR08}}
		& \coNP
		& \coNP
		\source{$\le$ \cite{BaBL-AI15}}
		& \cellcolor{TabGray} \coNP
		\source{$\le$ Thm.~\ref{thm:dc-dllitehkrom}}
		\\
		\bottomrule
	\end{tabu}
\end{table}
On the one hand, for expressive members of the extended \DLLite family, we
obtain
complexities similar to those for very expressive DLs such as $\mathcal{ALCHI}$. 
%
In data complexity, there is not even a difference between the lightweight DL \EL and 
$\mathcal{ALCHI}$ if rigid symbols are considered.
On the other hand, for the logics below \DLLitehhorn, we get results that are even
better than those for \EL; interestingly, here rigid names do not affect the
complexity.
The \ALogTime-hardness result for the data complexity of TCQ
entailment in \DLLitecore shows however that it is not possible to find a (pure)
first-order rewriting of TCQs in this setting. 
%
%
Nevertheless, our analysis gives hope for an efficient implementation of temporal query answering in \DLLitehhorn.

The article is structured as follows.
In Section~\ref{ch:prelims}, we recall the preliminaries and describe a general approach for solving TCQ entailment~\cite{BaBL-CADE13}; it is based on splitting the problem into separate problems in LTL and in (atemporal) DLs.
In Section~\ref{sec:dlltcqs-r-sat-charact}, we propose a characterization of the DL part of the TCQ satisfiability problem that is tailored to \DLLitehhorn.
We use this characterization in Section~\ref{sec:dlltcqs-cc} to obtain the \PSpace combined complexity result.
Based on our characterization, we then also show that parts of the  TCQ entailment problem are first-order rewritable (see Section~\ref{sec:dlltcqs-r-sat-rew}).
This, in turn, allows us to develop an algorithm that proves membership in \ALogTime in data complexity in Section~\ref{sec:dlltcqs-dc}.
Section~\ref{sec:exdlltcqs} covers \DLLite logics beyond the Horn fragments.

%


\section{Preliminaries}
\label{ch:prelims}
	
%
We first recall description logics of the \DLLite family, conjunctive
queries, linear temporal logic, and their combination into temporal conjunctive
queries. Then, we discuss approaches for solving the classical
(\emph{atemporal}) and the temporal query entailment problems.

\subsection{\DLLite}
\label{sec:prelims-dls}
\label{sec:prelims-dls-syntax}


Description logics can be seen as fragments of first-order logic.
A DL \emph{signature} $(\NI,\NC,\NR)$ contains three sorts of
non-logical symbols representing constants, and unary and binary predicates,
respectively: \emph{individual names}~\NI, \emph{concept names (primitive concepts)}~\NC, and
\emph{role names (primitive roles)}~\NR, which are countably infinite, non-empty, pairwise disjoint sets.
In the following, we fix such a signature.

\begin{definition}[Syntax of \DLLite]\label{def:dll-syntax}
\emph{Roles} and \emph{(basic) concepts} are
defined, respectively, by the following rules, where $A\in\NC$ and $P\in\NR$:
\begin{align*}
   R &::=P \mid P^-,
 & B &::= A \mid \exists R.
\end{align*}
The sets of all roles and basic concepts are denoted by~\NRM and~\BC,
respectively.

\emph{Axioms} are the following kinds of expressions:
\emph{concept inclusions}~(CIs) of the form
\begin{equation}
\label{eq:ci}
B_1\sqcap\dots\sqcap B_m\sqsubseteq B_{m+1}\sqcup\dots\sqcup B_{m+n}
\end{equation}
where $B_1,\ldots,B_{m+n}\in\BC$;
\emph{role inclusions}~(RIs) of the form $S\sqsubseteq R$, where $R,S\in\NRM$;
and
\emph{assertions} of the form $B(a)$, $\lnot B(a)$, $R(a,b)$, or $\lnot R(a,b)$,
where $B\in\BC$, $R\in\NRM$, and $a,b\in\NI$.
An \emph{ontology} is a finite set of concept and role inclusions, and an
\emph{ABox} is a finite set of assertions.
Together, an ontology~\aont and an ABox~\afb form a \emph{knowledge base}~(KB)
$\akb:=\aont\cup\afb$, also written as $\akb=\langle\aont,\afb\rangle$.
The set of all assertions is denoted by~\AS.
\end{definition}

We distinguish several members of the extended \DLLite family as presented in~\cite{dllrelations}.
For $c\in\{\textit{core},\textit{horn},\textit{krom},\textit{bool} \}$, we
denote by \textit{DL-Lite}$^\Hmc_c$ the logic that restricts the concept inclusions~\eqref{eq:ci} as follows:
\begin{itemize}
\item if $c=\textit{core}$, then $m+n\le2$ and $n\le 1$;
\item if $c=\textit{horn}$, then $n\le 1$;
\item if $c=\textit{krom}$, then $m+n\le2$;
\item if $c=\textit{bool}$, there are no restrictions.
\end{itemize}
We also consider the sublogics \textit{DL-Lite}$_c$ that further disallow role
inclusions.
We use the term \DLLite for a generic member of this family of logics.
%

We use the generic notation $\Xbb(\changed{\Ymc})$ to denote those elements of~\Xbb that
can be built from the names occurring in~\changed{\Ymc}.
For example, if \Xbb is the set~\BC of basic concepts and \changed{\Ymc} is a KB~\Kmc,
then $\BC(\Kmc)$ denotes the basic concepts that can be built from the concept
and role names occurring in~\Kmc (either in concept inclusions or assertions).
Similarly, $\NI(\Amc)$ simply denotes the set of individual names occurring in
the ABox~\Amc;
we apply this notation more generally also to ontologies and single axioms, and
later to queries, temporal queries, and so on.
%
%
%


The semantics is specified in a model-theoretic way, based on interpretations.
\begin{table}[tb]
\centering
\caption{Semantics of roles, concepts, and axioms for an interpretation $\aint=(\Delta^\aint,\cdot^\aint)$.}
\label{tab:dl-semantics}
\begin{tabu}{|l|l|l|}
	\tabtopline
	Name&Syntax&Semantics
	\tabmidline
	inverse role\index{operator!DL|textbf}\index{inverse role|textbf}&$R^-$&$\{(y,x)\in\Delta^\aint\times\Delta^\aint\mid(x,y)\in R^\aint\}$
	\tabmidline
	existential restriction
	&$\exists R$&$\{x\in\Delta^\aint \mid\exists y\in \Delta^\aint:(x,y)\in R^\aint\}$
	\tabmidline
	concept inclusion&$B_1\sqcap\dots\sqcap B_m\sqsubseteq B_{m+1}\sqcup\dots\sqcup B_{m+n}$&$B_1^\aint\cap\dots\cap B_m^\aint\subseteq B_{m+1}^\aint\cup\dots\cup B_{m+n}^\aint$\\
	role inclusion&$R\sqsubseteq S$&$R^\aint\subseteq S^\aint$
	\tabmidline
	concept assertion&$B(a)$&$a^\aint\in B^\aint$\\
   negated concept assertion&$\lnot B(a)$&$a^\aint\notin B^\aint$\\
	role assertion&$R(\indone,\indtwo)$&$(\indone^\aint, \indtwo^\aint)\in R^\aint$ \\
   negated role assertion&$\lnot R(\indone,\indtwo)$&$(\indone^\aint, \indtwo^\aint)\notin R^\aint$
	\tabbotline
\end{tabu}
\end{table}

\begin{definition}[Semantics of \DLLite]\label{def:semantics}
An \emph{interpretation} $\aint=(\Delta^\aint,\cdot^\aint)$ consists of a
non-empty set~$\Delta^\aint$, the \emph{domain} of~\aint, and an
\emph{interpretation function}~$\cdot^\aint$, which assigns
to every $A\in\NC$ a set $A^\aint\subseteq\Delta^\aint$,
to every $P\in\NR$ a binary relation
$P^\aint\subseteq\Delta^\aint\times\Delta^\aint$, and
to every $a\in\NI$ an element
$a^\aint\in\Delta^\aint$ such that, for all $a,b\in\NI$ with $a\neq b$, we have
$a^\aint\neq b^\aint$ (\emph{unique name assumption}; UNA).
This function is extended to all roles and concepts as described in the first
part of Table~\ref{tab:dl-semantics}.
An interpretation~\aint is a \emph{model} of an axiom~\aaxiom, if the
corresponding condition given in Table~\ref{tab:dl-semantics} is satisfied.
It is a \emph{model} of a knowledge base~\akb, if it is a model of all axioms
contained in it.
\end{definition}
Following the standard notation for first-order logic, we denote the fact that
\aint is a model of~\akb by $\aint\models\akb$, and in this case also say that
\aint \emph{satisfies}~\akb.
Further, \akb is \emph{consistent} (or \emph{satisfiable}) if it has a model,
and \emph{inconsistent} (or \emph{unsatisfiable}) otherwise.
\akb \emph{entails} an axiom~\aaxiom, written $\akb\models\aaxiom$, if all
models of~\akb also satisfy~\aaxiom.
%
%
Two KBs are \emph{equivalent} if they have the same models.
This terminology and notation for \akb is extended to axioms, ontologies, and
ABoxes by viewing each as a (singleton) KB.
Moreover, we freely apply these terms to any \enquote{model of} relation
that we define in the following (for queries, temporal queries, etc.).

Given two domain elements $\elone, \eltwo$, a role $R$, and an interpretation~\aint such
that $(\elone,\eltwo)\in R^\aint$, we say that \elone is an
$R$-predecessor\index{predecessor} of~\eltwo, and \eltwo is an
$R$-successor\index{successor} of~\elone.
We use the terms \enquote{(domain) elements} and \enquote{individuals}
interchangeably for the elements of~$\Delta^\aint$. We call them \enquote{named}
if they are used to interpret individual names.
%
In concept inclusions of the form~\eqref{eq:ci} (see Definition~\ref{def:dll-syntax}), we
denote the empty conjunction by~$\top$ and the empty disjunction by~$\bot$, \ctwo{which are interpreted as $\Delta^\Imc$ and $\emptyset$, respectively} (\cf Table~\ref{tab:dl-semantics}).
We use the abbreviation
$\bigsqcap\Bmc$ for the conjunction $B_1\sqcap\dots\sqcap B_m$ if $\Bmc=\{B_1,\dots,B_m\}$, and set $(P^-)^-:=P$ for all $P\in\NR$.
We assume every KB to be such that all concept and role names occurring in the ABox also occur in the ontology.
\begin{definition}[Syntax of CQs]\label{def:cq-syntax}
	Let \NV be a \ctwo{countably infinite set of \emph{variables} disjoint from \NI, \NC, and \NR}, and $\NT:=\NI\cup\NV$ be the set of \emph{terms}.
	A \emph{conjunctive query}~(CQ) is of the form
   $\acq=\exists y_1,\ldots,y_m.\psi$, where $y_1,\dots,y_m\in\NV$ and $\psi$ is
   a (possibly	empty) finite conjunction ($\land$) of atoms of the form:
	\begin{itemize}
		\item $A(t)$ (\emph{concept atom}\index{atom}) with $A\in\NC$ and $t\in\NT$; or
		\item $P(\tone,\ttwo)$ (\emph{role atom}\index{atom}) with $P\in\NR$ and $\tone,\ttwo\in\NT$.
	\end{itemize}
   A \emph{union of conjunctive queries}\index{union of conjunctive queries!DL} (UCQ)\index{UCQ|see{union of conjunctive queries}} is a 
		disjunction ($\lor$) of CQs \ctwo{with the same free variables}. 
\end{definition} \changed{
In general, CQs may contain free variables, also called \emph{answer variables}.
However, without loss of generality, and unless stated otherwise, in the following we} assume all (U)CQs to be Boolean, \ie that all
variables are existentially quantified in the CQs. \changed{We sometimes stress this again, but actually make only one exception (in Section~\ref{sec:rcomplete-rewriting-details}). }
For ease of presentation, we sometimes treat a CQ as a set, thereby meaning the set of all of its atoms.

\begin{definition}[Semantics of CQs] \label{def:cqs-semantics}
	A mapping $\pi\colon\NT(\acq)\to\Delta^\aint$ is a homomorphism\index{homomorphism}
	of a 
	CQ~\acq into an interpretation \aint if
	\begin{itemize}
		\item $\pi(a)=a^\aint$ for all $a\in\Ind(\acq)$,
		\item $\pi(t)\in A^\aint$ for all concept atoms $A(t)$ in $\acq$, and
		\item $(\pi(\tone),\pi(\ttwo))\in P^\aint$ for all role atoms $P(\tone,\ttwo)$ in~$\acq$.
	\end{itemize}
	An interpretation~\aint is a \emph{model} of~\acq if there is such a homomorphism, and is a \emph{model} of a UCQ if it satisfies one of its disjuncts.
\end{definition}

We also allow basic concept atoms of the form $\exists R(x)$ to occur in CQs,
since such an atom can be expressed via a role atom $R(x,y)$ using a fresh,
existentially quantified variable~$y$. Similarly, inverse role atoms $R^-(x,y)$
can be expressed by $R(y,x)$.

\changed{
For the interested reader, in Appendix~\ref{app:prelim} we introduce additional notions about \DLLite and the temporal logic LTL that are relevant for our proofs.
}

\subsection{Temporal Conjunctive Queries}
\label{sec:tqa}
\label{sec:tqa-tqs}
\emph{Temporal conjunctive queries} are a temporal query language introduced in \cite{BaBL-CADE13}. 
%
They are basically formulas of LTL, but the variables are replaced by CQs
%
and the semantics is suitably lifted from sequences of propositional worlds to sequences of DL interpretations.
The flow of time is represented by the sequence of natural numbers, \ie every \emph{point in time} (also \emph{time point} or \emph{moment}) is represented by one number.
%
We additionally assume that a subset of the concept and role names is designated as being \emph{rigid}.
The intuition is that the interpretation of rigid names does not change over time. All individual names are implicitly assumed to be rigid, \ie to refer to the same domain element at all time points.
If a concept (axiom) contains only rigid symbols, then we call it a \emph{rigid} concept (axiom).
We hence extend the signature \adlsignature by two sets $\NRC\subseteq\NC$ and $\NRR\subseteq\NR$ of rigid concept names and rigid role names, respectively. The elements of $\NFC:=\NC\setminus\NRC$ and $\NFR:=\NR\setminus\NRR$ are called \emph{flexible}.
We denote the sets of rigid roles, basic concepts, and assertions by~\NRRM,
\BCr, and~\ASr, respectively, and their complements by~\NFRM, \BCf, and~\ASf,
respectively.

In this temporal setting, the knowledge base contains a global ontology that holds at all time points, as well as a series of ABoxes describing a finite sequence of initial time points~\cite{BaBL-CADE13}.

\changed{
\begin{definition}[Syntax of TKBs]
	%
	A \emph{temporal knowledge base}~(TKB)
	$\atkb=\langle\aont,(\afb_i)_{0\le i\le n}\rangle$ consists of an
	ontology~\aont and a non-empty, finite sequence of ABoxes~$\afb_i$,
	$i\in[0,n]$.
	%
\end{definition}

As mentioned above, the semantics is given by sequences of DL interpretations.

\begin{definition}[Semantics of TKBs]
	\label{def:tqa-tint}\label{def:tkb-syntax}\label{def:tkb-semantics}
	An infinite sequence $\atint=(\aint_i)_{i\ge0}$ of interpretations
	$\aint_i=(\Delta,\cdot^{\aint_i})$ is a  \emph{DL-LTL structure} if it
	\emph{respects rigid names}, \ie we have
	$\asymbol^{\aint_i}=\asymbol^{\aint_j}$ for all
	$\asymbol\in\NI\cup\NRC\cup\NRR$ and $i,j\ge0$.
	%
	%
   %
	Such an interpretation
	\atint is a \emph{model} of a TKB \atkb if we have $\Imc_i\models\Omc$ for all
	$i\ge0$, and $\Imc_i\models\Amc_i$ for all $i\in[0,n]$.
\end{definition}
}
%
Observe that the interpretations in a DL-LTL structure share a single domain
(\emph{constant domain assumption}).
\ctwo{Similarly, we say that} any finite collection of interpretations $\aint_1,\dots,\aint_\ell$
\emph{respects rigid names} if they have the same domain and agree on the
interpretation of all rigid symbols.
As with atemporal KBs, we assume all concept and role names occurring in some
ABox of a TKB to also occur in its ontology.


As outlined above, TCQs combine conjunctive queries via LTL operators. We again restrict our focus to Boolean queries.
\begin{definition}[Syntax of TCQs]\label{def:tqs-syntax}
	The set of \emph{temporal conjunctive queries} (TCQs) is defined as follows, where $\acq$ is a CQ: $$\atcq,\atcqtwo::= \acq\mid 
	\neg\atcq\mid\atcq\land\atcqtwo\mid
	\Next\atcq\mid\Previous\atcq\mid
	\atcq\Until\atcqtwo\mid\atcq\Since\atcqtwo.
	$$
	A \emph{CQ literal} is of the form \acq (\emph{positive} CQ literal) or $\lnot\acq$
	(\emph{negative} CQ literal), for a CQ \acq.
\end{definition}
\begin{table}[tb]
	\centering
   \caption{Definitions of derived temporal operators.}
	\label{tab:tcqs-derived-ops}
	\begin{tabu}{|l|l|l|}
		\tabtopline
		Operator &Definition&Name 
		\tabmidline
      \true & $\acq\lor\lnot\acq$ for some CQ~\acq & tautology \\
      \false & $\lnot\true$ & contradiction \\
  		$\atcq\lor\atcqtwo$ & $\lnot(\lnot\atcq\land\lnot\atcqtwo)$ & disjunction\\ $\atcq\rightarrow\atcqtwo$ & $\lnot\atcq\lor\atcqtwo$ & implication \\
      $\atcq\leftrightarrow\atcqtwo$ & $(\atcq\rightarrow\atcqtwo)\land(\atcqtwo\rightarrow\atcq)$ & bi-implication \\
  		$\Next^i\atcq$ & $\Next\dots\Next\atcq$ ($i$ times) & iterated next \\
  		$\Previous^i\atcq$ & $\Previous\dots\Previous\atcq$ ($i$ times) & iterated previous \\
  		$\Diamondf\atcq$ & $\true\Until\atcq$ & eventually (at some time in the future) \\
  		$\Boxf\atcq$ & $\lnot\Diamondf\lnot\atcq$ & globally (always in the future) \\
  		$\Diamondm\atcq$ & $\true\Since\atcq$ & once (at some time in the past) \\
  		$\Boxm\atcq$ & $\lnot\Diamondm\lnot\atcq$ & historically (always in the past)
		\tabbotline
	\end{tabu}
\end{table}
%
The operators \Next and \Previous are called ``next'' and ``previous'', respectively.
The formula $\atcq\Until\atcqtwo$ stands for ``$\atcq$ until $\atcqtwo$'', and $\atcq\Since\atcqtwo$ is read ``$\atcq$ since $\atcqtwo$''.
The operators \Next and $\Until$ are the \emph{future operators}, and \Previous and $\Since$ are the \emph{past operators}. Together, they represent the \emph{temporal operators}.
Further, derived operators are defined in Table~\ref{tab:tcqs-derived-ops}.


%
\begin{definition}[Semantics of TCQs] \label{def:tcqs-semantics}
%
%
	For a given DL-LTL structure $\atint=(\aint_i)_{i\ge0}$, an $i\ge0$, and a
  Boolean TCQ~\atcq, the satisfaction relation $\atint,i\models\atcq$ is as defined in Table~\ref{tab:tcq-semantics}.

\changed{
  \atcq is \emph{satisfiable} w.r.t.\ a TKB $\atkb=\langle\aont,(\afb_i)_{0\le i\le n}\rangle$ if there is a model~\atint of~\atkb such that $\atint,n\models\atcq$, \ie where \atcq is satisfied at the \emph{current time point}~$n$. 
  Similarly, \atcq is \emph{entailed} by~\atkb{} \ctwo{if we have $\atint,n\models\atcq$ for all models $\atint\models\atkb$.}
}
\end{definition}
%
%
\begin{table}[tb]
	\centering
	\caption{Semantics of TCQs for a DL-LTL structure
    $\atint=(\aint_i)_{i\ge0}$.}
	\label{tab:tcq-semantics}
	\begin{tabu}{|l|l|}
		\tabtopline
		TCQ \atcq & Condition for $\atint,i\models\atcq$
		\tabmidline
		CQ \acq&$\aint_i\models\acq$\\
		$\neg\atcq$&$\atint,i\not\models\atcq$\\
		$\atcq\land\atcqtwo$&$\atint,i\models\atcq$ and $\atint,i\models\atcqtwo$\\
		$\Next\atcq$&$\atint,i+1\models\atcq$\\
		$\Previous\atcq$&$i>0$ and $\atint,i-1\models\atcq$\\
		$\atcq\Until\atcqtwo$&
		there is a $k\ge i$, such that $\atint,k\models\atcqtwo$ and, for all $j$, $i\le j< k$, we have
		$\atint,j\models\atcq$
		\\
		$\atcq\Since\atcqtwo$&there is a $k$, $0\le k\le i$, such that $\atint,k\models\atcqtwo$ and, for all $j$, $k< j\le i$, we have
		$\atint,j\models\atcq$
		\tabbotline
	\end{tabu}
\end{table}

\changed{
The satisfiability and entailment problems are mutually reducible.
Indeed, \atcq is \emph{not} entailed by \atkb iff the TCQ $\lnot\atcq$ is satisfiable w.r.t.~\atkb, and vice versa.
Hence, the complexity of both problems is always complementary.
Since the techniques we use apply to satisfiability, we will consider this problem for most of the technical development.
However, we formulate our results in terms of entailment since that problem is more interesting from a practical point of view.}
Hence, in the main part of this article, we investigate the \emph{satisfiability
problem of TCQs \wrt TKBs}, \ie the problem of finding a common model of both.
%

We denote by~\tcqcqs\atcq the CQs in the Boolean TCQ~\atcq and assume
without loss of generality that these CQs use disjoint sets of variables.
We further assume that TCQs contain only individual names that occur in the
ABoxes, and only concept and role names that occur in the ontology; this is
clearly without loss of generality, especially because of the assumption that
all concept and role names occurring in a TKB occur in its ontology.
We further assume all CQs~\acq to be connected, \ie that all
elements in $\NT(\acq)$ can be reached from the others via a series of role atoms in~\acq. \changed{This assumption is also without loss of generality, because a disconnected CQ can be split into a conjunction of several CQs, which is a special kind of TCQ; see~\cite{BaBL-CADE13,BaBL-JWS15} for details.}

We often consider TCQs~\atcq that do not contain temporal operators; for example,
UCQs or conjunctions of CQ literals. In this case, the satisfaction of \atcq in
a DL-LTL structure $\atint= (\aint_i)_{i\ge0}$ at time point~$i$ only depends on
the interpretation~$\aint_i$. For simplicity, we then often write
$\aint_i\models\atcq$ instead of $\atint,i\models\atcq$.
In this context, it is also sufficient to consider classical knowledge
bases~$\langle\aont,\afb\rangle$, which can be viewed as TKBs with a single
ABox.

We investigate the influence of the different \DLLite logics on the combined and data complexity of TCQ entailment. For combined complexity, the size of all the input is taken into consideration (i.e., the size of both the query and the entire KB), while data complexity only refers to the size of the data \cite{Vardi-STOC82:dc} (\ie the number and size of the ABoxes).

\subsection{Reasoning with TCQs}\label{sec:on-ubs}

We recall the general approach to decide TCQ satisfiability from~\cite{BaBL-CADE13}, where the problem of finding a common model $(\Imc_i)_{i\ge 0}$ of a TCQ~\atcq and a TKB~$\atkb=\langle\aont,(\afb_i)_{0\le i\le n}\rangle$ is split into an LTL satisfiability problem and several DL satisfiability problems \cite[Lemma~4.7]{BaBL-JWS15}.

The former considers the \emph{propositional abstraction}~$\pa{\atcq}$ of~\atcq, which is obtained from~\atcq by replacing the CQs $\aalpha_1,\dots,\aalpha_m\in\tcqcqs{\atcq}$ by propositional variables $p_1,\dots,p_m$, respectively.
The idea is that the world~$w_i$ in an LTL model $(w_i)_{i\ge 0}$ of~$\pa{\atcq}$ determines which CQs from~$\tcqcqs\atcq$ should be satisfied at time point~$i$.
To obtain an interpretation~$\Imc_i$ from~$w_i$, we have to check the satisfiability of the CQ literals 
that are induced by~$w_i$, where $\overline{w_i}:=\{p_1,\dots,p_m\}\setminus w_i$ denotes the complement of~$w_i$.
For this, it is enough to consider the atemporal KB
$\langle\aont,\afb_i\rangle$, where we assume that $\afb_i=\emptyset$ whenever
$i>n$.
However, the problem is that independent satisfiability tests for each time point are not enough, since the DL interpretations should also respect the rigid names.

Hence, we need to connect the LTL and DL satisfiability problems more closely.
For this, we consider a set $\as=\{\ax_1,\dots,\ax_k\}\subseteq 2^{\pv}$ of possible worlds,%
\footnote{\ctwo{In the following, we denote propositional worlds both by lower case~$w$ and by upper case~$\ax$ (with indices). Usually, the (finitely many) elements of \as are denoted by~$\ax_i$, and the (infinitely many) elements of an LTL structure by~$w_i$.}}
which is given as input to both problems.
Intuitively, these are the worlds that are allowed to occur in the LTL model.
Moreover, for each of the initial time points $0,\dots,n$, we fix one of these worlds, via a mapping $\iota\colon[0,n]\to[1,k]$.
%
The idea is to simultaneously look for models of the conjunctions
\[ \chi_{\iota(i)} := \bigwedge_{p_j\in\ax_{\iota(i)}}\acq_j \land
  \bigwedge_{p_j\in\overline{\ax_{\iota(i)}}}\lnot\acq_j \]
\wrt the atemporal KBs $\langle\aont,\afb_i\rangle$.
For the time points after~$n$, we do not have to consider ABoxes; however, for ease of presentation, we set $\afb_{n+i}:=\emptyset$ and $\iota(n+i):=i$ for all $i\in[1,k]$, which means that we artificially extend the ABox sequence to cover $n+k+1$ \enquote{time points}: $n+1$ time points for the worlds associated to an input ABox, and $k$ additional time points for all worlds in the model that are not influenced by the input ABoxes.

The LTL part is characterized by \emph{temporal satisfiability
(t-satisfiability)}, and \emph{rigid satisfiability (r-satisfiability)}
summarizes the DL part.
\begin{definition}[t-satisfiable]
	\label{def:tcqs-t-sat}\label{def:tqa-t-sat}
	The LTL formula $\pa{\atcq}$ is \emph{t-satisfiable}\index{t-satisfiability}
	w.r.t.~\as and~$\iota$ if there is an LTL structure $\atint=(w_i)_{i\geq 0}$
	such that $w_i\in\as$ for all $i\ge 0$, $w_i=\ax_{\iota(i)}$ for all
	$i\in[0,n]$, and $\atint,n\models\pa{\atcq}$.
\end{definition}
\begin{definition}[r-satisfiable]
	\label{def:tqa-r-sat}
	The set \as is \emph{r-satisfiable}\index{r-satisfiability} w.r.t.\ $\iota$
	and~\atkb iff there are interpretations $\Jmc_0,\ldots,\Jmc_{n+k}$ as follows:
	\begin{itemize}
		\item the interpretations share the same domain and respect rigid names,
		\item for all $i\in[0,n+k]$, $\Jmc_i$ is a model of~\aont, $\afb_i$,
		and~$\chi_{\iota(i)}$.
	\end{itemize}
\end{definition}
%

The satisfiability of~\atcq w.r.t.~\atkb can then be decided by combining the
above definitions.
\begin{lemma}[{see \cite[\citelem4.7]{BaBL-JWS15}}]
	\label{lem:tcq-sat-iff}
	A TCQ~\atcq has a model w.r.t.\ a TKB~\atkb iff there exist a set
	$\as=\{\ax_1,\dots,\ax_k \}\subseteq 2^{\pv}$ and a mapping
	$\iota\colon[0,n] \to[1,k]$ such that
	$\pa{\atcq}$ is t-satisfiable w.r.t.~\as and~$\iota$, and
	\as is r-satisfiable w.r.t.~$\iota$ and~\atkb.
\end{lemma}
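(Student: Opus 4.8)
The plan is to prove both directions of the equivalence, using the \emph{propositional abstraction} as the bridge between the temporal (LTL) and the description-logic layers. The key fact I would isolate first is an \emph{abstraction lemma}: for any DL-LTL structure $\atint=(\Imc_i)_{i\ge0}$, define the induced worlds $w_i^{\atint}:=\{p_j\mid\Imc_i\models\acq_j\}$ over the CQs $\acq_1,\dots,\acq_m\in\tcqcqs\atcq$. Then $\atint,i\models\atcq$ iff $(w_j^{\atint})_{j\ge0},i\models\pa{\atcq}$ for every $i\ge0$. This is shown by a straightforward induction on the structure of~$\atcq$: the base case $\atcq=\acq_j$ holds by definition of~$w_i^{\atint}$, and the Boolean and temporal cases hold because the TCQ semantics in Table~\ref{tab:tcq-semantics} mirrors the standard LTL semantics of~$\pa{\atcq}$ clause by clause. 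This lemma lets me translate satisfaction of~$\atcq$ back and forth into satisfaction of~$\pa{\atcq}$, provided the DL interpretations and the worlds agree on which CQs hold.

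For the ``only if'' direction, suppose $\atint=(\Imc_i)_{i\ge0}\models\atkb$ with $\atint,n\models\atcq$. I would set $\as:=\{w_i^{\atint}\mid i\ge0\}$; since $\as\subseteq 2^{\pv}$ is finite, write $\as=\{\ax_1,\dots,\ax_k\}$ and choose $\iota\colon[0,n]\to[1,k]$ with $\ax_{\iota(i)}=w_i^{\atint}$. t-satisfiability is witnessed by the LTL structure $(w_i^{\atint})_{i\ge0}$, which stays inside~$\as$, agrees with~$\iota$ on $[0,n]$, and satisfies~$\pa{\atcq}$ at~$n$ by the abstraction lemma. For r-satisfiability I would reuse the interpretations from~$\atint$: put $\Jmc_i:=\Imc_i$ for $i\in[0,n]$, and for each $l\in[1,k]$ pick some $j$ with $w_j^{\atint}=\ax_l$ and set $\Jmc_{n+l}:=\Imc_j$. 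All these interpretations come from the single DL-LTL structure~$\atint$, so they share one domain and agree on all rigid names; each is a model of~$\aont$, of its ABox (empty for indices beyond~$n$), and of the respective conjunction~$\chi$, since~$\Imc_j$ satisfies precisely the CQs recorded in~$w_j^{\atint}=\ax_l$.

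For the ``if'' direction, assume $\as$ and $\iota$ witness t- and r-satisfiability. Fix an LTL structure $(w_i)_{i\ge0}$ from the former and interpretations $\Jmc_0,\dots,\Jmc_{n+k}$ from the latter. I would assemble a model by assigning to each time point the representative interpretation of its world: set $\Imc_i:=\Jmc_i$ for $i\in[0,n]$, and for $i>n$, where $w_i=\ax_l$ for some $l\in[1,k]$, set $\Imc_i:=\Jmc_{n+l}$. Since all the $\Jmc$'s share a domain and respect rigid names, so does $\atint=(\Imc_i)_{i\ge0}$, hence it is a DL-LTL structure; moreover every $\Imc_i\models\aont$, and $\Imc_i\models\afb_i$ for $i\in[0,n]$, so $\atint\models\atkb$. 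Because each chosen representative satisfies the corresponding~$\chi$, we get $\Imc_i\models\acq_j$ iff $p_j\in w_i$, that is $w_i^{\atint}=w_i$; the abstraction lemma together with $(w_i)_{i\ge0},n\models\pa{\atcq}$ then yields $\atint,n\models\atcq$.

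The main obstacle is the coupling introduced by rigid names, which is exactly what forces the two conditions to be checked \emph{jointly} rather than pointwise. Without rigid symbols one could test the satisfiability of each~$\chi_{\iota(i)}$ independently and glue arbitrary models together; the constant-domain and rigid-name requirements, however, mean that a single family of interpretations must agree globally. The finiteness insight that makes the statement work is that, although an LTL model uses infinitely many time points, only finitely many distinct worlds can occur (as $2^{\pv}$ is finite), so $k$ representative interpretations suffice, and reusing a representative at every time point carrying its world preserves rigid-name agreement automatically. Verifying that this reuse is sound—both that the shared-domain and rigid-name constraints are met and that the recorded worlds faithfully reflect CQ satisfaction in the assembled structure—is the technical heart of the argument.
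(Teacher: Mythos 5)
Your proof is correct and takes essentially the same approach as the proof the paper relies on: the paper does not reprove this lemma but cites \cite[\citelem4.7]{BaBL-JWS15}, whose argument---propositional abstraction of the TCQ, finiteness of the set of induced worlds, and reuse of one representative interpretation per world for the time points beyond~$n$---is exactly what you reconstruct. The paper's only addition is the remark that the original argument is independent of the underlying DL, which your reconstruction also makes evident, since no \DLLite-specific reasoning is used.
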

The original proof in~\cite{BaBL-JWS15} considers the DL \SHQ, but it is independent of the description logic under consideration, and hence also applies in our setting.
%
%
This result shows that TCQ satisfiability can be split into the three
subproblems of
\begin{enumerate}[label=\textnormal{(\roman*)}]
	\item\label{def:tqa-task-obtaining}
	obtaining \as and $\iota$,
	\item\label{def:tqa-task-ltl}
	solving the LTL satisfiability test (t-satisfiability), and
	\item\label{def:tqa-task-dl} solving the DL satisfiability test(s)
	(r-satisfiability).
\end{enumerate}
However, for solving these problems in \DLLitehhorn, we cannot in general apply the existing methods from~\cite{BaBL-CADE13,BaBL-JWS15,BaBL-AI15} since we want to show considerably lower complexity bounds, namely \ALogTime in data complexity and \PSpace in combined complexity.
The linear size of~$\iota$ is an obstacle for designing algorithms of sublinear
complexity, and the exponential size of \as makes it impossible to guess (and
store) this set using only a polynomial amount of space. Further, known
results only allow to solve Problems~\ref{def:tqa-task-ltl} and
\ref{def:tqa-task-dl} in \ExpTime in combined complexity.

Our first step is to provide a new characterization of r-satisfiability that is
tailored to \DLLitehhorn and can be decided using only polynomial space (see
Section~\ref{sec:dlltcqs-r-sat-charact}). It also allows us to show that
r-satisfiability is first-order rewritable (see
Section~\ref{sec:dlltcqs-r-sat-rew}). In Sections~\ref{sec:dlltcqs-cc} and
\ref{sec:dlltcqs-dc} we then determine the combined and data complexity of
satisfiability in \DLLitehhorn, respectively. There we integrate our
characterization of r-satisfiability, which solves
Problem~\ref{def:tqa-task-dl}, into algorithms that additionally solve
Problems~\ref{def:tqa-task-obtaining} and~\ref{def:tqa-task-ltl} while satisfying the corresponding resource constraints.


\section{Characterizing r-Satisfiablility in \texorpdfstring{\DLLitehhorn}{DL-LiteHhorn}}
\label{sec:dlltcqs-r-sat-charact}



We consider a \DLLitehhorn TKB $\atkb=\langle\aont,(\afb_i)_{0\le i\le
n}\rangle$, a Boolean TCQ~\atcq, and a set~\as and a mapping~$\iota$ as in the
previous section.
\textcolor{blue}{
We consider these objects fixed, and do not always explicitly include them in the notation we use in the following.}
The goal is to find interpretations~$\Jmc_i$ satisfying
$\langle\aont,\afb_i\rangle$ and $\chi_{\iota(i)}$, $i\in[0,n+k]$, that respect the rigid names.

To solve this problem in \PSpace (and later in \ALogTime in data complexity), the idea is to guess a polynomial amount of additional information that allows
us to split the above tests into \emph{independent} satisfiability tests.
The additional information enforces a certain connection between these tests, which
simulates the following effects of the shared domain:
\begin{enumerate}[label=\textnormal{(F\arabic*)}]
	\item\label{eq:tqa-domfunctinds} The interpretation of rigid names over the named individuals \ctwo{is synchronized}.
	\item\label{eq:tqa-domfunctchis} The satisfiability of~$\chi_{\iota(i)}$ at time point~$i$ \ctwo{cannot be} contradicted by the interpretation of the rigid names at the other time points.
\end{enumerate}
In this section, we first describe the precise form that this additional
information takes, then present the characterization itself, and lastly prove
its correctness.



We augment the consistency tests for the KBs $\langle\aont,\afb_i\rangle$ and for the conjunctions $\chi_{\iota(i)}$ by additional ABoxes and conditions that encode the information that is shared between the time points.
More formally, the additional information is a tuple $(\AR,\QR,\QRn,\RF)$, where
\begin{itemize}
	\item \AR is a set of rigid assertions over the names occurring in~\Kmc,
	  specifying the behaviour of the rigid concept and role names on all named
		individuals;
	\item $\QR\subseteq\tcqcqs\atcq$ contains those CQs that are
	  satisfied in at least one of the interperetations;
	\item $\QRn\subseteq\tcqcqs\atcq$ specifies which CQs are \emph{not} satisfied by at least one of the
	  interpretations; and
	\item \RF is a set of assertions of the form $\exists S(b)$, where $S$ is a
	  flexible role name; such an assertion encodes the information that $b$ has
		an $S$-successor at some point in time, which means that the influence of
		this successor on the interpretation of the rigid names needs to be taken
		into account also at the other time points.
\end{itemize}
Roughly speaking, \AR simulates the effect~\ref{eq:tqa-domfunctinds} of the
common domain, while \QR, \QRn, and~\RF express~\ref{eq:tqa-domfunctchis}.
The additional information thus consists of a number of assertions and queries
that is polynomial in the size of~\atcq.
Each of these four items gives rise to (i) additional ABoxes and/or (ii)
external conditions that have to be satisfied by~\aont, $\afb_i$,
and~$W_{\iota(i)}$.
In the following subsections, we define those precisely.

\subsection{Rigid ABox Type}
\label{sec:ar}

The set~\AR is a so-called \emph{rigid ABox type}, which completely fixes the
interpretation of the rigid names on the individual names.
%
\begin{definition}[Rigid ABox Type]\label{def:dlltcqs-aboxtype}
  A \emph{rigid ABox type} for~\atkb is a set $\AR\subseteq\ASr(\atkb)$ such that, for all $\lnot\aaxiom\in\ASr(\atkb)$, we have $\lnot\aaxiom\in\AR$ iff $\aaxiom\notin\AR$.
\end{definition}

We require that all interpretations~$\Imc_i$ must satisfy the assertions in~\AR.

\subsection{Rigid Consequences}
\label{sec:qr}

The second set, \QR, contains all CQs~$\acq_j\in\tcqcqs\atcq$
for which $p_j$ occurs in some~$W_{\iota(i)}$ with $i\in[0,n+k]$ (which means that $\acq_j$ occurs
positively in~$\chi_{\iota(i)}$, and hence must be satisfied by~$\Imc_i$).
We keep track of these CQs because their satisfaction
implies the presence of certain rigid structures at all time points.
To explicitly refer to these structures, we instantiate all variables in CQs by
fresh individual names.
Formally, given a set \ctwo{$\Q\subseteq\tcqcqs\atcq$} (\eg \QR), the ABox~$\cqinst{\Q}$
is obtained by
replacing every variable~$x$ in~$\Q$ with a fresh individual name~$\ael{x}$, and
viewing the resulting (ground) CQ as a set of assertions.

\begin{definition}[Rigid Consequences]\label{def:dlltcqs-consequences}
	The set $\rigcons{\Q}$ of \emph{rigid consequences} of a set of CQs~\Q (w.r.t.~\aont) contains exactly those assertions $\aaxiom\in\ASr(\langle\Omc,\cqinst{\Q}\rangle)$ that are entailed by $\langle\Omc,\cqinst{\Q}\rangle$.
\end{definition}

The set~\QR contributes the assertions in $\rigcons{\QR}$ to the individual
consistency tests, as well as the additional condition mentioned above; \ie
every CQ that is satisfied at some time point must be included in~\QR.

\subsection{Rigid Witnesses}
\label{sec:qrn}

For the set~\QRn of all CQs from \tcqcqs\atcq
 that are \emph{not}
satisfied at some time point, we have to enforce a dual condition to the one
above.
That is, we have to \emph{disallow} the presence of any rigid structures that
imply the satisfaction of such a CQ.
We consider first the case that a CQ is satisfied by the unnamed part of an
interpretation~$\Imc_i$.
The case where the CQ is satisfied (partly) by named individuals is captured
by the set~\RF. 

\begin{definition}[Rigid Witness Query]\label{def:dlltqcs-rigid-witness-query}
	A CQ~$\psi$ is a \emph{rigid witness query} of a set of CQs~\Q if there exists $\acq\in\Q$ such that
	\begin{itemize}
		\item $\langle\aont,\cqinst{\{\psi\}}\rangle\models\acq$, \ie whenever $\psi$ is satisfied, the same must hold for~$\acq$;
		\item \ctwo{$\NC(\psi)\cup\NR(\psi)\subseteq\NRC(\aont)\cup\NRR(\aont)$, \ie $\psi$ uses only rigid names; and}
		\item \ctwo{$|\NT(\psi)|\le|\NT(\acq)|$, \ie the size of~$\psi$ is bounded by a polynomial in the sizes of~$\acq$ and~$\aont$.}
	\end{itemize}
\end{definition}

The condition imposed by~\QRn requires that no witness of~\QRn is
satisfied at any time point, because that would imply that an element of~\QRn
has to be satisfied at \emph{every} time point, contradicting the purpose
of~\QRn.

\subsection{Flexible Successors of Named Elements}
\label{sec:rf}

%
The set~\RF represents the last part of the additional information we have to guess. It contains information about flexible role successors of named
individuals, 
to capture possible effects of RIs involving
both rigid and flexible roles, as sketched in the following example.

\begin{example}
At $n=1$, the TCQ
$$\atcq:=\big(\Previous A(a)\big)\land \lnot\big(\exists x.B(a)\land R(a,x)\land T(a,x)\big)$$ is not satisfiable w.r.t.\ the TKB $\atkb~\langle\aont,(\emptyset,\{B(a) \})\rangle$, where \aont contains CIs $A\sqsubseteq \exists S$, $S\sqsubseteq R$, and $S\sqsubseteq T$, and  $R$ and $T$ are the only rigid symbols.
This is because every model $\atint=(\aint_i)_{i\ge 0}$ of \atkb and \atcq must satisfy $\aint_0\models\aont$, $\aint_1\models\aont$, $\aint_0\models \acq_1$, $\aint_1\models B(a)$, and $\aint_1\not\models\acq_2$, where $\acq_1=A(a)$ and
$\acq_2=\exists x.B(a)\land R(a,x)\land T(a,x)$.
Thus, there has to be an element~\el such that the tuple $(a,\el)$ is contained in $S^{\aint_0},R^{\aint_0},$ and $T^{\aint_0}$.
Since \atint respects the rigid names, this means that this tuple is also contained in $R^{\aint_1}$ and $T^{\aint_1}$. Hence, $\aint_1\models \exists x.B(a)\land R(a,x)\land T(a,x)$, which is a contradiction.

Rigid ABox types, consequences, and witnesses, however, do not help in this case, since \atkb and \atcq imply that a named individual has a \emph{flexible} role successor~$e$ that implies several rigid relations to the same element~$e$.
To see this, consider 
 $\ax_1=\{p_1\}$, $\ax_2=\emptyset$,
a mapping $\iota=\{ 0\mapsto1,1\mapsto2 \}$, and arbitrary interpretations
$\Jmc_0,\Jmc_1,\Jmc_2,\Jmc_3$ as in Definition~\ref{def:tqa-r-sat} except
that they do not share one domain.
We must then have $\QR=\{\acq_1\}$ and $\QRn=\{\acq_1,\acq_2\}$.
The corresponding rigid ABox type $\AR=\{ \exists R(a), \exists T(a) \}$
captures the existence of the rigid relations, but not the fact that they refer
to the same domain element.
That is, even if we require $\Jmc_0,\Jmc_1,\Jmc_2,\Jmc_3$ to all satisfy~\AR,
this only means that $a$ must have one $R$-successor and one $T$-successor
(except in $\Jmc_0$ and $\Jmc_2$, where they must be the same element).
Likewise, the only rigid consequences of~$\acq_1$ are
$\{ \exists R(a), \exists T(a) \}$.
There are no rigid witnesses for~$\acq_2$ since this query is not entailed by a
combination of rigid names.
Hence, the additional information in the rigid ABox type, consequences, and witnesses
alone cannot detect the contradiction in~$\Imc_1$ as described above.
\end{example}

Formally, \RF contains assertions of the form $\exists S(b)$ with
$S\in\NFRM(\aont)$ and $b\in\NI(\atkb)\cup\NIA$, \ctwo{where
$\NIA\subseteq\NI$ contains all individual names~$\ael{x}$ we
introduced in Section~\ref{sec:qr} for the variables~$x$ occurring in the CQs
of~\tcqcqs\atcq}. Because of our assumption that the CQs have no variables in common, each
$\ael{x}\in\NIA$ can be associated to the unique CQ containing~$x$.

The set \RF captures information about which named elements can have which kinds
of flexible role successors, which gives rise to the (rigid) ABox
$\Amc_{\RF}:=\bigcup_{\exists S(b)\in\RF}\Amc_{\exists S(b)}$,
where $\Amc_{\exists S(b)}$ is constructed as follows.
\begin{enumerate}
  \item For every domain element $\uel{b\apath}$ of the canonical interpretation
    $\Imc_{\langle\aont,\{\exists S(b)\}\rangle}$ where the length of~\apath is
    at most $\max\{|\Var(\acqalpha)| \mid \acqalpha\in\tcqcqs\atcq\}$, introduce
    a new individual name $\ael{b\apath}$.
		These new individual names are collected in the set~\NIT.
	\item For every $\ael{\apath S}\in\NIT$, add the following rigid assertions to
	  the set $\Amc_{\exists S(b)}$, which is initially empty:
		\begin{itemize}
			\item for every $B\in\BCr(\aont)$ with
				$\aont\models\exists S^-\sqsubseteq B$, the concept assertion
				$B(\ael{\apath S})$; 
			\item for every $R\in\NRRM(\aont)$ with $\aont\models S\sqsubseteq R$, the
				role assertion $R(\ael{\apath},\ael{\apath S})$ if
				$\apath\not \in\NI(\atkb)$, otherwise $R({\apath},\ael{\apath S})$.
		\end{itemize}
\end{enumerate}
Observe that these consequences only have to be considered up to a depth which
ensures that possible matches of the CQs in \tcqcqs\atcq can be fully
characterized.
The ABox~$\Amc_{\RF}$ is of exponential size,
but this does not affect the complexity results (see Sections~\ref{sec:dlltcqs-cc} and \ref{sec:dlltcqs-dc}).

\subsection{R-complete Tuples}
\label{sec:r-complete-def}

Based on the tuple $(\AR,\QR,\QRn,\RF)$, we hence consider the ABoxes \AR, $\rigcons{\QR}$, and $\Amc_{\RF}$ described above.
%
In addition, 
at each $i\in[0,n+k]$, we check consistency also \wrt $\Amc_i$ and
$\Amc_{\Q_{\iota(i)}}$, where $\Q_{\iota(i)}$ contains all CQs~$\acq_j$ that occur
positively in the conjunction $\chi_{\iota(i)}$, \ie for which we have
$p_j\in W_{\iota(i)}$.
This gives rise to the knowledge bases
\[ \KR[i]:=\langle\aont,
\AR\cup\rigcons{\QR}\cup\afb_{\Q_{\iota(i)}}\cup\afb_{\RF}\cup\afb_i\rangle\ 
\]
which we need to check for consistency.
Observe that $\rigcons{\QR}$, $\Amc_{\Q_i}$, and~$\Amc_{\RF}$ may share individual
names from~$\NI(\atkb)$ and~\NIA. Different ABoxes~$\Amc_{\Q_i}$
and~$\Amc_{\Q_j}$ share individual names from~\NIA for the CQs in ${\Q_i}\cap{\Q_j}$, while individual names from~\NIT only occur in~$\Amc_{\RF}$.

We finally summarize the conditions outlined above in the following
definition.


\begin{definition}[r-complete]\label{def:dlltcqs-r-complete}
A tuple $(\AR,\QR,\QRn,\RF)$ as above is \emph{r-complete}
(w.r.t.~\as and~$\iota$) if the following hold for all $i\in[0,n+k]$.
\begin{enumerate}[label=\textnormal{(\rccond\arabic*)}]
\item\label{def:dlltcqs-rc:consistent}
$\KR[i]$ is consistent.
\item\label{def:dlltcqs-rc:negcqs}
For all $p_j\in\overline{\ax_{\iota(i)}}$, we have
$\KR[i]\not\models\acqalpha_j$.
\item\label{def:dlltcqs-rc:cqcons}
For all $p_j\in\ax_{\iota(i)}$, we have $\acqalpha_j\in\QR$.
\item\label{def:dlltcqs-rc:qrn}
For all $p_j\in\overline{\ax_{\iota(i)}}$, we have $\acqalpha_j\in\QRn$.
\item\label{def:dlltcqs-rc:witnesses}
For all CQs $\acqalpha\in\QRn$ and \ctwo{rigid}
witness queries~\acqtwo for~\acqalpha w.r.t.~\aont, we have
$\KR[i]\not\models\acqtwo$.
\item\label{def:dlltcqs-rc:rf}
For all $S\in\NFRM(\aont)$ and $b\in\NI(\atkb)\cup\NIA$, we have
$\exists S(b)\in\RF$ iff there is an index $j\in[0,n+k]$ such that
$\langle\aont,\AR\cup\rigcons{\QR}\cup\Amc_{Q_{\iota(j)}}\cup\Amc_j\rangle
\models \exists S(b)$.
\end{enumerate}
\end{definition}

Note that all conditions except~\ref{def:dlltcqs-rc:rf} refer only to a single
index~$i$.
Conditions~\ref{def:dlltcqs-rc:consistent} and~\ref{def:dlltcqs-rc:negcqs}
ensure that we can actually satisfy $\langle\Omc,\Amc_i\rangle$ and
$\chi_{\iota(i)}$ together with the additional ABoxes.
As described in Sections~\ref{sec:qr} and~\ref{sec:qrn},
Conditions~\ref{def:dlltcqs-rc:cqcons} and~\ref{def:dlltcqs-rc:qrn} make sure
that only the queries from~\QR (resp., \QRn) can occur in some~$\ax\in\as$
(resp., $\overline{\ax}$).
\cond\ref{def:dlltcqs-rc:witnesses} checks that the queries corresponding to
some $\overline{\ax}$ are not entailed because of the rigid names, by requiring
that the KBs~$\KR[i]$ do not entail any of their witnesses.
And the last condition ensures that \RF is minimal; that is, that it contains only
those assertions $\exists S(b)$ that are required by one of the KBs~\KR[i]
(excluding~$\Amc_{\RF}$).


\begin{lemma}\label{lem:dlltcqs-iff-s-r-consistent}
\as is r-satisfiable w.r.t.~$\iota$ and~\atkb iff there is an r-complete 
tuple w.r.t.~\as and~$\iota$.
\end{lemma}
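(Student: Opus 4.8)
The plan is to prove the two directions of the biconditional separately, since each is essentially a construction task: from an r-satisfying family of interpretations I must extract a combinatorial tuple $(\AR,\QR,\QRn,\RF)$ and verify the six conditions, and conversely from an r-complete tuple I must assemble a single family of interpretations over a shared domain that respects rigid names. Throughout I would keep \as and $\iota$ fixed and work with the indices $i\in[0,n+k]$ as in Definition~\ref{def:tqa-r-sat}.

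\medskip\noindent\textbf{From r-satisfiable to r-complete.}
Suppose interpretations $\Jmc_0,\dots,\Jmc_{n+k}$ witness r-satisfiability: they share a domain, respect rigid names, and each $\Jmc_i$ models \aont, $\afb_i$, and $\chi_{\iota(i)}$. I would read off the tuple directly from these interpretations. For \AR, take all rigid assertions over $\NI(\atkb)$ true in the (common) rigid part; this is a well-defined rigid ABox type because the $\Jmc_i$ agree on rigid names and UNA holds. For \QR, collect every $\acq_j\in\tcqcqs\atcq$ with $p_j\in\ax_{\iota(i)}$ for some~$i$; for \QRn, every $\acq_j$ with $p_j\in\overline{\ax_{\iota(i)}}$ for some~$i$; this makes \ref{def:dlltcqs-rc:cqcons} and \ref{def:dlltcqs-rc:qrn} hold by definition. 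For \RF, put $\exists S(b)\in\RF$ exactly when the minimality condition \ref{def:dlltcqs-rc:rf} demands it, i.e.\ when some $\langle\aont,\AR\cup\rigcons{\QR}\cup\Amc_{Q_{\iota(j)}}\cup\Amc_j\rangle$ entails $\exists S(b)$; this is the one piece that is defined by fiat to satisfy its condition. The work is then to check \ref{def:dlltcqs-rc:consistent}, \ref{def:dlltcqs-rc:negcqs}, and \ref{def:dlltcqs-rc:witnesses}. Consistency of $\KR[i]$ follows because $\Jmc_i$ (possibly after adding the anonymous witnesses forced by $\Amc_{\RF}$ and by $\rigcons{\QR}$, which are genuine consequences and hence already present up to homomorphism in the shared model) is itself a model of all the asserted ABoxes. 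For \ref{def:dlltcqs-rc:negcqs}, if $p_j\in\overline{\ax_{\iota(i)}}$ then $\Jmc_i\not\models\acq_j$, and since all the added ABoxes only record structure already realized in the shared domain, $\KR[i]$ cannot force~$\acq_j$. For \ref{def:dlltcqs-rc:witnesses}, a rigid witness \acqtwo for some $\acqalpha\in\QRn$ entailed by $\KR[i]$ would, because \acqtwo is purely rigid, be matched by the rigid part common to all $\Jmc_{i'}$, forcing $\acqalpha$ at every time point and in particular at the index where $\acqalpha$ was supposed to fail---a contradiction.

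\medskip\noindent\textbf{From r-complete to r-satisfiable.}
This is the harder direction and the main obstacle. Given an r-complete tuple, each $\KR[i]$ is consistent by \ref{def:dlltcqs-rc:consistent}, so I get individual models $\Imc_i$; but these live over unrelated domains and need not agree on rigid names, so the task is to \emph{glue} them into one constant-domain DL-LTL-compatible family. The strategy is to build each $\Imc_i$ as a canonical (chase) model of $\KR[i]$ and then to identify the rigid parts across all~$i$. The named individuals and their rigid relations are forced to coincide by \AR together with the UNA, so they can be merged verbatim. The delicate point is the \emph{anonymous} rigid structure: the existential-restriction witnesses generated by the chase must be made to agree across time points, including the flexible successors of named elements recorded by~\RF. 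Here \RF and its ABox $\Amc_{\RF}$ do the decisive work: because \ref{def:dlltcqs-rc:rf} makes \RF record exactly the flexible successors that \emph{some} $\KR[j]$ forces, and $\Amc_{\RF}$ spells out all rigid relations these successors carry (up to the CQ-depth bound $\max\{|\Var(\acqalpha)|\}$), every time point is already forced to contain the full rigid footprint of these successors. Thus I can take, as the shared domain, the union of the named individuals, the \NIT-elements from $\Amc_{\RF}$, the instantiation individuals \NIA from $\rigcons{\QR}$ and $\Amc_{Q_{\iota(i)}}$, and one private anonymous tree per time point for the remaining flexible chase; I then verify that this single structure, with flexible relations read off per-index from the respective $\Imc_i$ and rigid relations read off uniformly, is a model of \aont, $\afb_i$, and $\chi_{\iota(i)}$ at each~$i$. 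Positive CQ literals in $\chi_{\iota(i)}$ hold because $\Imc_i\models\Amc_{Q_{\iota(i)}}$ realizes them; negative CQ literals hold by \ref{def:dlltcqs-rc:negcqs} for matches using flexible structure and by \ref{def:dlltcqs-rc:witnesses} for matches that would run through the shared rigid structure---this last case is exactly why rigid witness queries were introduced, and checking that the depth bound on witnesses suffices to block \emph{all} rigid matches of a CQ is the most technical step I anticipate. The argument that respecting rigid names is preserved reduces to observing that the merged rigid part is, by construction, identical in every $\Imc_i$, which completes the equivalence.
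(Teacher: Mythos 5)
Your ``only if'' direction is essentially the paper's proof: you read the tuple off the witnessing interpretations in exactly the same way (with \RF defined by fiat from Condition~\ref{def:dlltcqs-rc:rf}), and your arguments for Conditions~\ref{def:dlltcqs-rc:negcqs} and~\ref{def:dlltcqs-rc:witnesses} are the ones the paper uses. What you gloss over there is the UNA: a homomorphism witnessing $\Jmc_i\models\chi_{\iota(i)}$ may identify distinct variables, whereas the fresh names in $\cqinst{\Q_{\iota(i)}}$, $\rigcons{\QR}$, and $\Amc_{\RF}$ must denote pairwise distinct elements; the paper repairs this by duplicating domain elements, and this duplication (together with a two-step extension of the $\Jmc_i$) is also what makes Condition~\ref{def:dlltcqs-rc:consistent} and the global ``if''-half of Condition~\ref{def:dlltcqs-rc:rf} go through. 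These are repairable technicalities.

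The genuine gap is in your ``if'' direction, in how you define the glued interpretations. You take the common domain and then interpret ``rigid relations uniformly'' and ``flexible relations per-index from the respective $\Imc_i$''. The resulting structures are in general \emph{not} models of~\aont. Once the anonymous elements $\adomp[\Imc_j]\un$ of every $\Imc_j$ live in the shared domain and the rigid relations of $\Imc_j$ are present at every time point, the ontology forces \emph{flexible} consequences on those imported elements at time point~$i$, about which $\Imc_i$ says nothing (they do not occur in its domain). For instance, if $R$ is rigid, $A$ is flexible, and $\aont\models\exists R^-\sqsubseteq A$, then any rigid edge $(d,e)\in R^{\Imc_j}$ between anonymous elements of~$\Imc_j$ forces $e\in A^{\Jmc_i}$ for \emph{every}~$i$; a rigid set $\abcset\subseteq\BCr(\aont)$ with $\aont\models\bigsqcap\abcset\sqsubseteq\exists S$ ($S$ flexible) satisfied by an imported element forces a flexible $S$-successor at every time point; and a rigid role $S$ with $\aont\models S\sqsubseteq R$ ($R$ flexible) forces flexible edges. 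This is precisely why the paper defines the flexible names of~$\Jmc_i$ not per-index but as $A^{\Imc_i}\cup\bigcup_{j}\RC{A}{\Imc_j}$ (analogously for roles), closing the imported rigid structure under its flexible consequences, and why it introduces witnesses (Definition~\ref{def:dlltcqs-witness}) to determine which flexible edges of~$\Imc_j$ must be replicated at other time points. This machinery is not an optional refinement: without it your $\Jmc_i$ fail Definition~\ref{def:tqa-r-sat}, and with it the verification of negative CQ literals becomes the paper's two-case analysis --- matches lying entirely in the anonymous part of a single $\Imc_j$ (blocked by Condition~\ref{def:dlltcqs-rc:witnesses}, after a nontrivial inductive replacement of flexible atoms by rigid ones to actually produce a rigid witness query), versus matches touching named elements (re-routed into $\Imc_i$ and blocked by Condition~\ref{def:dlltcqs-rc:negcqs}). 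Your cut ``flexible matches handled by C2, rigid matches handled by C5'' is not the right dichotomy: a match can combine flexible structure of $\Imc_i$ with structure imported from $\Imc_j$ through a named element, and such matches are neither purely flexible nor purely rigid.
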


A full proof of this lemma can be found in Appendix~\ref{app:r-complete}; we
only sketch the main ideas here.
For the ``only if''-direction, let $\Jmc_0,\ldots,\Jmc_{n+k}$ be interpretations over a common domain~\adom, which exist according to the r-satisfiability of~\as (see Definition~\ref{def:tqa-r-sat}).
Finding $(\AR,\QR,\QRn,\RF)$ is then straightforward: since the interpretations respect rigid names, \AR is uniquely defined, $\QR,\QRn$ are determined by~\as (see Conditions~\ref{def:dlltcqs-rc:cqcons} and~\ref{def:dlltcqs-rc:qrn}), and \RF is then given by \cond\ref{def:dlltcqs-rc:rf}.
It is rather easy to show that each of the knowledge bases~\KR[i] has a model (\cond\ref{def:dlltcqs-rc:consistent}) which satisfies neither a CQ from $\tcqcqs{\atcq}\setminus\Q_{\iota(i)}$ (\cond\ref{def:dlltcqs-rc:negcqs}) nor a witness of the queries in \QRn (\cond\ref{def:dlltcqs-rc:witnesses}).
However, special attention needs to be given to the UNA and \cond\ref{def:dlltcqs-rc:rf}.
The problem is that the homomorphisms witnessing $\Jmc_i\models\chi_{\iota(i)}$
may map several variables to the same domain element, but the new individual
names $a_x\in\NIA$ in~$\rigcons{\QR}$ and $\Amc_{Q_{\iota(i)}}$ are required to be
different by the UNA.
Similar considerations apply to the individual names \NIT in $\Amc_{\RF}$.
The solution is to introduce enough copies of these domain elements in order to
satisfy the UNA.

For the \enquote{if}-direction,
we assume an r-complete tuple~$(\AR,\QR,\QRn,\RF)$ to be given, and need to find interpretations $\Jmc_0,\ldots,\Jmc_{n+k}$ that satisfy the requirements of Definition~\ref{def:tqa-r-sat}.
The idea is to start with the canonical interpretations of the KBs given by
\cond\ref{def:dlltcqs-rc:consistent}, and then merge them to obtain a common
domain~$\Delta$ while satisfying the rigid names.

%
We introduce a few auxiliary notions, for $i\in[0,n+k]$:
\begin{itemize}
\item \ctwo{$\aint_i$ is an abbreviation for the canonical interpretation~$\aint_{\KR[i]}$}
of~$\KR[i]$ as specified in Definition~\ref{def:dll-io}.

\item We rename every element $\ael{x}\in\NIA\cap\Delta^{\Imc_i}$ to~$\aiel{x}$,
and collect all these elements in the set~$\adomp[\aint_i]\aux$. We similarly
define $\adomp[\aint_i]\tree$ and $\adomp[\aint_i]\un$ based on the elements
of~$\NIT\cap\Delta^{\Imc_i}$ and the unnamed domain elements of~$\Imc_i$,
respectively.
\item 
\emph{Witnesses} are similar to rigid witness queries, but simpler. A witness is a set of rigid basic concepts that, if satisfied in a model of \aont, implies the existence of a particular role chain.
We use them to detect whether we need to include certain anonymous domain elements from~$\Imc_i$ also at other time points $j\neq i$.
\end{itemize}
\begin{definition}[Witness]\label{def:dlltcqs-witness}
	Let \aint be the canonical interpretation for a knowledge base $\langle\aont,\afb\rangle$.
	A set $\awit\subseteq\BCr(\aont)$ is a \emph{witness} of
	$\uel{\apath}\in\adomp[\aint]\un$ w.r.t.\ 
	$\langle\aont,\afb\rangle$ if $\apath=\apathtwo R_0\ldots R_\ell$ is such that
	$\aont\models\bigsqcap
	\awit\sqsubseteq\exists R_0$
	and either $\uel{\apathtwo}{}\in(\bigsqcap\awit)^\aint$ or
	$\apathtwo\in\NI(\afb)\cap(\bigsqcap\awit)^\aint$.
	The set of all witnesses of $\uel{\apath}$ is denoted
	by~$\wits\aont{\uel{\apath}}$.
	For all $\el\in\adom[\aint]\setminus \adomp[\aint]\un$, we define
	$\wits\aont\el:=\emptyset$.
\end{definition}
Hence, the domain of~$\Imc_i$ is composed of the pairwise disjoint sets
$\NI(\atkb)$, $\adomp[\aint_i]\aux$, $\adomp[\aint_i]\tree$, and
$\adomp[\aint_i]\un$.
Moreover, the domains of different interpretations~$\Imc_i,\Imc_j$ only overlap
in~$\NI(\Kmc)$.
The common domain is now defined as
$\adom := \NI(\atkb)\cup\ \bigcup_{i=0}^{n+k}\big( \adomp[\Imc_i]\aux\cup \adomp[\Imc_i]\tree\cup \adomp[\Imc_i]\un\big)$.

Next, we construct the interpretations $\Jmc_0,\ldots,\Jmc_{n+k}$ over~$\Delta$
as required by Definition~\ref{def:tqa-r-sat}.
The canonical interpretation~$\Imc_i$ represents the parts specific to~$\Jmc_i$
and, for the interpretation of the rigid names in~$\Jmc_i$, all~$\Imc_j$ with
$j\in[0,n+k]$ are considered. The interpretation of the flexible names then
can obviously not be solely based on~$\Imc_i$, but has to be adjusted.
\changed{Intuitively, we include in~$\Jmc_i$ all consequences of the rigid names in~$\Imc_j$.}
%
%
Formally, for all $i\in[0,n+k]$, \ctwo{we define $\Jmc_i$ as follows.}
\begin{itemize}
\item \ctwo{$\Delta^{\Jmc_i}:=\Delta$.}
\item For all $a\in\NI(\atkb)$, $a^{\Jmc_i}:=a.$
\item For all rigid concept names~$A$,
$A^{\Jmc_i} := \bigcup_{j=0}^{n+k} A^{\Imc_j}.$ 
\item For all flexible concept names~$A$, \changed{$A^{\Jmc_i} := A^{\Imc_i}
\cup\bigcup_{j=0}^{n+k} \RC{A}{\Imc_j}$, where}%
\begin{align*}
\changed{\RC{A}{\Imc_j}} & \changed{:= \big\{\ujel{\apath R}\in\adomp[\Imc_j]\un \mid \aont\models \exists R^-\sqsubseteq A,\ \wits\aont{\ujel{\apath R}}\neq\emptyset\big\} \cup{}} \\
&\phantom{:={}} \ctwo{\bigcup
\big\{\big(\bigsqcap\abcset\big)^{\Imc_j} \mid \abcset\subseteq\BCr(\aont),\ \aont\models \bigsqcap\abcset\sqsubseteq A \big\}}
\end{align*}
captures the flexible consequences of the rigid names in~$\Imc_j$.
\item For all rigid role names~$R$,
$R^{\Jmc_i} := \bigcup_{j=0}^{n+k} R^{\Imc_j}.$ 
\item For all flexible role names~$R$, \changed{$R^{\Jmc_i} := R^{\Imc_i} \cup \bigcup_{j=0}^{n+k}\RC{R}{\Imc_j}$, where}
\begin{align*}
\changed{\RC{R}{\Imc_j}} & \changed{:=
\big\{(\elone,\eltwo)\in R^{\Imc_j} 
\mid\wits\aont\elone\neq\emptyset\text{ or }\wits\aont\eltwo\neq\emptyset\big\} \cup{}} \\
&\phantom{:={}} \ctwo{\bigcup\big\{
S^{\Imc_j} \mid S\in\NRRM(\aont),\ \aont\models S\sqsubseteq R \big\}}.
\end{align*}
\end{itemize}
The interpretations $\Jmc_0,\dots,\Jmc_{n+k}$ share the same domain and respect the rigid names. 
%
%
%
We next point out an important characterization of $B^{\Jmc_i}$ for all basic concepts~$B$, in terms of the original interpretations $\Imc_0,\dots,\Imc_{n+k}$.
\begin{restatable}{lemma}{lemdlltcqsIdef}\label{lem:dlltcqs-idef}
	For all $i,j\in[0,n+k]$ and basic concepts $B\in\BC(\aont)$, the
	following hold.
	\begin{enumerate}[label=\textnormal{\alph*)}]
		\item\label{lem:dlltcqs-idef:ind}
		For all $\el\in\NI(\atkb)$, we have $\el\in B^{\Jmc_i}$ iff $\el\in B^{\Imc_i}$.
		\item\label{lem:dlltcqs-idef:auxr} 
		If $B$ is rigid, then, for every
		$\el\in\adomp[\Imc_j]\aux\cup\adomp[\Imc_j]\tree\cup\adomp[\Imc_j]\un$, we have $\el\in B^{\Jmc_i}$ iff $\el\in B^{\Imc_j}$.
		\item\label{lem:dlltcqs-idef:auxf} 
		If $B$ is flexible, then, for every
		$\el\in\adomp[\Imc_j]\aux\cup\adomp[\Imc_j]\tree\cup\adomp[\Imc_j]\un$, we have $\el\in B^{\Jmc_i}$ iff
		\begin{enumerate}[label=\textnormal{(\roman*)}]
			\item\label{lem:dlltcqs-idef:auxf1} $i=j$ and $\el\in B^{\Imc_i}$, or
			\item\label{lem:dlltcqs-idef:auxf2} there is a $\abcset\subseteq\BCr(\aont)$ with $\el\in(\bigsqcap\abcset)^{\Imc_j}$ and
			$\aont\models \bigsqcap\abcset\sqsubseteq B$, or
			\item\label{lem:dlltcqs-idef:auxf3} $\el\in B^{\Imc_j}\cap\adomp[\Imc_j]\un$ and 
			$\wits\aont\el\not=\emptyset$.
		\end{enumerate}
	\end{enumerate}
\end{restatable}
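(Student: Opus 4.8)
The plan is to prove the three parts together by a case analysis on the shape of the basic concept~$B$ --- a concept name~$A$ or an existential restriction $\exists R$ --- and on whether $B$ is rigid or flexible, unfolding in each case the definition of~$\Jmc_i$. The organising principle is that the domains of distinct canonical interpretations $\Imc_{j'}$ overlap only in $\NI(\atkb)$; hence, for an anonymous-type element $\el\in\adomp[\Imc_j]\aux\cup\adomp[\Imc_j]\tree\cup\adomp[\Imc_j]\un$, any of the sets $A^{\Imc_{j'}}$, $\RC{A}{\Imc_{j'}}$, $R^{\Imc_{j'}}$, $\RC{R}{\Imc_{j'}}$ entering the definition of~$\Jmc_i$ that could contain~$\el$ (or an edge leaving~$\el$) must have $j'=j$, which localises the whole analysis to~$\Imc_j$. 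I would treat the rigid cases first so that they can be reused in the flexible ones.

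The concept-name and rigid cases are routine. For a rigid concept name~$A$ we have $A^{\Jmc_i}=\bigcup_{j'}A^{\Imc_{j'}}$, so part~b) is immediate from the disjointness above, and part~a) rests on the fact that every $\KR[j']$ contains the same complete rigid ABox type~$\AR$ (Definition~\ref{def:dlltcqs-aboxtype}): as each $\Imc_{j'}$ models $\KR[j']$, the membership of a \emph{named} individual in a rigid basic concept is already decided by~$\AR$ and therefore coincides across all~$j'$; I will call this the \emph{synchronisation} of rigid facts on named individuals. Applying synchronisation to the rigid assertion $\exists R(\el)\in\ASr(\atkb)$ settles part~a) for a rigid $\exists R$ as well, while part~b) for rigid $\exists R$ is once more immediate from disjointness. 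For a \emph{flexible} concept name~$A$, the two summands of $\RC{A}{\Imc_j}$ match conditions~(iii) and~(ii) of part~c) once one uses the standard property of the canonical interpretation (Definition~\ref{def:dll-io}) that an anonymous successor $\uel[j]{\apath R}$ belongs to $A^{\Imc_j}$ exactly when $\aont\models\exists R^-\sqsubseteq A$; for named~$\el$ the first summand has no named elements and the second collapses, by synchronisation, to $\el\in A^{\Imc_i}$.

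The core of the argument is the flexible existential restrictions. For a named~$\el$ and flexible~$R$, I would classify the edges $(\el,\el')\in R^{\Jmc_i}$ by the two summands of the sets $\RC{R}{\Imc_{j'}}$: an edge from the rigid-subrole summand yields $\el\in(\exists S)^{\Imc_{j'}}$ for a rigid $S\sqsubseteq R$, hence $\el\in(\exists S)^{\Imc_i}$ by synchronisation and $\el\in(\exists R)^{\Imc_i}$ by closure under role inclusions; an edge from the witnessed summand must, since $\wits\aont\el=\emptyset$ for a named~$\el$, have a witnessed unnamed successor~$\el'$, and the witness of~$\el'$ exhibits a \emph{rigid} set $\abcset$ with $\el\in(\bigsqcap\abcset)^{\Imc_{j'}}$ and $\aont\models\bigsqcap\abcset\sqsubseteq\exists R$, so $\el\in(\exists R)^{\Imc_i}$ follows again by synchronisation. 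For an anonymous~$\el$ from~$\Imc_j$, I would show that the $R^{\Jmc_i}$-edges leaving~$\el$ realise exactly~(i)--(iii): the rigid-subrole summand gives~(ii) with $\abcset=\{\exists S\}$; the witnessed summand gives~(iii) when $\wits\aont\el\neq\emptyset$, and otherwise the witness lies on the successor~$\el'$, which I would then resolve by the position of~$\el'$ in the canonical tree.

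The principal obstacle is this last configuration, in which inverse roles produce an $R$-edge from~$\el$ to its \emph{parent}~$\el'$: such a back-edge is structural and need not be forced by the concept type of~$\el$, so neither~(ii) nor~(iii) is apparent. The observation that unblocks it is that witnesses are inherited downward --- if $\uel[j]{\apath}$ has a witness via a decomposition $\apath=\apathtwo R_0\ldots R_\ell$, then each descendant has a witness via the \emph{same} base~$\apathtwo$ and first role~$R_0$. Hence, if the witnessed successor~$\el'$ is an ancestor of~$\el$, then~$\el$ itself is witnessed and we land in~(iii); and if~$\el'$ is a child of~$\el$, the witness of~$\el'$ is either rooted at~$\el$, giving~(ii), or rooted higher, again making~$\el$ witnessed and giving~(iii). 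Verifying this inheritance property and matching the role-inclusion closure of the canonical interpretation against the two summands of $\RC{R}{\Imc_j}$ is where most of the bookkeeping will lie.
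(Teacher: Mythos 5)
Your proposal is correct and takes essentially the same route as the paper's proof: it rests on the pairwise disjointness of the unnamed domains to localise everything to~$\Imc_j$, on synchronisation of rigid facts over named individuals via the rigid ABox type~\AR, and on a case analysis over the two summands of $\RC{A}{\Imc_j}$ and $\RC{R}{\Imc_j}$, with the back-edge configuration (an $R$-edge from~\el to its parent~\eladd) resolved exactly as in the paper, namely by observing that a witness of~\eladd under Definition~\ref{def:dlltcqs-witness} is inherited by its descendant~\el. The only difference is organisational: the paper factors the role bookkeeping and the named-element/flexible-concept argument into separate auxiliary lemmas (Lemmas~\ref{lem:dlltcqs-idef-role}, \ref{lem:dlltcqs-idefj-role}, and~\ref{lem:dll-flexible-basic-concept}) before the main case analysis, whereas you inline them.
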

Based on Lemma~\ref{lem:dlltcqs-idef}, we can show that $\Jmc_i$ is a model of~$(\aont,\afb_i)$. 
The main part of the proof, however, is to show that $\Jmc_i$ satisfies the corresponding conjunction $\chi_{\iota(i)}$ of CQ literals. For the positive literals, this is easy given that the ABox $\Amc_{\Q_{\iota(i)}}$ contains an instantiation of all these CQs and is satisfied by $\Imc_i$.
For the negative literals $\lnot\acqalpha$, we show that, if $\Jmc_i$ satisfies \acqalpha via a homomorphism~$\pi$, then one of the following cases must apply:
%
\begin{enumerate}[label=(\Roman*)]
  \item $\pi$ maps all terms to unnamed domain elements of a single~$\Imc_j$, and a rigid witness query of \acqalpha is satisfied in~$\Imc_j$.
	\item The image of~$\pi$ includes named elements, and either it maps directly into $\Imc_i$ or we can construct such a homomorphism. The latter holds because, if $\pi$ maps some terms to named domain elements from~$\Imc_j$, $j\neq i$, corresponding rigid knowledge on the named elements must be contained in the additional ABoxes and thus also be satisfied in~$\Imc_i$.
\end{enumerate}
Thus, the first case contradicts Condition~\ref{def:dlltcqs-rc:witnesses}, and
the second case is impossible due to Condition~\ref{def:dlltcqs-rc:negcqs}.
This concludes the proof of Lemma~\ref{lem:dlltcqs-iff-s-r-consistent}.
%


\section{Combined Complexity}\label{sec:dlltcqs-cc}

Our characterization shows that there is no need to store the exponentially large set~\as in order to check r-satisfiability.
That is, given an r-complete tuple $(\AR,\QR,\QRn,\RF)$ and a time point~$i$
with associated world $W:=W_{\iota(i)}$ and ABox $\Amc:=\Amc_i$, the conditions
of Definition~\ref{def:dlltcqs-r-complete} (except for one direction
of~\ref{def:dlltcqs-rc:rf}) can be checked independently for the KB
$\langle\aont,\AR\cup\rigcons{\QR}\cup\afb_{\Q_\ax}\cup\afb_{\RF}\cup\afb\rangle$ if
we define $\Q_\ax:=\{\acq_j\in\tcqcqs{\atcq}\mid p_j\in\ax\}$.
This allows us
to show that TCQ satisfiability (and hence
also entailment) is in \PSpace w.r.t.\ combined complexity, which matches the
complexity of satisfiability in LTL (\cf Lemma~\ref{lem:ltl-periodic-model}).

We adapt the procedure for LTL~\cite{SiCl85:ltlpspace} as described in
Algorithm~\ref{algo:tq-sat}.
It constructs the propositional types $T_0,\dots,T_s,\dots,T_{s+p}$
one after the other, without storing the whole sequence.
It keeps in memory two types~$\atype_{\fnow}$ and~$\atype_{\fnext}$ for
the current and next time point, respectively, and checks whether these sets are
t-compatible \changed{(see Appendix~\ref{sec:prelims-ltl} for the definition of t-compatibility)}.
The algorithm additionally guesses the start~$s$ and length~$p$ of the period
(Lines~\ref{ls}, \ref{lp}), stores the type $\atype_{\fs}$ (Line~\ref{lfs}), and
checks if the period is valid by comparing $\atype_{\fnext}$ to $\atype_{\fs}$
at time point $s+p$ (Line~\ref{lfinal}).
It also ensures that all $\Until$-formulas are satisfied within the period
(Lines~\ref{lfs}, \ref{lfu}, \ref{lfinal}).
\begin{algorithm}[tb]
	\fontsize{8pt}{11pt}\selectfont
	\KwIn{TCQ~\atcq, TKB~$\atkb~\langle\aont,(\afb_i)_{0\le i\le n}\rangle$}
	\KwOut{\true if \atcq is satisfiable w.r.t.\ \atkb, otherwise \false}
	\mycolorbox{AlgoGray}{Guess $(\AR,\QR,\QRn,\RF)$ and set $\RF':=\RF$}\;
	\label{lguess}
	$i:=0$\;
	$s:=$ Guess a number between \mycolorbox{AlgoGray}{n} and $2^{|\pa\atcq|\smash{\mycolorbox{AlgoGray}{$\scriptstyle+|\Kmc|$}}}$\;
	\label{ls}
	$p:=$ Guess a number between $0$ and $4^{|\pa\atcq|\smash{\mycolorbox{AlgoGray}{$\scriptstyle+|\Kmc|$}}}$\;
	\label{lp}
	$\atype_{\fnext}:=\emptyset$, 
	$\atype_{\fs}:=\emptyset$, 
	$\atype_{\Until}:=\emptyset$\;
	%
	$\atype_{\fnow}:=$ Guess an element of $\atypeset[\{\pa\atcq\}]$\;
	\lIf{\upshape
			$\atype_{\fnow}$ is not initial
	}{\Return{\false}}
	\While{$i< s+p$}{
		$\atype_{\fnext}:=$ Guess an element of $\atypeset[\{\pa\atcq \}]$\;
		
		
		\lIf{\upshape
		$(\atype_{\fnow},\atype_{\fnext})$ is not t-compatible
		}{\Return{\false}}
	\lIf{$i=s$}{
		$\atype_s:=\atype_{\fnow}$;\ \
		$\atype_{\Until}:=\{\altlform_1\Until\altlform_2\in \atype_s\}$}\label{lfs}
	\lIf{$i\ge s$}{
		$\atype_{\Until}:=\{\altlform_1\Until\altlform_2\in \atype_{\Until} \mid \altlform_2\not\in \atype_{\fnow}\}$
		\label{lfu}
	}
	\lIf{\upshape\mycolorbox{AlgoGray}{$i=n$} \textbf{and} $\pa\atcq\not\in\atype_{\fnow}$}{\Return{\false}}
	\label{lphi}
	\mycolorbox{AlgoGray}{$\ax:=\atype_{\fnow}\cap\pv$}\;
	\label{lw}
	\mycolorbox{AlgoGray}{\lIf{\upshape\textbf{not}  {\textnormal{\algofont{\algoTesting}($\atcq,\aont,\Amc_i,(\AR,\QR,\QRn,\RF),\ax$)}}}{
		{\Return{\false}}}}\;
	\label{lrsat}
	\mycolorbox{AlgoGray}{$\RF':=\{\exists S(b)\in\RF' \mid \langle\Omc,\AR\cup\rigcons{\QR}\cup\Amc_{Q_\ax}\cup\Amc_i\rangle\not\models \exists S(b)\}$}\;
	\label{lrfthree}
	$i:=i+1$\;
	$\atype_{\fnow}:=\atype_{\fnext}$\;
}
\lIf{\upshape
  $\atype_{\Until}=\emptyset$ \textbf{and}
  \mycolorbox{AlgoGray}{$\RF'=\emptyset$} \textbf{and}
  $(\atype_{\fnow},\atype_{s})$ is t-compatible
}{
\Return{\true}}
\label{lfinal}
\Return{\false}

\caption{\PSpace procedure for deciding TCQ satisfiability}
\label{algo:tq-sat}
\end{algorithm}

Our modifications (highlighted in gray) ensure that we consider at least $n$ time points (Line~\ref{ls}).
For t-satisfiability, we check that $\pa\atcq$ is satisfied at~$n$ instead of at~$0$ (Line~\ref{lphi}).
Now, r-satisfiability can be tested in a modular fashion (Line~\ref{lrsat})
in the procedure \algofont{\algoTesting} (see Algorithm~\ref{alg:rsat}),
given a tuple $(\AR,\QR,\QRn,\RF)$ guessed in the beginning (Line~\ref{lguess})
and the current world~$W$ (Line~\ref{lw}).
The additional set~$\RF'$ checks the global part of Condition~\ref{def:dlltcqs-rc:rf} by ensuring that all elements $\exists R(b)\in\RF$ are entailed by one of the KBs encountered by the algorithm.
%
We hence integrate the r-satisfiability and t-satisfiability tests from Lemma~\ref{lem:tcq-sat-iff} such that $\as$ and $\iota$ are implicitly represented by the worlds induced by the sequence of guessed types.

\begin{algorithm}[tb]
	\fontsize{8pt}{11pt}\selectfont
	\caption{The procedure \algofont{\algoTesting}}
	\label{alg:rsat}
	\KwIn{TCQ~\atcq, ontology~\Omc, ABox~$\Amc_i$, tuple $(\AR,\QR,\QRn,\RF)$, world~\ax}
	\KwOut{\true if Conditions \ref{def:dlltcqs-rc:consistent}--\ref{def:dlltcqs-rc:rf} hold for time point~$i$, otherwise \false}

$\KR:={}\langle\aont,\AR\cup\rigcons{\QR}\cup\Amc_{Q_\ax}\cup\Amc_i\rangle$\;

	\lIf{\upshape \KR is inconsistent}{\Return{\false}}
	\label{lcone}
	
	\For{$p_j\in\overline{\ax}$}{
		Guess a set $\Amc'_{\RF}\subseteq\Amc_{\RF}$ of size polynomial in~$|\acq_j|$\;
		\label{larfprime}
		\lIf{$\KR\cup\Amc'_{\RF}\models\acq_j$}{\Return{\false}}
		\label{lctwo}
		\lIf{$\acq_j\notin\QRn$}{\Return{\false}}
		\label{lcfour}
	}
	
	\For{$p_j\in\ax$}{
		\lIf{$\acqalpha_j\notin\QR$}{\Return{\false}}
		\label{lcthree}
	}
	
	\For{\upshape all rigid witness queries~$\psi$ of~\QRn}{
		\label{lrwq}
		\lIf{$\KR\models\psi$}{\Return{\false}}
		\label{lcfive}
	}

	\For{\upshape all $S\in\NFRM(\Omc)$ and $b\in\NI(\Kmc)\cup\NIA$}{
	\label{lrfone}
		\lIf{\upshape $\KR\models\exists S(b)$ \textbf{and} $\exists S(b)\notin\RF$}{\Return{\false}}
	}
	\label{lrftwo}
	
	%
	%
	%
	%
	
	%
	%
	%
	%

	\Return{\true}
\end{algorithm}
\begin{lemma}\label{lem:dlltcqs-algo-ext}
Algorithm~\ref{algo:tq-sat} decides TCQ satisfiability using only polynomial space.
\end{lemma}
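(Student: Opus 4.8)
The plan is to prove both halves of the claim at once: that Algorithm~\ref{algo:tq-sat} accepts on some branch iff \atcq is satisfiable w.r.t.~\atkb, and that every branch runs in polynomial space. Correctness is obtained by composing Lemma~\ref{lem:tcq-sat-iff} with Lemma~\ref{lem:dlltcqs-iff-s-r-consistent}: \atcq is satisfiable iff there are a set~\as, a mapping~$\iota$, and an r-complete tuple $(\AR,\QR,\QRn,\RF)$ such that $\pa{\atcq}$ is t-satisfiable w.r.t.~\as and~$\iota$. The algorithm realizes this by guessing the tuple once (Line~\ref{lguess}) and then running the Sistla--Clarke \PSpace search for an ultimately periodic LTL model~\cite{SiCl85:ltlpspace} (cf.\ Lemma~\ref{lem:ltl-periodic-model}), interleaving a modular r-satisfiability test \algoTesting at each time point. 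The key correspondence I would set up is that the guessed type sequence $\atype_0,\dots,\atype_{s+p}$ encodes such a periodic structure, that its worlds $w_i:=\atype_\fnow\cap\pv$ (Line~\ref{lw}) implicitly define \as (all worlds occurring) and~$\iota$ (read off the first $n+1$ types), and that the initially guessed tuple is the r-complete witness.

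For soundness (accept $\Rightarrow$ satisfiable) I would read the tuple and the type sequence off an accepting run. The initiality test, the t-compatibility checks between $\atype_\fnow$ and $\atype_\fnext$, the test $\pa{\atcq}\in\atype_\fnow$ at $i=n$ (Line~\ref{lphi}), and the final period and $\Until$-eventuality checks (Line~\ref{lfinal}) are exactly the conditions under which the periodic type sequence yields an LTL structure witnessing t-satisfiability of $\pa{\atcq}$ at~$n$ w.r.t.\ the induced \as and~$\iota$. Each call of \algoTesting verifies Conditions~\ref{def:dlltcqs-rc:consistent}--\ref{def:dlltcqs-rc:rf} at its time point (per its specification), and the running set~$\RF'$, initialized to~\RF and pruned in Line~\ref{lrfthree}, being empty at the end certifies the remaining global direction of Condition~\ref{def:dlltcqs-rc:rf}, namely that every $\exists S(b)\in\RF$ is entailed by some encountered KB. Since the structure is ultimately periodic with $s\ge n$, every world of~\as occurs somewhere in $[0,s+p)$ and is thus tested; here I would invoke the monotonicity observation that a world passing \algoTesting against a nonempty ABox~$\Amc_i$ also passes against the empty ABox (consistency can only be lost by adding assertions, while the non-entailment Conditions~\ref{def:dlltcqs-rc:negcqs} and~\ref{def:dlltcqs-rc:witnesses} only become easier), so the empty-ABox representatives required for the extended indices in Definition~\ref{def:tqa-r-sat} are automatically covered. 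Hence~\as is r-satisfiable, and Lemmas~\ref{lem:dlltcqs-iff-s-r-consistent} and~\ref{lem:tcq-sat-iff} yield a model. For completeness I would reverse this: from a model the two lemmas give \as,~$\iota$, and an r-complete tuple; the LTL periodic-model lemma supplies a structure satisfying $\pa{\atcq}$ at~$n$ using only worlds of~\as, with prefix start~$s$ and period~$p$ inside the ranges guessed in Lines~\ref{ls},~\ref{lp} and with $s\ge n$. Guessing this tuple, $s$, $p$, and the matching types makes every test succeed.

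For the space bound I would observe that $(\AR,\QR,\QRn,\RF)$ and $\RF'\subseteq\RF$ are polynomial in $|\atcq|+|\Kmc|$ (assertions range over the names of~\Kmc and the polynomially many individuals in $\NI(\atkb)\cup\NIA$, and $\QR,\QRn\subseteq\tcqcqs{\atcq}$); the counters $i,s,p$ take exponential values but only polynomially many bits; and the stored types $\atype_\fnow,\atype_\fnext,\atype_s,\atype_\Until$ and the world~$W$ are subsets of the closure of $\pa{\atcq}$, hence polynomial. The main loop reuses this space across its (exponentially many) iterations. The decisive point is that \algoTesting runs in polynomial space: the KB $\langle\aont,\AR\cup\rigcons{\QR}\cup\Amc_{Q_\ax}\cup\Amc_i\rangle$ is of polynomial size (in particular $\rigcons{\QR}$ and $\Amc_{Q_\ax}$ instantiate the CQs over polynomially many fresh names), and its reasoning tasks---consistency and instance checking, which are tractable in \DLLitehhorn, and CQ non-entailment, which is in \NP---all lie in \PSpace. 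Crucially, the exponentially large ABox~$\Amc_{\RF}$ is never materialized: for the negative CQ tests (Line~\ref{lctwo}) it suffices to guess a polynomial fragment $\Amc'_{\RF}\subseteq\Amc_{\RF}$ (Line~\ref{larfprime}), since any homomorphism of~$\acq_j$ touches only $\lvert\acq_j\rvert$ individuals, and the loop over rigid witness queries (Line~\ref{lrwq}) enumerates them one at a time, each of polynomial size. As every check reuses space, the whole nondeterministic computation uses polynomial space, so Savitch's theorem ($\NPSpace=\PSpace$) gives a deterministic \PSpace procedure.

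The hard part will be the reconciliation carried out above: bridging the algorithm's succinct, online, and bounded representation of \as and~$\iota$ by an ultimately periodic type sequence with the characterization of Definition~\ref{def:dlltcqs-r-complete}, which quantifies over \emph{all} worlds of~\as and over the full extended range $[0,n+k]$ and uses the exponential ABox~$\Amc_{\RF}$. Getting this right hinges on the monotonicity argument that empty-ABox representatives need not be checked separately, on verifying that the online pruning of~$\RF'$ faithfully implements the global direction of Condition~\ref{def:dlltcqs-rc:rf}, and on the fact that~$\Amc_{\RF}$ can be replaced everywhere by polynomial guessed fragments without affecting consistency or the other checks. Once this equivalence is established, the remaining ingredients---the Sistla--Clarke periodic-model machinery and the per-time-point \DLLite reasoning---are routine.
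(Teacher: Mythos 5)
Your proposal follows the same route as the paper's proof---\as and $\iota$ are read off the run, t-satisfiability is delegated to the correctness of the Sistla--Clarke procedure, r-satisfiability is reduced to the r-completeness conditions of Definition~\ref{def:dlltcqs-r-complete} via Lemma~\ref{lem:dlltcqs-iff-s-r-consistent}, and Savitch's theorem closes the space bound---and your monotonicity observation (a world encountered only at some $i\le n$ automatically satisfies the empty-ABox instances of the conditions, since dropping $\Amc_i$ preserves consistency and can only destroy entailments) is correct and is indeed needed, though the paper leaves it implicit. However, you explicitly defer, and never establish, exactly the two steps where the paper's proof does its real work, and one of your central claims does not hold as stated.

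First, completeness. You claim that guessing the tuple from Lemma~\ref{lem:dlltcqs-iff-s-r-consistent} together with the $s$, $p$, and types from Lemma~\ref{lem:ltl-periodic-model} ``makes every test succeed.'' It need not: Condition~\ref{def:dlltcqs-rc:rf} only guarantees that each $\exists S(b)\in\RF$ is entailed at \emph{some} index of $[0,n+k]$, and the witnessing index may be $n+\ell$ for a world $\ax_\ell$ that the guessed periodic structure never visits; then $\exists S(b)$ survives the pruning in Line~\ref{lrfthree}, so $\RF'\neq\emptyset$ at Line~\ref{lfinal} and the run rejects. Closing this gap is why the paper reformulates the global half of \cond\ref{def:dlltcqs-rc:rf} as conjoining linearly many formulas $\Diamondm\Diamondf\exists S(b)$ to $\pa\atcq$, which may force a longer---but still exponential---period; this is precisely why Lines~\ref{ls} and~\ref{lp} carry $|\Kmc|$ in the exponent, a feature your proof never uses. (Alternatively, one can redefine \as as the set of worlds actually visited by the periodic structure and re-invoke Lemma~\ref{lem:dlltcqs-iff-s-r-consistent} for that smaller set, combining this with your monotonicity argument; but you do neither.) Second, Condition~\ref{def:dlltcqs-rc:consistent}: Algorithm~\ref{alg:rsat} checks consistency of \KR, which omits $\Amc_{\RF}$, whereas \ref{def:dlltcqs-rc:consistent} asserts consistency of $\KR[i]$, which contains it. You list the fact that ``$\Amc_{\RF}$ can be replaced everywhere by polynomial guessed fragments without affecting consistency'' among the things your proof ``hinges on,'' but give no argument for it; the paper supplies one, namely that once \cond\ref{def:dlltcqs-rc:rf} holds, the rigid consequences recorded in each $\Amc_{\exists S(b)}$ already follow from \AR or $\rigcons{\QR}$, so any model of \KR can be extended by the tree-shaped elements described in $\Amc_{\RF}$. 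Until these two arguments are actually carried out, the proposal is an outline of the paper's proof rather than a proof of the lemma.
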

\begin{proof}
We consider the conditions in Lemma~\ref{lem:tcq-sat-iff}.
Let the set $\as=\{\ax_1,\dots,\ax_k \}$ be defined as the set of all worlds~\ax encountered during a run of the procedure. The mapping $\iota\colon[0,n]\to[1,k]$ is defined as 
$\iota(i):=\ell$ if $W_\ell$ is the world encountered at time point~$i$.
Regarding t-satisfiability (see Definition \ref{def:tcqs-t-sat}), it is easy to see that the above definitions of~\as and~$\iota$ fulfill the first two conditions.
The last condition follows from the correctness of the original LTL satisfiability algorithm \cite[\citethm4.1, 4.7]{SiCl85:ltlpspace}, which is not affected by our restriction that $s>n$ nor by the other extensions.

It thus remains to show that \as is r-satisfiable iff these extensions do not cause the algorithm to return \false. By Lemma~\ref{lem:dlltcqs-iff-s-r-consistent}, we can consider Conditions \ref{def:dlltcqs-rc:consistent}--\ref{def:dlltcqs-rc:rf} from Definition~\ref{def:dlltcqs-r-complete}. Conditions~\ref{def:dlltcqs-rc:cqcons}--\ref{def:dlltcqs-rc:witnesses} are obviously captured by \algofont{\algoTesting}.

For~\ref{def:dlltcqs-rc:consistent}, observe that \algofont{\algoTesting} only checks the consistency of the knowledge base \KR, which does not include~$\Amc_{\RF}$.

However, this exponentially large ABox can be ignored for
this consistency test since, once \cond\ref{def:dlltcqs-rc:rf} is verified, we know that for each $\Amc_{\exists S(b)}\subseteq\Amc_{\RF}$ there is at least
one index $j\in[0,n+k]$ for which the existence of the elements described in
$\Amc_{\exists S(b)}$ follows from the KB
$\langle\aont,\AR\cup\rigcons{\QR}\cup\ctwo{\Amc_{Q_{\iota(j)}}}\cup\Amc_{j}\rangle$.
%
Hence, the rigid consequences of the assertion $\exists S(b)$ with $b\in\NI(\atkb)\cup\NIA$, which are represented by $\Amc_{\exists S(b)}$, must follow from $\AR$ or $\rigcons{\QR}$ (depending on the kind of~$b$). We can thus disregard the assertions from $\Amc_{\exists S(b)}$ including the elements in \NIT since any model of the KB we consider must have such domain elements.


For~\ref{def:dlltcqs-rc:negcqs}, we have to check whether $\KR\cup\Amc_{\RF}\not\models\acqalpha_j$ holds for each $p_j\in\overline{\ax}$. Considering the nondeterministic variant of the algorithm in~\cite{BotAC-DL10:dllhorn}, it is easy to see that, in order to check for a homomorphism from~$\acqalpha_j$, it suffices to consider only a nondeterministically chosen part of~$\Amc_{\RF}$ of size polynomial in~$|\acqalpha_j|$, the cardinality of $\acqalpha_j$. Additionally, we have to check if there is a named individual from which we can reach this part, but this can also be done while using only polynomial space. 

Finally, we consider~\ref{def:dlltcqs-rc:rf}. The ``if''-direction of the
equivalence is captured by Lines~\ref{lrfone}--\ref{lrftwo} in Algorithm~\ref{alg:rsat}. The other direction of \cond\ref{def:dlltcqs-rc:rf} is checked globally in Line~\ref{lrfthree} in Algorithm~\ref{algo:tq-sat}.
Observe that our global condition
corresponds to the extension of~\atcq with linearly many additional conjuncts of the form $\Diamondm\Diamondf\exists S(b)$, which
may require us to look for an LTL structure with a longer period.
However, the required period is still exponential in the input.


We analyze the complexity. For the original parts of Algorithm~\ref{algo:tq-sat}, we refer to \cite{SiCl85:ltlpspace}. The nondeterministic guessing of the polynomially large sets $\AR,\QR,\QRn,$ and \RF can be done using polynomial space only.
The set $\rigcons{\QR}$ can be computed in polynomial time since it involves only
a polynomial number of \PTime subsumption test in
\DLLitehhorn~\cite[\citethm8.2]{dllrelations}.
Morevoer, \KR is of polynomial size (recall that we drop $\Amc_{\RF}$) and hence
can be tested for consistency in \PTime~\cite[\citethm8.2]{dllrelations}.
The various UCQ entailment tests can be done in \NP by the nondeterministic variant of the algorithm in~\cite{BotAC-DL10:dllhorn} (see the sketch after Theorem~12 in that paper).
The guess in Line~\ref{larfprime} of Algorithm~\ref{alg:rsat} is clearly also possible in polynomial space, and we can enumerate all rigid witness queries in Line~\ref{lrwq} of Algorithm~\ref{alg:rsat} in polynomial space since their size is bounded by the size of the largest CQ in~\tcqcqs{\atcq}.
\end{proof}

Since the nondeterminism is not relevant for \PSpace complexity according to the
well-known result of Savitch~\cite{savitch}, we obtain the desired
complexity result.

\begin{theorem}\label{thm:dlltcqs-cc:wo-rigid-roles}\label{thm:dlltcqs-cc}
TCQ entailment in~\DLLitehhorn is in \PSpace in combined complexity, even if $\NRR\neq\emptyset$.
\end{theorem}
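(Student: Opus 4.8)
The plan is to derive the theorem from the algorithmic Lemma~\ref{lem:dlltcqs-algo-ext} by two standard complexity-theoretic observations, so that no further model-theoretic work is needed for the statement itself. First I would reduce entailment to (non-)satisfiability using the complementarity noted in the preliminaries: a TCQ~\atcq is entailed by a TKB~\atkb iff $\lnot\atcq$ is \emph{not} satisfiable w.r.t.~\atkb. Hence it suffices to place TCQ satisfiability in \PSpace and then appeal to closure of \PSpace under complementation to transfer the bound to entailment. Note that $\lnot\atcq$ is again a TCQ over the same TKB, so the reduction stays within the problem class, and the construction does not depend on whether rigid roles are present.

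Next I would invoke Lemma~\ref{lem:dlltcqs-algo-ext} directly, which states that Algorithm~\ref{algo:tq-sat} decides TCQ satisfiability using only polynomial space. Since that algorithm is nondeterministic---it guesses the tuple $(\AR,\QR,\QRn,\RF)$, the period parameters $s$ and $p$, and the on-the-fly sequence of propositional types---what the lemma yields is membership in \NPSpace. The concluding step is Savitch's theorem~\cite{savitch}, \ie $\NPSpace=\PSpace$, which removes the nondeterminism at no asymptotic cost in space. Chaining the three steps gives that satisfiability, and therefore also entailment, lies in \PSpace; this holds uniformly whether or not $\NRR=\emptyset$, because the space analysis underlying Lemma~\ref{lem:dlltcqs-algo-ext} never exploits the absence of rigid roles.

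I would emphasize that the substantive content has already been discharged before the theorem, so the present statement follows almost immediately. The genuine obstacle lies not here but in establishing Lemma~\ref{lem:dlltcqs-algo-ext}, which in turn rests on the r-completeness characterization of Lemma~\ref{lem:dlltcqs-iff-s-r-consistent}. The delicate points there are (i)~representing the exponentially large set~\as and the mapping~$\iota$ only implicitly, through the sequence of guessed types, rather than storing them, so that \PSpace is not violated by the input to Lemma~\ref{lem:tcq-sat-iff}; and (ii)~checking \cond\ref{def:dlltcqs-rc:consistent} and \cond\ref{def:dlltcqs-rc:negcqs} without materializing the exponentially large ABox~$\Amc_{\RF}$, instead guessing only a polynomially bounded relevant fragment of it per negative CQ literal and deferring the global direction of \cond\ref{def:dlltcqs-rc:rf} to the separate bookkeeping on~$\RF'$. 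Given those ingredients, all that remains for the theorem is to verify that the complementation step and Savitch's theorem are combined correctly, which is routine.
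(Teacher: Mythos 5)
Your proposal is correct and follows essentially the same route as the paper: the paper likewise obtains the theorem by combining Lemma~\ref{lem:dlltcqs-algo-ext} (the nondeterministic polynomial-space algorithm, hence \NPSpace membership for satisfiability) with Savitch's theorem~\cite{savitch}, and transfers the bound to entailment via the satisfiability/entailment complementarity stated in the preliminaries together with closure of \PSpace under complement. Your identification of where the real work lies---in Lemmas~\ref{lem:dlltcqs-iff-s-r-consistent} and~\ref{lem:dlltcqs-algo-ext}, including the implicit representation of \as and~$\iota$ and the avoidance of materializing~$\Amc_{\RF}$---also matches the paper's account.
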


\section{First-Order Rewriting of r-Satisfiability}
\label{sec:dlltcqs-r-sat-rew}

Towards our goal of obtaining a low data complexity for TCQ entailment in
\DLLitehhorn, we first reconsider the r-completeness conditions from
Definition~\ref{def:dlltcqs-r-complete}, and show that they are partially
first-order rewritable.
As before, we consider a TCQ~\atcq and a \DLLitehhorn TKB
$\atkb=\langle\aont,\afbs\rangle$ with $\afbs=(\afb_i)_{0\le i \le n}$.
Since we focus on data complexity, we disregard the impact
of~\aont and~\atcq on the computational resources in the following.
In particular, the size of the set $\as=\{\ax_1,\dots,\ax_k \}\subseteq2^\pv$ is
constant; we consider a fixed such set for now.
However, the same does not hold for the mapping~$\iota$, which depends on the
length of~\afbs.
In addition to~\as, we \ctwo{guess} a set
$\Bphi\subseteq\{B(a) \mid B\in\BC(\aont),\ a\in\NI(\atcq)\}$ of basic concept
assertions over individual names occurring in~\atcq.
%
The size of this set is also constant in data complexity, \ctwo{and hence we also assume $\Bphi$ to be fixed throughout this section.
The set~$\Bphi$ captures additional basic concept assertions that are not contained in the ABoxes but, due to their consequences, critical for determining r-satisfiability.
} 

This allows us to show that:
\begin{itemize}
	\item to verify the r-satisfiability of~\atcq w.r.t.~\atkb, it suffices to check r-completeness of a representative tuple $(\ARs,\QRs,\QRsn,\RFs)$ that depends on~\as and~\Bphi; and
	\item the r-completeness conditions for this tuple can be encoded into (linearly many) FO formulas that are evaluated over a fixed, finite structure \tdbfbs (\ctwo{constructed based on~\afbs in Definition~\ref{def:tdb-int}
	}).
\end{itemize}
These two steps are described in Sections~\ref{sec:dlltcqs-r-sat-rew-tuple} and~\ref{sec:rcomplete-rewriting-details}, respectively.

\subsection{A Tuple for Testing r-Satisfiability}
\label{sec:dlltcqs-r-sat-rew-tuple}

We first describe the tuple $(\ARs,\QRs,\QRsn,\RFs)$ and the corresponding KBs that
are relevant for the r-completeness tests.
We define the sets such that they are minimal \wrt the r-completeness
conditions.
The definition of~$\QRs$ and~$\QRsn$ is straightforward:
\begin{align*}
  \QRs  &:= \{\acqalpha_j\in\tcqcqs\atcq\mid \ax\in\as,\ p_j\in\ax\}, \\
  \QRsn &:= \{\acqalpha_j\in\tcqcqs\atcq\mid \ax\in\as,\ p_j\not\in\ax\}.
%
\intertext{%
To define \ARs, we can use the sets~\Bphi and $\rigcons{\QRs}$ (restricted to
$\NI(\atkb)$), but we also have to consider the rigid consequences of the input
ABoxes~$\Amc_i$.
We give an inductive construction of these consequences that allows us to
consider the ABoxes~$\Amc_i$ in isolation.
We define sets of (positive) rigid assertions inductively as follows \ctwo{for $j\geq0$:}}
%
  \ARs_0 &:= \ASr(\atkb) \cap \big(\Bphi\cup\rigcons{\QRs}\big), \\
  \ARs_{j+1} &:= \{ \alpha\in\ASr(\atkb) \mid \text{ there is } i\in[0,n]  \text{ with }
    \langle\aont,\ARs_j\cup\Amc_i\rangle\models\alpha \}.
\end{align*}
%
After at most \ctwo{$\NBC:=|\BCr(\aont)|$} iterations, this computation becomes stable, \ie
we do not add any more assertions, because
\begin{itemize}
  \item by Definition~\ref{def:dll-io} and Lemma~\ref{lem:dll-iomodel}, all role
    assertions about $\NI(\atkb)$ follow in \emph{one} entailment step from some
    role inclusions in~\aont and a role assertion in~$\ARs_0$ or~$\Amc_i$;
  \item entailment of basic concept assertions~$B(a)$ does not depend on
    basic concept assertions on individual names other than~$a$, and so all
    possible assertions about~$a$ are added after at most $\NBC$ steps.
\end{itemize}
%

\ARs is now defined as the union of~$\ARs_{\NBC}$ and the set of all negative assertions
$\lnot\aaxiom\in\ASr(\atkb)$ for which $\aaxiom\not\in\ARs_{\NBC}$.
The following is a direct consequence of this definition.

\begin{lemma}
\label{lem:ars-other-consequences}
  For $\alpha\in\ASr(\atkb)$, we have $\alpha\in\ARs$ iff there is
  $i\in[0,n]$ with $\langle\aont,\ARs\cup\Amc_i\rangle\models\alpha$.
\end{lemma}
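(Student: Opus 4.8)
The plan is to prove the two directions separately, relying on the two facts established immediately before the statement: that the iteration stabilises, i.e.\ $\ARs_{\NBC}=\ARs_{\NBC+1}$, and that by construction the positive assertions in~$\ARs$ are exactly those in~$\ARs_{\NBC}$, while the negative assertions $\lnot\beta\in\ASr(\atkb)$ in~$\ARs$ are precisely those with $\beta\notin\ARs_{\NBC}$. The forward direction is essentially bookkeeping; the backward direction carries the actual content and is where the positive/negative separation characteristic of \DLLitehhorn enters.

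For the forward direction, suppose $\alpha\in\ARs$. Since the ABox sequence is non-empty, $[0,n]$ contains some index~$i$, and then $\alpha$ is one of the assertions of $\langle\aont,\ARs\cup\Amc_i\rangle$, so this KB trivially entails~$\alpha$; hence the right-hand side holds. This argument does not depend on the sign of~$\alpha$ nor on consistency.

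For the backward direction, assume $\langle\aont,\ARs\cup\Amc_i\rangle\models\alpha$ for some $i\in[0,n]$ and distinguish the sign of~$\alpha$. If $\alpha$ is positive, I would first pass from~$\ARs$ to its positive part~$\ARs_{\NBC}$: in \DLLitehhorn the negative assertions of a consistent KB are irrelevant to the entailment of positive assertions, since the canonical interpretation (Definition~\ref{def:dll-io}) is built purely from positive inclusions and positive assertions and is universal for positive entailment. Dropping the negative assertions of~$\ARs$ (and, harmlessly, keeping all of~$\Amc_i$) therefore preserves the entailment, giving $\langle\aont,\ARs_{\NBC}\cup\Amc_i\rangle\models\alpha$; by the defining equation this places $\alpha\in\ARs_{\NBC+1}$, and stability yields $\alpha\in\ARs_{\NBC}\subseteq\ARs$. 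If $\alpha=\lnot\beta$ is negative, it suffices to show $\beta\notin\ARs_{\NBC}$, since then $\lnot\beta\in\ARs$ by construction. Otherwise $\beta\in\ARs_{\NBC}\subseteq\ARs$, so $\langle\aont,\ARs\cup\Amc_i\rangle$ would entail both~$\beta$ (as an axiom) and~$\lnot\beta$, contradicting its consistency.

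The main obstacle — the only step requiring more than rewriting of definitions — is the positive case of the backward direction: justifying that discarding the negative assertions of~$\ARs$ does not lose the positive consequence~$\alpha$. This is exactly the separation of positive and negative reasoning in Horn \DLLite, namely that negative inclusions and negative assertions can only cause inconsistency and never generate new positive consequences; I would make it precise through the universality of the canonical model of $\langle\aont,\ARs_{\NBC}\cup\Amc_i\rangle$. Both the positive and the negative cases implicitly use that the witnessing KB $\langle\aont,\ARs\cup\Amc_i\rangle$ is consistent, which is indeed the regime in which the lemma is applied (\cf Condition~\ref{def:dlltcqs-rc:consistent}); I would state this consistency assumption explicitly, since the equivalence genuinely fails once some $\langle\aont,\ARs\cup\Amc_i\rangle$ becomes inconsistent.
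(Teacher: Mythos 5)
Your proof is correct, and it supplies an argument where the paper gives none: the lemma is introduced only with the remark that it is \enquote{a direct consequence of this definition}, so the unwinding you carry out---trivial forward direction; backward direction split by sign, with the positive case resting on the universality of the canonical interpretation of the positive part (Lemmas~\ref{lem:dll-iomodel} and~\ref{lem:iosatq}) together with stability $\ARs_{\NBC+1}=\ARs_{\NBC}$, and the negative case on the ABox-type construction plus consistency---is exactly the completion of what the authors left implicit.

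Your consistency caveat is moreover a genuine finding, not mere caution. Take $\aont=\{A\sqcap B\sqsubseteq\bot\}$ with $A,B$ rigid, a single ABox $\Amc_0=\{A(a),B(a)\}$, and $\Bphi=\emptyset$, $\as=\{\emptyset\}$ (so that $\QRs=\emptyset$ and $\ARs_0=\emptyset$): then $\langle\aont,\ARs_0\cup\Amc_0\rangle$ is already inconsistent, so the iteration makes $\ARs_{\NBC}$ equal to the set of \emph{all} positive assertions in $\ASr(\atkb)$, and hence $\ARs$ contains no negative assertion at all; yet the (still inconsistent) KB $\langle\aont,\ARs\cup\Amc_0\rangle$ entails every $\lnot\beta$, so the backward direction of the printed statement fails for every negative assertion. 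Two refinements to your account are worth recording. First, the failure is confined to negative $\alpha$: if some $\langle\aont,\ARs_{\NBC}\cup\Amc_i\rangle$ is inconsistent, then by stability every positive rigid assertion already lies in $\ARs_{\NBC}\subseteq\ARs$, so for positive $\alpha$ the equivalence holds with \emph{no} consistency hypothesis at all---your canonical-model argument covers the consistent case, and the inconsistent case is vacuous. Second, this refinement is what keeps the paper sound: the only places where the lemma is invoked (the proofs of Lemmas~\ref{lem:fo-r-sat2} and~\ref{lem:dlltcqs-rews}) use the backward direction for positive assertions only, so those applications are unaffected; either restricting the statement to positive $\alpha$, or adding your consistency hypothesis (which is available under Condition~\ref{def:dlltcqs-rc:consistent}), repairs it.
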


It remains to define the last component, \RFs.
%
Recall that this set of flexible assertions of the form $\exists S(b)$ can refer
to the individual names~$b$ in $\NI(\atkb)$ and~\NIA; moreover we want to
define it as the minimal such set that satisfies the r-completeness conditions
(in particular~\ref{def:dlltcqs-rc:rf}).
With respect to the elements of~\NIA, the set \RFs thus only depends on the fixed query~\atcq; in contrast, the names in
$\NI(\Kmc)$ may occur in the input ABoxes~$\Amc_i$ as well as in~\atcq.
For the rewriting, it is important to separate these cases since
\begin{itemize}
  \item the parts about \NIA are known at the time of the rewriting (depending
    on~\as);
  \item the parts about $\NI(\Kmc)\setminus\NI(\atcq)$ depend on the input
    ABoxes, so they are not fixed; and
  \item the parts about $\NI(\atcq)$ depend on the input ABoxes as well as on the
    TCQ~\atcq.
\end{itemize}
Hence, we define \RFs as the disjoint union of the three sets \RFaux, \RFphi,
and \RFother ($\mathsf{o}$ for \enquote{other}), referring only to names
from \NIA, $\NI(\atcq)$, and $\NI(\atkb)\setminus\NI(\atcq)$, respectively:
\begin{itemize}
	\item 
	\RFaux is constant and hence its size is not relevant. In particular, the elements of \NIA occur neither in the rigid ABox type~\ARs, nor in the input ABoxes~$\Amc_i$, and are uniquely associated to one of the CQs in \atcq. For constructing \RFaux in line with \cond\ref{def:dlltcqs-rc:rf}, it is thus sufficient to focus on the consequences of these CQs (see Lemma~\ref{lem:dll-io-elements}):
	\[ \RFaux :=
		\{ \exists S(a_x) \mid
			S\in\NFRM(\aont),\ a_x\in\NIA,\ \ax\in\as,\ 
			\langle\aont,\Amc_{Q_\ax}\rangle\models \exists S(a_x) \}.
	\]
	\item 
	The set \RFphi is also of constant size. However, the elements of $\NI(\atcq)$ may occur in the input ABoxes, and hence their behavior cannot be fully determined by a computation that is independent of the input. To tackle this problem, we assume the set $\Bphi$ of assertions 
	to be given first, and postpone the test whether \RFphi actually satisfies \cond\ref{def:dlltcqs-rc:rf} to a later point (see Lemma~\ref{lem:fo-r-sat1}).
	For now, we simply set
	\[ \RFphi:=\{ \exists S(a)\in\Bphi \mid S\in\NFRM(\aont) \}. \]
	\item 
	The set \RFother concerns the remaining individual names from
	$\NI(\atkb)\setminus\NI(\atcq)$. Here, we can refer to the flexible
	consequences of the ABoxes~$\Amc_i$ together with~\ARs.
	Since the elements under consideration do not occur in~\atcq, this computation
	actually does not depend on~\as or~\Bphi:
  \[ \RFother:= \{ \exists S(a) \mid
    S\in\NFRM(\aont),\ a\notin\NI(\atcq)\text{ there is } i\in[0,n] \text{ with }
    \langle\aont,\ARs\cup\Amc_i\rangle\models \exists S(a) \}.
  \]
\end{itemize}

This finishes the definition of $(\ARs,\QRs,\QRsn,\RFs)$.
Observe that, apart from $\Amc_{\RFother}$ and \ARs, which depend on the input
ABoxes~\afbs, all of the ABoxes induced by this tuple (see
Section~\ref{sec:dlltcqs-r-sat-charact}) are constant.
Moreover, the tuple is indeed as intended.

\begin{restatable}{lemma}{lemFOrSattwo}\label{lem:fo-r-sat2}
  For all $\as=\{\ax_1,\dots,\ax_k\}\subseteq 2^\pv$ and
  $\iota\colon[0,n]\to[1,k]$, there is an r-com\-plete tuple w.r.t.~\as
  and~$\iota$ iff there is a set
  $\Bphi\subseteq\{B(a) \mid B\in\BC(\aont),\ a\in\NI(\atcq)\}$ such that $(\ARs,\QRs,\QRsn, \RFs)$ is r-complete w.r.t.~\as
  and~$\iota$.
\end{restatable}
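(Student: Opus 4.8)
The plan is to establish the two directions separately, with essentially all of the work in the forward implication. The backward direction is immediate: by construction $\ARs$ is a rigid ABox type, $\QRs,\QRsn\subseteq\tcqcqs{\atcq}$, and $\RFs$ is a set of flexible role-successor assertions over $\NI(\atkb)\cup\NIA$, so $(\ARs,\QRs,\QRsn,\RFs)$ is itself a tuple of the form considered in Section~\ref{sec:dlltcqs-r-sat-charact}; hence if it is r-complete it already witnesses the existence of an r-complete tuple.

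For the forward direction, I would start from an arbitrary r-complete tuple $(\AR,\QR,\QRn,\RF)$ and, via Lemma~\ref{lem:dlltcqs-iff-s-r-consistent}, from the associated interpretations $\Jmc_0,\dots,\Jmc_{n+k}$ over a common domain (which respect rigid names, satisfy $\langle\aont,\Amc_i\rangle$ and $\chi_{\iota(i)}$, and are built from the canonical interpretations of the $\KR[i]$). The set $\Bphi$ is then defined in two stages, exploiting that the right-hand side of \cond\ref{def:dlltcqs-rc:rf} depends only on $\ARs$, $\QRs$ and the input ABoxes, and hence only on the \emph{rigid} part of $\Bphi$: first I set the rigid-concept part of $\Bphi$ to $\{\,B(a)\in\AR \mid B\in\BCr(\aont),\ a\in\NI(\atcq)\,\}$, which fixes $\ARs$; then I choose the flexible part of $\Bphi$ to be exactly those $\exists S(a)$ with $a\in\NI(\atcq)$ entailed by some $\langle\aont,\ARs\cup\rigcons{\QRs}\cup\Amc_{Q_{\iota(j)}}\cup\Amc_j\rangle$, so that $\RFphi$ satisfies \cond\ref{def:dlltcqs-rc:rf} on the query individuals by construction.

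The next step is to record the containments that let most conditions transfer for free. Since $\iota$ is extended so that every world $\ax\in\as$ is realized at some time point, \cond\ref{def:dlltcqs-rc:cqcons} and~\cond\ref{def:dlltcqs-rc:qrn} for the given tuple, together with the definitions of $\QRs,\QRsn$, give $\QRs\subseteq\QR$ and $\QRsn\subseteq\QRn$ (and \cond\ref{def:dlltcqs-rc:cqcons} and~\cond\ref{def:dlltcqs-rc:qrn} then hold for the new tuple by definition). A straightforward induction on the closure defining $\ARs$ shows $\ARs_{\NBC}\subseteq\AR$ on positive assertions, using in the base case that the chosen $\Bphi$ and $\rigcons{\QRs}$ hold in the $\Jmc_i$, and in the inductive step that $\Jmc_i\models\aont\cup\ARs_j\cup\Amc_i$. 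Checking the three parts of $\RFs$ against \cond\ref{def:dlltcqs-rc:rf} of the given tuple (using the definition of $\Bphi$ for the $\NI(\atcq)$-part) yields $\RFs\subseteq\RF$. Consequently the positive assertions of the minimal KB $\KRis[i]$ are contained in those of $\KR[i]$, and since in \DLLitehhorn a positive CQ entailed by a consistent KB already follows from its positive assertions, \cond\ref{def:dlltcqs-rc:negcqs} and~\cond\ref{def:dlltcqs-rc:witnesses} transfer downward to the new tuple.

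The two genuinely substantive conditions are \cond\ref{def:dlltcqs-rc:consistent} and the remaining direction of \cond\ref{def:dlltcqs-rc:rf}, and I expect the main obstacle here. For consistency, the only possible clash in $\KRis[i]$ is between a negative assertion $\lnot\alpha\in\ARs$ and an entailed positive $\alpha$; the fixpoint characterization (Lemma~\ref{lem:ars-other-consequences}) already rules out $\langle\aont,\ARs\cup\Amc_i\rangle\models\alpha$, so the crux is to show that the additional ABoxes $\rigcons{\QRs}$, $\Amc_{Q_{\iota(i)}}$ and $\Amc_{\RFs}$ contribute no new rigid consequence on a named individual. This rests on exactly the locality of \DLLitehhorn entailment used to justify the $\NBC$-step stabilization of $\ARs$: rigid consequences of the CQ-ABoxes on $\NI(\atcq)$-individuals are folded into $\rigcons{\QRs}$ and thus into $\ARs_0$, and $\Amc_{\RFs}$ only attaches fresh $\NIT$-successors. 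For \cond\ref{def:dlltcqs-rc:rf} I would verify the biconditional class by class, matching the split $\RFs=\RFaux\cup\RFphi\cup\RFother$: for $b\in\NIA$ only $\Amc_{Q_\ax}$ is relevant (Lemma~\ref{lem:dll-io-elements}); for $b\in\NI(\atkb)\setminus\NI(\atcq)$ the CQ-ABoxes are irrelevant by locality, reducing the condition to the entailment used to define $\RFother$; and for $b\in\NI(\atcq)$ it holds by the choice of $\Bphi$ above, with the remaining consistency-style check deferred to Lemma~\ref{lem:fo-r-sat1}. The hardest part throughout is this bookkeeping --- proving that the purely ABox-driven sets $\ARs$ and $\RFs$ capture exactly the rigid and flexible consequences of the shared model with no cross-individual interaction lost, which is precisely what the locality of \DLLitehhorn buys us.
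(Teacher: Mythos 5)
Your proposal is correct and follows essentially the same route as the paper's proof: a trivial backward direction, then defining \Bphi from the given r-complete tuple, establishing the containments $\QRs\subseteq\QR$, $\ARs_{\NBC}\subseteq\AR$ (on positive assertions), and $\RFs\subseteq\RF$, so that Conditions~\ref{def:dlltcqs-rc:negcqs}--\ref{def:dlltcqs-rc:witnesses} transfer by monotonicity over the positive parts of the KBs, with Conditions~\ref{def:dlltcqs-rc:consistent} and~\ref{def:dlltcqs-rc:rf} handled via the locality of \DLLitehhorn entailment (Lemmas~\ref{lem:dll-io-elements} and~\ref{lem:ars-other-consequences}). The only deviations are cosmetic: the paper sets $\Bphi:=\{B(a)\in\AR\cup\RF\mid B\in\BC(\aont),\ a\in\NI(\atcq)\}$ rather than defining its flexible part through the entailment condition of Condition~\ref{def:dlltcqs-rc:rf} (the two choices coincide, as the paper's verification of that condition for \RFphi shows), and it argues $\ARs\subseteq\AR$ and Condition~\ref{def:dlltcqs-rc:consistent} purely syntactically, via the canonical interpretations of the KBs $\KR[i]$ and of the positive part \KRisp, instead of your detour through the model-based construction underlying Lemma~\ref{lem:dlltcqs-iff-s-r-consistent}.
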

\begin{proof}[Proof sketch]
  Given an r-complete tuple $(\AR,\QR,\QRn,\RF)$, we define
  \[ \Bphi:= \{ B(a)\in\AR\cup\RF \mid B\in\BC(\aont),\ a\in\NI(\atcq) \} \]
  and show that $(\ARs,\QRs,\QRsn,\RFs)$ is r-complete as well.
  The KBs used in the r-completeness tests in Definition~\ref{def:dlltcqs-r-complete} look as follows, for all $i\in[0,n+k]$:
  $$\KRis[i]:=\langle\aont,\ARs\cup\rigcons{\QRs}\cup\Amc_{Q_{\iota(i)}}\cup\Amc_{\RFs}\cup\Amc_i\rangle.$$
  Conditions~\ref{def:dlltcqs-rc:cqcons} and~\ref{def:dlltcqs-rc:qrn} are satisfied by construction. 
  %
  %
  For Conditions~\ref{def:dlltcqs-rc:consistent}, \ref{def:dlltcqs-rc:negcqs}, and~\ref{def:dlltcqs-rc:witnesses}, we 
  can find a model of \KRis that is homomorphically embeddable into the canonical interpretation of \KR[i] that is consistent by \cond\ref{def:dlltcqs-rc:consistent}. 
  This is because $\rigcons{\QRs}\subseteq\rigcons{\QR}$, $\Amc_{\RFaux}\cup\Amc_{\RFphi}\cup\Amc_{\RFother}\subseteq\Amc_{\RF}$, and
  $\ARs_{\NBC}\subseteq\AR$.
\end{proof}

\subsection{Rewriting Consistency and Entailment}
\label{sec:rcomplete-rewriting-details}

Now, we can focus on testing the r-completeness of a single, (mostly) fixed tuple in the r-completeness test.
Observe that the tests for r-completeness consist of consistency and non-entailment tests for atemporal KBs, which are standard. 

\ctwo{Many query answering problems in lightweight DLs can be encoded into first-order logic formulas, called \emph{rewritings}, which are then evaluated over the following structures, in which ABoxes are viewed under the closed-world assumption, \ie as databases.

\begin{definition}[$\db\afb$]\label{def:prelims-db-int}\label{def:db-int}
	For an ABox~\afb, the first-order structure $\db{\afb}=(\NI(\afb),\cdot^{\dboa})$ over the domain~$\NI(\afb)$ contains the following relations for all $B\in\BC(\afb)$ and 
	$R\in\NR(\afb)$:
	\begin{align*}
	\Bdb &:= \{a\mid  B(a)\in\Amc\}, &
	\Rdb &:= \{(\indone,\indtwo)\mid  R(\indone,\indtwo)\in\Amc\}.
	\end{align*}
\end{definition}

There are FO rewritings for KB inconsistency and for UCQ
	entailment in \DLLitehhorn, which we here denote by \dllhhornQUnsat\aont and \dllhhornPerfRef\acq\aont, respectively (see, \eg \cite{BotAC-DL10:dllhorn}). 
	These can be easily adapted to our slightly modified setting with assertions about (negated) basic concepts. 

%

\begin{lemma}
	\label{lem:dllhhorn-perfect-ref-correct}
	Let $\akb=\langle\aont,\afb\rangle$ be a \DLLitehhorn knowledge base and \acq
	be a Boolean 
	UCQ. Then \akb is inconsistent iff
	$\db{\afb}\models\dllhhornQUnsat\aont$. If \akb is consistent,
	then 
	$\akb\models\acq$ iff
	$\db{\afb}\models\dllhhornPerfRef\acq\aont$.
\end{lemma}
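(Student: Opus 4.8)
The plan is to reduce the claim to the known FO rewritability of inconsistency and UCQ entailment for \emph{standard} \DLLitehhorn knowledge bases---those whose ABoxes contain only positive assertions over concept and role \emph{names}---as established in~\cite{BotAC-DL10:dllhorn}, and then to extend the rewritings to cover the two features of our setting: assertions $\exists R(a)$ about basic concepts and negated assertions $\lnot B(a)$, $\lnot R(a,b)$. First I would dispense with the basic concept assertions. An assertion $\exists R(a)$ has the same semantics as $\exists y.R(a,y)$ with $y$ fresh, so it can either be eliminated by adding a fresh anonymous $R$-successor of~$a$ during the construction of the canonical interpretation, or---more directly---treated as primitive data via the unary predicate $(\exists R)^{\dboa}$ that $\db\afb$ already provides (Definition~\ref{def:db-int}). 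Since the subsumption-based reformulation of~\cite{BotAC-DL10:dllhorn} manipulates basic concepts uniformly, no real change is needed here beyond allowing the rewritings to mention the predicates $(\exists R)^{\dboa}$ alongside concept names.

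Second, I would split $\afb=\afb^+\cup\afb^-$ into its positive and negated assertions and observe that the canonical interpretation~$\aint$ of $\langle\aont,\afb\rangle$ is determined by $\afb^+$ and~\aont alone. Consistency then fails in exactly three ways: (a)~the positive KB $\langle\aont,\afb^+\rangle$ clashes with a disjointness (negative) CI; (b)~some $\lnot B(a)\in\afb^-$ is violated, i.e.\ $\langle\aont,\afb^+\rangle\models B(a)$; or (c)~some $\lnot R(a,b)\in\afb^-$ is violated, i.e.\ $\langle\aont,\afb^+\rangle\models R(a,b)$. Case~(a) is precisely what the standard inconsistency rewriting detects. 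For~(b), $\langle\aont,\afb^+\rangle\models B(a)$ is captured by evaluating the perfect reformulation $\dllhhornPerfRef{B(x)}\aont$ of the atomic query $B(x)$ with $x=a$; for~(c), entailment of $R(a,b)$ between named individuals reduces to the presence of some asserted $S(a,b)$ with $\aont\models S\sqsubseteq R$, because in the \DLLitehhorn canonical model anonymous elements occur only as successors and never coincide with named individuals. I would therefore define $\dllhhornQUnsat\aont$ as the disjunction of the standard inconsistency rewriting with, for each negated assertion pattern, the conjunction of the corresponding entailment test and the presence of that negated assertion in the (suitably extended) data. Verifying the first equivalence then amounts to checking that these disjuncts are sound and complete for the three clash sources, which follows from the universality of~$\aint$ for the positive part.

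For the entailment part, the key point---and the one deserving the most care---is that, once $\langle\aont,\afb\rangle$ is \emph{consistent}, the negated assertions can be ignored for UCQ entailment. This holds because consistency guarantees that none of the negated assertions is violated in the canonical interpretation~$\aint$ of $\langle\aont,\afb^+\rangle$; hence $\aint$ is a model of the full KB~\akb, and since $\aint$ is universal (every model admits a homomorphism from~$\aint$, preserving CQ matches), we have $\akb\models\acq$ iff $\aint\models\acq$, exactly as for the positive KB. The standard perfect reformulation $\dllhhornPerfRef\acq\aont$, evaluated over $\db\afb$ with the $\exists R$-predicates included, decides $\aint\models\acq$ by~\cite{BotAC-DL10:dllhorn}, giving the second equivalence.

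The main obstacle is thus not computational but conceptual: establishing that adding the negated assertions to a consistent KB neither alters its canonical model nor introduces spurious (non-)entailments, which is exactly what justifies reusing the unmodified query rewriting $\dllhhornPerfRef\acq\aont$ verbatim and confining all the effect of negation to the inconsistency rewriting $\dllhhornQUnsat\aont$. A secondary point to confirm carefully is that role entailment between named individuals is exhausted by role-inclusion closure over asserted roles, so that case~(c) above is genuinely FO-expressible without referring to the anonymous part of the canonical model.
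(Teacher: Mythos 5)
Your proof is correct, but there is little in the paper to compare it against: the paper never proves this lemma, importing it from~\cite{BotAC-DL10:dllhorn} with the single remark that the known rewritings can be \enquote{easily adapted} to assertions about (negated) basic concepts. Your proposal is exactly that adaptation carried out, and its two pillars are sound. For the entailment half, confining all effects of negation to the consistency check works: if \akb is consistent, the canonical interpretation of $\langle\aont,\afb^+\rangle$ must satisfy every negated assertion (otherwise the corresponding positive assertion would be entailed and, by Lemma~\ref{lem:dll-iomodel}, would hold in every model of the positive part, so \akb could have no model), hence it is a universal model of all of~\akb, and the unmodified $\dllhhornPerfRef\acq\aont$ is correct; conveniently, this half needs no change to the data, since $\db{\afb}=\db{\afb^+}$ under Definition~\ref{def:db-int}. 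For the inconsistency half, your three clash sources are exhaustive, and your case~(c) is indeed FO-expressible, because in the canonical interpretation (Definition~\ref{def:dll-io}) role edges between two named individuals arise only by closing asserted role atoms under role inclusions---the same observation the paper itself uses when computing~\ARs in Section~\ref{sec:dlltcqs-r-sat-rew-tuple}.

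The one point you should promote from parenthetical to explicit is your \enquote{suitably extended data}. Definition~\ref{def:db-int} records only positive assertions, and under that literal reading the first equivalence of the lemma is not provable at all: with $\aont=\{A\sqsubseteq B\}$, the ABoxes $\{A(a)\}$ and $\{A(a),\lnot B(a)\}$ induce the same structure $\db{\afb}$ yet differ in consistency, so no fixed formula $\dllhhornQUnsat\aont$ can separate them. Your extension of the database by predicates $\{a\mid \lnot B(a)\in\afb\}$ and $\{(a,b)\mid \lnot R(a,b)\in\afb\}$ is therefore not a convenience but a necessary amendment to the definition (and the analogous amendment is needed for Definition~\ref{def:tdb-int}); with it in place, your disjunctive construction of $\dllhhornQUnsat\aont$---the standard unsatisfiability rewriting, plus one disjunct per negated-assertion pattern pairing the data predicate with the grounded entailment rewriting---is sound and complete by exactly the case analysis you give, including the subtlety that the extra disjuncts need only be reliable when the positive part is consistent, since otherwise the first disjunct already fires.
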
}

The idea is to apply these UCQ rewritings $\dllhhornQUnsat\aont$ and  \changed{ 
$\dllhhornPerfRef\acq\aont$, which are evaluated over the FO structure $\db{\afb}$, where \afb is a single ABox~\Amc. 
} 
%
%
However, the conditions for r-completeness involve the \emph{sequence} of input ABoxes~\afbs, as well as additional ABoxes such as~\ARs. In this section, we describe how these ABoxes can be incorporated into the rewritings such that the resulting FO formulas can be answered over~\afbs alone.
\changed{Then, the r-completeness check for the tuple $(\ARs,\QRs,\QRsn,\RFs)$ can be reduced to the evaluation of (mutiple) FO formulas over~\afbs (see Lemma~\ref{lem:dlltcqs-fo-r-sat}).}
%
%

\ctwo{
To illustrate the main idea of how to extend the rewritings,
%
	consider an atemporal KB $\langle\aont,\afb_i\rangle$ formulated in \DLLitehhorn and a CQ~\acq.
  By~\cite{BotAC-DL10:dllhorn}, there is a rewriting $\dllhhornPerfRef\acq\aont$
  such that
  \[ \langle\aont,\afb_i\rangle\models\acq \text{ iff }
    \db{\afb_i}\models\dllhhornPerfRef\acq\aont. \]
  Assume that we want to incorporate the additional ABox~\ARs into the
  entailment test, without modifying $\db{\afb_i}$.
	Specifically, the goal is to extend $\dllhhornPerfRef\acq\aont$ to an FO formula \prefth{}
  (interpreted under the standard first-order semantics) such that
  \[ \langle\aont,\ARs\cup\afb_i\rangle\models\acq \text{ iff }
	  \db{\afb_i}\models\prefth{}. \]
	If \ARs consists of the single assertion $A(a)$, this can be achieved, for
	instance, by replacing every atom $A(x)$ in $\dllhhornPerfRef\acq\aont$ by the
	disjunction $((x=a)\vee A(x))$.
}

The various additional ABoxes we consider, such as $\Amc_{\RFs}$, contain individual names that do not occur in the input sequence~\afbs (namely those in $\NIA\cup\NIT$).
This makes the required adaptations of the rewritings even more complex, as our FO formulas have to quantify over elements that are not in the interpretation domain.


\ctwo{
We now present the main lemma that will be shown in this section. It characterizes the r-completeness of $(\ARs,\QRs,\QRsn,\RFs)$ by a series of FO-formulas.

\begin{restatable}{lemma}{LemFORSat}\label{lem:fo-r-sat1}\label{lem:dlltcqs-fo-r-sat}
  For all $\as=\{\ax_1,\dots,\ax_k\}\subseteq 2^\pv$, mappings
  $\iota\colon[0,n]\to[1,k]$, and sets
  $\Bphi\subseteq\{B(a)\mid B\in\BC(\aont),\ a\in\NI(\atcq)\}$,
  the tuple $(\ARs,\QRs,\QRsn,\RFs)$ is \mbox{r-complete} w.r.t.~\as and~$\iota$ iff
  the following hold:
  \begin{enumerate}[label=(\alph*)]
    \item\label{rew:a} For all $i\in[0,n]$, we have $\tdbfbs\models\rsat[\ax_{\iota(i)}](i)$. 
    \item\label{rew:b} For all $\ax\in\as$, we have $\tdbfbs\models\rsat(-1)$.
    \item\label{rew:c} For all $S\in\NFRM(\aont)$ and $a\in\NI(\atcq)$, we have $\exists S(a)\in\Bphi$ iff there is an $i\in[0,n]$ such that $\tdbfbs\models\rewPRef{\exists S(a)}$, where this rewriting is \wrt the world $\ax_{\iota(i)}$.
  \end{enumerate}
\end{restatable}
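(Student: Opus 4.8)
The plan is to prove the biconditional by matching each r-completeness condition of Definition~\ref{def:dlltcqs-r-complete}, instantiated for the fixed tuple $(\ARs,\QRs,\QRsn,\RFs)$, with one of the three FO conditions, using the atemporal rewritings of Lemma~\ref{lem:dllhhorn-perfect-ref-correct} as atomic building blocks. First I would dispose of the two cheap conditions: since $\QRs$ and $\QRsn$ collect exactly the CQs occurring positively (resp.\ negatively) in some $\ax\in\as$, Conditions~\ref{def:dlltcqs-rc:cqcons} and~\ref{def:dlltcqs-rc:qrn} hold by construction and require no FO encoding. This leaves the consistency test~\ref{def:dlltcqs-rc:consistent}, the non-entailment tests~\ref{def:dlltcqs-rc:negcqs} and~\ref{def:dlltcqs-rc:witnesses}, and the two directions of the minimality condition~\ref{def:dlltcqs-rc:rf}, all of which are atemporal consistency or (U)CQ (non-)entailment problems over the knowledge bases $\KRis[i]=\langle\aont,\ARs\cup\rigcons{\QRs}\cup\Amc_{Q_{\iota(i)}}\cup\Amc_{\RFs}\cup\Amc_i\rangle$.

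The core is the construction of the formula $\rsat[\ax](i)$. I would assemble it as a conjunction built from the standard rewritings: $\dllhhornQUnsat\aont$ for inconsistency (Condition~\ref{def:dlltcqs-rc:consistent}), the negations of $\dllhhornPerfRef{\acq_j}\aont$ for each $p_j\in\overline{\ax}$ (Condition~\ref{def:dlltcqs-rc:negcqs}), the negations of $\dllhhornPerfRef{\psi}\aont$ for each rigid witness query $\psi$ of $\QRsn$ (Condition~\ref{def:dlltcqs-rc:witnesses}), and $\dllhhornPerfRef{\exists S(b)}\aont$ for the local ``if''-direction of Condition~\ref{def:dlltcqs-rc:rf}. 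Each base rewriting is stated for a single ABox, so the work is to fold the additional ABoxes $\rigcons{\QRs}$, $\Amc_{Q_\ax}$, $\Amc_{\RFaux}$, and $\Amc_{\RFphi}$ into them by the substitution illustrated before the lemma statement (replacing an atom $A(x)$ by a disjunction with the relevant equalities), and to incorporate the rigid ABox type~\ARs. As \ARs is not fixed but the $\NBC$-step inductive closure of \afbs, I would express membership $\alpha\in\ARs$ as an FO formula over $\tdbfbs$ that unfolds those constantly many iterations, each step quantifying existentially over an input time point, justified by Lemma~\ref{lem:ars-other-consequences}. The empty-ABox time points $n+1,\dots,n+k$ are handled by the marker $i=-1$: since $\iota(n+j)=j$ makes them range over exactly the worlds of~\as, their instances of~\ref{def:dlltcqs-rc:consistent}, \ref{def:dlltcqs-rc:negcqs}, \ref{def:dlltcqs-rc:witnesses}, and the local part of~\ref{def:dlltcqs-rc:rf} are captured by item~\ref{rew:b}, quantified over all $\ax\in\as$.

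It then remains to treat the ``only-if''-direction of Condition~\ref{def:dlltcqs-rc:rf}, namely that every $\exists S(b)\in\RFs$ is entailed by some $\KRis[j]$. By the definitions of $\RFaux$ and $\RFother$, the assertions about $b\in\NIA$ and about $b\in\NI(\atkb)\setminus\NI(\atcq)$ are entailed by construction, so nothing must be verified for them. The only nontrivial case is $b=a\in\NI(\atcq)$, where $\RFphi$ is populated from the guessed set~\Bphi rather than from an entailment test; item~\ref{rew:c} supplies exactly this check, using $\rewPRef{\exists S(a)}$ over the world $\ax_{\iota(i)}$ to test that $\exists S(a)\in\Bphi$ holds iff it is entailed at some input time point, which by the extended form of Lemma~\ref{lem:dllhhorn-perfect-ref-correct} is equivalent to $\langle\aont,\ARs\cup\rigcons{\QRs}\cup\Amc_{Q_{\iota(i)}}\cup\Amc_i\rangle\models\exists S(a)$.

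The hard part will be the bookkeeping for the names absent from the database. The additional ABoxes $\rigcons{\QRs}$, $\Amc_{Q_\ax}$, and $\Amc_{\RFs}$ introduce the fresh individuals $\NIA$ and $\NIT$, which do not occur in $\tdbfbs$, so the folded rewritings must simulate all reasoning about these ``virtual'' elements at rewriting time (where they are fixed in data complexity) while still linking them, through shared names in $\NI(\atkb)$, to the genuinely input-dependent ABoxes \ARs and $\Amc_i$. I would prove correctness by applying Lemma~\ref{lem:dllhhorn-perfect-ref-correct} to the fully assembled ABox $\ARs\cup\rigcons{\QRs}\cup\Amc_{Q_{\iota(i)}}\cup\Amc_{\RFs}\cup\Amc_i$ and verifying that each folding step preserves the relevant consistency or entailment status, with the exponential-size $\Amc_{\RFs}$ handled as in the proof of Lemma~\ref{lem:dlltcqs-algo-ext}: for any single CQ match only a polynomial, rewriting-time-determined portion of it is ever relevant.
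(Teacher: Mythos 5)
You follow the paper's decomposition faithfully: Conditions~\ref{def:dlltcqs-rc:cqcons} and~\ref{def:dlltcqs-rc:qrn} hold by construction; Conditions~\ref{def:dlltcqs-rc:consistent}, \ref{def:dlltcqs-rc:negcqs}, and~\ref{def:dlltcqs-rc:witnesses} are captured by items~\ref{rew:a} and~\ref{rew:b} (with the marker $i=-1$ for the worlds beyond the input sequence); Condition~\ref{def:dlltcqs-rc:rf} is split so that the \RFaux and \RFother parts hold automatically and only the $\NI(\atcq)$ part is delegated to item~\ref{rew:c}; and your inductive unfolding of~\ARs is exactly the paper's construction of~$\prefth{}(i)$. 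The genuine gap lies in the step you yourself call ``the hard part'', the folding of $\Amc_{\RFs}$ into the rewriting. Your premise that the virtual individuals are ``fixed in data complexity'' is false for the portion of~\NIT stemming from~$\Amc_{\RFother}$: there is one tree ABox $\Amc_{\exists S(b)}$ for \emph{every} $b\in\NI(\atkb)\setminus\NI(\atcq)$ with $\exists S(b)\in\RFother$, so their number grows with the input ABoxes and they cannot be enumerated at rewriting time. The argument you import from Lemma~\ref{lem:dlltcqs-algo-ext} does not repair this: it bounds the size of the relevant fragment of a \emph{single} tree, not the unbounded collection of roots, and its nondeterministic ``guess a fragment'' step has no counterpart in a fixed FO formula, which can only quantify over~$\NI(\afbs)$. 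What is needed is the paper's prototype mechanism: one ABox $\Amc_{\exists S}$ per flexible role, compiled into the rewriting, with the concrete root replaced by an FO variable whose membership in~\RFother is tested by~$\rep[x]{\exists S}$.

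Moreover, even with prototypes, a naive folding is unsound without the co-reference constraint~$\psi_\filter$: when several query atoms are matched into prototype elements, the rewriting must force the non-prototype terms attached to the same prototype to denote the \emph{same} root~$b$ (Example~\ref{exa:filter}); your plan of ``verifying that each folding step preserves entailment'' contains no device for this, and it is precisely where soundness breaks. Finally, item~\ref{rew:c} is subtler than you state: $\rewPRef{\exists S(a)}$ is a rewriting relative to $\AKRs\cup\Amc_i$, which \emph{contains} $\Amc_{\RFphi}$ --- an ABox populated from the very set~\Bphi being verified --- whereas Condition~\ref{def:dlltcqs-rc:rf} refers to entailment \emph{without}~$\Amc_{\RF}$. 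Equating the two, as you do, requires the locality argument via Lemma~\ref{lem:dll-io-elements} showing that all parts of $\Amc_{\RFphi}$ relevant to deriving $\exists S(a)$ are already contained in~\ARs; without that argument the check in~\ref{rew:c} could be circular.
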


In the following subsections, we describe how to obtain the temporal database $\tdbfbs$ and the rewritings $\rsat(i)$ and $\rewPRef{\exists S(a)}$ used in this characterization.
}

\ctwo{So far, we have restricted our attention to Boolean queries.
However,} the queries that we rewrite in this section may also
be non-Boolean, \ie they may contain variables that are not existentially quantified, called \emph{free} variables. This \ctwo{is necessary for} the presentation of the intermediate queries we construct in the rewriting process.
The rewritings must thus preserve entailment \wrt all possible
groundings, as
defined next.
A \emph{grounding} of a UCQ~\acq \wrt an atemporal KB~\Kmc to be a
function~$\ground$ that maps the free variables of~\acq to individual names from
$\NI(\Kmc)$.
Such a grounding $\ground$ is a \emph{certain answer} to~\acq over~\akb if
$\akb\models\ground(\acq)$, where $\ground(\acq)$ denotes the Boolean UCQ resulting
from~$\acq$ by replacing all free variables according to~$\ground$.
Similarly, $\ground$ is an \emph{answer} to~\acq over the first-order structure
\db\afb if $\db\afb\models\ground(\acq)$.
%
The rewriting \dllhhornPerfRef\acq\aont 
can be assumed to be also 
correct for non-Boolean UCQs~\acq,
in the sense that it has the same free
variables as~\acq and that the certain answers to~\acq over~\akb coincide with
the answers to \dllhhornPerfRef\acq\aont over \db\afb; \ie we have
$\akb\models\ground(\acq)$ iff
$\db\afb\models\ground\big(\dllhhornPerfRef\acq\aont\big)$ (\cf
Lemma~\ref{lem:dllhhorn-perfect-ref-correct}) for every possible
grounding~\ground (see, \eg \cite{BotAC-DL10:dllhorn}).

\subsubsection{From $\Amc$ to $\Amc_i$}

Since we have a temporal semantics, as a first step, we need to lift the
definition of $\db\afb$ to the temporal sequence of ABoxes~\afbs, and to adapt the
rewritings $\dllhhornPerfRef\acq\aont$ and $\dllhhornQUnsat\aont$ accordingly.
In the following, we usually talk only about $\dllhhornPerfRef\acq\aont$, since
the procedures for $\dllhhornQUnsat\aont$ are analogous; however, note that
$\dllhhornQUnsat\aont$ is always Boolean.
For now, \acq is simply an arbitrary CQ, which we later instantiate with the
concrete CQs relevant for the r-completeness test, \eg with the rigid witness queries
for the CQs occurring in~\atcq.

\begin{definition}[$\tdb{\afbs}$]
	\label{def:tdb-int}
	For the ABox sequence $\afbs=(\afb_i)_{0\le i\le n}$, the two-sorted first-order structure $\tdb{\afbs}=(\NI(\afbs),[-1,n],\cdot^{\tdboa})$ over the object domain $\NI(\afbs)$ and temporal domain $[-1,n]$ contains the following relations, for all $B\in\BC(\afbs)$ and 
	$R\in\NR(\afbs)$:
	\begin{align*}
	\B^{\tdboa} &:= \{(a,i)\mid i\in[0,n],\ B(a)\in\Amc_i\}, & 
	\R^{\tdboa} &:= \{(\indone,\indtwo,i)\mid i\in[0,n],\ R(\indone,\indtwo)\in\Amc_i\}.
	\end{align*}
\end{definition}
We use the temporal domain element~$-1$ to describe the prototypical empty ABox $\afb_{-1}:=\emptyset$; all formulas of the form $\B(a,-1)$ and $\R(a,b,-1)$ thus evaluate to \false.
As before, we may use atoms of the form $\R^-(b,a,i)$ to refer to $\R(a,b,i)$.
The relations $\B^{\tdboa}$/$\R^{\tdboa}$ for symbols $B$/$R$ that do not occur
in~\afbs are considered to be empty; for simplicity, we do not explicitly
consider this case in the following.

We now define a first-order formula~$\prefo(i)$ with an additional argument~$i$ that allows us to explicitly refer to time points.
The formula $\prefo(i)$ adapts $\dllhhornPerfRef\acq\aont$ to \tdbfbs:
given $i\in[-1,n]$, it checks whether~\acq is entailed by
$\langle\aont,\afb_i\rangle$.
It is obtained from $\dllhhornPerfRef\acq\aont$ by simply replacing all atoms
\ctwo{$\B(t)$ and $\R(\tone,\ttwo)$} by $\B(t,i)$ and $\R(\tone,\ttwo,i)$, respectively.
Given Lemma~\ref{lem:dllhhorn-perfect-ref-correct}, it is easy to see that this
is correct in the following sense.
\begin{lemma}
	\label{lem:pref1}
	For all CQs~\acq, groundings~\ground, and $i\in[-1,n]$, we have
  $$\langle\aont,\Amc_i\rangle\models\ground(\acq) \text{ iff }
    \tdbfbs\models\ground\big(\prefo(i)\big).$$
\end{lemma}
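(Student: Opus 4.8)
The plan is to reduce the statement to the correctness of the atemporal rewriting \dllhhornPerfRef\acq\aont{} already recorded in Lemma~\ref{lem:dllhhorn-perfect-ref-correct} (in its extension to arbitrary groundings), by establishing a precise correspondence between evaluating the time-annotated formula $\prefo(i)$ over \tdbfbs{} and evaluating \dllhhornPerfRef\acq\aont{} over the single-ABox structure $\db{\Amc_i}$. The entire content of the lemma is that annotating every atom with the fixed temporal argument~$i$ picks out exactly the $i$-th slice of the temporal structure; no genuine temporal reasoning is involved.

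First I would fix \acq, \ground, and $i\in[-1,n]$ and record the purely syntactic relationship: by construction $\prefo(i)$ arises from \dllhhornPerfRef\acq\aont{} by replacing each concept atom $\B(t)$ by $\B(t,i)$ and each role atom $\R(\tone,\ttwo)$ by $\R(\tone,\ttwo,i)$, leaving the Boolean connectives and the quantifier prefix untouched. The core step is then an atom-by-atom comparison driven by the two definitions: by Definition~\ref{def:tdb-int}, for $i\in[0,n]$ the atom $\B(a,i)$ holds in \tdbfbs{} iff $B(a)\in\Amc_i$, which by Definition~\ref{def:db-int} is exactly membership of~$a$ in the $B$-relation of $\db{\Amc_i}$, and analogously for role atoms; for $i=-1$ both sides are false by the convention $\Amc_{-1}:=\emptyset$. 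Since the two formulas share the same logical skeleton and agree on every atom under any assignment, a routine induction on the formula structure yields $\tdbfbs\models\ground(\prefo(i))$ iff $\db{\Amc_i}\models\ground(\dllhhornPerfRef\acq\aont)$. Combining this with the rewriting correctness $\langle\aont,\Amc_i\rangle\models\ground(\acq)$ iff $\db{\Amc_i}\models\ground(\dllhhornPerfRef\acq\aont)$ then closes the argument.

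The step I expect to be the main obstacle is the mismatch of object domains: \tdbfbs{} interprets quantified variables and grounding constants over all of $\NI(\afbs)$, whereas $\db{\Amc_i}$ ranges only over $\NI(\Amc_i)$. To push the inductive bridge through I would appeal to the domain-independence of \dllhhornPerfRef\acq\aont{}, which is a union of conjunctive queries: each existentially quantified variable occurs in some atom, so any witness for that atom already lies in the active domain of $\db{\Amc_i}$, and enlarging the domain to $\NI(\afbs)$ adds no new satisfying assignments for the existential part. The only genuinely nontrivial case is a grounding that maps a free variable to a name $a\in\NI(\afbs)\setminus\NI(\Amc_i)$; here both sides agree, since every atom about~$a$ is false over $\db{\Amc_i}$, matching the fact that $\langle\aont,\Amc_i\rangle$ imposes on the assertion-free name~$a$ only the tautological consequences of~\aont, which the rewriting reproduces via atoms that hold for every domain element. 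The degenerate case $i=-1$, where the active domain is empty, requires no separate discussion of empty-domain semantics, because the falseness convention already forces every atom to evaluate to \false{} on both sides.
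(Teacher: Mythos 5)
Your proposal is correct and follows exactly the route the paper intends: the paper gives no explicit proof, stating only that the lemma is ``easy to see'' given Lemma~\ref{lem:dllhhorn-perfect-ref-correct}, and your argument is precisely the expansion of that remark---atom-by-atom agreement between the $i$-th slice of \tdbfbs (Definition~\ref{def:tdb-int}) and $\db{\Amc_i}$ (Definition~\ref{def:db-int}), a structural induction over the shared UCQ skeleton, and the reduction to the atemporal rewriting correctness. Your additional care about the domain mismatch ($\NI(\afbs)$ versus $\NI(\Amc_i)$, handled via domain-independence of the positive-existential rewriting) and the $i=-1$ convention supplies detail the paper silently elides, but does not change the approach.
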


\subsubsection{From $\Amc_i$ to $\ARs\cup\Amc_i$}

In the next step, we incorporate the inductive computation of~\ARs (see
Section~\ref{sec:dlltcqs-r-sat-rew-tuple}) into $\prefo(i)$,
yielding the FO formulas $\prefth{j}(i)$ for all $j\in[0,\NBC]$ \ctwo{(\ie $|\BCr(\aont)|+1$ formulas)}:
\begin{itemize}
  \item $\prefth{0}(i)$ is obtained from $\dllhhornPerfRef\acq\aont$ by
    replacing all rigid basic concept and role atoms $\alpha(\vec{t})$ (where
    $\vec{t}$ is either~$t_1$ or $(t_1,t_2)$, depending on the type of~$\alpha$)
    by
    \[ \Asf(\vec{t},i)\lor
    \bigvee_{\alpha(\vec{a})\in\ARs_0}
      \vec{t}=\vec{a},
    \]
    where, if $\alpha$ is a basic concept~$B$, then \ctwo{$\Asf$ denotes~\B}, and if $\alpha$ is a role~$R$, then \ctwo{$\Asf$ denotes~\R}.
    The big disjunction over the component-wise equality $\vec{t}=\vec{a}$ encodes that the atom~$\alpha(\vec{t})$ is satisfied by~$\ARs_0$.
  \item $\prefth{j+1}(i)$ for $j\in[0,\NBC-1]$ is obtained from
    $\dllhhornPerfRef\acq\aont$ by replacing all rigid atoms $\alpha(\vec{t})$
    by
    \[ \Asf(\vec{t},i)\lor\exists p.\prefth[\alpha(\vec{t})]{j}(p). \]

    The second disjunct encodes that
    $\langle\Omc,\ARs_{j}\cup\Amc_p\rangle\models\ground(\alpha(\vec{t}))$
    for some $p\in[0,n]$, which corresponds to the definition
    of~$\ARs_{j+1}$ in Section~\ref{sec:dlltcqs-r-sat-rew-tuple}.
\end{itemize}
The rewriting $\prefth{}(i):=\prefth{\NBC}(i)$ can be shown to be
correct by induction on~$j$.

%
%
%
\begin{restatable}{lemma}{LemPrefThreeARs}
	\label{lem:pref3-ars}
	For all CQs~\acq, groundings~\ground, and $i\in[-1,n]$, we have
  $$\langle\aont,\ARs\cup\Amc_i\rangle\models\ground(\acq) \text{ iff }
  \tdbfbs\models\ground\big(\prefth{}(i)\big).$$
\end{restatable}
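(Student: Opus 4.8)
The plan is to prove, by induction on $j\in[0,\NBC]$, the following strengthening: for every CQ~\acq, grounding~\ground, and $i\in[-1,n]$,
\[ \langle\aont,\ARs_j\cup\Amc_i\rangle\models\ground(\acq) \text{ iff } \tdbfbs\models\ground\big(\prefth{j}(i)\big). \]
The lemma then follows by instantiating $j:=\NBC$: since the computation of the sets $\ARs_j$ stabilises after at most $\NBC$ steps (as argued where \ARs was defined), we have $\ARs_{\NBC}=\ARs_{\NBC+1}=\cdots$, and $\prefth{}(i)=\prefth{\NBC}(i)$ by definition. It remains to replace $\ARs_{\NBC}$ by the full set~\ARs, which additionally contains the negative assertions $\lnot\aaxiom$ with $\aaxiom\notin\ARs_{\NBC}$. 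Since \acq is a \emph{positive} CQ, these negative assertions influence its entailment only by rendering the KB inconsistent; as the correctness of $\dllhhornPerfRef\acq\aont$ (Lemma~\ref{lem:dllhhorn-perfect-ref-correct}) is formulated under the consistency assumption anyway, they may be disregarded for this particular rewriting.

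For the base case $j=0$, I would combine Lemma~\ref{lem:dllhhorn-perfect-ref-correct} with an atom-by-atom comparison of the two evaluations. By that lemma, $\langle\aont,\ARs_0\cup\Amc_i\rangle\models\ground(\acq)$ iff $\db{\ARs_0\cup\Amc_i}\models\ground\big(\dllhhornPerfRef\acq\aont\big)$. Now $\prefth{0}(i)$ arises from $\dllhhornPerfRef\acq\aont$ by time-indexing every atom with~$i$ and, for every rigid atom $\alpha(\vec t)$, adding the disjunction $\bigvee_{\alpha(\vec a)\in\ARs_0}\vec t=\vec a$. Hence under~\ground a rigid atom $\alpha(\vec t)$ holds in $\tdbfbs$ iff $\ground(\alpha(\vec t))\in\Amc_i$ or $\ground(\alpha(\vec t))\in\ARs_0$, and a flexible atom holds iff $\ground(\alpha(\vec t))\in\Amc_i$; since $\ARs_0\subseteq\ASr(\atkb)$ contains only rigid assertions over names from $\NI(\atkb)=\NI(\afbs)$, this matches exactly the truth value of $\alpha(\vec t)$ in $\db{\ARs_0\cup\Amc_i}$. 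As the substitution preserves the logical structure of the rewriting—whose object quantifiers range over $\NI(\afbs)$, as already exploited for $\prefo(i)$ in Lemma~\ref{lem:pref1}—the two evaluations agree.

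For the inductive step $j\mapsto j+1$, the only change is that each rigid atom $\alpha(\vec t)$ in $\dllhhornPerfRef\acq\aont$ is replaced by $\Asf(\vec t,i)\lor\exists p.\prefth[\alpha(\vec t)]{j}(p)$. Applying the induction hypothesis to the single-atom CQ~$\alpha(\vec t)$ (whose correctness for non-Boolean queries with free terms $\vec t$ we may assume, as noted after Lemma~\ref{lem:dllhhorn-perfect-ref-correct}), the second disjunct holds under a given assignment iff there is $p\in[-1,n]$ with $\langle\aont,\ARs_j\cup\Amc_p\rangle\models\alpha(\vec t)$ under that assignment, which by the definition of $\ARs_{j+1}$ means precisely that $\alpha(\vec t)\in\ARs_{j+1}$. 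Thus every rigid atom holds in $\tdbfbs$ iff it lies in $\Amc_i\cup\ARs_{j+1}$, i.e.\ in $\db{\ARs_{j+1}\cup\Amc_i}$, and the base-case bookkeeping again yields $\tdbfbs\models\ground\big(\prefth{j+1}(i)\big)$ iff $\db{\ARs_{j+1}\cup\Amc_i}\models\ground\big(\dllhhornPerfRef\acq\aont\big)$ iff, by Lemma~\ref{lem:dllhhorn-perfect-ref-correct}, $\langle\aont,\ARs_{j+1}\cup\Amc_i\rangle\models\ground(\acq)$.

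The atom-by-atom substitution is routine; the step I expect to require the most care is making precise, in the inductive step, that the induction hypothesis for the single-atom query $\alpha(\vec t)$ may be invoked even when $\vec t$ contains variables that are existentially bound in the outer rewriting rather than fixed by~\ground. This is handled by stating the inductive claim for arbitrary assignments of the free terms and observing that the substitution commutes with the connectives and object-quantifiers of $\dllhhornPerfRef\acq\aont$. Two secondary points deserve an explicit remark: first, the quantifier $\exists p$ ranges over $[-1,n]$ whereas the definition of $\ARs_{j+1}$ uses $[0,n]$, but the value $p=-1$ corresponds to $\afb_{-1}=\emptyset$ and, by monotonicity of entailment under ABox extension, adds nothing beyond $p=0,\dots,n$; second, the domain discrepancy between $\NI(\ARs_j\cup\Amc_i)$ and $\NI(\afbs)$ is absorbed exactly as in Lemma~\ref{lem:pref1}, relying on $\dllhhornPerfRef\acq\aont$ being a domain-independent UCQ rewriting.
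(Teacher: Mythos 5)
Your proof is correct and takes essentially the same route as the paper's: an induction on $j\in[0,\NBC]$ that is reduced, via the substitution lemma, to an atom-level equivalence, with the base case settled by a case analysis on whether the grounded atom lies in $\ARs_0$ or in $\Amc_i$, and the inductive step identifying satisfaction of $\exists p.\prefth[\alpha(\vec{t})]{j}(p)$ with membership of the atom in $\ARs_{j+1}$. The only (immaterial) differences are that the paper discharges the passage from $\ARs_{\NBC}$ to the full set~\ARs by appealing to Lemma~\ref{lem:dll-iomodel} (the canonical interpretation is unaffected by the negative assertions) rather than by your consistency-conditional reading of Lemma~\ref{lem:dllhhorn-perfect-ref-correct}, and that it silently ignores the mismatch between the temporal domain $[-1,n]$ of the quantifier $\exists p$ and the range $[0,n]$ in the definition of $\ARs_{j+1}$, which you correctly dispose of by monotonicity.
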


\subsubsection{From $\ARs\cup\Amc_i$ to $\AKRs\cup\Amc_i$}

The final rewriting needs to consider an ABox of the form $\AKRs\cup\Amc_i$, where
\[
  \AKRs:=\ARs\cup\rigcons{\QRs}\cup\Amc_{Q_\ax}\cup\Amc_{\RFs}
\]
is the part that is independent of~$i$ (\cf Section~\ref{sec:r-complete-def}).
Note that we use~$Q_\ax$ instead of~$Q_{\iota(i)}$ since we do not explicitly
consider a mapping~$\iota$ yet; and, as in Section~\ref{sec:dlltcqs-cc}, we assume
for now that a world \ax (\eg $\ax_{\iota(i)}$) is given explicitly.
\textcolor{blue}{
It should be kept in mind that the rewriting depends on~\ax (as well as on \as and \Bphi), although we do not explicitly specify this in the notation. 
}

The rewritings introduced so far only cover the individual names in $\NI(\Kmc)$.
However, \AKRs also contains auxiliary individual names from $\NIA\cup\NIT$,
which do not occur in~\tdbfbs.
The set~\NIA and those individuals in~\NIT stemming from~$\Amc_{\RFphi}$
and~$\Amc_{\RFaux}$ do not depend on the input ABoxes, and hence are relatively
unproblematic.
However, $\Amc_{\RFother}$ contains the ABoxes~$\Amc_{\exists S(a)}$, whose
number is not bounded in the size of~\aont or~\atcq.
Our next goal is thus to separate $\Amc_{\RFother}$ as much as possible from the
input data.

We introduce \emph{prototypes},\footnote{Similar in function to, but not to be
confused with, the prototypical elements in canonical interpretations.} which
are fresh individual names ${[S]},\ael{[S]S},\ael{[S]S\apath},\dots$, with the
intention that $[S]$ is used to replace the concrete individual names from
$\NI(\atkb)$.
We collect all these new names except $[S]$ in the set~\NIP.
The ABoxes $\Amc_{\exists S}$ for all $S\in\NFRM(\aont)$ are prototypical
versions of $\Amc_{\exists S(b)}$ with $ b\in\NI(\atkb)$ (see
Section~\ref{sec:rf}), and are obtained from $\Amc_{\exists S(b)}$ by replacing
the individual name~$b$ everywhere by~$[S]$, \eg $\ael{b\apath}$ becomes
$\ael{[S]\apath}$.
In the rewriting, we can then use~$\Amc_{\exists S}$ to refer
to~$\Amc_{\exists S(b)}$ without mentioning~$b$ explicitly.
In the following, we denote by \NITm the restriction of \NIT to those individual
names that occur in~$\Amc_{\RFphi}$ and~$\Amc_{\RFaux}$, \changed{and we denote by $\NIp:=\NIA\cup\NITm\cup\NIP$ the set of all additional individual names we consider in the following (apart from the original ones in $\NI(\atkb)$).}

We can now continue to extend the rewriting $\prefth{}(i)$ to accommodate the
remaining parts of \AKRs, \ie $\rigcons{\QRs}$, $\Amc_{Q_\ax}$,
and~$\Amc_{\RFs}$.
As before, our goal is a rewriting $\rewPRef\acq$ that reflects entailment \wrt
$\langle\Omc,\AKRs\cup\Amc_i\rangle$.
%

We again start from the original rewriting $\dllhhornPerfRef\acq\aont$, which is
a UCQ.
Since $\tdbfbs$ only contains the individual names from
$\NI(\Kmc)$, we adapt the existential quantifiers in the CQs to simulate
quantification over the extended set $\NI(\Kmc)\cup\NIp$ as
follows.
We consider each CQ $\omega=\exists x_0. \dots \exists x_{\ell-1}.\acqtwo$ in
$\dllhhornPerfRef\acq\aont$ separately.
The idea is to expand~$\omega$ into a disjunction of~$2^\ell$ variants
$\omega_0,\ldots,\omega_{2^\ell-1}$ that cover all cases of the variables~$x_j$
being mapped either to $\NI(\Kmc)$ or to $\NIp$.
More formally, to define the disjunct~$\omega_k$, $k\in[0,2^\ell-1]$, we first
represent the number~$k$ by the binary vector
$(b_0,\dots,b_{\ell-1})\in\{0,1\}^\ell$, \ie such that
$k=b_0\cdot 2^0+\ldots+b_{\ell-1}\cdot 2^{\ell-1}$.
Then, for a variable~$x_j$, $j\in[0,\ell-1]$, we replace the original
quantifier~$\exists x_j$ from~$\omega$ by the expression~$\exists'x_j$, which is
defined as follows:
\begin{itemize}
  \item if $b_j=1$, it remains~$\exists x_j$; and
  \item if $b_j=0$, it is the disjunction
    $\bigvee_{x_j\in\NIp}$.
\end{itemize}
Observe that we use the symbol~$x_j$ now in two different ways. If $b_j=0$, then $x_j$ is a variable, as before. However, if $b_j=1$, then $x_j$ is an element of~$\NIp$; the big disjunction over all these elements simulates the quantification over the additional sets of individual names.
We now set
\[ \omega_k:=\exists'x_0.\dots\exists'x_{\ell-1}.
\rep{\acqtwo}\land\acqtwo_{\mathsf{filter}}, \]
where it remains to define the formula $\rep{\acqtwo}\land\acqtwo_{\mathsf{filter}}$.
The idea is that $\rep{\acqtwo}$ replaces the atoms of~$\psi$ in a similar way to the
rewritings considered before. Since atoms in~$\psi$ can refer to prototypes, the
additional formula $\acqtwo_{\mathsf{filter}}$ is needed to ensure that the
structure of the ABoxes $\Amc_{\exists S(b)}\subseteq\Amc_{\RFother}$ is
respected (see Example~\ref{exa:filter} below).
This is inspired by a technique described in~\cite{KLTWZ-IJCAI11:combined}.

The formula~$\rep{\acqtwo}$ is constructed by replacing every atom~\aatom
in~$\acqtwo$ by~$\rep{\aatom}$, depending on the form of~\aatom as described
below.
To simplify the notation, here, we do not mention the parameters \as, \ax,
\Bphi, and~$k$, on which the operation~$\rep{\cdot}$ implicitly depends.
First, we define the abbreviation
\[
\repo{\tone=\ttwo}:={}
\begin{cases}
\tone=\ttwo	& \text{if $\tone,\ttwo\notin\NIp$,}\\
\true & \text{if $\tone=\ttwo$,}\\
\false & \text{otherwise.}
\end{cases}
\]
for $s,t\in\NT(\omega)$, which allows us to express equality between two terms.

For all concept and role atoms~$\alpha(\vec{t})$, we now define
\[ \rep{\alpha(\vec{t})} := \repone{\alpha(\vec{t})}\lor\reptwo{\alpha(\vec{t})}\lor\repthree{\alpha(\vec{t})}. \]
The formulas $\repone{\alpha(\vec{t})}$ and $\reptwo{\alpha(\vec{t})}$
are the rewritings of~$\alpha(\vec{t})$ relative to the ABoxes 
$\ARs\cup\Amc_i$ and
$\rigcons{\QRs}\cup\Amc_{Q_\ax}\cup\Amc_{\RFaux}\cup\Amc_{\RFphi}$,
respectively, and are defined as below:
\begin{align*}
\repone{\alpha(\vec{t})}:={}&\begin{cases}
  \false & \text{if $\vec{t}$ contains elements of $\NIp$,}\\
  \prefth[\alpha(\vec{t})]{}(i) & \text{if $\alpha$ is rigid,}\\
  \Asf(\vec{t},i) & \text{if $\alpha$ is flexible,}\\
\end{cases}\\
\reptwo{\alpha(\vec{t})}:={}&\bigvee_{\alpha(\vec{a})\in\rigcons{\QRs}\cup\Amc_{Q_\ax}\cup\Amc_{\RFaux}\cup\Amc_{\RFphi}}\repo{\vec{t}=\vec{a}} .
\end{align*}
It remains to simulate the influence of~$\Amc_{\RFother}$ via
$\repthree{\alpha(\vec{t})}$.
We start with the formula
\[ \rep[x]{\exists S}:= \exists p.\prefth[\exists S(x)]{}(p) \land
  \bigwedge_{a\in\NI(\atcq)}(x\neq a) \]
which expresses that the variable~$x$ is bound to an individual name~$a$ with
$\exists S(a)\in\RFother$ (see Section~\ref{sec:dlltcqs-r-sat-rew-tuple}).
%
%
We now define
\begin{align*}
  \repthree{\alpha(\vec{t})} &:= \begin{cases}
    \exists x.\rep[x]{\exists S}
      & \text{if $t_2=a_{[S]\rho}$ and
        $\alpha(\vec{t})\in\Amc_{\exists S}$,} \\
    \rep[t_1]{\exists S}
      & \text{if $t_2=a_{[S]R}$,
        $\alpha([S],a_{[S]R})\in\Amc_{\exists S}$,
        $t_1\notin\NIp$} \\
    \false
      & \text{otherwise.}
  \end{cases}
\end{align*}
Intuitively, an assertion $\alpha(\vec{t})$ involving an individual name
from~\NIP holds whenever it directly follows from some
$\Amc_{\exists S}$ (which is independent of the input ABoxes),
or it is a role atom
$R(t_1,a_{[S]S})$ and $t_1$ is mapped to the root~$b$ of some
$\Amc_{\exists S(b)}\subseteq\Amc_{\RFother}$ such that
$R(b,a_{bS})\in\Amc_{\exists S(b)}$ (which corresponds to the assertion
$R([S],a_{[S]S})\in\Amc_{\exists S}$).
%
In both cases, the formula needs to check that a relevant ABox
$\Amc_{\exists S(b)}$ is actually part of $\Amc_{\RFother}$.

It remains to define $\psi_\filter$, whose purpose is to ensure that the
structure of $\Amc_{\RFother}$ is preserved, even though its elements cannot be
explicitly mentioned in the rewriting.

\begin{example}\label{exa:filter}
Consider the CQ $\omega=\exists x,y,z.\psi$, where $\psi=S(y,x)\land S(z,x)$ and
$S$ is rigid, and the disjunct
\[ \omega_k=\bigvee_{x\in\NIp}
	\exists y,z.\rep{\psi}\land\psi_\filter \]
of the rewriting.
In particular, we consider the disjunct of~$\omega_k$ where~$x$ is considered to
be equal to~$\uel{[S]S}\in\NIP$.
It addresses the case where both atoms in~$\psi$ are satisfied by an ABox
of the form~$\Amc_{\exists S(b)}$, by mapping~$x$ to~$\uel{b[S]}$ and both~$y$
and~$z$ to~$b$ (note that $y$ and~$z$ are still quantified over $\NI(\atkb)$).
Since $x=\uel{[S]S}$ is a prototype, we must have
$b\in\NI(\atkb)\setminus\NI(\atcq)$, and thus $\rep{S(y,x)\land S(z,x)}$ is
equal to $\rep[y]{\exists S}\land\rep[z]{\exists S}$.
This formula expresses that both~$y$ and~$z$ must be mapped to roots of an ABox
of the form $\Amc_{\exists S(b)}$, and hence neglects the fact that $y$ and~$z$
must actually be mapped to the \emph{same} individual name~$b$.
The formula $\psi_\filter$ addresses this issue by adding the atom~$y=z$ to the
rewriting.
\end{example}

Formally, we define
\[ \psi_\filter :=
  \ctwo{\bigwedge\big\{
  \repo{s=t} \mid R(x_j,s),\ S(x_j,t)\in\psi,\ x_j\in\NIP,\ s,t\notin\NIP \big\}} \]
(\cf \cite{KLTWZ-IJCAI11:combined}).
Hence, any two terms that are not prototypes and occur together with the same
prototype in role atoms of~$\omega$ must be mapped to the same individual name
in $\NI(\Kmc)\setminus\NI(\atcq)$.
As described above, we construct $\rewPRef\acqalpha$ by replacing each
CQ~$\omega$ in $\dllhhornPerfRef\acq\aont$ by
$\omega_0\lor\dots\lor\omega_{2^\ell-1}$.
In the same way, we obtain the formula $\rewQUnsat$
from $\dllhhornQUnsat\aont$.
The next lemma establishes the correctness of this translation.
\begin{restatable}{lemma}{lemdlltcqsRews}\label{lem:dlltcqs-rews}
  For all Boolean CQs~$\acqalpha$
  and $i\in[-1,n]$, we have:
	\begin{itemize}
		\item $\langle\aont,\AKRs\cup\Amc_i\rangle$ is inconsistent iff 
		$\tdbfbs\models\rewQUnsat$.
		\item $\langle\aont,\AKRs\cup\Amc_i\rangle\models\acqalpha$ iff 
		$\tdbfbs\models \rewPRef\acqalpha$.
	\end{itemize}
\end{restatable}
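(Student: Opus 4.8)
The plan is to peel the claim back, via the correctness of the underlying atemporal \DLLitehhorn rewritings, to a statement purely about homomorphisms into two finite structures, and then to match those homomorphisms up.

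First I would apply Lemma~\ref{lem:dllhhorn-perfect-ref-correct} to trade each semantic condition on $\langle\aont,\AKRs\cup\Amc_i\rangle$ for a model-checking condition on the database $\db{\AKRs\cup\Amc_i}$: the knowledge base is inconsistent iff $\db{\AKRs\cup\Amc_i}\models\dllhhornQUnsat\aont$, and, assuming consistency, $\langle\aont,\AKRs\cup\Amc_i\rangle\models\acqalpha$ iff $\db{\AKRs\cup\Amc_i}\models\dllhhornPerfRef\acqalpha\aont$. Since $\dllhhornQUnsat\aont$ and $\dllhhornPerfRef\acqalpha\aont$ are unions of CQs, and $\rewQUnsat$ and $\rewPRef\acqalpha$ arise from them by replacing each CQ $\omega$ by its expansion $\omega_0\lor\dots\lor\omega_{2^\ell-1}$, the whole lemma reduces to the single database-level equivalence
$$\db{\AKRs\cup\Amc_i}\models\omega \iff \tdbfbs\models\omega_0\lor\dots\lor\omega_{2^\ell-1}$$
for every CQ $\omega=\exists x_0.\dots\exists x_{\ell-1}.\psi$ of the original rewriting. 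The innermost layer $\repone{\alpha(\vec t)}$ already handles entailment over $\ARs\cup\Amc_i$ correctly: for rigid atoms it is $\prefth[\alpha(\vec t)]{}(i)$, correct by Lemma~\ref{lem:pref3-ars}, and for flexible atoms it is the direct lookup $\Asf(\vec t,i)$ in~$\Amc_i$. That layer can therefore be cited rather than re-derived, and I can concentrate on the remaining auxiliary ABoxes.

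For the forward direction, a homomorphism $h$ witnessing $\db{\AKRs\cup\Amc_i}\models\omega$ maps each variable $x_j$ either to an original individual name or to an auxiliary one from $\NIA\cup\NIT$; this choice picks out the disjunct index $k$. I then convert $h$ into a satisfying assignment for $\omega_k$ over $\tdbfbs$ by sending each auxiliary image to its prototype: the names coming from $\rigcons{\QRs}$, $\Amc_{Q_\ax}$, $\Amc_{\RFaux}$ and $\Amc_{\RFphi}$ are already fixed elements of $\NIp$, whereas an $\Amc_{\exists S(b)}$-element $\ael{b\apath}$ is sent to the prototype $\ael{[S]\apath}$. Atom by atom, I would check that $\rep{\alpha(\vec t)}$ is satisfied through exactly one of its three disjuncts, according to which ABox of $\AKRs$ realised the atom under $h$: $\repone{\alpha(\vec t)}$ for $\ARs\cup\Amc_i$, $\reptwo{\alpha(\vec t)}$ for the fixed ABoxes (matched by explicit equalities), and $\repthree{\alpha(\vec t)}$ for $\Amc_{\RFother}$, the root $b$ being recovered via $\rep[t_1]{\exists S}$. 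Finally, two non-prototype terms that share a prototype in role atoms of~$\psi$ must have been mapped by $h$ to the common root of a single $\Amc_{\exists S(b)}$, so their images coincide and $\psi_\filter$ holds.

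The converse direction is where the real content lies, and the fact that every concrete ABox $\Amc_{\exists S(b)}$ has been collapsed onto one prototypical $\Amc_{\exists S}$ is the main obstacle. Given a satisfying assignment $g$ of some $\omega_k$ over $\tdbfbs$, I must rebuild a genuine homomorphism into $\db{\AKRs\cup\Amc_i}$, and the delicate step is to re-instantiate each prototype $\ael{[S]\apath}$ used by $g$ to a concrete $\ael{b\apath}$ with a \emph{consistent} root $b\in\NI(\atkb)\setminus\NI(\atcq)$. The subformula $\rep[t_1]{\exists S}$ guarantees that at least one admissible root exists, namely some $a$ with $\exists S(a)\in\RFother$; and the crucial role of $\psi_\filter$ is to force all terms sharing a given prototype to select the \emph{same} root, so that the re-instantiation is well defined and the reconstructed map lands in the genuine (uncollapsed) ABox $\Amc_{\RFother}$. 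This is exactly the combined-approach argument of~\cite{KLTWZ-IJCAI11:combined} adapted to our setting; once the prototype-to-root correspondence is made precise, a routine atom-by-atom verification shows that the rebuilt $h$ maps $\psi$ into $\db{\AKRs\cup\Amc_i}$, which establishes the equivalence and hence the lemma.
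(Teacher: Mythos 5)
Your proposal is correct and follows essentially the same route as the paper's own proof: reduce via Lemma~\ref{lem:dllhhorn-perfect-ref-correct} to the per-CQ equivalence $\db{\AKRs\cup\Amc_i}\models\omega$ iff $\tdbfbs\models\omega_0\lor\dots\lor\omega_{2^\ell-1}$, prove the forward direction by selecting the disjunct~$\omega_k$ from where the homomorphism sends each variable, mapping $\Amc_{\exists S(b)}$-elements to their prototypes and checking the three cases of $\rep{\alpha(\vec{t})}$ atom by atom, and prove the converse by re-instantiating prototypes to concrete roots, using $\rep[x]{\exists S}$ for existence and $\psi_\filter$ to make the choice of root consistent. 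The paper handles the converse slightly more explicitly by splitting the prototype-variables into maximal connected components and distinguishing whether the root occurs in the image (root forced by $\rep[t_1]{\exists S}$ and $\psi_\filter$) or not (root chosen arbitrarily among those witnessing $\exists x.\rep[x]{\exists S}$), but this is a refinement of exactly the argument you sketch.
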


\subsection{Rewriting r-Satisfiability}
\label{sec:dlltcqs-r-sat-rew-rews2}

We can now use the above rewritings to capture r-satisfiability via the
r-completeness conditions.
Given $\as\subseteq 2^\pv$,
$\Bphi\subseteq\{B(a)\mid B\in\BC(\aont),\ a\in\NI(\atcq)\}$, and a single world $\ax\in\as$, define
$\rsat(i) := \cons(i)\land\ansx(i) \land\answit(i)$, where
\ctwo{%
\begin{align*}
	\cons(i) &:= \lnot\rewQUnsat, \\
	\ansx(i) &:= \bigwedge_{p_j\in\overline{\ax}} \lnot \rewPRef{\acqalpha_j}, \\
	\answit(i) &:=
	\bigwedge \big\{ \lnot \rewPRef{\acqtwo} \mid \acqtwo\text{ rigid witness query for }\QRsn \big\},
\end{align*}%
}%
\changed{in which all rewritings are \wrt $W$ (which is not mentioned explicitly in the notation).}

\ctwo{We can finally prove Lemma~\ref{lem:fo-r-sat1}, which we state here again for convenience.}
  \LemFORSat*
	\begin{proof}
		We consider Definition~\ref{def:dlltcqs-r-complete}.
		Conditions~\ref{def:dlltcqs-rc:cqcons} and~\ref{def:dlltcqs-rc:qrn} are
		trivially satisfied.
		Lemmas~\ref{lem:dlltcqs-rews} and \ref{lem:dllhhorn-perfect-ref-correct}
		show that \ref{rew:a} and~\ref{rew:b} take care of
    Conditions~\ref{def:dlltcqs-rc:consistent}, \ref{def:dlltcqs-rc:negcqs},
		and~\ref{def:dlltcqs-rc:witnesses}.
		It remains to prove Condition~\ref{def:dlltcqs-rc:rf}. For \RFaux and
		\RFother, we show this in the proof of Lemma~\ref{lem:fo-r-sat2} in the
		appendix, even independent of~\ref{rew:c}. For~\RFphi, we show that
		\ref{rew:c} is equivalent to the corresponding part of
		Condition~\ref{def:dlltcqs-rc:rf}. For this, consider any $a\in\NI(\atcq)$.

		($\Leftarrow$)
		If \ref{rew:c} holds, then the definition of \RFphi based on \Bphi and Lemmas~\ref{lem:dlltcqs-rews} and \ref{lem:dllhhorn-perfect-ref-correct} yield that $\exists S(a)\in\RFphi$ iff
		there is an $i\in[0,n]$ such that
		$$\langle\aont,\ARs\cup\rigcons{\QRs}\cup\Amc_{Q_{\iota(i)}}\cup
		\Amc_i\cup\Amc_{\RFphi}\rangle\models \exists S(a),$$
		since \RFaux and \RFother do not contain relevant assertions.
		However, by Lemma~\ref{lem:dll-io-elements}, all parts of $\Amc_{\RFphi}$ relevant to obtain the conclusion $\exists S(a)$ are contained in \ARs, given the definition of these ABoxes. Since $\Amc_{\RFphi}$ does not contain basic concept assertions over $\NI(\atcq)$, we obtain that $\exists S(a)\in\RFphi$ iff there is an $i\in[0,n]$ such that
		$\langle\aont,\ARs\cup\rigcons{\QRs}\cup\Amc_{Q_{\iota(i)}}\cup\Amc_i\rangle
		\models \exists S(a)$, as required.
		
		($\Rightarrow$)
		This follows from the definition of~\RFphi and Lemmas~\ref{lem:dllhhorn-perfect-ref-correct} and~\ref{lem:dlltcqs-rews}.
	\end{proof}

\section{Data Complexity} 
\label{sec:dlltcqs-dc}

Based on the FO rewritability of r-satisfiability, we now show that the low data complexity of query answering in \DLLite does not increase dramatically in our temporal setting and prove \ALogTime-completeness. Nevertheless, FO rewritability is lost.
The lower bound holds already for \DLLitecore without rigid names, which can be shown by reducing the word problem of deterministic finite automata to TCQ entailment, by translating the construction of \cite[\citethm9]{AKKRWZ-IJCAI15:omtqs} to our setting.

\begin{restatable}{theorem}{dlltcqsDCLB}\label{thm:dlltcqs-dc:lb}
 TCQ entailment in~\DLLitecore is \ALogTime-hard in data complexity,
 even if $\NRC=\emptyset$ and $\NRR=\emptyset$.
\end{restatable}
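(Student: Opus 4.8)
The plan is to reduce the word problem of a \emph{fixed} deterministic finite automaton~$\mathfrak{M}=(Q,\Sigma,\delta,q_0,F)$ to TCQ entailment in \DLLitecore, where $\mathfrak{M}$ is chosen so that its transition monoid is non-solvable (e.g.\ realizing multiplication in $S_5$). By Barrington's theorem the word problem of such an $\mathfrak{M}$ is \ALogTime-complete (recall $\ALogTime=\NCone$), so it suffices to give an \ACzero reduction $w\mapsto\atkb_w$; this adapts the construction of~\cite[\citethm9]{AKKRWZ-IJCAI15:omtqs} to our temporal-query-over-temporal-data setting. Concretely, using one individual~$a$, a concept name $A_\sigma$ for each $\sigma\in\Sigma$, and a concept name $S_q$ for each $q\in Q$, I would encode a word $w=\sigma_0\cdots\sigma_{n-1}$ by the TKB $\atkb_w=\langle\aont,(\afb_i)_{0\le i\le n}\rangle$ with $\afb_i:=\{A_{\sigma_i}(a)\}$ for $i\in[0,n-1]$ and $\afb_n:=\emptyset$, so that the current time point is~$n$. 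The ontology~\aont can be taken to be \emph{empty} (hence trivially in \DLLitecore, with no role names and no rigid names), which will even show that the hardness stems entirely from the interplay of the temporal operators with the certain-answer semantics.

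The fixed query is $\atcq := (\Phi_{\mathit{init}}\land\Phi_{\mathit{trans}})\rightarrow\Phi_{\mathit{acc}}$, where, abbreviating $\first:=\lnot\Previous\true$,
\begin{align*}
  \Phi_{\mathit{init}} &:= \Boxm\big(\first\rightarrow S_{q_0}(a)\big),\\
  \Phi_{\mathit{trans}} &:= \Boxm\Boxf\bigwedge_{q\in Q,\ \sigma\in\Sigma}\big((S_q(a)\land A_\sigma(a))\rightarrow\Next S_{\delta(q,\sigma)}(a)\big),\\
  \Phi_{\mathit{acc}} &:= \textstyle\bigvee_{q\in F}S_q(a).
\end{align*}
Since $\first$ holds only at time~$0$ and $\Boxm\Boxf$ evaluated at~$n$ quantifies over \emph{all} time points, $\Phi_{\mathit{init}}$ pins $S_{q_0}(a)$ at time~$0$ and $\Phi_{\mathit{trans}}$ imposes the transition implication everywhere. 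Both \aont and \atcq depend only on $\mathfrak{M}$, so they are fixed; the whole word lives in the ABoxes, and $w\mapsto\atkb_w$ is a projection.

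The correctness claim is $\mathfrak{M}$ accepts $w$ iff $\atkb_w\models\atcq$, and the crux I would establish first is a \emph{forcing} lemma: writing $s_0:=q_0$, $s_{i+1}:=\delta(s_i,\sigma_i)$ for the run, every model $(\aint_i)_{i\ge0}$ of $\atkb_w$ satisfying $\Phi_{\mathit{init}}\land\Phi_{\mathit{trans}}$ at~$n$ has $a\in S_{s_i}^{\aint_i}$ for all $i\in[0,n]$, by induction on~$i$ (using $a\in A_{\sigma_i}^{\aint_i}$, guaranteed by $\afb_i$, to fire the right implication). Given this, the two directions are short: if $s_n\in F$, then in any model either the premise fails at~$n$ (so $\atcq$ holds vacuously) or $a\in S_{s_n}^{\aint_n}$ forces $\Phi_{\mathit{acc}}$, whence $\atkb_w\models\atcq$; if $s_n\notin F$, the \enquote{sparse} interpretation setting exactly $a\in S_{s_i}^{\aint_i}$ for $i\in[0,n]$ (all other $S_q$ empty, the $A_\sigma$ as in the ABoxes, state markers empty beyond~$n$) is a model of $\atkb_w$ that satisfies the premise but falsifies $\Phi_{\mathit{acc}}$ at~$n$, so $\atkb_w\not\models\atcq$.

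The main obstacle I anticipate is making the forcing lemma robust against the open-world semantics: a model may add spurious $A_\sigma(a)$ or $S_q(a)$ atoms, and I must argue these cannot \emph{destroy} the forced run. The point to nail down is that $A_{\sigma_i}(a)$ is \emph{guaranteed} by the data, so the correct successor $S_{s_{i+1}}$ is always forced regardless of extra atoms (additional true premises only add further implications, which can never remove $S_{s_{i+1}}$); this is exactly why an empty ontology suffices and no disjointness axiom is needed. The remaining bookkeeping—that the chosen source problem is \ALogTime-complete under a reduction weak enough to be meaningful at the \ALogTime level, and that $w\mapsto\atkb_w$ is such an \ACzero reduction—is routine given Barrington's theorem and the trivial shape of the encoding. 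As the construction uses no rigid names, the bound holds even for $\NRC=\emptyset$ and $\NRR=\emptyset$, as claimed.
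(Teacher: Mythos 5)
Your proposal is correct and follows essentially the same route as the paper's proof: both reduce the word problem of a fixed DFA (with \ALogTime-completeness obtained via the $S_5$ word problem, i.e.\ Barrington's theorem / \cite{BaISt90:nc1uniformity}), adapting \cite[\citethm9]{AKKRWZ-IJCAI15:omtqs}, with an empty ontology, a single individual, concept names for states and alphabet symbols, and a query of the form $\textit{(transitions force the run)}\rightarrow\textit{(accepting state at $n$)}$. The only cosmetic differences are that the paper places the initial-state assertion $Q_{q_0}(a)$ in $\afb_0$ rather than encoding it with $\Boxm(\lnot\Previous\true\rightarrow S_{q_0}(a))$, and uses $\Boxm$ instead of your $\Boxm\Boxf$ for the transition implications; both variants yield the same forcing argument and countermodel construction.
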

We show \ALogTime-membership by describing an alternating Turing machine that solves our problem in logarithmic time.
For t-satisfiability, we need additional notation and auxiliary results, which are described next.

\subsection{Separating the LTL Satisfiability Test}
\label{sec:ltl-sep}

As before, we consider a TCQ \atcq and a \DLLitehhorn TKB
$\atkb=\langle\aont,\afbs\rangle$ with $\afbs=(\afb_i)_{0\le i \le n}$.
Similar to Algorithm~\ref{algo:tq-sat}, we do not consider the t-satisfiability test for $\pa\atcq$ as a black box, but rather split it into multiple parts, which are then integrated with the test for r-satisfiability using the rewritings of the previous section.
By Lemma~\ref{lem:sep}, we can assume $\pa\atcq$ to be separated, \ie that no future operator occurs in the scope of a past operator and vice versa.
A subformula of $\pa\atcq$ is a \emph{top-level future formula} (\emph{top-level past formula}) if it is of the form $\Next\altlform$ or $\altlform\Until\altlformtwo$ ($\Previous\altlform$ or $\altlform\Since\altlformtwo$) and occurs in $\pa\atcq$ at least once in the scope of no other temporal operator; we denote the set of all such formulas and their negations by \afformset (\apformset), and assume without loss of generality that all propositional variables from~\pv occur in both \afformset and \apformset.
Since we require $\pa\atcq$ to be satisfied at time point~$n$, the crucial part
of this formula thus concerns the past formulas in~\apformset, whose
satisfaction depends on the number~$n$.
The goal is to separate this dependency as much as possible.

%
The \emph{Boolean abstraction}~$\pba\atcq$ of~$\pa\atcq$ is obtained by replacing the top-level future and past formulas $\atopform_1,\dots,\atopform_o$ of~$\pa\atcq$ by propositional variables $q_1,\dots,q_o$, respectively.
%
%
We consider the set \Vmc of all valuations $v\colon\{q_1,\ldots,q_o\}\to\{\true,\,\false\}$ of these variables for which $v(\pba{\atcq})\equiv\true$.
For $v\in\Vmc$, the set
\changed{$\Fmc^v := \{\atopform_i\in\Fmc \mid v(q_i)=\true\}\cup\{\lnot\atopform_i\in\Fmc \mid v(q_i)=\false\}$}
collects the induced future subformulae of~$\pa\atcq$, and $\Pmc^v$ can be defined similarly.
Given a set of worlds $\as\subseteq 2^\pv$ and $v\in\Vmc$, the set $\atmfut{\as}{v}\subseteq\as$ contains the worlds that can serve as the start of an LTL model of~$\Fmc^v$ (restricted to~\as):
\[ \atmfut{\as}{v} := \{ w_0 \mid
  \text{ there is }\altlint=(w_i)_{i\ge0}\text{ such that }\altlint,0\models\Fmc^v \text{ and, for all }
  i\ge 0, \ w_i\in\as \}. \]
%
%
%
%
%
All of these sets are independent of the data and can hence be considered constant.
%

\begin{lemma}\label{lem:dlltcqs-atm-split}
Let $\as=\{\ax_1,\dots,\ax_k\}\subseteq 2^\pv$ and $w_0,\dots,w_n\in\as$. The
following are equivalent.
\begin{enumerate}[label=(\alph*)]
	\item\label{split:a} There is an LTL structure~\altlint that only contains worlds from \as, starts with $w_0,\dots,w_n$, and satisfies $\altlint,n\models\pa{\atcq}$.
  \item\label{split:b} There is a valuation $v\in\Vmc$ such that $w_n\in\atmfut{\as}{v}$ and $(w_0,\dots,w_n,w_n,\dots),n\models\Pmc^v$.
\end{enumerate}
\end{lemma}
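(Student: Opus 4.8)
The plan is to exploit the separation of $\pa\atcq$ in order to split the single requirement $\altlint,n\models\pa\atcq$ into a constraint on the suffix $(w_i)_{i\ge n}$, captured by membership in $\atmfut{\as}{v}$, and an independent constraint on the prefix $(w_i)_{i\le n}$, captured by $\Pmc^v$, with the two halves sharing only the fixed world $w_n$.

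First I would isolate the purely Boolean core common to both directions. Writing $\atopform_1,\dots,\atopform_o$ for the top-level future and past formulas of $\pa\atcq$ together with the variables from~\pv (which by assumption lie in both $\Fmc$ and $\Pmc$), the formula $\pa\atcq$ is obtained from $\pba\atcq$ by substituting each $\atopform_i$ back for $q_i$. Since satisfaction of a Boolean combination at $n$ equals the Boolean function applied to the truth values of its components at $n$, for every structure $\altlint$ we have $\altlint,n\models\pa\atcq$ iff the valuation $v^*$ given by $v^*(q_i):=\true$ exactly when $\altlint,n\models\atopform_i$ lies in $\Vmc$. Moreover, unfolding the definitions of $\Fmc^v$ and $\Pmc^v$, the statement ``$\altlint,n\models\atopform_i$ iff $v(q_i)=\true$ for all $i$'' is literally the conjunction of $\altlint,n\models\Fmc^v$ and $\altlint,n\models\Pmc^v$. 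Hence (a) is equivalent to the existence of $v\in\Vmc$ and a structure $\altlint$ over $\as$ starting with $w_0,\dots,w_n$ such that $\altlint,n\models\Fmc^v$ and $\altlint,n\models\Pmc^v$.

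The second step is the decoupling, and here separation is essential. Because no past operator occurs inside a future formula, the truth of each future $\atopform_i$ at $n$ depends only on positions $\ge n$, so $\altlint,n\models\Fmc^v$ is a property of the suffix $(w_n,w_{n+1},\dots)$ alone; by shift-invariance of pure future formulas it holds iff the shifted structure satisfies $\Fmc^v$ at index~$0$, that is, iff $w_n\in\atmfut{\as}{v}$. Dually, each past $\atopform_i$ at $n$ depends only on positions $\le n$, so $\altlint,n\models\Pmc^v$ is insensitive to how the structure is continued beyond $n$ and is therefore equivalent to $(w_0,\dots,w_n,w_n,\dots),n\models\Pmc^v$. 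Because the variables from~\pv sit in both $\Fmc$ and $\Pmc$, they constrain only the single world $w_n$, which is simultaneously the last prefix world and the first suffix world, so the two halves cannot conflict on it.

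With these facts in place the two implications are short. For (a)$\Rightarrow$(b) I would take $v:=v^*$, read off $w_n\in\atmfut{\as}{v}$ from the suffix of $\altlint$ and $(w_0,\dots,w_n,w_n,\dots),n\models\Pmc^v$ from its prefix. For (b)$\Rightarrow$(a) I would glue: from $w_n\in\atmfut{\as}{v}$ pick a structure $(u_i)_{i\ge0}$ over $\as$ with $u_0=w_n$ and $(u_i)_{i\ge0},0\models\Fmc^v$, and set $\altlint:=(w_0,\dots,w_{n-1},u_0,u_1,\dots)$; this structure is over $\as$, begins with $w_0,\dots,w_n$, satisfies $\Fmc^v$ at $n$ via its suffix and $\Pmc^v$ at $n$ because its prefix up to $n$ equals $w_0,\dots,w_n$, whence $\altlint,n\models\pa\atcq$ by the Boolean core and $v\in\Vmc$. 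I expect the main obstacle to be the careful statement and justification of the two locality facts — that separation forces future (resp.\ past) subformulas evaluated at $n$ to depend solely on positions $\ge n$ (resp.\ $\le n$), and that pure future formulas are shift-invariant so that membership in $\atmfut{\as}{v}$ faithfully encodes the suffix condition. The only other delicate point is verifying that the gluing at position $n$ is coherent, which is exactly where placing the variables from~\pv in both $\Fmc$ and $\Pmc$ pays off.
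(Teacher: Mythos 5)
Your proposal is correct and follows essentially the same route as the paper's proof: both directions rest on extracting the valuation $v$ from the truth values of the top-level formulas at~$n$, using the suffix of $\altlint$ (shifted to index~$0$) as the witness for $w_n\in\atmfut{\as}{v}$, invoking the locality of past (resp.\ future) formulas to worlds $\le n$ (resp.\ $\ge n$), and gluing a prefix to a witness structure for the converse. Your treatment merely makes explicit the Boolean-composition step and the separation-based locality facts that the paper's proof uses implicitly.
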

\begin{proof}
  ($\Rightarrow$) Given \altlint, $v$ can be obtained by checking which elements of $\{\atopform_1,\dots,\atopform_o\}$ are satisfied at~$n$; then, the LTL structure needed to justify $w_n\in\atmfut\as v$ is defined as the substructure of~\altlint that starts at~$n$. Since the satisfaction of the past formulas $\Pmc^v$ in the structure $(w_0,\dots,w_n, \dots)$ at~$n$ does not depend on any time point after~$n$, the remaining worlds can be chosen arbitrarily.

  ($\Leftarrow$) \altlint can be constructed by joining $(w_0,\dots,w_n)$ and the LTL structure obtained from the fact that $w_n\in\atmfut{\as}{v}$, since the satisfiability of past (future) subformulas at~$n$ is not affected by the worlds after (before) that time point.
\end{proof}

As mentioned above, the critical part is to test whether
$(w_0,\dots,w_n,w_n,\dots),n\models\Pmc^v$ holds, since that depends on~$n$.
However, we can employ Lemma~\ref{lem:ltl-periodic-model} to separate the time
points from each other.
Since $\Pmc^v$ does not contain any future operators, its satisfaction depends
only on the time points before~$n$, and we do not have to be concerned with
finding a period or satisfying $\Until$-formulas.
Our task is thus to find types $\atype_0,\dots,\atype_n\in\atypeset$ such that
\begin{itemize}
  \item $\atype_0$ is initial and $\Pmc^v\subseteq\atype_n$;
  \item for all $i\in[0,n-1]$, the pair $(\atype_i,\atype_{i+1})$ is compatible;
  \item for all $i\in[0,n]$, the world $\atype_i\cap\pv$ belongs to~\as.
\end{itemize}
We can then use the world $w_n=\atype_n\cap\pv$ induced by the last type to
check satisfiability of~$\Fmc^v$, \ie whether $w_n\in\atmfut{\as}{v}$.

\subsection{An Alternating Logarithmically Time-Bounded Turing Machine}

Based on this abstraction, we describe an alternating Turing machine (ATM)~\cite{alternation} that solves the TCQ satisfiability problem in \DLLitehhorn in logarithmic time, in the size of the ABox sequence~\afbs.
We use a random access model, where the read-only input tape is accessed by writing the address of the symbol to be read (in binary) on a specific address tape.
Next to those two tapes, the machine may use a constant number of work tapes.
If $l$ is the size of the input, such machines can add, subtract, compare, and compute the logarithm of numbers with $\mathcal{O}(\log l)$ bits \cite[\citelem7.1]{BaISt90:nc1uniformity}.

As usual, our ATM \atm deciding satisfiability of \atcq w.r.t.~\atkb is based on
Lemma~\ref{lem:tcq-sat-iff}, where for t-satisfiability of~\as and~$\iota$ we
only need to find types $\atype_0,\dots,\atype_n$
(Lemmas~\ref{lem:ltl-periodic-model} and~\ref{lem:dlltcqs-atm-split}), and the
r-satisfiability is checked via FO rewritings based on an additional set
$\Bphi\subseteq\{B(a)\mid B\in\BC(\aont),\ a\in\NI(\atcq)\}$
(see Lemmas~\ref{lem:dlltcqs-iff-s-r-consistent} and~\ref{lem:dlltcqs-fo-r-sat}).
The sets \as and \Bphi (of constant size) are guessed in the beginning, and the
mapping~$\iota$ can be obtained from the types $T_0,\dots,T_n$ guessed during
the computation of~\atm.

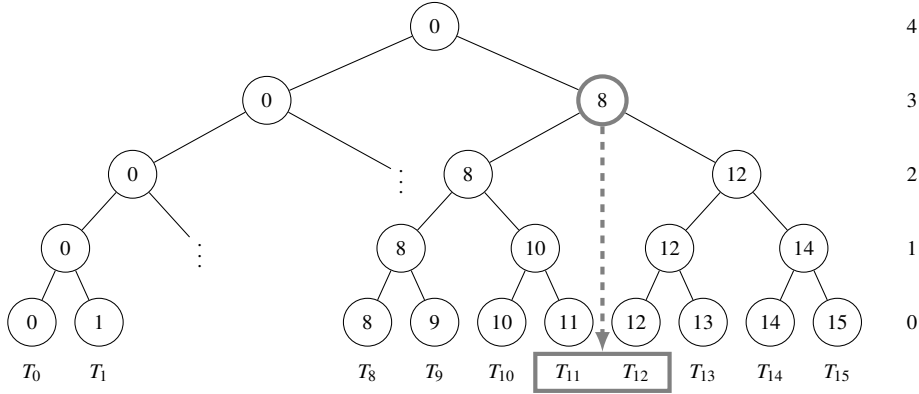
\begin{figure}[tb!]
	\centering
	\resizebox{\columnwidth}{!}{
		\begin{tikzpicture}[level distance=10mm]
		\tikzstyle{level 1}=[sibling distance=45mm]
		\tikzstyle{level 2}=[sibling distance=36mm]
		\tikzstyle{level 3}=[sibling distance=18mm]
		\tikzstyle{level 4}=[sibling distance=9mm]
    \tikzstyle{split}=[circle,draw,minimum size=2.3em]
    \tikzstyle{marked}=[ultra thick,gray,draw]
		\node [split] (z){$0$}
		child {node [split] (a) {$0$}
			child {node [split] (b) {$0$}
				child {node[split] {$0$}
					child {node [split,label={[label distance=.5em]below:$T_0$}] {$0$}}
					child {node [split,label={[label distance=.5em]below:$T_1$}] {$1$}}
				}
				child {node {$\vdots$}}
			}
			child {node {$\vdots$}}
		}
		child {node [split,marked] (j) {{\color{black}$8$}}
			child {node [split] (k) {$8$}
				child {node [split] {$8$}
					child {node [split,label={[label distance=.5em]below:$T_8$}] {$8$}}
					child {node [split,label={[label distance=.5em]below:$T_9$}] {$9$}}
				}
				child {node [split] {$10$}
					child {node [split,label={[label distance=.5em]below:$T_{10}$}] {$10$}}
					child {node [split,label={[label distance=.5em]below:$T_{11}$}] (rl) {$11$}}
				}
			}
			child {node [split] (l) {$12$}
				child {node [split] {$12$}
					child {node [split,label={[label distance=.5em]below:$T_{12}$}] (rr) {$12$}}
					child {node [split,label={[label distance=.5em]below:$T_{13}$}] {$13$}}
				}
				child {node[split] (c) {$14$}
					child {node [split,label={[label distance=.5em]below:$T_{14}$}] {$14$}}
					child {node [split,label={[label distance=.5em]below:$T_{15}$}] {$15$}
						child [grow=right] {node {$0$} edge from parent[draw=none]
							child [grow=up] {node {$1$} edge from parent[draw=none]
								child [grow=up] {node {$2$} edge from parent[draw=none]
									child [grow=up] {node {$3$} edge from parent[draw=none]
										child [grow=up] {node (u) {$4$} edge from parent[draw=none]
											child [grow=up] {node {$\ell$} edge from parent[draw=none]}}
									}
								}
							}
						}
					}
				}
			}
		};
		\path (rl) -- (rr) node [midway,below=0.4cm,marked,rectangle,minimum width=1.8cm,minimum height=0.5cm] (x) {};
		\path[marked,dashed,-latex] (j) -- (x);
		
		\end{tikzpicture} }
	\caption{A sketch of the computation of \atm for $n=15$.}
	\label{fig:atm}
\end{figure}
\begin{example}\label{ex:dlltcqs-atm}
Figure~\ref{fig:atm} gives an overview of the computation tree of \atm, given an ABox sequence with $n=15$.
The nodes represent points at which the alternating machine splits into two copies.
Each node is responsible for constructing a subsequence of~$\atype_0,\dots,\atype_{15}$ for which only the first and last types are given.
For the root node, this means that we initially guess $\atype_0$ and $\atype_{15}$ in order to start this process.
Given the label~$i$ of a node, and its \emph{level} $\ell$, it is responsible for the subsequence starting at index~$i$ and ending at $i+2^\ell-1$.

The root node is labeled by $i=0$ and has level $\ell=4$, which makes it responsible for the subsequence from~$\atype_0$ to~$\atype_{15}$, 
\ie the full sequence.
It then delegates this responsibility to its successors at level~$3$ in the following way: it guesses the types~$\atype_7$ and~$\atype_8$ in the exact middle of the sequence and verifies their t-compatibility, and then it splits the sequence in half. The machine splits into two copies, each of which is responsible for one half of the remaining computation.
The \emph{left} successor deals with the sequence from~$\atype_0$ to~$\atype_7$, 
where we already know the first type and the last type.
Correspondingly, the \emph{right} successor (marked in gray the figure) is labeled by $i=8$, because its designated subsequence starts at $\atype_8$ and ends at $\atype_{15}$. 
Again, we already know the types $\atype_8$, $\atype_{15}$ for the start and end points.
In turn, this copy of the machine then guesses a t-compatible pair $(\atype_{11},\atype_{12})$ (also marked in the figure), and splits the subsequence again into the two shorter sequences $\atype_7,\dots,\atype_{11}$ and $\atype_{12},\dots,\atype_{15}$.

Since all copies of \atm proceed in this way, those at level $\ell=1$ consider only two types $\atype_i$, $\atype_{i+1}$ that have already been guessed before. Each copy then verifies the t-compatibility of this pair of types.
Finally, the copies at level~$0$ each know only one type.
Throughout the whole computation, each type~$\atype_i$ is guessed only once, which prevents conflicting guesses for one time point. Moreover, the copies require no knowledge about what happens in other branches of the computation tree.
\end{example}

\begin{algorithm}[tb]
	\fontsize{8pt}{11pt}\selectfont
	\KwIn{TCQ~\atcq, TKB~$\langle\aont,(\afb_i)_{0\le i\le n}\rangle$}
	\KwOut{\true if \atcq is satisfiable w.r.t.\ \atkb (i.e., the ATM accepts the input), otherwise \false}
	\BlankLine
	$v :=$ Guess an element of $\Vmc$\;
	$\as:=$ Guess a subset of $ 2^\pv$\;
	$\Bphi:=$ Guess a subset of $\{B(a)\mid B\in\BC(\aont),\ a\in\NI(\atcq)\}$\;
	$T_l:=$ Guess an initial type in \atypeset \tcp*{Lemma~\ref{lem:dlltcqs-atm-split}}
	$T_r:=$ Guess a type in \atypeset that contains~$\Pmc^v$\;
  \lIf{\upshape $T_r\cap\{p_1,\dots,p_m\}\notin\atmfut{\as}{v}$}{\Return{\false}}
	\ForEach(\tcp*[f]{Lemma~\ref{lem:dlltcqs-fo-r-sat}\ref{rew:b}}){$\ax\in\as$}{
		\lIf{\upshape $\tdbfbs\not\models\rsat(-1)$
		}{{\Return{\false}}}
	}
	\tcp{start recursion with $i=0$, $\ell=\log(n+1)$, and $\Bphi'=\Bphi$}
	\Return{\textnormal{\algofont{\algoATMRecursion}%
			\big(\atcq, \atkb, $v$, \as, \Bphi, \Bphi, $T_l$, $T_r$, $\log(n+1)$, $0$\big)
	}}
	
	\caption{The ATM}
	\label{algo:atm1}
\end{algorithm}
The copies at level $\ell=0$ are each responsible only for one type $\atype_i$, which induces the world~$w_i=\atype_i\cap\pv$ that implicitly corresponds to $W_{\iota(i)}$ in Lemma~\ref{lem:dlltcqs-fo-r-sat}. We can thus apply this lemma to check the r-completeness conditions.
By this lemma, we have to check the satisfaction of FO formulas in \tdbfbs, which can be done in \ACzero \cite[\citethm9.1]{BaISt90:nc1uniformity}, a subclass of \LogTime.
There are two points that deserve special attention.
First, Condition~\ref{rew:b} in Lemma~\ref{lem:fo-r-sat1}, refers to satisfaction problems \wrt the empty ABox ($i=-1$) for all elements of \as. To this end, \atm splits into $|\as|$ (constantly many) further copies that then verify the corresponding problems.
Second, Condition~\ref{rew:c} in Lemma~\ref{lem:fo-r-sat1} imposes a global condition over all time points $i\in[0,n]$. Therefore, \atm additionally guesses, for each $\exists S(a)\in\Bphi$, at which time point~$i$ we have $\tdbfbs\models\rewPRef{\exists S(a)}$, where this rewriting is \wrt the world~$W=w_i$.
The elements of \Bphi for which a copy of \atm is responsible are then propagated along the branches of the computation tree, and split accordingly.

\changed{The ATM's behavior is specified in Algorithm~\ref{algo:atm1}, where the valuation set $\Vmc$, propositions $\pv$, top-level past formulas~$\Pmc$, and the past formulas~$\Pmc^v$ are constructed based on \atcq as described in the beginning of Section~\ref{sec:ltl-sep}.}
Because of the data complexity assumptions, all constructions depending only on~\atcq and~\aont are of constant size and encoded directly into the states of~\atm.
In addition to~\as, \Bphi, and a valuation $v\in\Vmc$, which are guessed at the beginning, this includes
\begin{itemize}
	\item  a set $\Bphi'\subseteq\Bphi$ that contains the elements of~\Bphi the current copy of the machine is responsible for (see Lemma~\ref{lem:fo-r-sat1}\ref{rew:c});
	\item two types~$T_l$ and~$T_r$ for the left-most and the right-most types of the current subsequence of $T_0,\dots,T_n$.
\end{itemize}
%
%
The (read-only) input tape of \atm contains only the FO structure \tdbfbs, which implicitly contains the number~$n$.
In each configuration, \atm stores the index~$i$ and level~$\ell$ as described in Example~\ref{ex:dlltcqs-atm} on its work tapes, which requires only a logarithmic number of bits.
\changed{
The different ATM configurations are described by the recursive Algorithm~\ref{algo:atm2} (\algofont{\algoATMRecursion}). 
At the end, there are $n+1$ copies of \atm, one for each index $i\in[0,n]$, and each of them knows only one type $T_i$, which induces a unique world $w_i=T_i\cap\pv$.
These copies execute the final tests described in Algorithm~\ref{algo:atm3} (\algofont{\algoATMFinal}), in line with Lemmas~\ref{lem:dlltcqs-fo-r-sat} and~\ref{lem:dlltcqs-atm-split}.}
\begin{algorithm}[tb]
\fontsize{8pt}{11pt}\selectfont
\KwIn{TCQ~\atcq, TKB~$\atkb=\langle\aont,(\afb_i)_{0\le i\le n}\rangle$,
	valuation $v$, worlds \as, sets of basic concept assertions $\Bphi$ and $\Bphi'$,
	types $T_l$ and $T_r$, level $\ell$, time point $i$%
}
\KwOut{\true if all recursively created ATM copies accept, otherwise \false}
\BlankLine
$\Bphi^{(l)}$, $\Bphi^{(r)}:= $ Guess a partition of~$\Bphi'$ into two sets\;
\If{$\ell > 1$}{
	$\ell:=\ell-1$\;
	
	 $(T^{(1)},T^{(2)}) :=$ Guess a t-compatible pair from \atypeset \tcp*{Lemma~\ref{lem:dlltcqs-atm-split}}
	\tcp{split into two copies and accept iff both accept}
	\Return{\textnormal{\algofont{\algoATMRecursion}%
			\big(\atcq, \atkb, $v$, \as, \Bphi, $\Bphi^{(l)}$, $T_l$, $T^{(1)}$, $\ell$, $i$\big)
	  }%
    \upshape{\textbf{and}}\;
		~\phantom{\textbf{return}}\textnormal{\algofont{\algoATMRecursion}%
				\big(\atcq, \atkb, $v$, \as, \Bphi, $\Bphi^{(r)}$, $T^{(2)}$, $T_r$, $\ell$, $i+2^\ell$\big)
		}
  }
}
\Else(\tcp*[f]{$\ell=1$}){
\lIf(\tcp*[f]{Lemma~\ref{lem:dlltcqs-atm-split}}){\upshape $(T_l,T_r)$ are not t-compatible}{\Return{\false}}
%
\tcp{split into two copies one last time}
\Return{\textnormal{\algofont{\algoATMFinal}%
		\big(\atcq, \atkb, $v$, \as, \Bphi, $\Bphi^{(l)}$, $T_l$, $i$\big)}
  \upshape{\textbf{and}} 
	\textnormal{\algofont{\algoATMFinal}%
			\big(\atcq, \atkb, $v$, \as, \Bphi, $\Bphi^{(r)}$, $T_r$, $i+1$\big)
  }
}
}
%
%
\caption{\algofont{\algoATMRecursion}}
\label{algo:atm2}
\end{algorithm}
\begin{algorithm}[tb]
\fontsize{8pt}{11pt}\selectfont
\KwIn{TCQ \atcq, TKB~$\langle\aont,\afbs\rangle$ with $\afbs:=(\afb_i)_{0\le i\le n}$,
	valuation $v$, worlds \as, sets of basic concept assertions \Bphi and $\Bphi'$, type $T_i$, time point~$i$%
}
\KwOut{\true if all final tests are successful, otherwise \false}
\BlankLine
$w_i:=T_i\cap{}\pv$\;
\lIf(\tcp*[f]{Lemma~\ref{lem:dlltcqs-atm-split}}){\upshape $w_i\notin\as$}{\Return{\false}} 

%
\lIf(\tcp*[f]{Lemma~\ref{lem:dlltcqs-fo-r-sat}\ref{rew:a}}){\upshape $\tdbfbs\not\models\rsat[w_i](i)$}{\Return{\false}}
%

\ForEach(\tcp*[f]{Lemma~\ref{lem:dlltcqs-fo-r-sat}\ref{rew:c}, ``if''-direction}){$S\in\NFRM(\aont)$, $a\in\NI(\atcq)$}{
\lIf{$\tdbfbs\models\rewPRef{\exists S(a)}$ 
	{w.r.t.~$w_i$} \upshape\textbf{and} $\exists S(a)\notin\Bphi$}{%
	\Return{\false}}
}

\ForEach(\tcp*[f]{Lemma~\ref{lem:dlltcqs-fo-r-sat}\ref{rew:c}, ``only if''-direction}){$\exists S(a)\in\Bphi'$}{
		\lIf{\upshape $\tdbfbs\not\models\rewPRef{\exists S(a)}$ 
			{w.r.t.~$w_i$}}{%
			{\Return{\false}}
}}
\Return{\true}
\caption{\algofont{\algoATMFinal}}
\label{algo:atm3}
\end{algorithm}
It is easy to show that 
a successful run of \atm indeed reflects the satisfiability of~\atcq w.r.t.~\atkb. 
\begin{restatable}{theorem}{thmdlltcqsDCAalogtime}
\label{thm:dlltcqs-dc-alogtime}
	TCQ entailment in~\DLLitehhorn is in \ALogTime in data
	complexity, even if $\NRR\neq\emptyset$.
\end{restatable}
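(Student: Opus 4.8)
The plan is to establish the two halves of the claim: that the alternating Turing machine~\atm described by Algorithms~\ref{algo:atm1}--\ref{algo:atm3} accepts exactly the satisfiable instances, and that every computation path has length $\mathcal{O}(\log n)$ in the size of the ABox sequence~\afbs. Since \ALogTime is closed under complementation (by dualizing existential and universal states) and \atcq is entailed by \atkb iff $\lnot\atcq$ is unsatisfiable w.r.t.~\atkb, membership of satisfiability in \ALogTime immediately yields the same bound for entailment.

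First I would argue correctness by stitching together the characterizations already proved. By Lemma~\ref{lem:tcq-sat-iff}, \atcq is satisfiable iff there are \as and $\iota$ that are simultaneously t- and r-satisfiable; Lemma~\ref{lem:dlltcqs-iff-s-r-consistent} and Lemma~\ref{lem:fo-r-sat2} reduce r-satisfiability to the existence of a set~\Bphi for which the representative tuple $(\ARs,\QRs,\QRsn,\RFs)$ is r-complete, and Lemma~\ref{lem:dlltcqs-fo-r-sat} turns r-completeness into the FO conditions \ref{rew:a}--\ref{rew:c} over \tdbfbs. For the temporal side I would invoke Lemma~\ref{lem:dlltcqs-atm-split} together with Lemma~\ref{lem:ltl-periodic-model} to replace satisfaction of $\pa{\atcq}$ at~$n$ by the existence of a valuation $v\in\Vmc$, the membership $w_n\in\atmfut{\as}{v}$, and a sequence of types $T_0,\dots,T_n$ whose induced worlds lie in~\as, with $T_0$ initial, $\Pmc^v\subseteq T_n$, and consecutive pairs t-compatible. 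The machine guesses the constant-size objects $v,\as,\Bphi,T_l,T_r$ up front and checks the future part and condition~\ref{rew:b} at the root; the recursion in Algorithm~\ref{algo:atm2} guesses the remaining types by binary subdivision, and Algorithm~\ref{algo:atm3} performs the per-time-point checks \ref{rew:a} and~\ref{rew:c}. The key correctness observation is that each type $T_i$ is guessed exactly once, at the unique node that first isolates position~$i$ as a midpoint or endpoint, so there are no conflicting guesses, while t-compatibility of every adjacent pair $(T_i,T_{i+1})$ is verified precisely at the level-$1$ nodes; the global condition~\ref{rew:c} is handled by distributing, along the tree, the responsibility for each $\exists S(a)\in\Bphi$ to the branch that guesses the witnessing time point.

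For the complexity I would observe that the computation tree has depth $\log(n+1)$, that $v,\as,\Bphi$ and all types are of constant size under the data-complexity assumption (hence stored in the state), and that only the index~$i$ and level~$\ell$ require $\mathcal{O}(\log n)$ bits of work tape. The crucial point keeping the running time at $\mathcal{O}(\log n)$ rather than $\mathcal{O}(\log^2 n)$ is that the index need not be recomputed by arithmetic at each level: in the right branch of Algorithm~\ref{algo:atm2} the update is $i\mapsto i+2^\ell$ with strictly decreasing~$\ell$, so these additions set pairwise distinct bits and never carry, and the binary representation of a leaf's index~$i$ is simply the string of right/left choices made along its path, assembled with $\mathcal{O}(1)$ work per level. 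Each leaf then evaluates the fixed, constant-size FO formulas $\rsat[w_i](i)$ and $\rewPRef{\exists S(a)}$ over \tdbfbs; this is an \ACzero computation by \cite[\citethm9.1]{BaISt90:nc1uniformity}, hence feasible in alternating logarithmic time once the $\mathcal{O}(\log n)$-bit address of~$i$ is written, and the required comparisons and the single $\log(n+1)$ computation are in logarithmic time by \cite[\citelem7.1]{BaISt90:nc1uniformity}. Consequently every root-to-leaf path has length $\mathcal{O}(\log n)$: $\mathcal{O}(\log n)$ recursion steps of constant cost each, plus one FO evaluation of cost $\mathcal{O}(\log n)$ at the leaf (and once at the root).

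I expect the main obstacle to be this complexity accounting rather than the correctness, which follows almost mechanically from the earlier lemmas. In particular, I would have to argue carefully that the index bookkeeping is carry-free and thus amortizes to constant work per level, and to deal with the boundary case where $n+1$ is not a power of two, padding the sequence to depth $\lceil\log(n+1)\rceil$ and ensuring that positions beyond~$n$ are either omitted or treated via the prototypical empty ABox $\afb_{-1}$ encoded by the temporal point $-1$ of~\tdbfbs. A secondary subtlety is confirming that distributing \Bphi across the tree realizes the existential ``only if'' direction of condition~\ref{rew:c} exactly once per asserted $\exists S(a)$, so that no required witness is silently dropped and none is double-counted.
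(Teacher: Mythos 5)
Your proposal is correct and takes essentially the same route as the paper: the identical chain of lemmas (Lemmas~\ref{lem:tcq-sat-iff}, \ref{lem:dlltcqs-iff-s-r-consistent}, \ref{lem:fo-r-sat2}, \ref{lem:fo-r-sat1}, \ref{lem:ltl-periodic-model}, and~\ref{lem:dlltcqs-atm-split}) showing that the ATM of Algorithms~\ref{algo:atm1}--\ref{algo:atm3} accepts exactly the satisfiable instances, followed by closure of \ALogTime under complement to pass from satisfiability to entailment. Your extra care about the carry-free index updates ($i\mapsto i+2^\ell$ with strictly decreasing~$\ell$) and the padding when $n+1$ is not a power of two only makes explicit the complexity accounting that the paper leaves implicit via the cited logtime-arithmetic capabilities of the random-access machine model.
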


\section{TCQ Entailment Beyond the Horn Fragment}
\label{sec:exdlltcqs}

Having established the good computational behavior of TCQ entailment in \changed{Horn} fragments of \DLLite, we now consider the more expressive \textit{krom} and \textit{bool} fragments.
Even without role inclusions, it turns out that TCQ entailment in these logics is as hard as for \ALC, which allows to express qualified existential restrictions (but no inverse roles).
With role inclusions, the complexity even increases to the same level as for $\ALC\Imc$ (\ALC with inverse roles) (see Table~\ref{tab:cc}).
As an auxiliary result, we first show that there is no difference between \DLLitekrom and \DLLitebool in our setting, as TCQs can be used to simulate CIs that are usually only expressible in \DLLitebool.

\subsection{Reducing {\it DL-Lite}\texorpdfstring{$_{\mathit{bool}}$}{bool} to  {\it DL-Lite}\texorpdfstring{$_{\mathit{krom}}$}{krom}}

We show that TCQs, together with CIs of the form $\top\sqsubseteq A\sqcup\overline{A}$, can simulate several kinds of CIs that go beyond \DLLitekrom, covering \DLLitebool and even parts of \changed{\ALCI}.
\changed{The description logic \ALCI supports} \emph{qualified existential / value restrictions} of the form $\exists R.C$ / $\forall R.C$, where $R$ is a role and $C$ a concept, with the following semantics:
\begin{align*}
  (\exists R.C)^\Imc &:= \{ x\in\Delta^\Imc \mid \text{there is } y\in\Delta^\Imc\text{ such that } (x,y)\in R^\Imc\text{ and }y\in C^\Imc \} \\
  (\forall R.C)^\Imc &:= \{ x\in\Delta^\Imc \mid \text{for all }y\in\Delta^\Imc\text{ such that } (x,y)\in R^\Imc\text{ implies }y\in C^\Imc \}
\end{align*}
\changed{That is, a qualified existential restriction $\exists R.C$ checks for the existence of an $R$-successor of type~$C$, whereas $\forall R.C$ requires that all $R$-successors are of type~$C$.}

In the following construction, we employ negated CQs to simulate complex CIs. We use (fresh) symbols~$\overline{A}$ to simulate the complements of concept names~$A$.
\begin{table}[tb]
	\centering
  \caption{Representing complex CIs in \DLLitekrom via TCQs.}
	\label{tab:dlltcqs-krom-transf}
	\begin{tabu}{|l|l|}
		\tabtopline
		CI & TCQ 
		\tabmidline
		$\exists R.A_1\sqsubseteq A_2$&
		$\lnot\exists x,y. R(x,y)\land A_1(y)\land\overline{A}_2(x)$\\
		$A_1\sqsubseteq \forall R.A_2$&
		$\lnot\exists x,y. A_1(x)\land R(x,y)\land\overline{A}_2(y)$\\
		$A_1\sqcap\dots\sqcap A_m\sqsubseteq A_{m+1}\sqcup\dots\sqcup A_{m+n}$&
		$\lnot\exists x. A_1(x)\land\dots\land
		A_m(x)\land\overline{A}_{m+1}(x)\land\dots\overline{A}_{m+n}(x)$ 
		\tabbotline
	\end{tabu}
\end{table}
\begin{restatable}{lemma}{lemKromTransf}\label{lem:krom-transf}
	Let $(C\sqsubseteq D,\lnot\acq)$ be one of the pairs of a CI and a TCQ given in Table~\ref{tab:dlltcqs-krom-transf}, and let \aint be a model of $\top\sqsubseteq A_i\sqcup\overline{A}_i$ and $A_i\sqcap\overline{A}_i\sqsubseteq\bot$ for all concept names~$A_i$ occurring in~$D$. Then, we have $\aint\models C\sqsubseteq D$ iff $\aint\models\lnot\acq$.
\end{restatable}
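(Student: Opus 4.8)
The plan is to handle each of the three rows of Table~\ref{tab:dlltcqs-krom-transf} separately, since each is an independent elementary equivalence, and to base all three on a single observation about the side conditions. Namely, the two assumed inclusions jointly pin down the interpretation of every complementary concept: $\top\sqsubseteq A_i\sqcup\overline{A}_i$ forces $A_i^{\aint}\cup\overline{A}_i^{\aint}=\Delta^{\aint}$, while $A_i\sqcap\overline{A}_i\sqsubseteq\bot$ forces $A_i^{\aint}\cap\overline{A}_i^{\aint}=\emptyset$, so together $\overline{A}_i^{\aint}=\Delta^{\aint}\setminus A_i^{\aint}$. The point that makes this usable is that the concept names $A_i$ for which we have these conditions are exactly those occurring in~$D$, which are precisely the concepts appearing \emph{negated} (as $\overline{A}_i$) in~$\acq$; the concepts occurring positively in~$C$ also occur positively and un-complemented in~$\acq$ and so need no special treatment.

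First I would unfold both sides of the claimed equivalence into plain first-order conditions over $\Delta^{\aint}$. On one side, $\aint\models C\sqsubseteq D$ means $C^{\aint}\subseteq D^{\aint}$; expanding the semantics of $C$ and $D$ (Table~\ref{tab:dl-semantics} together with the $\exists R.C$/$\forall R.C$ clauses stated just above the lemma) turns this into a universally quantified implication. On the other side, $\aint\models\lnot\acq$ means there is no homomorphism of~$\acq$ into~$\aint$ (Definition~\ref{def:cqs-semantics}), i.e. the negation of an existentially quantified conjunction of atoms. Substituting $\overline{A}_i^{\aint}=\Delta^{\aint}\setminus A_i^{\aint}$ rewrites each negated atom $\overline{A}_i(t)$ as the condition $\pi(t)\notin A_i^{\aint}$, after which the two unfolded statements become the same formula. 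For the first row, $C^{\aint}\subseteq D^{\aint}$ reads ``for all $x$, if $x$ has an $R$-successor in $A_1^{\aint}$ then $x\in A_2^{\aint}$'', whereas $\aint\models\lnot\acq$ reads ``there are no $x,y$ with $(x,y)\in R^{\aint}$, $y\in A_1^{\aint}$ and $x\notin A_2^{\aint}$'', which are contrapositives; the second row is analogous via the $\forall R.A_2$ semantics; and the third is the propositional statement that no element lies in $A_1^{\aint}\cap\dots\cap A_m^{\aint}$ while avoiding all of $A_{m+1}^{\aint},\dots,A_{m+n}^{\aint}$.

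I do not expect a genuine obstacle: the content is entirely bookkeeping, and the only thing deserving care is to verify that the two side conditions are each used correctly rather than tacitly assuming $\overline{A}_i$ is the complement. Tracing the two implications separately makes this explicit. The ``only if'' direction ($\aint\models C\sqsubseteq D\Rightarrow\aint\models\lnot\acq$) relies on $A_i\sqcap\overline{A}_i\sqsubseteq\bot$, since from a hypothetical match of~$\acq$ one must conclude that the witnessing element genuinely fails~$D$, i.e. $\pi(t)\in\overline{A}_i^{\aint}$ must force $\pi(t)\notin A_i^{\aint}$. Conversely, the ``if'' direction relies on $\top\sqsubseteq A_i\sqcup\overline{A}_i$, since from an element violating the inclusion one must construct an actual match of~$\acq$, and this needs $\pi(t)\notin A_i^{\aint}$ to yield $\pi(t)\in\overline{A}_i^{\aint}$. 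I would therefore present the ``substitute and compare'' argument once in general form and then instantiate the three rows, each time remarking that the complemented concepts are exactly the concept names of~$D$, so that the hypotheses on~$\aint$ apply to all of them.
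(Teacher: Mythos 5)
Your proposal is correct and follows essentially the same route as the paper's proof: unfold the semantics of the CI and of the (negated) CQ, use the fact that the two auxiliary axioms make $\overline{A}_i^\aint$ the complement of $A_i^\aint$, and observe that the two conditions are contrapositives of one another, row by row. Your version is somewhat more explicit than the paper's (which dispatches one direction as ``dual arguments''), in particular your observation that the disjointness axiom is what the ``only if'' direction needs while the covering axiom is what the ``if'' direction needs, but this is a refinement of presentation rather than a different argument.
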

This means that, given a TCQ~\atcq and a TKB $\Kmc=\langle\Omc,(\Amc_i)_{0\le i\le n}\rangle$, we have
$$\langle\aont,(\Amc_i)_{0\le i\le n}\rangle\models\atcq \text{ iff }
\langle\aont',(\Amc_i)_{0\le i\le n}\rangle\models((\Boxm\Boxf\atcqtwo)\to\atcq),\text{ where}$$
\begin{itemize}
\item $\aont'$ is obtained from \aont by removing all CIs of the forms listed in Table~\ref{tab:dlltcqs-krom-transf} and adding the CIs required by Lemma~\ref{lem:krom-transf}, and
\item \atcqtwo is the conjunction of the negated CQs simulating the removed CIs.
\end{itemize}
With the same construction, \atcq is satisfiable w.r.t.\ $\langle\aont,(\Amc_i)_{0\le i\le n}\rangle$ iff $(\Boxm\Boxf\atcqtwo)\land\atcq$ is satisfiable w.r.t.\ $\langle\aont',(\Amc_i)_{0\le i\le n}\rangle$.
This means that we can use all CIs listed in Table~\ref{tab:dlltcqs-krom-transf} also in \DLLitekrom for our purposes. In particular, we have the following corollary.
\begin{corollary}\label{cor:dlltcqs-krombool-is-the-same}
TCQ entailment in \DLLitebool can be \ctwo{logspace-}reduced to TCQ entailment in
\DLLitekrom.
\end{corollary}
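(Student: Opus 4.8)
The plan is to obtain the corollary directly from the construction stated just before it, together with Lemma~\ref{lem:krom-transf}, and to observe that the whole transformation can be carried out in logarithmic space. The only genuinely new ingredient is the observation that \emph{every} concept inclusion of a \DLLitebool ontology can be put into the shape required by the third row of Table~\ref{tab:dlltcqs-krom-transf}, namely an inclusion $A_1\sqcap\dots\sqcap A_m\sqsubseteq A_{m+1}\sqcup\dots\sqcup A_{m+n}$ between \emph{concept names}. First I would normalize away the existential basic concepts: for each subconcept $\exists R$ occurring in some inclusion of \aont, I introduce a fresh concept name $A_{\exists R}$ and add the two inclusions $A_{\exists R}\sqsubseteq\exists R$ and $\exists R\sqsubseteq A_{\exists R}$, each of which has $m+n=2$ and hence already lies in \DLLitekrom. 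Replacing $\exists R$ by $A_{\exists R}$ everywhere turns every \DLLitebool inclusion into an equivalent inclusion over concept names only, whose extension is preserved in every model because the global definitional inclusions pin $A_{\exists R}$ to $(\exists R)^{\Imc}$ at each time point; since these inclusions are global, $A_{\exists R}$ may be left flexible and the reduction works uniformly in all three rigidity settings.

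Next I would split the resulting inclusions according to their total length $m+n$. Those with $m+n\le2$ are already \DLLitekrom inclusions and are kept verbatim in $\aont'$. Each inclusion with $m+n\ge3$ is removed and handled by the third row of Table~\ref{tab:dlltcqs-krom-transf}: I add its negated CQ $\lnot\exists x.\,A_1(x)\land\dots\land A_m(x)\land\overline{A}_{m+1}(x)\land\dots\land\overline{A}_{m+n}(x)$ as a conjunct of a growing TCQ \atcqtwo, and for every concept name appearing on a right-hand side I add the \DLLitekrom inclusions $\top\sqsubseteq A_i\sqcup\overline{A}_i$ and $A_i\sqcap\overline{A}_i\sqsubseteq\bot$ demanded by Lemma~\ref{lem:krom-transf}. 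The resulting ontology $\aont'$ then contains only inclusions with $m+n\le2$ and is therefore a \DLLitekrom ontology. By Lemma~\ref{lem:krom-transf}, in any model of the complementation inclusions the removed inclusions hold exactly when the corresponding negated CQs do, so the equivalence displayed just before the corollary applies and gives $\langle\aont,(\Amc_i)_{0\le i\le n}\rangle\models\atcq$ iff $\langle\aont',(\Amc_i)_{0\le i\le n}\rangle\models(\Boxm\Boxf\atcqtwo)\to\atcq$; this is the desired entailment reduction into \DLLitekrom.

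Finally I would verify the \LogSpace bound. The transformation scans \aont and \atcq once: for each inclusion it emits a bounded number of krom inclusions and, when $m+n\ge3$, one conjunct of \atcqtwo; the fresh names $A_{\exists R}$ and $\overline{A}_i$ are produced by a fixed deterministic naming scheme obtained by decorating the names of $R$ and $A_i$, so no lookup table is needed and only logarithmic position counters are required. Re-emitting the same definitional or complementation inclusion is harmless since an ontology is a set, so deduplication---which would cost more than logarithmic memory---can be avoided. I expect the main point requiring care to be precisely this logarithmic-space accounting: one must argue that all auxiliary symbols are named canonically and that the construction of $\aont'$ is purely local, so that the reduction is a single-pass write-only output computation rather than merely a polynomial-time one. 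The query-level equivalence itself is inherited from Lemma~\ref{lem:krom-transf} and the preceding construction and needs no further argument.
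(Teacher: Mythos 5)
Your proof is correct and follows essentially the same route as the paper: both rest on Lemma~\ref{lem:krom-transf}, the complementation CIs $\top\sqsubseteq A\sqcup\overline{A}$ and $A\sqcap\overline{A}\sqsubseteq\bot$ (which are themselves Krom), and the displayed equivalence that replaces the removed CIs by the conjunction $\atcqtwo$ of negated CQs under $\Boxm\Boxf$. Your preliminary normalization introducing names $A_{\exists R}$ (so that the third row of Table~\ref{tab:dlltcqs-krom-transf} applies literally to concept names rather than basic concepts) and your explicit logspace accounting are sound minor additions that the paper leaves implicit.
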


This means that it suffices to show our complexity upper bounds for \DLLitekrom, and the lower bounds for \DLLitebool.
But even more than that, we can use CIs with qualified existential restrictions on the left-hand side \changed{(or, equivalenty, value restrictions on the right-hand side)} to prove hardness results for TCQ entailment in \DLLitekrom.
We can even nest these concept constructors arbitrarily.
%
\begin{example}\label{ex:exdlltcqs-gci-trans}
The CI
$ A_1\sqcup A_2\sqcup \exists R_1.A_3 \sqsubseteq
A_4 \sqcup \forall R_1.(A_1 \sqcap \exists R_2) $
can be expressed by the following CIs, assuming $A'_5,\dots,A'_7$ to be fresh concept names:
\begin{align*}
&A_1\sqsubseteq A_4\sqcup A'_5,\ 
A_2\sqsubseteq A_4\sqcup A'_5,\ 
A'_6\sqsubseteq A_4\sqcup A'_5, \\
&\exists R_1.A_3\sqsubseteq A'_6,\ 
A'_5\sqsubseteq \forall R_1.A'_7,\ 
A'_7\sqsubseteq A_1,\ 
A'_7\sqsubseteq \exists R_2.
\end{align*}
These CIs can then, in turn, be simulated by negated CQs as described in Lemma~\ref{lem:krom-transf}.
\end{example}

As usual, we now consider a Boolean TCQ~\atcq and a TKB $\atkb=\langle\aont,(\afb_i)_{0\le i\le n}\rangle$ written in a DL between \DLLitekrom and \DLLitehbool, depending on the context, and investigate the combined and data complexity of TCQ entailment.

\subsection{\ctwo{Combined Complexity}}

Given Corollary~\ref{cor:dlltcqs-krombool-is-the-same}, we directly get two rather strong hardness results from atemporal query answering, \ie even \emph{without rigid names}, from \ExpTime-hardness of UCQ entailment in \DLLitebool and \TwoExpTime-hardness of UCQ entailment in \DLLitehbool~\cite{BoMMP-TODS16}.

\begin{corollary}\label{cor:dlltcqs-hkrom-cc:lb-w/o-rigid}\label{cor:dlltcqs-krom-cc:lb-w/o-rigid}
  TCQ entailment in \DLLitekrom is \ExpTime-hard in combined complexity, which increases to \TwoExpTime-hardness for~\DLLitehkrom, even if $\NRC=\emptyset$ and $\NRR=\emptyset$.
\end{corollary}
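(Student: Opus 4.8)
The strategy is to transfer the known atemporal lower bounds through two elementary observations, so that the only substantive point is checking that the reduction behind Corollary~\ref{cor:dlltcqs-krombool-is-the-same} behaves as needed. First I would note that UCQ entailment is a special case of TCQ entailment: a Boolean UCQ~\acq is itself a temporal-operator-free TCQ, and entailment of~\acq by an ordinary KB $\langle\aont,\Amc\rangle$ coincides with entailment of~\acq by the single-ABox TKB $\langle\aont,(\Amc)\rangle$ at the current time point $n=0$, since by the semantics in Table~\ref{tab:tcq-semantics} the satisfaction of a temporal-operator-free query at a time point depends only on the interpretation at that point. Hence the \ExpTime-hardness of UCQ entailment in \DLLitebool and the \TwoExpTime-hardness of UCQ entailment in \DLLitehbool from~\cite{BoMMP-TODS16} immediately yield the same lower bounds for TCQ entailment in \DLLitebool and \DLLitehbool, respectively, and the underlying atemporal instances use no rigid names, so these bounds hold with $\NRC=\emptyset$ and $\NRR=\emptyset$.

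Second, I would apply Corollary~\ref{cor:dlltcqs-krombool-is-the-same}, which provides a logspace reduction from TCQ entailment in \DLLitebool to TCQ entailment in \DLLitekrom. Composing this with the embedding above, and using that \ExpTime is closed under logspace reductions, gives \ExpTime-hardness of TCQ entailment in \DLLitekrom. For the \Hmc-case I would inspect the reduction realizing Corollary~\ref{cor:dlltcqs-krombool-is-the-same} (built from Lemma~\ref{lem:krom-transf}): it only rewrites the offending concept inclusions into negated CQs collected in a formula~\atcqtwo, augments the ontology by the \DLLitekrom axioms $\top\sqsubseteq A\sqcup\overline{A}$ and $A\sqcap\overline{A}\sqsubseteq\bot$, and guards the query by $\Boxm\Boxf\atcqtwo$. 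Crucially, it leaves all role inclusions untouched and introduces none, so when applied to a \DLLitehbool input it produces a \DLLitehkrom instance; together with closure of \TwoExpTime under logspace reductions this yields the claimed \TwoExpTime-hardness for \DLLitehkrom.

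Finally, I would verify that the whole chain preserves the absence of rigid names: the fresh symbols $\overline{A}$ introduced by Lemma~\ref{lem:krom-transf} can be taken to be flexible concept names, no rigid roles are created, and the starting atemporal problems have empty $\NRC$ and $\NRR$; hence both lower bounds hold even when no symbol may be rigid, as required. I do not expect a genuine obstacle in this argument; the only step demanding care is confirming that Corollary~\ref{cor:dlltcqs-krombool-is-the-same}, which is phrased for the role-inclusion-free fragments, extends verbatim to the \Hmc-variants, and this holds precisely because its construction is insensitive to and preserves role inclusions.
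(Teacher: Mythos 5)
Your proposal is correct and follows essentially the same route as the paper: the paper likewise obtains the corollary by combining the atemporal \ExpTime/\TwoExpTime lower bounds for UCQ entailment in \DLLitebool and \DLLitehbool from~\cite{BoMMP-TODS16} (viewing UCQs as temporal-operator-free TCQs over a single-ABox TKB) with the reduction of Corollary~\ref{cor:dlltcqs-krombool-is-the-same}. Your explicit check that the construction behind Lemma~\ref{lem:krom-transf} neither touches nor introduces role inclusions, and hence carries \DLLitehbool instances to \DLLitehkrom instances, is exactly the detail the paper leaves implicit.
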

Since \DLLitehbool is a sublogic of $\ALC\Hmc\Imc$, for which TCQ entailment with rigid concepts and roles is in \TwoExpTime \cite[\citethm12]{BaBL-AI15}, the \TwoExpTime lower bound is already tight.
To match the \ExpTime lower bound for \DLLitebool without role inclusions, we first establish an auxiliary result for satisfiability of \changed{(atemporal)} conjunctions of CQ literals in \DLLitekrom.

\begin{lemma}\label{lem:dll-cq-literal-conj-sat}
Satisfiability of Boolean conjunctions of CQ literals w.r.t.\ \DLLitekrom KBs is in \ExpTime in combined complexity.
\end{lemma}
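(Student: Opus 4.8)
The plan is to decouple the positive and negative CQ literals and to reduce the whole problem to (exponentially many) UCQ \emph{non}-entailment tests, for which an \ExpTime bound is already available for \DLLitebool, and hence for its sublogic \DLLitekrom \cite{BoMMP-TODS16}. The crucial observation for the negative literals is that a model \aint of a KB fails \emph{every} CQ in a set $\{\acqtwo_1,\dots,\acqtwo_q\}$ precisely when it fails the single UCQ $\acqtwo_1\lor\dots\lor\acqtwo_q$. Thus, once the positive literals have been compiled into the ABox, yielding a KB $\akb^+=\langle\aont,\afb^+\rangle$, the existence of a model of $\akb^+$ avoiding all negative literals is equivalent to $\akb^+$ being consistent and satisfying $\akb^+\not\models\acqtwo_1\lor\dots\lor\acqtwo_q$. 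This replaces ``simultaneous avoidance of several CQs over the (disjunctive) models of a \DLLitekrom KB'' by a standard UCQ non-entailment problem, and is the step that keeps the procedure within \ExpTime.

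To obtain $\afb^+$ I would handle the positive literals $\acq_1,\dots,\acq_p$ by \emph{freezing}. I guess an identification pattern on the terms of the $\acq_i$, specifying which variables are mapped to the same element and which variables/terms are mapped onto individual names from $\NI(\afb)$ (and from the queries). For each such guess I form $\afb^+$ from \afb by adding, for every $\acq_i$, the atoms of $\acq_i$ after applying the identifications and replacing the remaining variables by \emph{fresh} individual names; any basic-concept atom $\exists R(x)$ is first turned into a role atom, so that $\afb^+$ is a legal \DLLitekrom ABox. There are exponentially many such patterns, and each $\afb^+$ has polynomial size.

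Correctness of the freezing is argued in both directions. For completeness, any model of a consistent $\akb^+$ satisfies every $\acq_i$, by mapping its variables to the corresponding frozen individuals; hence a frozen KB possessing a model that fails $\bigvee_j\acqtwo_j$ witnesses satisfiability of the original conjunction. For soundness, assume \aint is a model of \akb witnessing the whole conjunction. Each positive $\acq_i$ then has a homomorphism into \aint; guessing the induced identification pattern and \emph{naming} the (possibly anonymous) witness elements by the designated fresh individual names turns \aint into a model of the corresponding $\afb^+$. The unique name assumption is respected because distinct fresh names are assigned to distinct elements (and duplication of elements, where convenient, is harmless since \DLLitekrom has no functionality or number constraints); moreover, the fresh names occur in none of the $\acqtwo_j$, so naming cannot create a new match of any negative CQ. Consequently, for this guess the KB $\akb^+$ is consistent and still fails $\bigvee_j\acqtwo_j$.

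For the complexity bound I would iterate over all exponentially many frozen ABoxes and, for each, test consistency of $\akb^+$ and (when $q\ge1$) non-entailment of $\bigvee_j\acqtwo_j$; the conjunction of CQ literals is satisfiable iff some iteration succeeds. Both KB consistency and UCQ entailment in \DLLitebool, and hence in \DLLitekrom, are in \ExpTime in combined complexity \cite{BoMMP-TODS16}, and since \ExpTime is deterministic it is closed under complementation, so the non-entailment tests are in \ExpTime as well. Running exponentially many \ExpTime tests stays within \ExpTime overall. The step I expect to be the main obstacle is the soundness of freezing: verifying rigorously that every positive match using the forest-shaped anonymous part of a model can be reproduced by naming anonymous witnesses without violating the unique name assumption and, most importantly, without spuriously creating a match of any negative CQ — this is exactly where the forest-model shape of \DLLitekrom models and the absence of functionality constraints are needed.
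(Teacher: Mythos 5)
Your proposal is correct and takes essentially the same approach as the paper's proof: compile the positive CQ literals into the ABox by grounding them with individual names, reduce the simultaneous avoidance of all negative literals to a single UCQ non-entailment test (plus consistency), and then invoke a known \ExpTime upper bound for UCQ entailment in a formalism covering \DLLitekrom. The differences are only presentational — you guess identification patterns and argue the soundness of the grounding directly, where the paper delegates this step to~\cite{BaBL-JWS15}, and you cite~\cite{BoMMP-TODS16} for the \ExpTime bound where the paper uses the frontier-one disjunctive inclusion dependency result of~\cite{BoMP-IJCAI13}, which serves the same purpose.
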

\begin{proof}
  By grounding the positive literals using fresh individual names, the problem can be reduced to UCQ non-entailment~\cite{BaBL-JWS15}.
  The complexity bound then follows from the fact that UCQ entailment w.r.t.\ so-called frontier-one disjunctive inclusion dependencies is in \ExpTime~\cite[\citethm8]{BoMP-IJCAI13}, and such dependencies can simulate \DLLitekrom CIs.
\end{proof}

Following the approach of Lemma~\ref{lem:tcq-sat-iff}, this allows us to show the following.

\begin{theorem}\label{thm:dlltcqs-krombool-cc:ub-w/o-rigid}
TCQ entailment in~\DLLitebool is in \ExpTime in combined complexity if $\NRC=\emptyset$ and $\NRR=\emptyset$.
\end{theorem}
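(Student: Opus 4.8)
The plan is to apply Lemma~\ref{lem:tcq-sat-iff}, which reduces TCQ satisfiability to three subproblems: obtaining a world set~\as and mapping~$\iota$, checking t-satisfiability of~$\pa\atcq$, and checking r-satisfiability of~\as. Since we are in the setting $\NRC=\emptyset$ and $\NRR=\emptyset$, the r-satisfiability condition (Definition~\ref{def:tqa-r-sat}) collapses dramatically: with no rigid names, there is nothing to synchronize across time points, so the requirement that the interpretations $\Jmc_0,\dots,\Jmc_{n+k}$ share a domain and respect rigid names becomes vacuous. Consequently \as is r-satisfiable w.r.t.~$\iota$ and~\atkb iff, \emph{independently} for each $i\in[0,n+k]$, the conjunction of CQ literals $\chi_{\iota(i)}$ is satisfiable together with $\langle\aont,\afb_i\rangle$. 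Each such independent test is precisely a satisfiability check for a Boolean conjunction of CQ literals w.r.t.\ a \DLLitekrom KB (using Corollary~\ref{cor:dlltcqs-krombool-is-the-same} to move from \DLLitebool to \DLLitekrom), which is in \ExpTime by Lemma~\ref{lem:dll-cq-literal-conj-sat}.

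The main steps I would carry out are as follows. First, reduce \DLLitebool to \DLLitekrom via Corollary~\ref{cor:dlltcqs-krombool-is-the-same}; this is a logspace reduction and does not affect the \ExpTime bound, and it replaces the input TCQ~\atcq by $(\Boxm\Boxf\atcqtwo)\land\atcq$ over a \DLLitekrom ontology, whose size is polynomial in the original input. Second, enumerate or guess the set $\as\subseteq 2^\pv$; note that $|\pv|=m$ is the number of CQs in~\atcq, so there are at most $2^{2^m}$ candidate sets, but more usefully each individual world is a subset of \pv and there are only $2^m$ of them, giving at most exponentially many r-satisfiability sub-tests. Third, for each candidate world $\ax\in 2^\pv$ and each ABox $\afb_i$ (for $i\in[0,n]$, and the empty ABox for the post-$n$ worlds), run the \ExpTime test of Lemma~\ref{lem:dll-cq-literal-conj-sat} on $\langle\aont,\afb_i\rangle$ together with $\chi_\ax$. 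Fourth, the t-satisfiability test for $\pa\atcq$ w.r.t.~\as and~$\iota$ is a pure LTL satisfiability problem, decidable in \PSpace~\cite{SiCl85:ltlpspace}, hence within \ExpTime. The overall algorithm guesses \as (the collection of worlds deemed satisfiable), verifies that exactly these worlds pass the DL test, and then checks t-satisfiability; everything fits within \ExpTime.

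The key observation that makes this work cleanly is that, without rigid names, r-satisfiability of \as decomposes into one atemporal satisfiability test per \emph{world} $\ax\in\as$ — independent of $\iota$ and of the specific time index beyond which ABox applies — because the only coupling between time points in Definition~\ref{def:tqa-r-sat} is through the shared domain and rigid-name agreement, both of which are trivial here. Thus the number of distinct DL tests is bounded by (number of ABoxes $n+1$, plus the empty ABox) times the number of worlds ($2^m$), which is exponential in the combined input size; each test is in \ExpTime, so the cumulative cost remains in \ExpTime. I would also remark that we need not explicitly materialize~$\iota$: it suffices to verify, for each input ABox $\afb_i$, that there is \emph{some} world $\ax\in\as$ for which the conjunction $\chi_\ax$ is satisfiable together with $\langle\aont,\afb_i\rangle$, and to record which worlds are globally satisfiable (with the empty ABox) for the post-$n$ region.

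The step I expect to be the main obstacle is managing the interaction between the guessed world set~\as and the t-satisfiability test while keeping the whole procedure within \ExpTime rather than higher. Specifically, one must ensure that the worlds passing the DL-side test are exactly those allowed in the LTL model, and that the mapping~$\iota$ assigning input ABoxes to worlds is consistent with both tests simultaneously. The cleanest way to handle this is to guess \as up front (exponentially many bits, which is fine for an \ExpTime machine), filter it down to the r-satisfiable worlds using Lemma~\ref{lem:dll-cq-literal-conj-sat}, and then invoke the LTL satisfiability procedure restricted to this filtered \as with the constraint that time points $0,\dots,n$ use worlds satisfiable together with the respective $\afb_i$. Since all ingredients — the \DLLitebool-to-\DLLitekrom reduction, the per-world \ExpTime DL test, and the \PSpace LTL test — are individually within \ExpTime and are composed a bounded (at most exponential) number of times, the combined procedure runs in \ExpTime, establishing the claim. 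Combined with the matching lower bound in Corollary~\ref{cor:dlltcqs-krom-cc:lb-w/o-rigid}, this yields \ExpTime-completeness.
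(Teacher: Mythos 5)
Your overall skeleton matches the paper's proof: reduce to \DLLitekrom via Corollary~\ref{cor:dlltcqs-krombool-is-the-same}, observe that without rigid names r-satisfiability decomposes into independent per-world satisfiability tests handled by Lemma~\ref{lem:dll-cq-literal-conj-sat}, and recombine with the t-satisfiability test via Lemma~\ref{lem:tcq-sat-iff}. (The paper additionally eliminates the ABoxes, and hence~$\iota$, up front using \cite[\citelem6.1]{BaBL-JWS15}; your direct treatment of the ABoxes with per-time-point world constraints is a legitimate alternative.) However, there are two genuine gaps in your complexity accounting. First, you propose to \enquote{guess \as up front (exponentially many bits, which is fine for an \ExpTime machine)}. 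It is not: a deterministic \ExpTime machine cannot guess, and guessing exponentially many bits followed by \ExpTime verification yields only an \NExpTime bound, while deterministically enumerating all $2^{2^m}$ candidate sets costs doubly exponential time. The repair is exactly what the paper does: define \as deterministically and \emph{maximally} as the set of all worlds~$\ax$ whose conjunction $\chi_{\ax}$ is satisfiable w.r.t.~\aont; this is without loss of generality because each world is tested independently (so the maximal set is r-satisfiable by construction) and enlarging \as can only make t-satisfiability easier. Your \enquote{filter the guess} step implicitly contains this idea, but as written the algorithm is nondeterministic.

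Second, and more seriously, your justification of the t-satisfiability step does not hold as stated. t-satisfiability w.r.t.~\as and~$\iota$ is not \enquote{a pure LTL satisfiability problem} of polynomial size: the requirement that every world of the model lies in~\as (and that the prefix respects the per-ABox constraints) must be enforced, and \as may contain exponentially many worlds. Encoding this restriction as an LTL formula blows the input up to exponential size, and running the \PSpace procedure of \cite{SiCl85:ltlpspace} on an exponential-size input yields only an \ExpSpace (indeed, up to \TwoExpTime) bound, not \ExpTime; alternatively, interleaving the Sistla--Clarke type-guessing with per-step membership tests in~\as gives a nondeterministic procedure, again missing the target. What is needed is a dedicated \ExpTime algorithm for t-satisfiability w.r.t.\ an explicitly given, exponential-size~\as---for instance, a lasso/reachability search in the exponential-size type graph restricted to~\as---which is precisely the result of \cite{BaBL-JWS15} that the paper invokes at this point. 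With these two repairs your argument goes through and essentially coincides with the paper's proof.
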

\begin{proof}
By Corollary~\ref{cor:dlltcqs-krombool-is-the-same}, it suffices to describe a decision procedure for \DLLitekrom, which can be done by following Lemma~\ref{lem:tcq-sat-iff}.
By \cite[\citelem6.1]{BaBL-JWS15}, we can assume without loss of generality that the TKB is of the form $\langle\aont,\emptyset\rangle$, which means that we do not have to find a mapping~$\iota$.
Since there are no rigid names to enforce dependencies between time points, in order to check the r-satisfiability of a set $\as=\{\ax_1,\dots,\ax_k\}$, it suffices to check the satisfiability of $\chi_i$ for all $i\in[1,k]$ individually (see also~\cite{BaBL-JWS15}).
Hence, we can define~\as as the set of \emph{all} those sets $\ax_i$ for which~$\chi_i$ is satisfiable w.r.t.~\aont. According to Lemma~\ref{lem:dll-cq-literal-conj-sat}, this can be done in exponential time.
Moreover, t-satisfiability of $\pa\atcq$ w.r.t.~\as can also be checked in \ExpTime~\cite{BaBL-JWS15}, and hence we obtain the claim by Lemma~\ref{lem:tcq-sat-iff}.
\end{proof}


In the presence of rigid names, the complexity of TCQ entailment in \DLLitekrom increases.
\begin{theorem}\label{thm:dlltcqs-krom-cc:lb-rigid-concepts}
	TCQ entailment in~\DLLitekrom is \coNExpTime-hard w.r.t.\ combined complexity if $\NRC\neq\emptyset$, even if $\NRR=\emptyset$.
\end{theorem}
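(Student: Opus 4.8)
The plan is to prove the equivalent statement that TCQ \emph{satisfiability} in \DLLitekrom with a nonempty set $\NRC$ of rigid concept names (and $\NRR=\emptyset$) is \NExpTime-hard; by the complementarity of satisfiability and entailment (Section~\ref{sec:tqa}) this yields \coNExpTime-hardness of entailment. I would reduce from the tiling problem of the exponential torus $\mathbb{Z}_{2^n}\times\mathbb{Z}_{2^n}$, which is \NExpTime-complete: given tile types with horizontal and vertical compatibility relations and a fixed initial tile, decide whether the torus admits a proper tiling. The $2^{2n}$ cells are enumerated in row-major order, so that the horizontal neighbour of a cell sits one time point later and its vertical neighbour sits exactly $2^n$ time points later.

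First I would build a temporal backbone that walks through all cells. Using a $2n$-bit binary counter encoded by CQ literals over a fixed, polynomially sized set of individuals and incremented at every \Next-step, the sequence of time points is made to range over all cell addresses; the local requirement ``exactly one tile is placed here'' and the counter-increment relation are enforced globally by wrapping them under $\Boxf$ and $\Boxm$. Under the propositional-abstraction view of Lemma~\ref{lem:tcq-sat-iff}, a model of the resulting TCQ corresponds to a choice of tile for each of the $2^{2n}$ cells, and the set \as of admissible worlds is forced to have exponential size. This exponential set is exactly the resource that makes the problem \NExpTime-hard, mirroring the source of the combined-complexity upper bound.

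The crux is the vertical matching constraint, which links a cell to the one $2^n$ steps later and therefore cannot be expressed by an explicit $\Next^{2^n}$ (which would be exponentially large). Here the rigid concepts become essential: I would populate the (shared) domain with one element per column, each carrying a permanent $n$-bit \emph{column address} built from rigid concept names, so that at every time point a query can recognise ``the element of the currently active column'' by its address alone. The current tile of a cell is recorded as a \emph{flexible} concept on that element, and a ``pending tile'' marker together with the $\Until$ and $\Since$ operators carries the committed tile forward until the same column recurs $2^n$ steps later, at which point vertical compatibility is checked. Because the addresses are rigid, the identity of the column elements is stable across time, which is what makes this synchronisation sound; crucially, only rigid \emph{concepts} are used, no rigid roles. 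All the Boolean and disjunctive conditions involved---uniqueness of addresses, exactly-one-tile, the compatibility tests, and the increment relation---are conditions on named or freshly addressed elements and hence expressible as negated CQs, using the completeness and disjointness axioms $\top\sqsubseteq A\sqcup\overline{A}$ and $A\sqcap\overline{A}\sqsubseteq\bot$ as in Lemma~\ref{lem:krom-transf}; any nested existential/universal patterns needed to force the $2^n$ column elements are flattened as in Example~\ref{ex:exdlltcqs-gci-trans}, and existence of successors is forced through the native unqualified \DLLite existential $A\sqsubseteq\exists R$.

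I expect the main obstacle to be precisely this vertical synchronisation: arranging the rigid column addresses, the flexible ``pending tile'' bookkeeping, and the $\Until$/$\Since$ carry so that a proper tiling exists iff the TCQ is satisfiable, all while staying within the weak \DLLitekrom CI forms and using rigid \emph{concepts} only. Once the encoding is fixed, correctness splits into the two routine directions---turning a proper tiling into a model, and extracting a tiling from any model by reading off the tile chosen at each cell---together with a check that the whole TKB and TCQ have size polynomial in~$n$, so that the reduction is polynomial. Since \DLLitekrom is a sublogic of \DLLitebool, the same lower bound then holds there as well.
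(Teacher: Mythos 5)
Your overall strategy (prove \NExpTime-hardness of satisfiability by a tiling reduction, then complement) is legitimate, and it is genuinely different from the paper's proof, which is a short reduction from satisfiability of $\EL_\bot$-LTL formulas \emph{without role names} (cited as \NExpTime-hard with rigid concepts): there, assertions $A(a)$ are read directly as CQs and the role-free CIs $\top\sqsubseteq A$, $A\sqsubseteq\bot$, $A_1\sqcap\dots\sqcap A_m\sqsubseteq A_{m+1}$ are turned into negated CQs via Lemma~\ref{lem:krom-transf}, so all of the combinatorial work is delegated to the cited result. Doing that work yourself is fine in principle, but your construction has a gap at exactly the point you flag as the crux.

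The vertical-synchronisation mechanism you propose cannot be implemented in TCQs. You record the tile of a cell as a \emph{flexible} concept on an anonymous column element and want a ``pending tile'' carried forward by $\Until$/$\Since$ until the column recurs $2^n$ steps later. This fails for three compounding reasons. First, variables in a TCQ never span temporal operators: each CQ is evaluated in a single interpretation $\Imc_i$, so there is no formula saying ``if $x$ satisfies $P_t$ now then the \emph{same} $x$ satisfies $P_t$ next'' for anonymous $x$; persistence of flexible concepts on unnamed elements is simply not expressible. Second, the only way a polynomial-size query can ``recognise the element of the currently active column'' is by comparing its rigid address bits against counter bits stored on a named individual (a marker/agreement trick); but during the $2^n$ intermediate steps the counter has moved on, so there is no polynomial formula identifying ``the column that was active $k$ steps ago,'' and writing one per $k$ costs exponentially many conjuncts. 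Third, and decisively, by the r-satisfiability decomposition (Lemma~\ref{lem:tcq-sat-iff} and Definition~\ref{def:tqa-r-sat}) distinct time points are coupled \emph{only} through the polynomial propositional abstraction and through rigid names; a whole row's worth of pending tiles ($2^n\cdot\log|T|$ bits) must survive for $2^n$ steps, and at every intermediate time point that information can reside nowhere except in rigid concepts. Hence flexible tile concepts cannot carry it, no matter how the $\Until$/$\Since$ bookkeeping is arranged. The standard repair is to drop the column elements and flexible tiles altogether: introduce one anonymous element per \emph{cell}, carrying a rigid $2n$-bit address \emph{and a rigid tile concept}; then nothing needs to be carried through time, and at the time point whose counter reads $p$ you check existence, tile-uniqueness, and horizontal/vertical compatibility purely atemporally, using marker concepts pinned (in both directions, via per-bit ``disagreement'' concepts and negated CQs in the style of Lemma~\ref{lem:krom-transf}) to the elements whose rigid addresses equal $p$, $p+1$, and $p+2^n$. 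With that change your reduction goes through; as written, the encoding at its central step enforces nothing.
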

\begin{proof}
  The proof is by reduction from the satisfiability problem for $\EL_\bot$-LTL, which is already \NExpTime-hard if no role names are available (neither rigid nor flexible)~\cite{BoT-IJCAI15,BoTh-LTCS-15-07}; \ctwo{$\EL_\bot$ is the DL extending \EL by the $\bot$ constructor.}
  The formulas~$\atcq$ of that language are similar to TCQs, but instead of CQs they use assertions and CIs formulated in \EL (plus~$\bot$) as atomic formulas.
  However, if~\atcq contains no role names, we can assume that all assertions in~\atcq are of the form $A(a)$, where $A\in\NC$ and $a\in\NI$, which can hence be directly treated as CQs.
  Moreover, all CIs in~\atcq are of the form $\top\sqsubseteq A$, $A\sqsubseteq\bot$, or $A_1\sqcap\dots\sqcap A_m\sqsubseteq A_{m+1}$, which can be replaced by equivalent negated CQs according to Lemma~\ref{lem:krom-transf}, if we add certain \DLLitekrom CIs to the global ontology.
  We can hence obtain a TCQ~$\atcq'$ and an ontology~\Omc such that $\atcq'$ is satisfiable w.r.t.~$\langle\Omc,\emptyset\rangle$ iff the $\EL_\bot$-LTL formula~\atcq is satisfiable.
\end{proof}
%
For proving containment in \coNExpTime, we use a technique from~\cite{BaGL-TOCL12,BaBL-JWS15} that is again based on Lemma~\ref{lem:tcq-sat-iff} and additionally guesses an exponential set $\Tmc\subseteq 2^{\NRC(\Kmc)}$ that represents all combinations of rigid concept names that are allowed to occur in a model of the TCQ~\atcq \wrt the TKB~\Kmc.
Using this set, we can separate the satisfiability tests required for r-satisfiability in a similar fashion as in Lemma~\ref{lem:dlltcqs-iff-s-r-consistent}.

\begin{restatable}{theorem}{thmexdllRigidConUB}\label{thm:dlltcqs-krombool-cc:ub-rigid-concepts}
	TCQ entailment in~\DLLitebool is in \coNExpTime in combined complexity if $\NRR=\emptyset$, even if $\NRC\neq\emptyset$.
\end{restatable}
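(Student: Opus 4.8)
The plan is to prove the complementary statement that TCQ \emph{satisfiability} in \DLLitebool with $\NRR=\emptyset$ lies in \NExpTime; the \coNExpTime bound for entailment then follows, since \atcq is not entailed by \atkb iff the TCQ $\lnot\atcq$ is satisfiable w.r.t.~\atkb, and $\lnot\atcq$ is again a TCQ over the same logic. By Corollary~\ref{cor:dlltcqs-krombool-is-the-same} it suffices to treat \DLLitekrom. Following Lemma~\ref{lem:tcq-sat-iff}, I would nondeterministically guess a set $\as=\{\ax_1,\dots,\ax_k\}\subseteq 2^\pv$ and a mapping $\iota\colon[0,n]\to[1,k]$, and then verify t-satisfiability and r-satisfiability separately. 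The t-satisfiability of $\pa\atcq$ w.r.t.~\as and~$\iota$ is a pure LTL model-existence problem over the fixed world set~\as, decidable in time polynomial in $|\as|$ and exponential in $|\pa\atcq|$, hence within the \NExpTime budget.

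The heart of the argument is a characterization of r-satisfiability tailored to the rigid-concepts-only setting, playing the role of Lemma~\ref{lem:dlltcqs-iff-s-r-consistent}. Following~\cite{BaGL-TOCL12,BaBL-JWS15}, I would additionally guess a set $\Tmc\subseteq 2^{\NRC(\atkb)}$ of \emph{rigid types} together with an assignment $\rho$ fixing, for each named individual, its rigid type. Since all interpretations in a DL-LTL structure share one domain and agree on the rigid concept names, every domain element carries a single rigid type across all time points; hence the set of rigid types realized is the same at every time point, and the rigid type of each named individual is fixed. The claim to prove is: \as is r-satisfiable w.r.t.~$\iota$ and~\atkb iff there are $\Tmc$ and~$\rho$ (with $\rho(a)\in\Tmc$ for all named~$a$) such that, for every $i\in[0,n+k]$, the atemporal KB $\langle\aont,\afb_i\rangle$ together with the conjunction $\chi_{\iota(i)}$ of CQ literals has a model that realizes \emph{exactly} the rigid types in~$\Tmc$ and assigns each named individual its rigid type under~$\rho$.

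For the ``only if'' direction, the witnessing shared-domain interpretations $\Jmc_0,\dots,\Jmc_{n+k}$ directly determine~$\Tmc$ and~$\rho$. The ``if'' direction is where $\NRR=\emptyset$ is essential, and it is considerably simpler than the merging step of Lemma~\ref{lem:dlltcqs-iff-s-r-consistent}: since no role is rigid, a shared element carries no rigid \emph{edges}, only rigid concept memberships, so its behaviour at a time point can be fixed independently of the others. Concretely, I would build the common domain from the named individuals plus the disjoint union of the unnamed parts of all $\Jmc_i$; an unnamed element originating from~$\Jmc_i$ keeps its rigid type at all time points, and in every $\Jmc_j$ it is given the flexible concept-and-role surroundings of some element of the same rigid type already present in $\Jmc_j$ (which exists because all $\Jmc_j$ realize the same set~$\Tmc$). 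As the copied surroundings are flexible and locally forest-shaped, and connected CQs cannot span two disjoint copies, this preserves the positive CQ literals and creates no new matches for the negative ones; any existential restriction forced by~\aont is satisfied by fresh flexible successors local to the time point. Thus each $\Jmc_j$ remains a model and the family respects the rigid names.

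The main obstacle is computational: after the exponential guess of $\as,\iota,\Tmc,\rho$, the characterization requires one atemporal satisfiability test per distinct world (at most $|\as|$ of them), and each must run in \emph{single}-exponential time in the input size~\sizeoq so that the whole procedure stays in \NExpTime. The difficulty is the constraint ``realizes exactly the rigid types in~$\Tmc$'': encoding it by negated CQs forbidding each rigid type $t\notin\Tmc$ (in the style of Table~\ref{tab:dlltcqs-krom-transf}) together with assertions witnessing each $t\in\Tmc$ would add exponentially many formulas, blowing the test input up to exponential size and, via the \ExpTime bound of Lemma~\ref{lem:dll-cq-literal-conj-sat}, yielding only a double-exponential running time. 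I would therefore \emph{not} add these formulas explicitly, but instead parameterize the decision procedure underlying Lemma~\ref{lem:dll-cq-literal-conj-sat} by~$\Tmc$: after grounding the positive literals and passing to UCQ non-entailment over the associated disjunctive inclusion dependencies~\cite{BoMP-IJCAI13}, one restricts the admissible element types to those whose rigid part lies in~$\Tmc$ and separately verifies that each $t\in\Tmc$ is realized. Since the number of candidate element types is single-exponential in~\sizeoq, filtering by~$\Tmc$ costs only polynomial overhead in $|\Tmc|$ and keeps each test single-exponential in~\sizeoq. Exponentially many such tests then fit within \NExpTime, which establishes the bound; making this parameterized procedure precise is the step I expect to be the most delicate.
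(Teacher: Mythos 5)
Your proposal is correct and follows essentially the same route as the paper's own proof: reduce to \DLLitekrom via Corollary~\ref{cor:dlltcqs-krombool-is-the-same}, use Lemma~\ref{lem:tcq-sat-iff}, guess a set of rigid concept-name combinations plus a rigid-type assignment for the named individuals (the paper obtains the resulting characterization of r-satisfiability by citing \cite{BaBL-JWS15}, where you sketch the merging argument directly), and identify exactly the obstacle the paper highlights, namely that naively enforcing ``exactly the types in $\Tmc$ are realized'' inflates the per-world test to exponential size and hence double-exponential time. Your fix---parameterizing the automaton-based UCQ non-entailment procedure of \cite{BoMP-IJCAI13} by restricting admissible element types to those whose rigid part lies in~$\Tmc$ and separately verifying realization of each type---is precisely what the paper does, by restricting the automaton's state set and running $|\Tmc|$ seeded emptiness tests whose models are combined by disjoint union.
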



Finally, we prove \TwoExpTime-hardness of TCQ satisfiability in \DLLitekrom in the presence of rigid role names, by reducing the word problem of exponentially space-bounded alternating Turing machines.
Our reduction is based on the \TwoExpTime-hardness proof for \ALC-LTL in~\cite{BaGL-TOCL12}, which we adapt to our setting using ideas from~\cite{KRH-TOCL2013:horndls}.
Recall that containment in \TwoExpTime follows from the corresponding result for $\ALC\Imc$~\cite[\citethm12]{BaBL-AI15}.
\begin{restatable}{theorem}{thmKromRigidLB}\label{thm:krom-cc:lb-rigid-roles}
	TCQ entailment in~\DLLitekrom is \TwoExpTime-hard in combined complexity
	if $\NRR\neq\emptyset$.
\end{restatable}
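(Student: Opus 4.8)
\emph{Proof plan.} The plan is to establish \TwoExpTime-hardness of TCQ \emph{satisfiability} and then transfer it to entailment: since $\atcq$ is satisfiable w.r.t.\ $\atkb$ iff $\lnot\atcq$ is not entailed by~$\atkb$, a \TwoExpTime lower bound for satisfiability yields a co-\TwoExpTime lower bound for entailment, and as \TwoExpTime is a deterministic time class closed under complementation, entailment is \TwoExpTime-hard as well. Throughout, Corollary~\ref{cor:dlltcqs-krombool-is-the-same} together with Lemma~\ref{lem:krom-transf} and Example~\ref{ex:exdlltcqs-gci-trans} let me use, inside the global ontology together with a conjunction of negated CQs in the query (guarded by $\Boxm\Boxf$), arbitrary \DLLitebool CIs as well as (nested) qualified existential and value restrictions; I will freely write such complex CIs below, knowing they compile into genuine \DLLitekrom CIs plus axioms $\top\sqsubseteq A\sqcup\overline{A}$, $A\sqcap\overline{A}\sqsubseteq\bot$, and negated CQs. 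I reduce from the word problem of alternating Turing machines~$M$ running in space~$2^n$ on inputs of length~$n$, which is \TwoExpTime-complete because alternating exponential space coincides with \TwoExpTime.

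The first ingredient is a spatial encoding of the work tape. Using rigid role names, I attach to each ``configuration'' individual a full binary tree of depth~$n$, whose $2^n$ leaves are the tape cells; each cell is tagged with its $n$-bit address via concept names $X_1,\dots,X_n$ and with its tape symbol and head/state information via further concept names. The tree, the address counter, and the neighbour relation between consecutive addresses are set up by polynomially many complex CIs of the kind illustrated in Example~\ref{ex:exdlltcqs-gci-trans}. The crucial point is that all roles building this addressed structure are declared \emph{rigid}: hence the cell carrying a given address is, across all time points, linked into one and the same tree, so that ``the cell at address~$a$'' can be referred to uniformly at every moment. Where convenient, rigid concept names are simulated by rigid roles as in the paper's discussion of case~(iv), so that $\NRR\neq\emptyset$ indeed suffices.

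The second ingredient is the temporal encoding of the run. Consecutive time points of the DL-LTL structure carry consecutive configurations along a branch of the computation tree; the transition relation of~$M$ is imposed locally by negated CQs which forbid, for any cell, a content at the next time point ($\Next$) that is incompatible with the contents of that cell and of its two rigidly-addressed neighbours at the current time point. The initial configuration encoding~$w$ is fixed at the current time point by assertions and CIs, and acceptance is required by a temporal eventuality ($\Diamondf$) reaching an accepting configuration. Existential states of~$M$ are handled for free by the existential nature of satisfiability, a satisfying model being allowed to pick one admissible successor configuration. By the TCQ semantics (\cf Lemma~\ref{lem:tcq-sat-iff}), a model of the constructed TCQ w.r.t.\ the TKB then exists iff~$M$ has an accepting computation on~$w$.

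The main obstacle is the simulation of the \emph{universal} states of~$M$, and this is exactly where rigid roles (rather than a single linear time line) are needed: at a universal configuration all admissible successors must be explored, so the configuration individual must branch in the DL dimension into several next-configuration trees, while the rigid addressing must keep these parallel branches synchronised with the single temporal step, so that each inherited tape still satisfies the transition constraints. This is the delicate part of adapting the \ALC-LTL construction of~\cite{BaGL-TOCL12}, and it is here that the encoding ideas of~\cite{KRH-TOCL2013:horndls} enter, allowing the branching and the (doubly-exponentially long in~$n$) computation to be captured while staying within the weak Boolean and existential machinery available in \DLLitekrom. Once the encoding is fixed, the remaining work is the lengthy but routine correctness argument: satisfying DL-LTL structures correspond exactly to accepting computation trees of~$M$, and every auxiliary CI used above indeed lies in, or is reducible to, \DLLitekrom.
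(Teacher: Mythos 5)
There is a genuine gap, and it sits exactly at the step you label as routine: ``the transition relation of~$M$ is imposed locally by negated CQs which forbid, for any cell, a content at the next time point ($\Next$) that is incompatible with the contents of that cell and of its \ldots neighbours at the current time point.'' No such constraint is expressible in the TCQ formalism. TCQs combine \emph{Boolean} (closed) CQs with LTL operators; a formula such as $\lnot(\acq_1\land\Next\acq_2)$ relates two independently quantified matches at times $i$ and $i+1$, and there is no mechanism to bind a variable across a temporal operator, \ie no way to say ``the \emph{same} cell~$x$''. The only channels that carry information between time points are (a)~rigid symbols, whose interpretation cannot change and hence cannot encode the evolving tape contents, and (b)~named individuals, of which your reduction would need $2^n$ (one per cell), destroying polynomiality. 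Consequently your entire layout---one configuration per time point, with the tape as a rigid tree of unnamed elements whose flexible labels are updated via $\Next$---cannot enforce the transition relation, and the frame condition (``all cells away from the head keep their symbol'') fails for the same reason. The secondary device you invoke for universal states (branching ``in the DL dimension into several next-configuration trees'' while keeping one configuration per time point) inherits the same problem and additionally leaves unclear which branch is constrained by which temporal step.

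The paper's construction avoids this obstacle by inverting the roles of the two dimensions: the \emph{entire} computation tree (every configuration, cell by cell, with branching for universal transitions) is laid out in the DL dimension using rigid roles and rigid concepts, so that tape symbols, states, head positions and chosen transitions are fixed once and for all across time. The temporal dimension is used only as a global clock: a \emph{flexible} counter $A'_0,\dots,A'_{k-1}$, forced by the TCQ to be shared by all individuals and to cycle through all $2^k$ values, is compared inside CIs against the rigid cell-address counter $A$. At the time point where $A'=i$, propagation axioms guarded by $\C{A=A'}$ fire at every cell with address~$i$ and push its symbol/state, via flexible primed concepts $\sigma',q'$ travelling along the rigid role chain \emph{within that single interpretation}, to the address-$i$ cell of the successor configuration. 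Over the $2^k$ clock values, all cells of all consecutive configuration pairs get synchronized with only polynomially many axioms. This clock trick (adapted from the \ALC-LTL lower bound of~\cite{BaGL-TOCL12}, with roles $R_\delta$ per transition replacing qualified existential restrictions, following~\cite{KRH-TOCL2013:horndls}) is not an implementation detail but the essential idea that makes the reduction expressible; any correct proof needs it or a substitute for it, and your proposal has none.
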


\tikzstyle{dot}=[circle,fill=black,inner sep=0pt,minimum size=4.5pt]
\tikzset{node distance=1cm} 
\subsection{Data Complexity}


As the final result of this paper, we show that the data complexity of TCQ
entailment in \DLLitehkrom is in \coNP, by showing that satisfiability is
in~\NP.
We follow an approach similar to the r-complete tuples from
Section~\ref{sec:dlltcqs-r-sat-charact}; however, since the goal is~\NP instead
of \ALogTime, we do not need to be as careful with our constructions as in
Sections~\ref{sec:dlltcqs-r-sat-charact}, \ref{sec:dlltcqs-r-sat-rew},
and~\ref{sec:dlltcqs-dc}.
For example, we can simply guess the set~\as and mapping~$\iota$ in constant and
linear time, respectively.
Similarly, in order to separate the satisfiability tests for each
$i\in[0,\dots,n+k]$ as in Section~\ref{sec:dlltcqs-r-sat-charact}, we can guess
one \emph{flexible} ABox type per time point, instead of only one rigid ABox
type.
Of course, the individual ABox types need to agree on the rigid assertions.

\begin{definition}[ABox Type]
	\label{def:exdlltcqs-aboxtype}
	A \emph{(flexible) ABox type} for~\Kmc is a set~\Amc of assertions formulated
	over $\NI(\Kmc)\cup\NIA$, $\NC(\aont)$, and $\NR(\aont)$ such that
	$\lnot\aaxiom\in\Amc$ iff $\aaxiom\notin\Amc$.
\end{definition}

The set \NIA is defined similarly as in Section~\ref{sec:dlltcqs-r-sat-charact}: it
contains an individual name~$\aiel{x}$ for each $i\in[0,n+k]$ and variable~$x$
occurring in a CQ in~$\Q_{\iota(i)}$.
We again consider the ABoxes $\cqinst{\Q_{\iota(i)}}$ that contain the
assertions obtained from the CQs $\acq\in\Q_{\iota(i)}$ by replacing each
variable~$x$ by~$\aiel{x}$.
In contrast to Section~\ref{sec:dlltcqs-r-sat-charact}, we also distinguish the time points~$i$ here.
The reason is that, even if the same CQ is satisfied at two different time points, the elements that satisfy it may behave differently due to the nondeterminism inherent in \DLLitehkrom.

As in Section~\ref{sec:dlltcqs-r-sat-charact}, we must ensure that the
interactions between $\Jmc_0,\dots,\Jmc_{n+k}$, which are caused by the rigid
names, do not lead to the satisfaction of some $\acq_j\in\tcqcqs\atcq$ in the
unnamed part of some~$\Jmc_i$ although we have 
$p_j\in\overline{W_{\iota(i)}}$.
However, it is clear that we cannot guess the whole unnamed part of all the
interpretations $\Jmc_0,\dots,\Jmc_{n+k}$.
Instead, we consider these tree-shaped parts only up to a constant depth, and
moreover abstract from the actual individual names that are the roots of these
trees by considering only their general behavior as it is relevant to the
TCQ~\atcq.

To this end, we define \emph{types}, which sufficiently characterize the
interpretations of individual names and their unnamed successors.
A type captures the basic concepts satisfied at a named individual~$a$, as well
as relevant homomorphisms of CQs from $\tcqcqs\atcq$ into the unnamed successors
of~$a$.
However, it does not refer to the actual individual name itself, and is
therefore independent of the input ABoxes.
A \emph{temporal type} is a set of types, which describe the possible behaviors
of the unnamed parts of the interpretations over time.

\begin{definition}[Type]
	\label{def:type}
	A \emph{basic type}~\Bmc is a set of basic concepts from~$\BC(\aont)$ and
	their negations, such that $B\in\Bmc$ iff $\lnot B\not\in\Bmc$ for all
	$B\in \BC(\aont)$; it induces the set of assertions
	$\Amc_\Bmc(a):=\{ \ctwo{(\lnot)B}(a)\mid \ctwo{(\lnot)B}\in \Bmc\}$.
%
	A \emph{type}~\Tmc is a triple $(\Bmc,\Qmc,\Mmc)$ containing a basic type~\Bmc, a
	set $\Qmc\subseteq\tcqcqs\atcq$ of CQs, and a set
	$\Mmc\subseteq\bigcup_{\acq\in\tcqcqs\atcq}2^{\NT(\acq)}$ of term sets.
	A \emph{temporal type}~$\tau$ is a set of types.
	We denote the set of all temporal types by~\TTypes.
\end{definition}
%
The intuition behind a type $\Tmc=(\Bmc,\Qmc,\Mmc)$ is that it describes the
(negated) basic concepts \Bmc that are satisfied at some individual name~$a$,
the CQs $\Qmc\subseteq\tcqcqs\atcq$ that are satisfied by~$a$ and/or its unnamed
successors, and the sets of terms \Mmc that can be partially satisfied by~$a$
and its unnamed successors, \ie for which there is a partial homomorphism of
these terms and the associated atoms of a CQ $\acq\in\tcqcqs\atcq$ into the
unnamed part of an interpretation that is reachable from~$a$ via role
connections. Observe that the CQs in $\Qmc$ describe CQs satisfied somewhere in
the successor tree of $a$, whereas the sets in \Mmc are used to describe CQs
partially satisfied \emph{at $a$}.

We thus guess, for each $a\in\NI(\Kmc)\cup\NIA$ and $i\in[0,n+k]$, a
type~$\typeai{a}{i}$, which is a polynomial amount of information.
In addition, for each of the constantly many temporal types $\tau\in\TTypes$, we
guess so-called \emph{tree ABoxes}~$\tabox{\tau}{\Tmc}$ (of constant size) that
describe prototypical trees of unnamed successors for each type $\Tmc\in\tau$.
For instance, if \Tmc specifies a CQ $\acq\in\tcqcqs\atcq$ to be \emph{not}
satisfied, then \acq is not satisfied in~$\tabox{\tau}{\Tmc}$.
For a fixed~$\tau$, the ABoxes~$\tabox{\tau}{\Tmc}$ must contain the same
individual names and agree on the rigid assertions, which allows us to test the
satisfaction of the negative CQ literals at each of the $n+k+1$ time points
individually.
\begin{definition}[Tree ABoxes]
\label{def:tree-aboxes}
	Given a temporal type $\tau\in\TTypes$, ABoxes $(\Amc_\Tmc)_{\Tmc\in\tau}$
	are called \emph{tree ABoxes} (for~$\tau$) if, for all
	$\Tmc=(\Bmc,\Qmc,\Mmc)\in\tau$,
	\begin{enumerate}[label=(T\arabic*),leftmargin=3em,series=ta]
		\item\label{ta:rigid}
			we have $\NI(\Amc_\Tmc)=\NI(\Amc_{\Tmc'})$ and
			$\Amc_\Tmc\cap\ASr=\Amc_{\Tmc'}\cap\ASr$ for all $\Tmc'\in\tau$;
		\item\label{ta:elements}
			the elements of $\NI(\Amc_\Tmc)$ are of the form
			$u_{\lambda_1\dots\lambda_\ell}$, where each~$\lambda_i$ is either of the
			form~$R^{\Tmc'}$ for $\Tmc'\in\tau$ and $R\in\NFRM(\Omc)$, or simply a
			rigid role $R\in\NRRM(\Omc)$;
		\item\label{ta:complete}
			\ctwo{for all assertions~$\alpha$ of the form
				$B(u_\rho)$ or $S(u_\rho,u_{\rho\smash{R^{[\ |\Tmc']}}})$ with $B\in\BC(\Omc)$ and $S\in\NRM(\Omc)$}, we have either
		$\alpha\in\Amc_\Tmc$ or $\lnot\alpha\in\Amc_\Tmc$, \ctwo{and $\Amc_\Tmc$ contains no other assertions};%
\footnote{\ctwo{The notation $R^{[\ |\Tmc']}$ refers to the fact that $R$ can be either a rigid role without superscript, or a flexible role with some type~$\Tmc'$ as superscript.}}
		\item\label{ta:consistent}
			the knowledge base $\langle\Omc,\Amc_\Tmc\rangle$ is consistent;
		\item\label{ta:basic-type}
		\ctwo{$\Amc_\Bmc(u_\eps)\subseteq \Amc_\Tmc$};
		\item\label{ta:cqs}
			for all $\acq\in\tcqcqs\atcq$, we have $\acq\in\Qmc$ iff
			$\langle\Omc,\Amc_\Tmc\rangle\models\acq$;
		\item\label{ta:partial-homs}
			for all $S\in\bigcup_{\acq\in\tcqcqs\atcq}2^{\NT(\acq)}$, we have
			$S\in\Mmc$ iff there are a CQ $\acq\in\tcqcqs\atcq$ and a partial
			homomorphism $\ahom\colon\NT(\acq)\to\NI(\Amc)$ of~\acq into~\Amc with
			$\ctwo{u_\eps}\in\range(\ahom)$ and $S=\domain(\ahom)$, \ctwo{where the functions \range and \domain yield the range and domain of a given mapping, respectively.}
	\end{enumerate}
\end{definition}
Condition~\ref{ta:rigid} expresses that the tree ABoxes should respect the rigid
names.
Conditions~\ref{ta:elements} and~\ref{ta:complete} establish the tree shape of
the ABoxes.
Conditions~\ref{ta:complete} and~\ref{ta:consistent} further ensure that the ABoxes are
consistent w.r.t.~\Omc and that nondeterministic choices forced by~\Omc are
realized in them.
Finally, Conditions~\ref{ta:basic-type}--\ref{ta:partial-homs} reflect the
satisfaction of the types~\Tmc in each~$\Amc_\Tmc$.

However, so far we have said nothing about the size of tree ABoxes.
The idea is that they should contain just enough elements to allow us to expand
them into proper models of the TKB~\Kmc.
That is, an element $u_\rho$ without role successors in
$(\Amc_\Tmc)_{\Tmc\in\tau}$ should have an ancestor~$u_\sigma$ that satisfies
the same basic concepts, which allows us to continue the construction of a
model at~$u_\rho$ by copying the successors of~$u_\sigma$.
Moreover, since we want to preserve the satisfaction of queries from
$\tcqcqs\atcq$ in this process, we require that a whole subtree of depth
$d:=\max\{|\NT(\acqalpha)|\mid\acqalpha\in\tcqcqs\atcq\}$ repeats, which is the
maximum amount of elements that a match for a connected CQ from $\tcqcqs\atcq$
may require.
For tree ABoxes $(\Amc_\Tmc)_{\Tmc\in\tau}$ and
$u_\sigma,u_\rho\in\NI(\Amc_\Tmc)$, we say that $u_\sigma$ is an \emph{ancestor}
of~$u_\rho$ (and $u_\rho$ is a \emph{descendant} of~$u_\sigma$) if
$\rho=\sigma\rho'$ for some~$\rho'$.
Moreover, the \emph{subtree of depth~$d$} below~$u_\sigma$ is the tuple
$(\Cmc_\Tmc)_{\Tmc\in\tau}$ of ABoxes, where each~$\Cmc_\Tmc$ is obtained
from~$\Amc_\Tmc$ by restricting this ABox to the individual names from
$\{u_{\sigma\rho}\mid|\rho|\le d\}$.
We say that two such subtrees
$(\Cmc_\Tmc)_{\Tmc\in\tau},(\Dmc_\Tmc)_{\Tmc\in\tau}$ are \emph{isomorphic} if
there is a bijection~$f$ between $\NI(\Cmc_\Tmc)$ and $\NI(\Dmc_\Tmc)$ such that
each $\Dmc_\Tmc$ is equal to $\Cmc_\Tmc$ after replacing all individual names
according to~$f$.

\begin{definition}[Complete Tree ABoxes]
\label{def:complete-tree-aboxes}
	Tree ABoxes $(\Amc_\Tmc)_{\Tmc\in\tau}$ for~$\tau$ are called \emph{complete}
	if, for all  $u_\rho\in\NI(\Amc_\Tmc)$, exactly one of the following
	conditions is satisfied:
	\begin{enumerate}[resume*=ta]
		\item\label{ta:successors}
			for all $\Tmc=(\Bmc,\Qmc,\Mmc)\in\tau$ and $R\in\NRM(\Omc)$,
			$\exists R(u_\rho)\in\Amc_\Tmc$ iff
			$u_{\rho\smash{R^{[\ |\Tmc]}}}\in\NI(\Amc_\Tmc)$ and
			$R(u_\rho,u_{\rho\smash{R^{[\ |\Tmc]}}})\in\Amc_\Tmc$; or
		\item\label{ta:bounded}
			$u_\rho$ has an ancestor~$u_{\rho'}$ such that $\rho=\rho'\sigma$ with
			$|\sigma|=d$, and $u_{\rho'}$ has an ancestor~$u_{\rho''}$ such that the
			subtree of depth~$d$ below~$u_{\rho'}$ is isomorphic to the subtree of
			depth~$d$ below~$u_{\rho''}$; in this case, $u_\rho$ does not have any
			descendants in~$\NI(\Amc_\Tmc)$.
	\end{enumerate}
\end{definition}

Condition~\ref{ta:successors} ensures that complete tree ABoxes contain all role
successors required by existential restrictions, but only up to the first time
that a subtree of depth~$d$ repeats~\ref{ta:bounded}.
Condition~\ref{ta:bounded} implies that the size of complete tree ABoxes
is constant in data complexity.
To see that this holds, let $t$ be the number of
possible types; clearly
$t\le2^{|\BC(\aont)|+|\tcqcqs\atcq|+2^d\cdot|\tcqcqs\atcq|}$.
Then, the number of possible successors of an element of~$\NI(\Amc_\Tmc)$ is
$s\le t\cdot|\NRM(\aont)|$, and a subtree of depth~$d$ has at most $m\le s^d$
elements.
Since each individual name can satisfy different basic concepts and have
different role connections to each of its $s$ successors 
in every ABox~$\Amc_\Tmc$, $\Tmc\in\tau$, the number of non-isomorphic subtrees
of depth~$d$ is at most $k \le \big(2^{|\BC(\aont)|+s\cdot|\NRM(\aont)|}\big)^{m\cdot t}$.
Finally, this means that any path of length $k+d$ has at least two nodes with
isomorphic subtrees of depth~$d$ (we add~$d$ to ensure that the $k$th node still
has a full subtree of depth~$d$).
Since the quantities $d,t,s,m,k$ only depend on~\aont and~\atcq, 
the size and number of possible complete tree ABoxes is constant.

To summarize, we guess a tuple of additional information of the form
\[ \atuple=\Bigg(
  (\aboxtype{i})_{0\le i\le n+k},\ 
  \changed{\big\{(\typeai{a}{i})_{0\le i\le n+k}\big\}_{a\in\NI(\atkb)\cup\NIA},\ 
	\big\{(\tabox{\tau}{\Tmc})_{\Tmc\in\tau}\big\}_{\tau\in\TTypes}}
\Bigg), \]
where
\begin{itemize}
	\item $\aboxtype{i}$ are ABox types that contain~$\afb_i$
		and~$\Amc_{\Q_{\iota(i)}}$ and agree on the rigid assertions;
	\item $\typeai{a}{i}=(\Bmc^a_i,\Qmc^a_i,\Mmc^a_i)$ are types such that
		$\Amc_{\Bmc^a_i}(a)\subseteq\aboxtype{i}$; we denote the resulting temporal
		type of an individual name $a\in\NI(\atkb)\cup\NIA$ by
		$\ttypea{a} := \{ \typeai{a}{i} \mid i\in[0,n+k] \}$;
	\item $(\tabox{\tau}{\Tmc})_{\Tmc\in\tau}$ are sequences of complete tree
		ABoxes for~$\tau$ if we have $\tau=\ttypea{a}$ for some
		$a\in\NI(\atkb)\cup\NIA$, and $\tabox{\tau}{\Tmc}$ are empty otherwise.
\end{itemize}
Given a complete tree ABox~\Amc and an individual name~$a$, we construct
$\Amc[a]$ by replacing~$u_\eps$ by~$a$ and all other individual names~$u_\rho$
by~$u_{a\rho}$.
Given the tuple~\atuple as above, we define the ABoxes~$\tabox{\atuple}{i}$,
for each time point $i\in[0,n+k]$, as the union of all ABoxes
\ctwo{$\tabox{\tau}{\Tmc}[a]$ with $\tau=\ttypea{a}$ and $\Tmc=\typeai{a}{i}$}, for all individual names
$a\in\NI(\atkb)\cup\NIA$.

We can now characterize r-satisfiability similarly to
Definition~\ref{def:dlltcqs-r-complete}.

\begin{restatable}{lemma}{LemExdlltqcsRSat}
\label{lem:exdlltcqs-r-sat}
	\as is r-satisfiable w.r.t.~$\iota$ and~\atkb iff there is a tuple \atuple as
	above such that, for all $i\in[0,n+k]$:
	\begin{enumerate}[label=\textnormal{(\rccond\arabic*$'$)},leftmargin=*]
		\item\label{exdll-rc-consistent}
			$\KR[i]:=\langle\aont,\aboxtype{i}\cup\tabox{\atuple}{i}\rangle$ is
			consistent.
		\item\label{exdll-rc-negcqs}
			For all $p_j\in\overline{\ax_{\iota(i)}}$, we have
			$\KR[i]\not\models\acqalpha_j$.
	\end{enumerate}
\end{restatable}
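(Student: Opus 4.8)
The proof establishes both directions of the equivalence, following the overall strategy of Lemma~\ref{lem:dlltcqs-iff-s-r-consistent}, but replacing the canonical interpretations used in the Horn case by the guessed complete tree ABoxes, which absorb the nondeterminism of \DLLitehkrom into a finite amount of information.

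For the ``only if''-direction, the plan is to start from interpretations $\Jmc_0,\dots,\Jmc_{n+k}$ over a common domain that respect the rigid names and satisfy $\langle\aont,\afb_i\rangle$ and $\chi_{\iota(i)}$ (Definition~\ref{def:tqa-r-sat}). Since the rigid roles and concepts are interpreted identically in all $\Jmc_i$ and these interpretations share a domain, the unnamed successor structure reachable along \emph{rigid} roles is the same at every time point. I would therefore unravel the unnamed parts synchronously along this shared rigid skeleton, keeping $\NI(\atkb)\cup\NIA$ as the roots and decorating the skeleton with the possibly time-dependent flexible relations and concepts; this preserves the satisfaction of all CQs from $\tcqcqs\atcq$ and keeps the rigid skeleton common to all time points. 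From the result I read off the tuple~\atuple: the ABox types $\aboxtype{i}$ record the assertions over $\NI(\atkb)\cup\NIA$ that hold in $\Jmc_i$; the types $\typeai{a}{i}=(\Bmc^a_i,\Qmc^a_i,\Mmc^a_i)$ record the basic concepts true at~$a$, the CQs satisfied in its successor tree (Definition~\ref{def:type}, condition~\ref{ta:cqs}), and the partial-homomorphism domains witnessed at~$a$ (condition~\ref{ta:partial-homs}); and for each occurring temporal type $\tau=\ttypea{a}$, the tree ABoxes $\tabox{\tau}{\Tmc}$ are the corresponding truncated trees. The agreement on rigid assertions across the types $\Tmc\in\tau$ (condition~\ref{ta:rigid}) is exactly what the synchronous unraveling secures. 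Conditions (C1$'$) and (C2$'$) then hold because each $\Jmc_i$ itself witnesses consistency and non-entailment.

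The main obstacle in this direction is to make the tree ABoxes of \emph{constant size} while keeping them \emph{complete} (Definition~\ref{def:complete-tree-aboxes}): I would use a pumping argument, cutting off every branch at the first node whose subtree of depth~$d$ repeats an ancestor's subtree (condition~\ref{ta:bounded}), and then verify that this truncation changes neither the types nor the CQ (non-)matches, since every connected CQ from $\tcqcqs\atcq$ spans at most $d=\max\{|\NT(\acqalpha)|\mid\acqalpha\in\tcqcqs\atcq\}$ elements.

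For the ``if''-direction, I would assume a tuple~\atuple satisfying (C1$'$) and (C2$'$). For each $i$, take a model of the consistent KB $\KR[i]=\langle\aont,\aboxtype{i}\cup\tabox{\atuple}{i}\rangle$ given by (C1$'$), expand the bounded tree ABoxes back into infinite forests by repeatedly copying the ancestor subtree that a truncated node is isomorphic to (the loop-back guaranteed by condition~\ref{ta:bounded}), and then merge all time points into a single domain that shares the named elements and synchronizes the rigid names, using condition~\ref{ta:rigid} together with the rigid agreement of the ABox types. The positive CQ literals of $\chi_{\iota(i)}$ hold because $\aboxtype{i}$ contains the instances $\Amc_{\Q_{\iota(i)}}$ of these CQs. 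The delicate part, where the type machinery is indispensable, is the negative literals $\lnot\acqalpha_j$ with $p_j\in\overline{\ax_{\iota(i)}}$: I would suppose a homomorphism of $\acqalpha_j$ into the merged $\Jmc_i$ and decompose it according to which atoms land on named elements, on the unnamed successor tree of $\Jmc_i$, or on rigid structure shared from some $\Jmc_{i'}$ with $i'\neq i$. Because the sets $\Mmc^a_i$ and $\Qmc^a_i$ faithfully record every way such a match can be assembled (conditions~\ref{ta:partial-homs} and~\ref{ta:cqs}) and all rigid consequences occur in every $\tabox{\atuple}{i}$, such a homomorphism forces $\KR[i]\models\acqalpha_j$, contradicting (C2$'$). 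I expect this gluing-and-decomposition step to be the hardest, as it must rule out spurious matches that appear only after the trees from different time points are joined along their common rigid skeleton.
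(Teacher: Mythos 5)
Your ``if''-direction is essentially the paper's argument: the interpretations are determined by the ABox types and (renamed) complete tree ABoxes, extended by repeatedly copying ancestor subtrees as licensed by condition~\ref{ta:bounded}, and any match of a negated CQ that uses copied elements is pulled back into an isomorphic depth-$d$ subtree of the original tree ABoxes, contradicting Condition~\ref{exdll-rc-negcqs}. The genuine gap is in your ``only if''-direction, at the step where you ``read off'' the tree ABoxes and assert that Conditions~\ref{exdll-rc-consistent} and~\ref{exdll-rc-negcqs} ``hold because each $\Jmc_i$ itself witnesses consistency and non-entailment.'' This fails: $\tabox{\atuple}{i}$ is the union of $\tabox{\tau}{\Tmc}[b]$ over \emph{all} $b\in\NI(\atkb)\cup\NIA$, so it forces every individual with temporal type~$\tau$ to carry a copy of one and the same prototypical tree. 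But in \DLLitehkrom a type does not determine the successor tree: two individuals with equal temporal types, and likewise one individual~$a$ at two time points $i\neq j$ with $\typeai{a}{i}=\typeai{a}{j}$, may have non-isomorphic unravelled subtrees --- this nondeterminism is exactly why the paper time-indexes the names in \NIA (see the remark following Definition~\ref{def:exdlltcqs-aboxtype}). Hence the original models $\Jmc_i$ need not satisfy $\KR[i]$ at all. The paper's proof devotes most of its length to closing precisely this gap by two homogenization steps: a map $f_a$ choosing one representative time point per type of~$a$ and replacing every subtree $\Jmc^\ast_i|_a$ by the representative one, and then a map $h$ choosing one representative individual per realized temporal type and replacing all other individuals' subtrees by renamed copies. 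Each replacement must be shown not to create a match of a CQ occurring negatively in $\chi_{\iota(i)}$, and this is where the components $\Qmc^a_i$ and, crucially, $\Mmc^a_i$ (conditions~\ref{ta:cqs} and~\ref{ta:partial-homs}) are indispensable: a hypothetical match crossing the boundary of a replaced subtree splits into an inside part, which is recorded in $\Mmc^a_i$ and therefore has a counterpart in the original subtree, and an untouched outside part; recombining them yields a match in the \emph{unmodified} interpretation, a contradiction. Your proposal invokes $\Mmc$ only for the truncation argument and in the ``if''-direction, so this central verification is simply absent.

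A secondary inaccuracy compounds this: unravelling ``along the shared rigid skeleton'' and merely ``decorating'' it with flexible relations does not create enough domain elements. An element may require a flexible $R$-successor at time point~$i$ that is not reachable via any rigid role, so the simultaneous unravelling must introduce a fresh, time-indexed branch $u_{\rho R^i}$ for every flexible role and every time point (one shared successor sufficing only for rigid roles); these branches are what later become the $R^{\Tmc'}$-indexed nodes required by condition~\ref{ta:elements}. Without them the extracted ABoxes are not tree ABoxes in the sense of Definition~\ref{def:tree-aboxes}, and the types would fail to capture CQ matches living in flexible successor trees.
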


As mentioned above, \as and $\iota$ can be guessed in polynomial time in data
complexity, and the LTL satisfiability test can be done in \PTime
\cite[\citelem4.12]{BaBL-JWS15}.
Similarly, the tuple~\atuple can be guessed in polynomial time; in particular, Definition~\ref{def:tree-aboxes} is independent of the input ABoxes.
Since both KB consistency and CQ non-entailment for \DLLitehkrom are decidable
in \NP \cite[\citethm8.2]{dllrelations}, \ref{exdll-rc-consistent}
and~\ref{exdll-rc-negcqs} can be decided nondeterministically in polynomial
time, since the size of each~$\tabox{\atuple}{i}$ only depends linearly on the
number of individual names in the input. By Lemma~\ref{lem:tcq-sat-iff}, TCQ
satisfiability is hence in \NP.

This result actually applies even to $\mathcal{ALCHI}$, which extends
\DLLitehbool by  qualified existential restrictions (value restrictions can be
eliminated); Lemma~\ref{lem:krom-transf} shows how to
simulate such restrictions on the left-hand side of CIs. However, it is
well-known that the presence of role inclusions also allows us to express
qualified existential restrictions on the right-hand side of
CIs~\cite{dllfamily}. In other words, in the context of TCQ entailment, there is
no difference between \DLLitehkrom, \DLLitehbool, and $\mathcal{ALCHI}$.

\begin{theorem}\label{thm:dc-dllitehkrom}
	TCQ entailment in $\mathcal{ALCHI}$ is in \coNP in data complexity, even if\linebreak $\NRR\neq\emptyset$.
\end{theorem}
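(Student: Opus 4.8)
The plan is to reduce $\mathcal{ALCHI}$ TCQ entailment to $\DLLitehkrom$ TCQ entailment in a way that leaves the ABox sequence untouched, so that the $\coNP$ data-complexity bound already argued for $\DLLitehkrom$ transfers verbatim. Recall that for $\DLLitehkrom$ the $\coNP$ bound (equivalently, $\NP$ for satisfiability) follows by combining Lemma~\ref{lem:tcq-sat-iff} with the characterization of r-satisfiability in Lemma~\ref{lem:exdlltcqs-r-sat}: one guesses the constant-size set~\as, the mapping~$\iota$ (in linear time), and the tuple~\atuple of ABox types, types, and complete tree ABoxes (all of polynomial size, indeed constant for the tree ABoxes), checks t-satisfiability of~$\pa\atcq$ in $\PTime$ by~\cite[\citelem4.12]{BaBL-JWS15}, and verifies Conditions~\ref{exdll-rc-consistent} and~\ref{exdll-rc-negcqs}, which reduce to $\NP$ KB-consistency and CQ-non-entailment tests for $\DLLitehkrom$~\cite[\citethm8.2]{dllrelations} over knowledge bases whose data part is only linear in the number of input individuals.

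For the reduction itself, I would proceed as follows. First, eliminate value restrictions in favour of (negated) qualified existential restrictions using the Boolean completeness of $\mathcal{ALCHI}$, and then normalize every $\mathcal{ALCHI}$ CI into a conjunction of flat CIs over fresh concept names exactly as in Example~\ref{ex:exdlltcqs-gci-trans}, so that each resulting inclusion has one of the shapes in Table~\ref{tab:dlltcqs-krom-transf} or is an (unqualified or qualified) existential restriction on the right-hand side. By Lemma~\ref{lem:krom-transf}, each CI of the tabulated shapes is equivalent—relative to the auxiliary axioms $\top\sqsubseteq A_i\sqcup\overline{A}_i$ and $A_i\sqcap\overline{A}_i\sqsubseteq\bot$—to a negated CQ; I would therefore move all such CIs out of the ontology and collect the corresponding negative CQ literals into a conjunction~\atcqtwo, asserting it globally via the guard $\Boxm\Boxf$, precisely as in the equivalence stated after Lemma~\ref{lem:krom-transf} and Corollary~\ref{cor:dlltcqs-krombool-is-the-same}. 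The remaining qualified existential restrictions on right-hand sides, $B\sqsubseteq\exists R.C$, which are not expressible as negated CQs, are simulated in the standard way~\cite{dllfamily} by a fresh role $R_{B,C}$ together with $B\sqsubseteq\exists R_{B,C}$, $R_{B,C}\sqsubseteq R$, and $\exists R_{B,C}^-\sqsubseteq C$. The outcome is a $\DLLitehkrom$ ontology~$\aont'$ and the transformed query $(\Boxm\Boxf\atcqtwo)\to\atcq$, so that $\langle\aont,\afbs\rangle\models\atcq$ iff $\langle\aont',\afbs\rangle\models((\Boxm\Boxf\atcqtwo)\to\atcq)$.

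The decisive observation is that this transformation modifies only the ontology and the query and introduces only a number of fresh concept names, fresh role names, and auxiliary axioms bounded by the fixed size of~\aont and~\atcq; the sequence~\afbs is left unchanged. Hence the reduction is free in data complexity, and applying the $\DLLitehkrom$ bound gives $\coNP$ for $\mathcal{ALCHI}$, including the case $\NRR\neq\emptyset$ (the newly introduced names may be taken flexible, since the effects of rigid roles are already fully handled by the r-satisfiability characterization). I expect the main obstacle to be the correctness of the model-wise equivalence under the temporal semantics: one must verify that the globally guarded negated CQs in~\atcqtwo enforce the simulated CIs at \emph{every} time point in exactly the same way as global CIs do, and that the interplay between these query-encoded constraints, the fresh role inclusions, and the rigid names preserves models in both directions—care being needed so that a model of~$\aont'$ satisfying $\Boxm\Boxf\atcqtwo$ always projects back to an $\mathcal{ALCHI}$ model and vice versa.
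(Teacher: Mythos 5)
Your proposal is correct and follows essentially the same route as the paper: the \coNP bound is obtained for \DLLitehkrom by guessing \as, $\iota$, and the tuple~\atuple and invoking Lemma~\ref{lem:exdlltcqs-r-sat} together with the \PTime t-satisfiability test and the \NP consistency/non-entailment tests of \cite[\citethm8.2]{dllrelations}, and $\mathcal{ALCHI}$ is then handled by the same data-independent reduction the paper uses—eliminating value restrictions, simulating left-hand-side qualified existential restrictions via Lemma~\ref{lem:krom-transf} as globally guarded negated CQs, and expressing right-hand-side qualified existential restrictions through fresh subroles as in~\cite{dllfamily}. The correctness concern you flag at the end is exactly what the equivalence stated after Lemma~\ref{lem:krom-transf} (and underlying Corollary~\ref{cor:dlltcqs-krombool-is-the-same}) already settles, so there is no gap.
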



\section{Conclusions}\label{ch:conclusions}

In this article, we have studied the complexity of TCQ entailment in \DLLite logics between \DLLitecore and \DLLitehhorn.
We have thus focused on a scenario that reflects the needs of the applications of today: the temporal queries are based on LTL, one of the most important temporal logics; the ontologies are written in standard lightweight logics; and the data allows to capture data streams \changed{\cite{KKMNNOS-ISWC16,KMM+-JWS17,AKK+-TIME17,BKR+-JAIR18}}.
Since the complexities we have shown for the Horn fragments of \DLLite are considerably better than known results for other DLs, including the lightweight DL \EL, and do not even depend on the rigid symbols considered, we have identified a fragment that is interesting for applications that need efficient reasoning.

In contrast, TCQ entailment in the Krom and Bool fragments of \DLLite turned out to be as complex as in more expressive DLs, such as \SHQ~\cite{BaBL-JWS15}. 
In particular, we have shown that TCQ entailment in expressive DLs such as $\mathcal{ALCHI}$ can be reduced to TCQ entailment in \DLLitehkrom.
While the combined complexity thus strongly depends on which symbols are considered to be rigid, we have shown the contrary for data complexity. More precisely, we have shown \coNP containment for the case with rigid roles, and thus closed an important gap.
Altogether, our results show that the features we have studied can often be considered ``for free''. For combined complexity, 
TCQ entailment w.r.t.\ a \DLLitehhorn TKB is in \PSpace, even if rigid symbols are considered; this matches the complexity of satisfiability in LTL, which is much less expressive given the fact that ontologies are not considered at all.
Similarly, the \coNP-containment we have shown for many expressive DLs for data complexity matches the complexity of conjunctive query entailment in these DLs. 

In future work, we want to study extensions of TCQs with operators from metric temporal logics and investigate extensions of \DLLitehhorn. On the other hand, a restriction of the set of temporal operators could yield even better results, such as first-order rewritability \cite{BoLT-JWS15}.
It would be also interesting to find out if there are efficient parallel implementations of TCQ answering in \DLLitehhorn, for which we have \ALogTime data complexity.%
\footnote{Containment in \ALogTime is considered as an indicator for the existence of efficient parallel implementations~\cite[\citethm6.27]{compcomplexity09}.}

\begin{acknowledgements}
We thank Carsten Lutz and the anonymous reviewers for their many helpful comments.
This work was supported by the German Research Foundation (DFG) within the Collaborative Research Centre~912 (HAEC), in the joint DFG-ANR project BA 1122/19-1 (GOASQ), and in the grant 389792660 as part of TRR~248 (see \url{https://perspicuous-computing.science}).
\end{acknowledgements}

\begin{appendix}
	\normalsize

\changed{
\section{Additional Background on \DLLite and LTL}
\label{app:prelim}
}

\subsection{Reasoning in Horn Fragments of \DLLite}
\label{sec:prelims-dls-horn}


The logics below \DLLitehhorn do not allow to express disjunction on the
right-hand side of CIs, which means that the CIs can be represented as
first-order Horn clauses.
Reasoning in such DLs is easier since it can be done using deterministic
algorithms, often based on \emph{canonical interpretations}.
We recall (and slightly adapt) the construction of the canonical interpretation
from~\cite{dllfamily,BotAC-DL10:dllhorn}, which is based on the standard
\emph{chase}~\cite{DNR-PODS08:corechase}.
This interpretation contains prototypical domain elements of the
form~$\uel\apath$ ($u$ for \enquote{unnamed}), where $\apath$ is a \emph{path}
$\apath:=aR_1\dots R_\ell$ with an indidivual name~$a$ and roles
$R_1,\dots,R_\ell$.
We assume that the KB does not already contain the symbols~$\uel\apath$.
The expression $|\apath|$ denotes the \emph{length}~$\ell$ of~\apath.

\begin{definition}[Canonical Interpretation]
	\label{def:dll-io}\index{canonical interpretation!dl-litehhorn@\DLLitehhorn}
	Let $\akb=\langle\aont,\afb\rangle$ be a \DLLitehhorn knowledge base.
	First, for all $A\in\NC$ and $P\in\NR$, define the interpretation $\aint^0$ as follows:
	\begin{align*}
   \Delta^{\aint^0}&:=\NI(\Kmc), \\
	A^{\aint^0}&:=\{ a\mid A(\indone)\in\afb \}, \\ 
	P^{\aint^0}&:=\{ (\indone,\indtwo)\mid P(\indone,\indtwo)\in\afb \}\cup{}
	\{  (\indone ,\uel{\indone P})\mid \exists P(\indone)\in\afb\}\cup{}
	\{  (\uel{\indone P^-},\indone)\mid \exists P^-(\indone)\in\afb\}.
	\end{align*}
	Then, for each $i\ge0$, do the following:
	for all $X\in\NC\cup\NR$, define $X^{\aint^{i+1}}:=X^{\aint^i}$,
	apply one of the following rules, and increment~$i$:
	\begin{itemize}
		\item If $\bigsqcap\abcset\sqsubseteq B\in\aont$ and
      $\el\in (\bigsqcap\abcset)^{\aint^i}$, then do the following:
       \begin{itemize}
          \item if $B\in\NC$, then add $\el$ to~$B^{\aint^{i+1}}$;
          \item if $B=\exists R$ and $\el\in\NI(\afb)$, then add
            $\uel{\el R}$ to $\Delta^{\aint^{i+1}}$ and
            $(\el,\uel{\el R})$ to $R^{\aint^{i+1}}$;
          \item if $B=\exists R$ and $\el=\uel{\apath}$, then add
            $\uel{\apath R}$ to $\Delta^{\aint^{i+1}}$ and
            $(\el,\uel{\apath  R})$ to $R^{\aint^{i+1}}$.
      \end{itemize}
      We also apply this rule to the CIs of the form
      $\exists R\sqsubseteq\exists R$, for all $R\in\NR(\aont)$, and
      $\exists S\sqsubseteq\exists R$ and $\exists S^-\sqsubseteq\exists R^-$,
      for all $S\sqsubseteq R\in\aont$, which we consider to be implicitly
      present in~\aont ($**$).
		\item If $S\sqsubseteq R\in\aont$ and
      $(\elone,\eltwo)\in S^{\aint^i}$, then add $(\elone,\eltwo)$ to
      $R^{\aint^{i+1}}$.
	\end{itemize}
	The \emph{canonical interpretation}~$\Imc_\akb$ is obtained as the limit of
	this (possibly infinite) sequence of rule applications, assuming that
   each rule that becomes applicable at some point is executed exactly once.
   We denote by $\adomp[\aint_\akb]\un:=\Delta^{\Imc_\akb}\setminus\NI(\Kmc)$
   the set  of all unnamed elements~$\uel\apath$ that are introduced in this
   process.
\end{definition}

Our assumption~($**$) about additional axioms in the ontology ensures that,
whenever there is an element $a\in\NI(\akb)\cap(\exists R)^{\aint^i}$ for
some $i\ge0$, then $a$ has an $R$-successor~$\uel{aR}$ in the canonical
interpretation, and similarly for the new unnamed elements.
This assumption simplifies some of our proofs (specifically, those of
Lemmas~\ref{lem:dlltcqs-idef} and~\ref{lem:dlltcqs-qi}).
%
This is also the main difference to the constructions
in~\cite{dllfamily,BotAC-DL10:dllhorn}, where $\uel{aR}$ would only be created
if $a$ does not already have an $R$-successor, \eg another named element.
Additionally, we are dealing with negated assertions and assertions about basic
concepts (\ie not just concept names) in ABoxes, which are not \changed{explicitly} considered
in~\cite{dllfamily,BotAC-DL10:dllhorn}.
However, it is straightforward to adapt the following results, which show that
the canonical interpretation~$\aint_\akb$ can be used for reasoning about~\akb.

\begin{lemma}[see \cite{BotAC-DL10:dllhorn}]
	\label{lem:dll-iomodel}\label{lem:iosatq}
	A \DLLitehhorn knowledge base~\akb is consistent iff $\Imc_\akb\models\akb$.
	Moreover, if \akb is consistent, then for every Boolean UCQ~$\aquery$, we
   have $\akb\models \aquery$ iff $\Imc_\akb\models\aquery$.
\end{lemma}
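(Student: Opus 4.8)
The core of both claims is the \emph{universal model property} of the canonical interpretation: for every model $\Jmc\models\akb$ there is a homomorphism $h\colon\Delta^{\Imc_\akb}\to\Delta^\Jmc$, \ie a map with $h(a)=a^\Jmc$ for all $a\in\NI(\akb)$, $h(\el)\in A^\Jmc$ whenever $\el\in A^{\Imc_\akb}$ ($A\in\NC$), and $(h(\elone),h(\eltwo))\in P^\Jmc$ whenever $(\elone,\eltwo)\in P^{\Imc_\akb}$ ($P\in\NR$). Once this is available, both directions follow easily. For the first claim, the ``if''-direction is immediate since a model witnesses consistency; for the ``only if''-direction, $\Imc_\akb$ satisfies every positive axiom of $\akb$ by construction (the chase rules of Definition~\ref{def:dll-io} fire exactly when a CI or RI would otherwise be violated), so it only remains to check the negative constraints. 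If $\Imc_\akb$ violated some $\lnot B(a)\in\afb$, some $\lnot R(\indone,\indtwo)\in\afb$, or a CI with empty right-hand side, then, picking a model $\Jmc$ that exists by consistency and pushing the offending positive fact forward along $h$ (homomorphisms preserve concept and role membership, and hence membership in $\exists R$ and in any left-hand conjunction $\bigsqcap\Bmc$), we would obtain a violation of the same constraint in $\Jmc$, contradicting $\Jmc\models\akb$. For the second claim, the ``$\Rightarrow$''-direction holds because $\Imc_\akb\models\akb$ by the first claim; conversely, if $\Imc_\akb\models\aquery$ then some disjunct CQ has a homomorphism $\ahom$ into $\Imc_\akb$, and for an arbitrary model $\Jmc\models\akb$ the composition $h\circ\ahom$ maps this CQ into $\Jmc$, so $\Jmc\models\aquery$; as $\Jmc$ was arbitrary, $\akb\models\aquery$.

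It thus remains to establish the universal model property, which I would prove by induction along the sequence $\Imc^0,\Imc^1,\dots$ of the chase, constructing a compatible chain of partial homomorphisms $h_i\colon\Delta^{\Imc^i}\to\Delta^\Jmc$ whose union is the desired $h$. For the base case, set $h_0(a):=a^\Jmc$; the positive assertions of $\afb$ are satisfied by $\Jmc$, and each fresh witness $\uel{\indone P}$ (respectively $\uel{\indone P^-}$) created from $\exists P(\indone)\in\afb$ can be sent to an appropriate $P$-successor (predecessor) of $a^\Jmc$, which exists since $\Jmc\models\exists P(\indone)$. In the inductive step I would distinguish the two rule types: if a CI rule adds $\el$ to $A^{\Imc^{i+1}}$ because $\el\in(\bigsqcap\Bmc)^{\Imc^i}$ and $\bigsqcap\Bmc\sqsubseteq A\in\aont$, then by the induction hypothesis $h_i(\el)\in(\bigsqcap\Bmc)^\Jmc$, and $\Jmc\models\bigsqcap\Bmc\sqsubseteq A$ yields $h_i(\el)\in A^\Jmc$; if instead the rule introduces a successor $\uel{\apath R}$ from $\bigsqcap\Bmc\sqsubseteq\exists R$, then $h_i(\el)\in(\exists R)^\Jmc$ provides an $R$-successor to which $\uel{\apath R}$ is mapped; and an RI rule adding $(\elone,\eltwo)$ to $R^{\Imc^{i+1}}$ from $S^{\Imc^i}$ is handled using $\Jmc\models S\sqsubseteq R$. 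The implicit axioms marked ($**$) in Definition~\ref{def:dll-io} cause no difficulty, since $\exists R\sqsubseteq\exists R$ holds trivially in $\Jmc$ and $\exists S\sqsubseteq\exists R$ follows from $\Jmc\models S\sqsubseteq R$.

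The main obstacle is precisely this inductive construction of $h$, and in particular the correct treatment of the fresh unnamed elements: one must guarantee that each newly created witness can always be placed in $\Jmc$, which relies on the assumption ($**$) ensuring that witnesses are generated for \emph{all} relevant existential and role-inclusion consequences, and on the Horn shape of the CIs, which makes the chase deterministic so that a single canonical interpretation suffices. Since this construction is essentially that of~\cite{BotAC-DL10:dllhorn}, the remaining work is to check that the adaptations used here do not disturb it: namely the modified witness-creation rule from assumption~($**$) (which only adds edges and never removes any, so positive preservation is unaffected), and the presence of negated assertions and of assertions over arbitrary basic concepts rather than concept names (which enter only through the straightforward base case and the $\exists R$-preservation argument above). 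Verifying these points completes the proof.
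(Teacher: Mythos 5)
Your proposal is correct, but note that the paper itself contains no proof of this lemma: it is imported from~\cite{BotAC-DL10:dllhorn} (hence the \enquote{see} in the statement), and the surrounding text merely points out the deviations from the cited construction---the rule~($**$) that always creates fresh existential witnesses, and the presence of negated assertions and assertions over arbitrary basic concepts---and asserts that adapting the cited results is straightforward. What you wrote is exactly the standard chase/universal-model argument that underlies that citation: a homomorphism~$h$ from $\Imc_\akb$ into an arbitrary model, built inductively along the chase; satisfaction of the positive axioms in $\Imc_\akb$ by fairness of the rule applications; transfer of violations of negative constraints ($\lnot B(a)$, $\lnot R(a,b)$, CIs with $\bot$ on the right) back into the model along~$h$; and both directions of UCQ entailment by composing query homomorphisms with~$h$. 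Your handling of the paper-specific adaptations is also the right one: the ($**$) rules pose no obstacle to the homomorphism construction because every model of~\aont satisfies $\exists R\sqsubseteq\exists R$ and, for $S\sqsubseteq R\in\aont$, also $\exists S\sqsubseteq\exists R$ and $\exists S^-\sqsubseteq\exists R^-$ (so the additional witnesses can always be placed), and the negated and basic-concept assertions enter only via the base case and the fact that homomorphisms preserve membership in concepts of the form $\exists R$. In short, you supplied the proof that the paper delegates to the literature; I see no gap.
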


We now describe the behavior of the domain elements of $\aint_\akb$ in more
detail.
In particular, the basic concepts satisfied by an element
$\el\in\adom[\aint_\akb]$ are uniquely determined by the basic concepts
satisfied by~\el at the point where it was first introduced
into~$\adom[\aint^i]$.
\changed{For an unnamed element $u_{\rho R}$, there is only a single such basic concept, namely $\exists R^-$.}
\begin{lemma} \label{lem:dll-io-elements}\label{lem:dll-deltaun}
  Let $\el\in\adom[\aint_\akb]$, $i\ge0$ be the minimal index for which
  $\el\in\adom[\aint^i]$, and $\Bmc_\el$ be the set of all
  $B\in\BC(\aont)$ such that $e\in B^{\aint^i}$. Then, for all
  $B\in\BC(\aont)$, we have $\el\in B^{\aint_\akb}$ iff
  $\aont\models\bigsqcap\Bmc_\el\sqsubseteq B$.
  In particular, if $\el=\uel{\apath R}\in\adomp[\aint_\akb]\un$, then
  $\uel{\apath R}\in B^{\Imc_\akb}$ iff $\aont\models\exists R^-\sqsubseteq B$.
\end{lemma}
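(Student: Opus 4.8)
The plan is to characterize the basic concepts satisfied by a fixed element $\el$ of the chase $\aint_\akb$ entirely in terms of the set $\Bmc_\el$ of basic concepts present at the stage $\aint^i$ where $\el$ is first created. Throughout I assume $\akb$ is consistent, so that $\aint_\akb\models\akb$ by Lemma~\ref{lem:dll-iomodel} (this is the only case in which the canonical interpretation is used for reasoning). I would then establish the two directions of the equivalence separately.

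For the ``if'' direction, suppose $\aont\models\bigsqcap\Bmc_\el\sqsubseteq B$. The rules of Definition~\ref{def:dll-io} only ever add tuples and never remove them, so the construction is monotone, $X^{\aint^j}\subseteq X^{\aint^{j+1}}$ for every concept and role name; since membership of $\el$ in a basic concept (a concept name, or an $\exists R$, which is preserved under adding role tuples) is a positive condition, it persists to the limit, and hence $\el\in(\bigsqcap\Bmc_\el)^{\aint_\akb}$. As $\aint_\akb\models\aont$, it also satisfies the entailed inclusion $\bigsqcap\Bmc_\el\sqsubseteq B$, whence $\el\in B^{\aint_\akb}$.

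The ``only if'' direction is the main obstacle, as it requires showing that the chase never produces at $\el$ a basic concept beyond the consequences of $\Bmc_\el$. I would prove, by induction on $j\ge i$, the auxiliary claim that $\el\in B^{\aint^j}$ implies $\aont\models\bigsqcap\Bmc_\el\sqsubseteq B$. The base case $j=i$ is immediate, since then $B\in\Bmc_\el$ and $\bigsqcap\Bmc_\el\sqsubseteq B$ is a projection. For the step, I would inspect which rule first placed $\el$ into $B^{\aint^{j+1}}$. If $B$ is a concept name or an existential restriction $\exists R$ produced by applying a CI $\bigsqcap\Cmc\sqsubseteq B$ to $\el$, then $\el\in(\bigsqcap\Cmc)^{\aint^j}$, the induction hypothesis gives $\aont\models\bigsqcap\Bmc_\el\sqsubseteq C$ for each $C\in\Cmc$, and combining with the CI yields the claim. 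The remaining cases concern a basic concept $\exists R$ (with $R\in\NRM$, possibly inverse) arising from a newly created \emph{edge} incident to $\el$. The key structural observation is that, after its creation, $\el$ can acquire an incident edge only through the role-inclusion rule applied to an edge already present: the existential rule always attaches a \emph{fresh} successor, so it can introduce an edge at $\el$ only at the creation step $j+1=i$, which is excluded here and already recorded in $\Bmc_\el$. Thus such an $\exists R$ stems from some $S\sqsubseteq R\in\aont$ applied to an edge witnessing $\el\in(\exists S)^{\aint^j}$, and the induction hypothesis together with the semantically valid inclusion $\exists S\sqsubseteq\exists R$ gives the claim; the implicit axioms marked $(**)$ in Definition~\ref{def:dll-io} are handled identically, being themselves entailed by $\aont$. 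Taking limits then finishes this direction, since if $\el\in B^{\aint_\akb}$ the relevant concept membership, respectively the witnessing edge, already occurs at some finite stage $\aint^j$.

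Finally, for the ``in particular'' statement I would compute $\Bmc_{\uel{\apath R}}$ directly. An unnamed element $\uel{\apath R}$ is introduced by a single application of an existential rule for $\exists R$, which adds $\uel{\apath R}$ to the domain together with exactly one incoming $R$-edge from its parent; at that minimal stage it therefore satisfies $\exists R^-$ and nothing else, so $\Bmc_{\uel{\apath R}}=\{\exists R^-\}$ and $\bigsqcap\Bmc_{\uel{\apath R}}=\exists R^-$. The general equivalence then specializes to $\uel{\apath R}\in B^{\aint_\akb}$ iff $\aont\models\exists R^-\sqsubseteq B$, as required.
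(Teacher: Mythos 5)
Your proof is correct, but there is no proof in the paper to compare it with: Lemma~\ref{lem:dll-io-elements} is stated without proof, as part of the recalled background on the \DLLitehhorn chase that the paper describes as a slight adaptation of the constructions in~\cite{dllfamily,BotAC-DL10:dllhorn} (\enquote{it is straightforward to adapt the following results}). Your induction on chase stages supplies exactly this missing verification, and its structure is the natural one: monotonicity of rule applications together with $\aint_\akb\models\aont$ (Lemma~\ref{lem:dll-iomodel}) for the \enquote{if} direction; and, for the \enquote{only if} direction, a case analysis over the only ways $\el$ can acquire a basic concept after its creation (a CI of~\aont or an implicit axiom~($**$) fired at~$\el$; a role inclusion applied to an edge already incident to~$\el$; a fresh edge from an existential rule, which with $\el$ as target occurs only at $\el$'s creation), each case reducing to the induction hypothesis. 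Two remarks. First, your explicit restriction to consistent~\akb is not merely convenient but necessary for the statement to be true: if $\bigsqcap\Bmc_\el$ is unsatisfiable w.r.t.~\aont (say, \aont contains $A\sqcap A'\sqsubseteq\bot$ and $\afb=\{A(a),A'(a)\}$), then $\aont\models\bigsqcap\Bmc_\el\sqsubseteq B$ holds vacuously for every $B\in\BC(\aont)$, yet the chase adds nothing at~$\el$ because $\bot$-CIs fire no rule; the paper leaves this hypothesis implicit and only ever invokes the lemma for KBs whose consistency is guaranteed, e.g., by Condition~\ref{def:dlltcqs-rc:consistent}. Second, a cosmetic corner case in your \enquote{in particular} argument: unnamed elements $\uel{aP}$ stemming from ABox assertions $\exists P(a)$ are created by the initialization of~$\aint^0$ rather than by an application of the existential rule, but the same computation yields $\Bmc_{\uel{aP}}=\{\exists P^-\}$, so the conclusion is unaffected.
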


\subsection{Propositional Linear Temporal Logic}
\label{sec:prelims-ltl}

Propositional linear temporal logic (LTL) extends propositional logic with
modal operators to represent past and future moments. We implicitly
fix a finite signature $\altlsignature=\{p_1,\dots, p_\ell\}$ of propositional
variables.

\begin{definition}[Syntax of LTL]\label{def:ltl-syntax}
	The set of \emph{LTL formulas} is defined by the following rule, where
  $p\in\altlsignature$:
	$\altlform,\altlformtwo::= p\mid \neg\altlform\mid\altlform\land\altlformtwo
	\mid\Next\altlform\mid \Previous\altlform\mid\altlform\Until\altlformtwo\mid\altlform\Since\altlformtwo$.
\end{definition}
Again, more operators can be derived as in Table~\ref{tab:tcqs-derived-ops}.

%
\begin{definition}[Semantics of LTL]\label{def:ltl-semantics}
	An \emph{LTL structure} is an infinite sequence $\altlint = (w_i)_{i\ge0}$ of \emph{worlds} $w_i\subseteq \altlsignature$.
	Given an LTL formula \altlform and time point $i \ge 0$, the satisfaction relation $\altlint,i\models\altlform$ is defined by induction on the structure of~\altlform: for propositional variable $p\in\altlsignature$, $\altlint,i\models p$ holds if $p\in w_i$; for complex formulas, the corresponding condition of Table~\ref{tab:tcq-semantics} has to be satisfied.
	If $\altlint,0\models\altlform$, then \altlint is a \emph{model} of \altlform.
\end{definition}

%
We briefly review some known constructions and results for LTL satisfiability checking.
For a set \Fmc of LTL formulas, $\Clo(\Fmc)$ denotes the set of all subformulas occurring in~\Fmc, together with all their negations.
The set $\atypeset[\Fmc]$ contains all \emph{types} for~\Fmc, which are sets $\atype\subseteq\Clo(\Fmc)$ such that
\begin{itemize}
  \item for all $\lnot\altlform\in\Clo(\Fmc)$, we have $\lnot\altlform\in\atype$ iff $\altlform\notin\atype$;
  \item for all $\altlform\land\altlformtwo\in\Clo(\Fmc)$, we have $\altlform\land\altlformtwo\in\atype$ iff $\{\altlform,\altlformtwo\}\subseteq \atype$.
\end{itemize}
Each type~\atype uniquely defines a world $w=\atype\cap\altlsignature$.
A type~\atype is \emph{initial} if
\begin{itemize}
  \item for all $\Previous\altlform\in\Clo(\Fmc)$, we have $\Previous\altlform\notin\atype$;
  \item for all $\altlform\Since\altlformtwo\in\atype$, we have $\altlformtwo\in\atype$.
\end{itemize}
A pair $(\atype_1,\atype_2)\in\atypeset[\Fmc]\times\atypeset[\Fmc]$ is \emph{t-compatible} if
\begin{itemize}
  \item for all $\Previous\altlform\in\Clo(\Fmc)$, we have $\Previous\altlform\in\atype_2$ iff $\altlform\in\atype_1$;
  \item for all $\Next\altlform\in\Clo(\Fmc)$, we have $\Next\altlform\in\atype_1$ iff $\altlform\in\atype_2$;
  \item for all $\altlform\Since\altlformtwo\in\Clo(\Fmc)$, we have $\altlform\Since\altlformtwo\in\atype_2$ iff (i) $\altlformtwo\in\atype_2$, or (ii) $\altlform\in\atype_2$ and $\altlform\Since\altlformtwo\in\atype_1$.
  \item for all $\altlform\Until\altlformtwo\in\Clo(\Fmc)$, we have $\altlform\Until\altlformtwo\in\atype_1$ iff (i) $\altlformtwo\in\atype_1$, or (ii) $\altlform\in\atype_1$ and $\altlform\Until\altlformtwo\in\atype_2$.
\end{itemize}
These local properties are the basis for the following characterization, which says that every satisfiable LTL formula has a periodic model with a period of at most exponential length.

\begin{lemma}[{see~\cite{SiCl85:ltlpspace}}]\label{lem:ltl-periodic-model}
  An LTL formula \altlform is satisfiable iff there is a sequence of types
  $T_0,\dots,T_s,\dots,T_{s+p}$ for $\{\altlform\}$ such that
  \begin{itemize}
    \item $s$ and $p$ are bounded by an exponential function in the size of~\altlform;
    \item $T_0$ is initial and contains~\altlform;
    \item for all $i\in[0,s+p-1]$, the pair $(T_i,T_{i+1})$ is t-compatible;
    \item the pair $(T_{s+p},T_s)$ is t-compatible;
    \item for each $\altlform\Until\altlformtwo\in T_s$, there is an index $i\in[s,s+p]$ such that $\altlformtwo\in T_i$.
  \end{itemize}
\end{lemma}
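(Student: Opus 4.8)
The plan is to prove both directions of the equivalence by passing through the \emph{type graph} over $\atypeset[\{\altlform\}]$ whose vertices are the types for $\{\altlform\}$ and whose edges are the t-compatible pairs. A witnessing sequence $T_0,\dots,T_{s+p}$ is then exactly a ``lasso'' (a finite stem $T_0,\dots,T_{s-1}$ followed by a cycle $T_s,\dots,T_{s+p}$ closed by the back-edge $(T_{s+p},T_s)$), while an arbitrary LTL model induces an infinite path through this graph. The two directions amount to (a) reading a model off a lasso, and (b) folding a model's path into a short, eventuality-complete lasso.

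For the \emph{if} direction, suppose such a sequence is given. I would define the LTL structure $\altlint=(w_i)_{i\ge0}$ by unfolding the lasso: on the stem set $w_i:=T_i\cap\altlsignature$, and for $i\ge s$ let $w_i$ be the world of the type $T_{s+((i-s)\bmod(p+1))}$, so that the block $T_s,\dots,T_{s+p}$ repeats forever; write $\widehat T_i$ for the type used at position~$i$. The core is the \emph{truth lemma}: for every $\chi\in\Clo(\{\altlform\})$ and every $i\ge0$, $\altlint,i\models\chi$ iff $\chi\in\widehat T_i$, proved by induction on~$\chi$. The Boolean cases are immediate from the definition of a type; the $\Next$ and $\Previous$ cases follow directly from t-compatibility of consecutive pairs, with the back-edge $(T_{s+p},T_s)$ covering the wrap-around, while the initiality of $T_0$ correctly forces all $\Previous$- and $\Since$-formulas at time~$0$. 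The delicate cases are $\Until$ and $\Since$: t-compatibility yields the one-step unfolding $\theta_1\Until\theta_2\in T_i$ iff $\theta_2\in T_i$ or ($\theta_1\in T_i$ and $\theta_1\Until\theta_2\in T_{i+1}$), which by the induction hypothesis reduces membership to the existence of a future position carrying~$\theta_2$. On the stem and for obligations discharged in finitely many steps this is clear, but an $\Until$-formula trapped in the cycle could be \emph{perpetually postponed}; this is precisely where the last bullet is needed, since it guarantees that every $\theta_1\Until\theta_2\in T_s$ has its $\theta_2$ realised within one pass through the period. The $\Since$ case is the finite, backward-looking dual and needs no such condition. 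Instantiating the truth lemma at $i=0$ with $\altlform\in T_0$ gives $\altlint,0\models\altlform$.

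For the \emph{only if} direction, let $\altlint=(w_i)_{i\ge0}$ be a model of~$\altlform$ and set $T_i:=\{\chi\in\Clo(\{\altlform\})\mid\altlint,i\models\chi\}$. Using the LTL semantics of Table~\ref{tab:tcq-semantics}, each $T_i$ is a type, $T_0$ is initial, $\altlform\in T_0$, and every consecutive pair $(T_i,T_{i+1})$ is t-compatible. It remains to \emph{extract} from the infinite sequence $(T_i)_{i\ge0}$ a lasso that (a) closes via a t-compatible back-edge, (b) fulfils all $\Until$-eventualities of $T_s$ inside the period, and (c) has $s,p$ exponentially bounded in $|\altlform|$. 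For the back-edge I would observe that whenever $T_a=T_b$ with $a<b$, the pair $(T_{b-1},T_a)$ is t-compatible, because $(T_{b-1},T_b)$ is and $T_b=T_a$; hence setting $s:=a$ and $s+p:=b-1$ yields a valid cycle $T_s,\dots,T_{s+p}$ with back-edge to~$T_s$. To secure (b) and (c) at once, I would run the standard degeneralised-B\"uchi argument on the product of the type graph with a finite counter cycling through the (at most $|\altlform|$ many) $\Until$-subformulas: the counter advances past $\theta_1\Until\theta_2$ exactly when $\theta_2$ currently holds, so returning to the same (type, counter) value forces every pending eventuality to have been discharged inside the cycle. Since there are at most $2^{|\Clo(\{\altlform\})|}$ types and linearly many counter values, the product has exponentially many states, and a shortest stem together with a shortest eventuality-complete cycle give $s$ and~$p$ bounded exponentially, as required.

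The main obstacle is the extraction in the completeness direction: obtaining a cycle that is simultaneously eventuality-complete \emph{and} exponentially short, rather than merely finding some repeated type by the pigeonhole principle (which alone would not prevent postponed $\Until$-obligations from escaping the period). The product-with-counter construction is the clean device that couples these two requirements, after which verifying that the projected types still form a t-compatible sequence and that the recorded back-edge meets the eventuality bullet is routine. A secondary, purely bookkeeping difficulty is the presence of the past operators; these are harmless here, since a type records the truth of every $\Previous$- and $\Since$-subformula, and t-compatibility together with initiality pin these down from the fully determined finite history of each position in the unfolding, so the truth lemma for past operators requires no analogue of the eventuality condition.
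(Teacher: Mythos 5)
The paper does not prove this lemma at all---it is stated with a citation to Sistla and Clarke, so there is no in-paper proof to compare against. Your argument is essentially the classical one behind that cited result, correctly adapted to the paper's setting with past operators (your observation that $\Since$ and $\Previous$ need no eventuality condition, being pinned down by initiality and the well-founded past, is exactly the right adaptation), and both directions are structured soundly: the truth lemma on the unfolded lasso, with the back-edge propagating pending $\Until$-obligations into the copy of $T_s$ where the last bullet discharges them, and the extraction of a short eventuality-complete cycle in the completeness direction.

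One detail of your extraction needs repair: you advance the counter past $\altlform_1\Until\altlformtwo_2$ \emph{exactly} when $\altlformtwo_2$ currently holds. Under that rule the counter can stall forever---take a model in which, after some point, the watched $\Until$-formula is never pending and its right-hand side never holds again; then no full counter round occurs, and a repetition of a (type, counter) pair no longer certifies that the eventualities of $T_s$ are fulfilled inside the cycle. The standard fix is to advance the counter whenever the watched formula is \emph{not pending or} its right-hand side holds in the current type; since in a genuine model every pending $\Until$ is eventually fulfilled, this rule guarantees infinitely many full rounds, and then your choice of a repetition enclosing a full round, together with the t-compatible back-edge $(T_{b-1},T_a)$ obtained from $T_a=T_b$, yields the required exponentially bounded, eventuality-complete lasso. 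With that one-line correction the proof is complete.
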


The last condition says that each $\Until$-subformula has to be satisfied at some point within the period.
This result forms the basis of a \PSpace decision procedure for LTL satisfiability~\cite{SiCl85:ltlpspace}, which we describe in more detail in Section~\ref{sec:dlltcqs-cc}.

To conclude this section, we recall the \emph{separation theorem}, which was originally shown in~\cite{Gabbay87} using a \changed{\emph{strict} semantics for $\Until$ and $\Since$}, but also holds in our setting \changed{since the strict and non-strict variants of these operators are mutually expressible in the presence of $\Next$/$\Previous$}.
An LTL formula is called \emph{separated} if no future operators occur in the
scope of past operators and vice versa.
\begin{lemma}[{see~\cite{Gabbay87}}]
	\label{lem:sep}
	Every LTL formula is equivalent to a separated LTL formula.
\end{lemma}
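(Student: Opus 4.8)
The plan is to reduce the statement to the classical separation theorem of Gabbay~\cite{Gabbay87}, which is formulated for the \emph{strict} operators $\sUntil$ and $\sSince$, and then to transfer the result to the non-strict operators $\Until$, $\Since$, $\Next$, $\Previous$ used in this paper. Recall that a formula is \emph{separated} if no future operator ($\Next$, $\Until$, and hence $\sUntil$) occurs in the scope of a past operator ($\Previous$, $\Since$, and hence $\sSince$), and vice versa. Since the flow of time in our semantics is $(\Nbb,<)$ with a least element~$0$, and Gabbay's theorem holds over this flow, the only work is to move between the two operator sets while preserving separation.

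First I would record the interdefinability identities between the strict and non-strict operators, each of which is readily verified from the semantics in Table~\ref{tab:tcq-semantics} and is \emph{purity-preserving}, in that it introduces no new forbidden nesting. In the forward direction,
\begin{align*}
  \Next\altlform &\equiv \false\sUntil\altlform, &
  \Previous\altlform &\equiv \false\sSince\altlform, \\
  \altlform\Until\altlformtwo &\equiv \altlformtwo\vee\big(\altlform\wedge(\altlform\sUntil\altlformtwo)\big), &
  \altlform\Since\altlformtwo &\equiv \altlformtwo\vee\big(\altlform\wedge(\altlform\sSince\altlformtwo)\big),
\end{align*}
which rewrite any formula in our syntax into an equivalent one using only $\sUntil$, $\sSince$, and Boolean connectives. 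In the backward direction,
\begin{align*}
  \altlform\sUntil\altlformtwo &\equiv \Next(\altlform\Until\altlformtwo), &
  \altlform\sSince\altlformtwo &\equiv \Previous(\altlform\Since\altlformtwo),
\end{align*}
which express the strict operators back in our syntax. Crucially, the right-hand sides of the first two backward identities use only future operators ($\Next$, $\Until$), and those of the last two use only past operators ($\Previous$, $\Since$); hence applying them to a strict subformula built purely from future (resp.\ past) operators yields a formula whose operators are all future (resp.\ past).

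With these identities in hand, the reduction proceeds as follows. Given an arbitrary LTL formula~$\altlform$ in the paper's syntax, I would first apply the forward identities to obtain an equivalent formula over $\{\sUntil,\sSince,\neg,\wedge,\false\}$. Then I would invoke Gabbay's separation theorem for the strict operators over $(\Nbb,<)$ to obtain an equivalent separated formula~$\altlformtwo$, which in its standard normal form is a Boolean combination of subformulas each of which is pure future (built only from $\sUntil$), pure past (built only from $\sSince$), or purely propositional. Finally I would apply the backward identities to~$\altlformtwo$: each pure-future strict subformula becomes a pure-future formula in our syntax, each pure-past one becomes pure past, propositional parts are unchanged, and the top-level Boolean structure is unaffected. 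The resulting formula is equivalent to~$\altlform$ and separated, as required.

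The genuine mathematical content---the rewriting rules that push an embedded past operator out of the scope of a future operator, together with the well-founded complexity measure establishing termination---lies inside Gabbay's theorem, which we are entitled to cite. The points that still require care in our setting, and which I expect to be the main (if modest) obstacle, are: (a)~confirming that Gabbay's result indeed applies to the flow $(\Nbb,<)$, where the past is bounded below by the first moment~$0$ (this is exactly the case in which the boundary behaviour of $\sSince$ must be handled, but it is one of the standard flows for which separation is known to hold); and (b)~the bookkeeping verifying that \emph{every} translation step above is purity-preserving, so that neither the forward pass nor the backward pass reintroduces a future operator under a past one or vice versa. Both amount to routine but careful checking rather than to new ideas.
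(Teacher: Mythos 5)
Your proposal is correct and follows essentially the same route as the paper: the paper's own ``proof'' consists of citing Gabbay's separation theorem for the strict operators and remarking that the strict and non-strict variants of $\Until$ and $\Since$ are mutually expressible in the presence of $\Next$/$\Previous$, which is precisely the translation you spell out. Your interdefinability identities are all semantically correct (including the boundary behaviour of $\Previous$ and $\sSince$ at time point~$0$) and purity-preserving in both directions, so your write-up simply makes the paper's one-line remark explicit.
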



\section{Proofs for Section~\ref{sec:dlltcqs-r-sat-charact}}
\label{app:r-complete}

\begin{lemma}\label{lem:dlltcqs-rSatEasyDir}
	If \as is r-satisfiable w.r.t.~$\iota$ and~\atkb, then there is an r-complete tuple w.r.t.~\as and~$\iota$.
\end{lemma}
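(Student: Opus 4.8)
The plan is to take the interpretations $\Jmc_0,\dots,\Jmc_{n+k}$ witnessing r-satisfiability (Definition~\ref{def:tqa-r-sat}) over a common domain~\adom, and read off the tuple $(\AR,\QR,\QRn,\RF)$ directly from them, then verify each r-completeness condition. First I would define \AR as the set of all rigid assertions $\aaxiom\in\ASr(\atkb)$ that are satisfied by the (shared) interpretation of rigid names on $\NI(\atkb)$, adding $\lnot\aaxiom$ exactly when $\aaxiom$ fails; since the $\Jmc_i$ respect rigid names, this is well-defined and independent of~$i$. Next, \QR is defined as $\{\acqalpha_j\mid p_j\in\ax_{\iota(i)}\text{ for some }i\}$ and \QRn as $\{\acqalpha_j\mid p_j\in\overline{\ax_{\iota(i)}}\text{ for some }i\}$, so that Conditions~\ref{def:dlltcqs-rc:cqcons} and~\ref{def:dlltcqs-rc:qrn} hold by construction. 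Finally, \RF is forced by~\ref{def:dlltcqs-rc:rf}: include $\exists S(b)$ exactly when some KB $\langle\aont,\AR\cup\rigcons{\QR}\cup\Amc_{Q_{\iota(j)}}\cup\Amc_j\rangle$ entails it, which makes~\ref{def:dlltcqs-rc:rf} hold trivially.

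The substantive work is to check Conditions~\ref{def:dlltcqs-rc:consistent}, \ref{def:dlltcqs-rc:negcqs}, and~\ref{def:dlltcqs-rc:witnesses}, \ie that each $\KR[i]$ is consistent and entails neither the ``missing'' CQs $\acqalpha_j$ (for $p_j\in\overline{\ax_{\iota(i)}}$) nor any rigid witness of \QRn. For consistency, I would argue that $\Jmc_i$ (suitably expanded) serves as a model of $\KR[i]$: it already satisfies $\langle\aont,\afb_i\rangle$ and $\chi_{\iota(i)}$, hence all positive CQs in $\Q_{\iota(i)}$, providing homomorphic images for the instantiated ABoxes $\afb_{\Q_{\iota(i)}}$ and $\rigcons{\QR}$; the rigid assertions in \AR hold by definition; and the assertions in $\afb_{\RF}$ (including the tree elements in \NIT) describe rigid consequences of existential role successors that must genuinely be present in the common domain because the $\Jmc_j$ share rigid names. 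For the non-entailment conditions, I would show that any homomorphism witnessing $\KR[i]\models\acqalpha_j$ (resp.\ a witness query) could be pulled back to a homomorphism into $\Jmc_i$, contradicting $\Jmc_i\models\lnot\acqalpha_j$ (which holds because $p_j\in\overline{\ax_{\iota(i)}}$).

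The main obstacle, as flagged in the lemma sketch, is the \emph{unique name assumption}. The homomorphisms witnessing $\Jmc_i\models\chi_{\iota(i)}$ may collapse several CQ-variables onto a single domain element, whereas the fresh names $\ael{x}\in\NIA$ introduced in $\rigcons{\QR}$ and $\afb_{\Q_{\iota(i)}}$ (and likewise the names in \NIT used in $\Amc_{\RF}$) are forced to denote \emph{distinct} elements. The fix is to build the models of $\KR[i]$ not from $\Jmc_i$ itself but from a modified structure in which enough disjoint \emph{copies} of each relevant domain element are introduced, one per distinct auxiliary name, so that the instantiated ABoxes can be satisfied injectively while preserving all rigid relationships and the truth of every CQ literal. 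I would carry this copying construction out carefully, checking that duplicating an element and its rigid role-connections neither creates a spurious match for a CQ in $\tcqcqs{\atcq}\setminus\Q_{\iota(i)}$ nor for a rigid witness query—this is where the depth bound on the witnesses (bounded by $\max\{|\NT(\acqalpha)|\}$) is used to guarantee that finitely many copies suffice and that no new homomorphisms of the query CQs are introduced. Once this copied family of models is in place, Conditions~\ref{def:dlltcqs-rc:consistent}--\ref{def:dlltcqs-rc:witnesses} follow by transferring the respective (non-)homomorphisms between $\Jmc_i$ and the copies, completing the proof.
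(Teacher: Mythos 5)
Your plan follows the paper's own proof almost verbatim: the tuple $(\AR,\QR,\QRn,\RF)$ is defined exactly as in the paper, Conditions~\ref{def:dlltcqs-rc:cqcons}, \ref{def:dlltcqs-rc:qrn}, and~\ref{def:dlltcqs-rc:rf} hold by construction, and the substantive work of Conditions~\ref{def:dlltcqs-rc:consistent}, \ref{def:dlltcqs-rc:negcqs}, and~\ref{def:dlltcqs-rc:witnesses} is discharged by expanding the given interpretations into models of the~$\KR[i]$, with duplicated domain elements resolving the clash between collapsing homomorphisms and the UNA (the paper realizes the duplication by gluing in canonical interpretations of the instantiated CQs rather than copying elements of the shared domain, which is a cosmetic difference).

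There is, however, one step that fails as stated: your argument for Condition~\ref{def:dlltcqs-rc:witnesses}. You propose to pull a homomorphism witnessing $\KR[i]\models\acqtwo$ (for \acqtwo a rigid witness query of some $\acqalpha_j\in\QRn$) back into~$\Jmc_i$ and contradict ``$\Jmc_i\models\lnot\acqalpha_j$, which holds because $p_j\in\overline{\ax_{\iota(i)}}$''. But $\acqalpha_j\in\QRn$ only means $p_j\notin\ax_{\iota(\ell)}$ for \emph{some}~$\ell$; Condition~\ref{def:dlltcqs-rc:witnesses} must be verified at \emph{every} $i\in[0,n+k]$, including time points at which $\acqalpha_j$ is satisfied (it may even belong to~$\Q_{\iota(i)}$, since a CQ can lie in both \QR and~\QRn). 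At such an~$i$ there is no local contradiction to be had. The paper's proof instead exploits that \acqtwo contains only rigid names: because the constructed models $\Imc_0',\dots,\Imc_{n+k}'$ are built \emph{jointly} over one common domain and respect rigid names, $\Imc_i'\models\acqtwo$ transfers to $\Imc_{n+\ell}'\models\acqtwo$, whence $\Imc_{n+\ell}'\models\acqalpha_j$ by the definition of a witness query, contradicting $\Imc_{n+\ell}'\models\chi_\ell$. So the cross-time-point rigidity transfer---and hence the simultaneous construction of all the models, not a per-$i$ pull-back---is the crux of this condition. A smaller omission of the same flavor: your model of~$\KR[i]$ must also interpret the \NIA-names of CQs in $\QR\setminus\Q_{\iota(i)}$ (they occur in~$\rigcons{\QR}$) while realizing \emph{only} their rigid consequences, lest a CQ that must be false at~$i$ become satisfied; this is exactly what the paper's rigid-part canonical interpretations $\Imc^{\mathsf{R}}_j$ achieve, and your copying construction would need the analogous per-time-point behavior spelled out.
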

\begin{proof}
	
	Let $\Jmc_0,\ldots,\Jmc_{n+k}$ be the interpretations over a domain \adom that exist according to the r-satisfiability of \as (see Definition~\ref{def:tqa-r-sat}).
	We assume w.l.o.g.\ that \adom contains $\NI(\atkb)$ and that all individual names are interpreted as themselves in all of these interpretations.
	
	%
	We first define the tuple $(\AR,\QR,\QRn,\RF)$ as follows:
	\begin{align*}
	\AR :={} 
	&\{\alpha\in\ASr(\Kmc) \mid \Jmc_0\models\alpha\};\\
	\QR :={}
	&\{\acq_j\in\tcqcqs\atcq\mid  p_j\in\ax \text{ for some } \ax\in\as\};\\
	\QRn :={} 
	&\{\acq_j\in\tcqcqs\atcq\mid p_j\not\in\ax \text{ for some }\ax\in\as\};\\
	\RF :={} 
	& \{\exists S(b) \mid{} S\in\NFRM(\aont),\ b\in\NI(\akb)\cup\NIA,\ i\in[0,n+k], \\
	&\phantom{\{}
	\langle\aont,\AR\cup\rigcons{\QR}\cup\Amc_{Q_{\iota(i)}}\cup\Amc_i\rangle
	\models\exists S(b)\}.
	\end{align*}
	
	We prove that the tuple is r-complete by showing that it satisfies all the conditions in Definition~\ref{def:dlltcqs-r-complete}. 
	It is easy to see that \AR is a rigid ABox type for~\aont, that \QR satisfies \cond\ref{def:dlltcqs-rc:cqcons}, and that \QRn complies with \cond\ref{def:dlltcqs-rc:qrn}. 
	Moreover, the rather straightforward definition of the tuple  seems to make it easy to show that each of the knowledge bases~\KR[i] has a model (\cond\ref{def:dlltcqs-rc:consistent}) which satisfies neither a CQ from $\tcqcqs{\atcq}\setminus\Q_{\iota(i)}$ (\cond\ref{def:dlltcqs-rc:negcqs}) nor a witness of the queries in \QRn (\cond\ref{def:dlltcqs-rc:witnesses}), 
	based on the given interpretations; however, special attention needs to be given to the UNA and \cond\ref{def:dlltcqs-rc:rf}. 
	The crucial point is that the given interpretations may satisfy CQs in a conjunction $\chi_{\iota(i)}$ by a homomorphism that can however not be used to satisfy the corresponding ABox $\Amc_{\Q_{\iota(i)}}$, because \ahom maps different variables to the same domain element, which are represented as named individuals from \NIA in the ABox; the same applies to the ABox $\Amc_{\RF}$ w.r.t.\ the elements from \NIA and \NIT.
	For \cond\ref{def:dlltcqs-rc:rf}, observe that the ``if''-direction does not directly yield that \emph{each} of the given interpretations satisfies the assertions in $\Amc_{\RF}$.
	
	The idea is therefore to extend the given interpretations $\Jmc_i$ in two steps. First, we construct models $\Imc'_i$ of $\langle\aont,\AR\cup\rigcons{\QR}\cup\Amc_{Q_{\iota(i)}}\cup\Amc_i\rangle$ that interpret the elements of \NIA and \NIT by using duplicates of elements from \adom. In order to overcome the issue with \cond\ref{def:dlltcqs-rc:rf}, we ensure that these interpretations still share one domain and also interpret the rigid symbols in the same way.
	This allows us to then adapt these interpretations in a second step for the elements from \NIT 
	in order to get models of the respective knowledge bases~\KR[i], which include $\Amc_{\RF}$.
	
	%
	For this extension, we consider different canonical interpretations for all CQs \textcolor{blue}{$\acqalpha_j\in\QR$ satisfied in one of the given interpretations: $\Imc_{\RF}$, the one of $\akb_{\RF}:=\langle\aont,\Amc_{\RF}\rangle$; $\Imc_{j}$, the one of $\akb_{j}:=\langle\aont,\cqinst{\acq_j}\rangle$; and 
	$\Imc^\mathsf{R}_{j}$, which collects the rigid consequences of $\Imc_{j}$} and is inductively defined according to Definition~\ref{def:dll-io} with the adaptation that all symbols $\aname\in\NC\cup\NR$ are initially interpreted as follows:
	\begin{align*}
	\aname^0 &:= \begin{cases}
	\aname^{\Imc_{j}} &\text{if $\aname\in\NRC\cup\NRR$},\\
	\emptyset &\text{otherwise}.
	\end{cases}
	\end{align*}
	Being defined in this way, $\Imc^\mathsf{R}_{j}$ behaves exactly as $\Imc_{j}$ w.r.t.\ the rigid
	names, but the interpretations of the flexible names only contain those tuples that are implied by the rigid information. 
	Note that the domain of $\Imc^\mathsf{R}_{j}$ is a subset of the domain of 
	$\Imc_{j}$ 
	since all elements~$\uel\apath$ that would be created by the iteration in Definition~\ref{def:dll-io} for
	$\Imc^\mathsf{R}_{j}$ are also created by the one for $\Imc_{j}$ and are hence contained in the initial interpretation above. 
	We consider the interpretations $\Imc^\mathsf{R}_{j}$ to make sure that all interpretations we construct also satisfy $\rigcons{\QR}$, even if they do not satisfy an ABox~$\cqinst{\acq}$.
	
	Observe that $\akb_j$ is consistent since $\acq_j\in\QR$ implies that there is an interpretation~$\Jmc_i$ that satisfies~\acq and \aont and is thus a model of~$\akb_j$ (if two variables are mapped by the homomorphism to the same domain element, we obtain a model respecting the UNA by creating a copy of this element that satisfies exactly the same concept names and participates in the same role connections as the original element).
	Moreover, all the given interpretations satisfy the rigid consequences of $\cqinst{\acq_j}$ w.r.t.~\aont (\ie particularly $\rigcons{\{\acq_j\}}$, see Definition~\ref{def:dlltcqs-consequences}) because, by assumption, they share one domain and respect the rigid names.
	The following properties of all $\acq_j\in\QR$ that are crucial for our construction:
	\begin{itemize}
		\item $\Imc_{j}$ can be homomorphically embedded into each~$\Jmc_i$ for which we have $p_j\in \ax_{\iota(i)}$ since there must be a homomorphism of~$\acq_j$ into~$\Jmc_i$. Hence, domain elements that satisfy at least the symbols satisfied by the elements of~$\Imc_{j}$ must exist.
		\item $\Imc^\mathsf{R}_{j}$ can be homomorphically embedded into all~$\Jmc_i$, because there must be an index $\ell\in[0,n+k]$ such that $\Jmc_\ell$ satisfies~$\acq_j$ (and~\aont), and the rigid consequences of~$\acq_j$, represented by $\Imc^\mathsf{R}_{j}$, are satisfied in all the given interpretations.
	\end{itemize}
	These facts imply that we can, as the first step, extend all given interpretations~$\Jmc_i$ to models~$\Imc_i'$ of~$\rigcons{\QR}$ and~$\Amc_{Q_{\iota(i)}}$, because they already contain elements that behave in the same way---at least \wrt the symbols that need to be satisfied to obtain such models:
	\begin{itemize}
		\item The common domain \adom is extended by the union of the domains of all $\Imc_{j}$ with $\acq_j\in\QR$ (including the domains of $\Imc^\mathsf{R}_{j}$). These domains may overlap in~\NI and~\NIA.
		\item The individual names from~\NIA are interpreted as themselves.
		\item For each $j\in[1,m]$ and $p_j\in \ax_{\iota(i)}$, all symbols are, on the domain of $\Imc_{j}$, interpreted exactly as in~$\Imc_{j}$. There are no role connections between the old and the new domains except between~$\NI(\akb)$ and the elements of~\NIA. \hfill ($\dagger$)
		\item If $p_j\in\overline{\ax_{\iota(i)}}$, then all symbols are, on the domain of $\Imc_{j}$, interpreted exactly as in~$\Imc^\mathsf{R}_{j}$.
	\end{itemize}
	Recall that \adom is not yet complete; it is still to be extended by the domain of $\Imc_{\RF}$.
	However, this definition already meets our requirement that $\Imc_i'\models\langle\aont,\AR\cup\rigcons{\QR}\cup\Amc_{Q_{\iota(i)}}\cup\Amc_i\rangle$ and is such that $\Imc_i'\models\chi_{\iota(i)}$ for all $i\in[0,n+k]$, which can be seen given the following observations:
	\begin{itemize}
		\item $\Imc_i'$ satisfies~\aont and~\AR. 
		This is because, given ($\dagger$), 
		the interpretation of symbols on the unnamed elements in the original domain \adom does not change (i.e., the interpretation of basic concepts on these elements remains the same); further, the new domain elements do not exhibit new behavior that was not already present in~$\Jmc_i$; and the latter also implies that the interpretation of basic concepts on the elements of~\NI does not change. 
		\item If $i\le n$, then $\Imc_i'$ satisfies $\Amc_i$ by the same reasons as in the previous item. Otherwise, $\Amc_i$ is trivially satisfied.
		\item $\Imc_i'$ is a model of $\rigcons{\QR}$ since all $\Imc^\mathsf{R}_{j}$, for $\acq_j\in\QR$, are part of $\Imc_i'$ and $\Imc^\mathsf{R}_{j}\models\rigcons{\{\acq_j\}}$.
		
		\item Similarly, $\Imc_i'$ satisfies $\Amc_{Q_{\iota(i)}}$ since that ABox consist exactly of the ABoxes $\Amc_{\acq_j}$
		with $p_j\in \ax_{\iota(i)}$, satisfied by
		$\Imc_{j}$, which are part of $\Imc_i'$.
		\item For each $p_j\in\overline{\ax_{\iota(i)}}$, we get $\Imc_i'\not\models\acq_j$ since any homomorphism of $\acq_j$ into
		$\Imc_i'$ would allow us to also find one into $\Jmc_i$, which contradicts the assumptions that $\Jmc_i\models\chi_{\iota(i)}$.
		Hence, $\Imc_i'\models\chi_{\iota(i)}$.
	\end{itemize}
	
	We come to the second part.
	Since the interpretations~$\Imc_i'$ are models of the knowledge bases $\langle\aont,\AR\cup\rigcons{\QR}\cup\Amc_{Q_{\iota(n+i)}}\cup\Amc_{n+i}\rangle$, every assertion $\exists S(b)\in\RF$ is satisfied in one of them by the definition of \RF. That interpretation thus also satisfies the rigid consequences described in~$\Amc_{\RF}$ since it is a model of \aont.
	But then this holds for all the interpretations because they interpret the rigid symbols on the named elements in the common domain (i.e., those in $\adom\cap\NI(\atkb)\cap\NIA$
	) in the same way and satisfy \AR and $\Amc_{\QR}$, which contain all the relevant rigid information. 
	We can thus extend the domain~\adom by the domain of~$\Imc_{\RF}$ and all interpretations~$\Imc_i'$ as follows. The names from~\NIT are interpreted by themselves and behave exactly in the same way as the corresponding elements that already exist in each interpretation since they describe the consequences of assertions in $\RF$;
	note that we may again have to copy elements if the UNA would be violated otherwise. Note that this extension does not introduce new role connections between the old and the new domains except between~$\NI(\akb)$ and the elements of~\NIA and \NIT, similar to~($\dagger$). We can therefore argue similarly as above that the final interpretations~$\Imc_i'$ are models as required for Condition~\ref{def:dlltcqs-rc:consistent}.
	
	We now use these constructed interpretations to show that $(\AR,\QR,\QRn,\RF)$ also satisfies Conditions~\ref{def:dlltcqs-rc:negcqs} and \ref{def:dlltcqs-rc:witnesses}, \ie it is r-complete.
	%
%
	For \cond\ref{def:dlltcqs-rc:negcqs}, we assume that there are an $i\in[0,n+k]$ and a $p_j\in\overline{\ax_{\iota(i)}}$ such that
	$\KR[i]\models\acq_j$, which yields $\Imc_i'\models\acq_j$. This directly contradicts the fact that $\Imc_i'\models\chi_{\iota(i)}$. 
	The proof for \cond\ref{def:dlltcqs-rc:witnesses} is also by contradiction. We assume
	that there are an index $i\in[0,n+k]$ and a rigid witness query~\acqtwo for some $\acq_j\in\QRn$ such that $\KR[i]\models\acqtwo$, and thus also $\Imc_i'\models\acqtwo$ holds.
	By the definition of~\QRn, there must be a $\ax_\ell\in\as$ with $\ell\in[1,k]$ such that $p_j\not\in\ax_\ell$, and thus $\Imc_{n+\ell}'\not\models\acq_j$. By
	Definition~\ref{def:dlltqcs-rigid-witness-query}, we know that $\Imc_{n+\ell}'\not\models\acqtwo$.
	But this contradicts the facts that \acqtwo contains only rigid names and that
	$\Imc_{n+\ell}'$ and $\Imc_i'$ respect the rigid names.
	%
\end{proof}

We now prove the \enquote{if}-direction of Lemma~\ref{lem:dlltcqs-iff-s-r-consistent}.
Let $\Imc_i$ and $\Jmc_i$, for $i\in[0,n+k]$, be defined as in the main text.
In the following for elements $\el\in\NI(\atkb)\cup\NIA\cup\NIT$, we use $\el^i$
to denote the corresponding element in
$\NI(\atkb)\cup\adomp[\aint_i]\aux\cup\adomp[\aint_i]\tree$; we thus consider
$a^i:=a$ for all $a\in\NI(\atkb)$.

We state the following fact for future reference, and 
then establish connections between $\Jmc_0,\dots,\Jmc_{n+k}$ and $\Imc_0,\dots,\Imc_{n+k}$.
\begin{fact}\label{fact:dlltcqs-deltas}
	The sets $\NI(\atkb)$, $\adomp[\aint_i]\aux$, $\adomp[\aint_i]\tree$, and $\adomp[\aint_i]\un$, for all $i\in[0,n+k]$, are pairwise disjoint.
\end{fact}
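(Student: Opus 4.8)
The plan is to treat this as a bookkeeping statement that falls out of the way the sets $\adomp[\aint_i]\aux$, $\adomp[\aint_i]\tree$, and $\adomp[\aint_i]\un$ were introduced, and to verify the disjointness by a short case analysis: I would split the pairs according to whether the two sets carry the same index~$i$ or two different indices $i\neq j$.

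First I would fix a single index $i\in[0,n+k]$ and argue that $\NI(\atkb)$, $\adomp[\aint_i]\aux$, $\adomp[\aint_i]\tree$, and $\adomp[\aint_i]\un$ are pairwise disjoint because they come from pairwise disjoint sources. Indeed, $\NI(\atkb)$ are the original individual names of the TKB; $\adomp[\aint_i]\aux$ is obtained by renaming the auxiliary names in $\NIA\cap\Delta^{\Imc_i}$ (the fresh names $\ael{x}$ introduced for CQ variables in Section~\ref{sec:qr}); $\adomp[\aint_i]\tree$ comes from the fresh tree names in $\NIT\cap\Delta^{\Imc_i}$ of Section~\ref{sec:rf}; and $\adomp[\aint_i]\un$ collects the unnamed elements $\uel\apath$ of the canonical interpretation~$\Imc_i$. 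Since $\NIA$ and $\NIT$ consist only of fresh individual names, and are disjoint from each other and from $\NI(\atkb)$, while the unnamed elements $\uel\apath$ are distinct from all individual names by the assumption in Definition~\ref{def:dll-io} that the KB contains no symbol of the form~$\uel\apath$, none of the four sets can overlap.

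Next I would handle two distinct indices $i\neq j$ by appealing to the index-tagging that accompanies the renaming in the construction. After renaming, every element of $\adomp[\aint_i]\aux$ has the form $\aiel{x}$ and every element of $\adomp[\aint_j]\aux$ the form $\ajel{x}$, and likewise the unnamed elements appear as $\uiel\apath$ versus $\ujel\apath$; the differing superscripts separate the $i$-copy from the $j$-copy, which is exactly the earlier observation that the domains of $\Imc_i$ and $\Imc_j$ overlap only in $\NI(\atkb)$. The one place where I would be careful is the unnamed part: because the shared ABoxes $\rigcons{\QR}$ and $\Amc_{\RF}$ occur in every $\KR[i]$, the chase may produce an element with the \emph{same} path descriptor~$\apath$ in several interpretations, so the disjointness of $\adomp[\aint_i]\un$ and $\adomp[\aint_j]\un$ rests entirely on this tagging convention rather than on the paths themselves. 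The remaining cross-index cases (mixing $\aux$, $\tree$, and $\un$) are settled in the same way, and $\NI(\atkb)$ is untagged and already covered by the single-index argument, so every pair in the family is disjoint.
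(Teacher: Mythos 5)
Your proof is correct and matches the paper's (implicit) reasoning: the paper states this as a Fact without proof, relying on exactly the points you spell out — the freshness of the names in $\NIA$ and $\NIT$, the assumption accompanying Definition~\ref{def:dll-io} that the symbols $\uel\apath$ do not already occur in the KB, and the index-tagging of the renamed elements ($\aiel{x}$, $\uiel{\apath}$, etc.) for cross-index disjointness. Your remark that the disjointness of $\adomp[\aint_i]\un$ and $\adomp[\aint_j]\un$ rests on the tagging convention rather than on the path descriptors themselves is the one genuinely subtle point, and you handle it correctly.
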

\begin{lemma}\label{lem:dlltcqs-idef-role}
	For all $i\in[0,n+k]$, $\elone,\eltwo\in\adom[\Imc_i]$, and $R\in\NR$, $(\elone,\eltwo)\in R^{\Jmc_i}$ iff $(\elone,\eltwo)\in R^{\Imc_i}$.
\end{lemma}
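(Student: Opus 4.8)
The plan is to split the argument along the two defining clauses for $R^{\Jmc_i}$, according to whether the role name $R$ is rigid or flexible, and in each case to reduce the claim to a single \emph{synchronization} statement about the canonical interpretations $\Imc_0,\dots,\Imc_{n+k}$. In both cases the inclusion $R^{\Imc_i}\subseteq R^{\Jmc_i}$ is immediate from the definition of~$\Jmc_i$ (for rigid $R$ because $R^{\Imc_i}$ is one of the sets in the union $\bigcup_j R^{\Imc_j}$, for flexible $R$ because $R^{\Imc_i}$ is an explicit summand), so the whole content lies in the converse direction: given $(\elone,\eltwo)\in R^{\Jmc_i}$ with $\elone,\eltwo\in\adom[\Imc_i]$, I would show $(\elone,\eltwo)\in R^{\Imc_i}$.

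The core auxiliary step I would prove first is the following claim: for all $i,j\in[0,n+k]$, every rigid role $S\in\NRRM(\aont)$, and all $\elone,\eltwo\in\adom[\Imc_i]\cap\adom[\Imc_j]$, we have $(\elone,\eltwo)\in S^{\Imc_i}$ iff $(\elone,\eltwo)\in S^{\Imc_j}$. It suffices to treat a rigid role name $P\in\NRR$, since the case of an inverse $S=P^-$ follows by swapping $\elone$ and $\eltwo$. For $i=j$ there is nothing to show, so assume $i\neq j$; then by Fact~\ref{fact:dlltcqs-deltas} the overlap $\adom[\Imc_i]\cap\adom[\Imc_j]$ consists only of named individuals, hence $\elone,\eltwo\in\NI(\atkb)$ and $P(\elone,\eltwo)\in\ASr(\atkb)$. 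If $(\elone,\eltwo)\in P^{\Imc_i}$, then $\Imc_i\models P(\elone,\eltwo)$; since $\AR$ is a complete rigid ABox type contained in the ABox of the consistent KB~$\KR[i]$ (\cond\ref{def:dlltcqs-rc:consistent}), the assertion $\lnot P(\elone,\eltwo)$ cannot lie in~$\AR$, whence $P(\elone,\eltwo)\in\AR$. As $\AR$ is also contained in~$\KR[j]$ and $\Imc_j$ is a model of~$\KR[j]$ (Lemma~\ref{lem:dll-iomodel}), this gives $(\elone,\eltwo)\in P^{\Imc_j}$; the symmetric argument yields the converse.

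Given this claim, the rigid case is immediate: if $(\elone,\eltwo)\in R^{\Jmc_i}=\bigcup_j R^{\Imc_j}$, pick $j$ with $(\elone,\eltwo)\in R^{\Imc_j}$; then $\elone,\eltwo\in\adom[\Imc_i]\cap\adom[\Imc_j]$, and the claim gives $(\elone,\eltwo)\in R^{\Imc_i}$. For flexible $R$ I would analyse the two parts of $\RC{R}{\Imc_j}$ separately. If $(\elone,\eltwo)\in S^{\Imc_j}$ for some rigid $S$ with $\aont\models S\sqsubseteq R$, the claim yields $(\elone,\eltwo)\in S^{\Imc_i}$, and since $\Imc_i$ is a model of~$\aont$ (Lemma~\ref{lem:dll-iomodel}, using \cond\ref{def:dlltcqs-rc:consistent}) we conclude $(\elone,\eltwo)\in R^{\Imc_i}$. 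If instead $(\elone,\eltwo)\in R^{\Imc_j}$ with, say, $\wits\aont\elone\neq\emptyset$, then by Definition~\ref{def:dlltcqs-witness} (which sets the witness set empty off $\adomp[\Imc_j]\un$) the element $\elone$ is an unnamed element of~$\Imc_j$; but $\elone\in\adom[\Imc_i]$ together with the pairwise disjointness of the unnamed sets in Fact~\ref{fact:dlltcqs-deltas} forces $j=i$, so $(\elone,\eltwo)\in R^{\Imc_i}$ directly (and symmetrically if $\wits\aont\eltwo\neq\emptyset$). The main obstacle to watch is exactly this disjointness bookkeeping: one must invoke Fact~\ref{fact:dlltcqs-deltas} both to guarantee that the domains of distinct $\Imc_i,\Imc_j$ meet only in $\NI(\atkb)$ and to ensure that a nonempty witness set pins an element down to the unnamed part of a \emph{single} interpretation; everything else then reduces cleanly to the completeness of~$\AR$ and the consistency of the KBs~$\KR[i]$.
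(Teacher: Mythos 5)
Your proof is correct and takes essentially the same route as the paper's: both rely on Fact~\ref{fact:dlltcqs-deltas} to force elements shared across time points into $\NI(\atkb)$, on the completeness of the rigid ABox type \AR together with $\Imc_j\models\AR$ for all~$j$ to transfer rigid role memberships between canonical interpretations, and on Definition~\ref{def:dlltcqs-witness} to dispose of the witness case of flexible roles. Your auxiliary synchronization claim merely packages the argument the paper carries out inline (once for rigid~$R$ and once for flexible case~(i)), and deriving $j=i$ from a nonempty witness set is just the contrapositive of the paper's observation that case~(ii) is impossible when $i\neq j$.
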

\begin{proof}
($\Leftarrow$) This direction follows directly from the definition of~$R^{\Jmc_i}$.
($\Rightarrow$) 
We focus on the definition of~$R^{\Jmc_i}$.
If $R$ is flexible, we need to consider two cases (i) and (ii); in both, we assume $i\neq j$: (i) $(\elone,\eltwo)\in S^{\Imc_j}$ for some
rigid subrole~$S$ of~$R$, and (ii) $(\elone,\eltwo)\in R^{\Imc_j}$ 
 and either
$\elone$ or $\eltwo$ has a witness w.r.t.~\aont.
Given $\elone,\eltwo\in\adom[\Imc_i]$, Fact~\ref{fact:dlltcqs-deltas} however implies $\elone,\eltwo\in\NI(\atkb)$ in both cases.
Hence, (ii) is impossible since named domain elements cannot have witnesses according to Definition~\ref{def:dlltcqs-witness}.
In case~(i), we get $S(\elone,\eltwo)\in\AR$ since $S\in\NRR$, that \AR is a rigid ABox type, and $\Imc_j\models\AR$. Then, $\Imc_i\models\AR$ implies $(\elone,\eltwo)\in S^{\Imc_i}$, and $\Imc_i\models\aont$ yields $(\elone,\eltwo)\in R^{\Imc_i}$ since we assume $\aont\models S\sqsubseteq R$.

If $R$ is rigid, we consider the case that $(\elone,\eltwo)\in R^{\Imc_j}$ for some $j\neq i$ and get $\elone,\eltwo\in\NI(\atkb)$, as above. Since $R\in\NRR$,  \AR is a rigid ABox type, and $\Imc_j\models\AR$, we must have $R(\elone,\eltwo)\in\AR$. $\Imc_i\models\AR$ then leads to $(\elone,\eltwo)\in R^{\Imc_i}$.
\end{proof}

The following is a direct consequence of the fact that the interpretation of roles in $\Jmc_i$ is based on the canonical interpretations, which are models of \aont, and Fact~\ref{fact:dlltcqs-deltas}.
\begin{lemma}
	\label{lem:dlltcqs-idefj-role}
	For all $i,j\in[0,n+k]$, $\elone\in\adom$, $\eltwo\in\adomp[\Imc_j]\un$, and $R\in\NRM$, we have that $(\elone,\eltwo)\in R^{\Jmc_i}$ implies $(\elone,\eltwo)\in R^{\Imc_j}$.
\end{lemma}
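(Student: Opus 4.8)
The plan is to exploit the fact that the target element $\eltwo$ is an \emph{unnamed} element of $\Imc_j$, which by Fact~\ref{fact:dlltcqs-deltas} pins down the index~$j$ uniquely. First I would record the simple observation that $R^{\Imc_\ell}\subseteq\adom[\Imc_\ell]\times\adom[\Imc_\ell]$ for every $\ell\in[0,n+k]$ and every role, and that $\adom[\Imc_\ell]=\NI(\atkb)\cup\adomp[\Imc_\ell]\aux\cup\adomp[\Imc_\ell]\tree\cup\adomp[\Imc_\ell]\un$. Since $\eltwo\in\adomp[\Imc_j]\un$ and all the sets listed in Fact~\ref{fact:dlltcqs-deltas} are pairwise disjoint, $\eltwo$ can belong to $\adom[\Imc_\ell]$ only for $\ell=j$. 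Hence, whenever this particular $\eltwo$ occurs as the second component of a pair witnessed by some $\Imc_\ell$ (through $R^{\Imc_\ell}$ or through $S^{\Imc_\ell}$ for a subrole~$S$), we must already have $\ell=j$; no stray contribution from another time point can produce $(\elone,\eltwo)$.

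With this in hand, I would split on whether $R$ is rigid or flexible, following the definition of $R^{\Jmc_i}$ from the construction. If $R$ is rigid, then $R^{\Jmc_i}=\bigcup_{\ell=0}^{n+k}R^{\Imc_\ell}$, so $(\elone,\eltwo)\in R^{\Imc_\ell}$ for some~$\ell$; the observation above forces $\ell=j$, and we are done. If $R$ is flexible, then $R^{\Jmc_i}=R^{\Imc_i}\cup\bigcup_{\ell=0}^{n+k}\RC{R}{\Imc_\ell}$, and I would treat the contributions separately. A membership $(\elone,\eltwo)\in R^{\Imc_i}$ forces $i=j$ (again by disjointness), giving $(\elone,\eltwo)\in R^{\Imc_j}$ immediately. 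A membership in the first set defining $\RC{R}{\Imc_\ell}$ is itself of the form $(\elone,\eltwo)\in R^{\Imc_\ell}$, which once more forces $\ell=j$.

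The only slightly less mechanical case is the second set in the definition of $\RC{R}{\Imc_\ell}$, namely $(\elone,\eltwo)\in S^{\Imc_\ell}$ for some rigid role~$S$ with $\aont\models S\sqsubseteq R$. Here the disjointness argument again yields $\ell=j$, so $(\elone,\eltwo)\in S^{\Imc_j}$; then I would invoke that $\Imc_j$ is the canonical interpretation of the consistent knowledge base $\KR[j]$ (Condition~\ref{def:dlltcqs-rc:consistent}) and hence a model of~$\aont$ by Lemma~\ref{lem:dll-iomodel}. Since $\aont\models S\sqsubseteq R$, this gives $S^{\Imc_j}\subseteq R^{\Imc_j}$ and therefore $(\elone,\eltwo)\in R^{\Imc_j}$, as required.

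I do not expect a serious obstacle: the statement is essentially a bookkeeping consequence of how $\Jmc_i$ is assembled from the $\Imc_\ell$, and the one place needing a genuine semantic argument—transporting the role inclusion $S\sqsubseteq R$ from the ontology into $\Imc_j$—is handled by the model property of the canonical interpretation. The single point demanding care is the repeated appeal to Fact~\ref{fact:dlltcqs-deltas}, so I would state explicitly that $\eltwo$ lies in none of $\NI(\atkb)$, $\adomp[\Imc_\ell]\aux$, $\adomp[\Imc_\ell]\tree$ (for any~$\ell$), nor $\adomp[\Imc_\ell]\un$ for $\ell\neq j$, which is exactly what rules out the wrong indices in each case.
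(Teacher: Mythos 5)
Your proof is correct and follows exactly the route the paper takes: the paper dispatches this lemma in one sentence as ``a direct consequence of the fact that the interpretation of roles in $\Jmc_i$ is based on the canonical interpretations, which are models of~\aont, and Fact~\ref{fact:dlltcqs-deltas}'', and your case analysis (disjointness of the unnamed domains pinning down $\ell=j$, plus the model property of~$\Imc_j$ to transport the rigid subrole inclusion $S\sqsubseteq R$) is precisely the expansion of that sentence. No gaps; your explicit unpacking of the definition of $R^{\Jmc_i}$ into its rigid, flexible, and rigid-subrole contributions is exactly what the paper leaves implicit.
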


\changed{\begin{lemma}\label{lem:dll-flexible-basic-concept}
	For all $e\in\NI(\Kmc)\cup\NIA\cup\NIT$ and flexible basic concepts~$B$ with $e\in B^{\Jmc_i}$, either $e\in B^{\Imc_i}$ or there is a $\abcset\subseteq\BCr(\Omc)$ and a $j\in[0,n+k]$ with $e\in(\bigsqcap\abcset)^{\Imc_j}$ and $\Omc\models\bigsqcap\abcset\sqsubseteq B$.
\end{lemma}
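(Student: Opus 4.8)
The plan is to split on the two possible shapes of a flexible basic concept~$B$: either $B=A$ for a flexible concept name~$A$, or $B=\exists R$ for a flexible role~$R$. In both cases I would unfold the definition of~$\Jmc_i$ given in the main text, trace~$e$ back to one of the canonical interpretations~$\Imc_j$, and use crucially that~$e$ is \emph{named} (following the convention from the preamble of Lemma~\ref{lem:dlltcqs-idef-role}, $e$ is identified with its realization in $\NI(\Kmc)\cup\adomp[\Imc_i]\aux\cup\adomp[\Imc_i]\tree$) to discard the parts of the definition that only concern anonymous elements. Note that, by Fact~\ref{fact:dlltcqs-deltas}, an aux- or tree-element lives in the domain of exactly one~$\Imc_j$, so for such~$e$ only the index $j=i$ can contribute; the genuinely cross-time-point case arises only for $e\in\NI(\atkb)$, which is shared by all~$\Imc_j$.

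For $B=A$ a flexible concept name, recall $A^{\Jmc_i}=A^{\Imc_i}\cup\bigcup_{j=0}^{n+k}\RC{A}{\Imc_j}$. If $e\in A^{\Imc_i}$ we are in the first disjunct. Otherwise $e\in\RC{A}{\Imc_j}$ for some~$j$. The first set defining $\RC{A}{\Imc_j}$ consists entirely of anonymous elements $\ujel{\apath R}\in\adomp[\Imc_j]\un$, so by the pairwise disjointness of the named, aux-, tree-, and anonymous parts (Fact~\ref{fact:dlltcqs-deltas}) the named element~$e$ cannot lie in it. Hence $e$ belongs to the second set, i.e.\ $e\in(\bigsqcap\abcset)^{\Imc_j}$ for some $\abcset\subseteq\BCr(\aont)$ with $\aont\models\bigsqcap\abcset\sqsubseteq A$, which is exactly the second disjunct.

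For $B=\exists R$ with $R$ flexible, fix a witness $d$ with $(e,d)\in R^{\Jmc_i}$, and recall $R^{\Jmc_i}=R^{\Imc_i}\cup\bigcup_j\RC{R}{\Imc_j}$. If $(e,d)\in R^{\Imc_i}$ then $e\in(\exists R)^{\Imc_i}=B^{\Imc_i}$ and we are done. Otherwise $(e,d)\in\RC{R}{\Imc_j}$ for some~$j$, and I would split along the two sets defining $\RC{R}{\Imc_j}$. In the easy subcase $(e,d)\in S^{\Imc_j}$ for some rigid role $S\in\NRRM(\aont)$ with $\aont\models S\sqsubseteq R$; then $e\in(\exists S)^{\Imc_j}$, the basic concept $\exists S$ is rigid, and $\aont\models\exists S\sqsubseteq\exists R=B$, so taking $\abcset:=\{\exists S\}$ gives the second disjunct. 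The remaining subcase is $(e,d)\in R^{\Imc_j}$ with $\wits\aont{e}\neq\emptyset$ or $\wits\aont{d}\neq\emptyset$.

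This last subcase is where I expect the real work, and it is the main obstacle. Since $e$ is named, $\wits\aont{e}=\emptyset$ by Definition~\ref{def:dlltcqs-witness}, so $d$ must be anonymous with $\wits\aont{d}\neq\emptyset$. Here I would invoke the tree structure of the canonical interpretation (Definition~\ref{def:dll-io} and Lemma~\ref{lem:dll-io-elements}): every edge into an anonymous element runs from its unique tree-predecessor, so $e$ is the predecessor of $d=\ujel{e R'}$ for the generating role~$R'$, and $(e,d)\in R^{\Imc_j}$ forces $\aont\models R'\sqsubseteq R$. Now pick any witness $\awit\in\wits\aont{d}$; it decomposes the path as $\apath=\apathtwo R_0\dots R_\ell$ with $\aont\models\bigsqcap\awit\sqsubseteq\exists R_0$ and the base~$\apathtwo$ rigidly satisfying~$\bigsqcap\awit$. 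The key step is to argue that $\ell=0$: otherwise $e$ would coincide with the anonymous element $\ujel{\apathtwo R_0\dots R_{\ell-1}}$, contradicting that~$e$ is named. Thus the base is~$e$ itself (so $e\in(\bigsqcap\awit)^{\Imc_j}$, using the $\NI(\afb)$-branch of the witness condition) and $R_0=R'$, giving $\aont\models\bigsqcap\awit\sqsubseteq\exists R'$. Combining with $\aont\models R'\sqsubseteq R$, hence $\aont\models\exists R'\sqsubseteq\exists R$, yields $\aont\models\bigsqcap\awit\sqsubseteq\exists R=B$, and $\abcset:=\awit\subseteq\BCr(\aont)$ establishes the second disjunct. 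The delicate point to pin down carefully is precisely this length-zero argument, i.e.\ that being named forces the witness to sit exactly at~$e$ rather than at a proper ancestor.
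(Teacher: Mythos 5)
Your proof is correct and takes essentially the same route as the paper's: the concept-name case is read off the definition of~$\Jmc_i$, and for $B=\exists R$ you hit the same two subcases — the rigid-subrole case with $\abcset:=\{\exists S\}$, and the witness case where $e$ being named forces $\wits\aont{e}=\emptyset$, hence $d=\ujel{eS}$ is an unnamed \emph{direct} successor of~$e$ whose witness must be based at~$e$ itself, giving $e\in(\bigsqcap\abcset)^{\Imc_j}$ and $\aont\models\bigsqcap\abcset\sqsubseteq\exists S$, which combined with $\aont\models S\sqsubseteq R$ yields the claim. Your explicit $\ell=0$ argument merely spells out what the paper asserts in one step when it concludes the form of~$d$ and of its witness from Definitions~\ref{def:dll-io} and~\ref{def:dlltcqs-witness}.
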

\begin{proof}
	For flexible concept names, the claim follows directly from the definition of~$\Jmc_i$. It remains to consider~$B$ to be of the form $\exists R$ with $R\in\NRM(\Omc)$. If $e\notin B^{\Imc_i}$, then by the definition of~$\Jmc_i$, there exists a $j\in[0,n+k]$ for which one of the following cases applies:
	\begin{itemize}
		\item There is an $S\in\NRR$ such that $\Omc\models S\sqsubseteq R$ and $e\in(\exists S)^{\Imc_j}$.
		In this case, we can set $\abcset:=\{\exists S\}$ since $\Omc\models\exists S\sqsubseteq\exists R$.
		\item There exists $d\in\Delta^{\Imc_j}$ with $(e,d)\in R^{\Imc_j}$ and either $d$ or $e$ has a witness w.r.t.~\Omc.
		Since $e$ is a named domain element, Definition~\ref{def:dlltcqs-witness} yields $\wits\aont\el=\emptyset$, and thus $\wits\aont\eladd\neq\emptyset$, which implies that \eladd is of the form $\ujel{\el S}$ with $S\in\NRM(\aont)$ and there exists $\abcset\subseteq\BCr(\aont)$ with $\el\in(\bigsqcap \abcset)^{\Imc_j}$ and $\aont\models \bigsqcap \abcset\sqsubseteq \exists S$.
		Since $(\el,\ujel{\el S})\in R^{\Imc_j}$, by Definition~\ref{def:dll-io} we infer that $\aont\models S\sqsubseteq R$, which implies
		$\aont\models\bigsqcap\abcset\sqsubseteq\exists R$.
		\qedhere
	\end{itemize}
\end{proof}}

\lemdlltcqsIdef*
\begin{proof}
	\changed{For \ref{lem:dlltcqs-idef:ind}, $\el\in B^{\Imc_i}$ clearly implies $\el\in B^{\Jmc_i}$ since $\aname^{\Imc_i}\subseteq \aname^{\Jmc_i}$ for $\aname\in\NC\cup\NR$.
	To prove the converse, we first consider the case that $B$ is rigid.}
	By the definition of the rigid names in $\Jmc_i$, $\el\in B^{\Jmc_i}$ yields that there is a $j\in[0,n+k]$ such that $\el\in B^{\Imc_j}$. Since $\Imc_j$ and $\Imc_i$ are both models of the rigid ABox type~\AR, we get $B(\el)\in\AR$ and $\el\in B^{\Imc_i}$.
	
	\changed{For a flexible~$B$ with $e\in B^{\Jmc_i}$, by Lemma~\ref{lem:dll-flexible-basic-concept}, we either immediately get $e\in B^{\Imc_i}$, or there is a $\abcset\subseteq\BCr(\aont)$ with $\aont\models\bigsqcap\abcset\sqsubseteq B$ and a $j\in[0,n+k]$ such that $\el\in(\bigsqcap\abcset)^{\Imc_j}$.
	But this yields $B'(\el)\in\AR$ for all $B'\in\abcset$, which together with $\Imc_i\models\AR$ implies $\el\in(\bigsqcap\abcset)^{\Imc_i}$. Hence, $\Imc_i\models\aont$ leads to $\el\in B^{\Imc_i}$. This concludes the proof of  \ref{lem:dlltcqs-idef:ind}.}
	
	\ref{lem:dlltcqs-idef:auxr} is a direct consequence of
	Fact~\ref{fact:dlltcqs-deltas} and the definition of $\Jmc_i$.
	
	For \ref{lem:dlltcqs-idef:auxf} and the case that $B\in\NC$, the equivalence with one of \ref{lem:dlltcqs-idef:auxf1}--\ref{lem:dlltcqs-idef:auxf3} is covered by the definition of~$\Jmc_i$ if, for~\ref{lem:dlltcqs-idef:auxf3}, Lemma~\ref{lem:dll-deltaun} is taken into account. 
	
	It remains to consider $B$ to be of the form $\exists R$ with $R\in\NRM$.
	For the case that $i=j$, the claim can be restricted to Item~\ref{lem:dlltcqs-idef:auxf1} since the other two are subsumed by it; for~\ref{lem:dlltcqs-idef:auxf2}, this holds
	because $\Imc_i\models\aont$. Then, it is a direct consequence of
	Fact~\ref{fact:dlltcqs-deltas} and the definition of~$\Jmc_i$, because the 
	interpretation of the elements from $\adomp[\Imc_i]\aux\cup\adomp[\Imc_i]\tree\cup\adomp[\Imc_i]\un$ in $\Jmc_i$ is
	not influenced by any $\Imc_j$ with $j\neq i$.
	We consider the case $i\neq j$.
	\begin{itemize}
	\item For $\el\in\NIA\cup\NIT$, we only have to consider~\ref{lem:dlltcqs-idef:auxf2}.

	\changed{($\Rightarrow$)~This follows directly from Lemma~\ref{lem:dll-flexible-basic-concept}.}

	($\Leftarrow$)~Given $\el\in(\exists R)^{\Imc_j}$, there is an element of the form $\ujel{\el R}\in\adomp[\Imc_j]\un$ with $(\el,\ujel{\el R})\in R^{\Imc_j}$ by Definition~\ref{def:dll-io}. Since Definition~\ref{def:dlltcqs-witness} implies that $\abcset$ is a witness of $\ujel{\el R}$, the definition of~$\Jmc_i$ yields $(\el,\ujel{\el R})\in R^{\Jmc_i}$; that is, $\el\in(\exists R)^{\Jmc_i}$.
	
	\item Let $\el\in\adomp[\Imc_j]\un$.

	($\Rightarrow$)~Again, there are two options, by the definition of~$\Jmc_i$:
	\ctwo{%
	\begin{align}
		&\text{there is an $S\in\NRRM$ such that $\el\in(\exists S)^{\Imc_j}$ and $\aont\models S\sqsubseteq R$, or} \label{s:i} \\
		&\text{there is an $R$-successor~\eladd of \el in~$\Imc_j$, and either \el or \eladd has a witness w.r.t.~\aont.} \label{s:ii}
	\end{align}%
	}%
	For~\eqref{s:i}, we can set $\abcset:=\{\exists S\}$, as in Lemma~\ref{lem:dll-flexible-basic-concept}.
	For~\eqref{s:ii}, observe that we have $\el\in(\exists R)^{\Imc_j}$.
	%
	%
	If $\wits\aont\eladd$ is undefined or empty, then \ref{lem:dlltcqs-idef:auxf3} holds directly.
	Otherwise, we assume $\wits\aont\eladd\neq\emptyset$ and hence get $\eladd\in\adomp[\Imc_j]\un$ by Definition~\ref{def:dlltcqs-witness}.
	By Definition~\ref{def:dll-io}, we then have either
	\ctwo{%
	\begin{align}
		&\text{$\el=\ujel{\apath}$ and $\eladd=\ujel{\apath R}$, or} \label{s:ip} \\
		&\text{$\eladd=\ujel{\apath}$ and $\el=\ujel{\apath R^-}$.} \label{s:iip}
	\end{align}%
	}%
	For~\eqref{s:ip}, Definition~\ref{def:dlltcqs-witness} yields $\wits\aont\el\neq\emptyset$ (\ie \ref{lem:dlltcqs-idef:auxf3} holds) or that there is a $\abcset\subseteq\BCr(\aont)$
	such that $\el\in(\bigsqcap\abcset)^{\Imc_j}$ and
	$\aont\models \bigsqcap\abcset\sqsubseteq\exists R$ (\ie \ref{lem:dlltcqs-idef:auxf2} holds).
	For~\eqref{s:iip}, we immediately get that the witness of~\eladd is also a witness of~\el, again by Definition~\ref{def:dlltcqs-witness}.

	($\Leftarrow$)~Let $\el=\ujel{\apath}$. We start with~\ref{lem:dlltcqs-idef:auxf2}. If there is a set $\abcset\in\BCr(\aont)$ with $\el\in(\bigsqcap\abcset)^{\Imc_j}$ and $\aont\models \bigsqcap\abcset\sqsubseteq \exists R$, then
	Definition~\ref{def:dll-io} implies that the element
	$\ujel{\apath R}\in\adomp[\Imc_j]\un$ exists and that
	$(\el,\ujel{\apath R})\in R^{\Imc_j}$. Since $\abcset$ is further a witness of~$\ujel{\apath R}$ by Definition~\ref{def:dlltcqs-witness}, we get $(\el,\ujel{\apath R})\in R^{\Jmc_i}$, and hence $\el\in(\exists R)^{\Jmc_i}$ by the definition of~$\Jmc_i$.
	For~\ref{lem:dlltcqs-idef:auxf3},
	and thus $\el\in(\exists R)^{\Imc_j}$, we similarly get that $(\el,\ujel{\apath R})\in R^{\Imc_j}$ by Definition~\ref{def:dll-io}. Then, $\wits\aont\el\neq\emptyset$ yields
	$\el\in(\exists R)^{\Jmc_i}$, as in the previous case. \qedhere
	\end{itemize}
\end{proof}

\begin{lemma}
	\label{lem:dlltcqs-t-ai}
	For all $i\in[0,n+k]$, $\Jmc_i$ is a model of~$(\aont,\afb_i)$.
\end{lemma}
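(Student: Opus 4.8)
The plan is to verify the three kinds of axioms of $(\aont,\afb_i)$ separately, reducing each to the single fact that every canonical interpretation $\Imc_j=\aint_{\KR[j]}$ is a model of~\aont. Indeed, by \cond\ref{def:dlltcqs-rc:consistent} each $\KR[j]$ is consistent, so Lemma~\ref{lem:dll-iomodel} yields $\Imc_j\models\KR[j]$; since $\aont\subseteq\KR[j]$ for every~$j$ and $\afb_i\subseteq\KR[i]$, this gives $\Imc_j\models\aont$ for all~$j$ and $\Imc_i\models\afb_i$. All the work then lies in transporting membership facts between the $\Imc_j$ and $\Jmc_i$ via Lemma~\ref{lem:dlltcqs-idef}.

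Assertions are the easy part: they all concern names in $\NI(\atkb)$, on which $\Jmc_i$ agrees with $\Imc_i$ by Lemma~\ref{lem:dlltcqs-idef}\ref{lem:dlltcqs-idef:ind} for (negated) concept assertions and by Lemma~\ref{lem:dlltcqs-idef-role} for (negated) role assertions; hence $\Imc_i\models\afb_i$ transfers to $\Jmc_i\models\afb_i$. For a role inclusion $S\sqsubseteq R$ I would do a short case distinction on whether $S$ and $R$ are rigid or flexible, unfolding the definitions of $S^{\Jmc_i}$ and $R^{\Jmc_i}$: in every case each set contributing to $S^{\Jmc_i}$ (a copy $S'^{\Imc_j}$ of a rigid subrole of~$S$, a witnessed pair of $S^{\Imc_j}$, or $S^{\Imc_i}$ itself) is contained in the corresponding $R$-set, because $\Imc_j\models\aont$ and $\aont\models S\sqsubseteq R$, and the witness/subrole bookkeeping passes from $S$ up to $R$.

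The substantial case is a concept inclusion $\bigsqcap\Bmc\sqsubseteq B$. Fix $\el\in(\bigsqcap\Bmc)^{\Jmc_i}$; by Fact~\ref{fact:dlltcqs-deltas}, $\el$ lies either in $\NI(\atkb)$ or in exactly one region $\adomp[\Imc_j]\aux\cup\adomp[\Imc_j]\tree\cup\adomp[\Imc_j]\un$. If $\el\in\NI(\atkb)$, Lemma~\ref{lem:dlltcqs-idef}\ref{lem:dlltcqs-idef:ind} reduces the claim immediately to $\Imc_i\models\aont$. If $\el$ lies in the region of some $\Imc_j$ with $j=i$, then for every $B'\in\Bmc$ each of the possibilities \ref{lem:dlltcqs-idef:auxr}/\ref{lem:dlltcqs-idef:auxf1}--\ref{lem:dlltcqs-idef:auxf3} forces $\el\in B'^{\Imc_i}$ (using $\Imc_i\models\aont$ for case~\ref{lem:dlltcqs-idef:auxf2}); hence $\el\in(\bigsqcap\Bmc)^{\Imc_i}$, so $\el\in B^{\Imc_i}$, which lifts back to $\el\in B^{\Jmc_i}$ via \ref{lem:dlltcqs-idef:auxr} or \ref{lem:dlltcqs-idef:auxf1}.

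The hard part, and the main obstacle, is $j\neq i$, where the clause \ref{lem:dlltcqs-idef:auxf1} is unavailable. Here the rigid $B'\in\Bmc$ give $\el\in B'^{\Imc_j}$ by \ref{lem:dlltcqs-idef:auxr}, while each flexible $B'$ must be certified by \ref{lem:dlltcqs-idef:auxf2} or \ref{lem:dlltcqs-idef:auxf3}. I would split on whether any flexible conjunct is certified \emph{only} through a witness \ref{lem:dlltcqs-idef:auxf3}. If none is, I collect a single rigid set $\abcset$ consisting of the rigid concepts of $\Bmc$ together with the sets $\abcset_{B'}$ supplied by \ref{lem:dlltcqs-idef:auxf2}; then $\el\in(\bigsqcap\abcset)^{\Imc_j}$, $\aont\models\bigsqcap\abcset\sqsubseteq\bigsqcap\Bmc\sqsubseteq B$, and $\el\in B^{\Jmc_i}$ follows directly from \ref{lem:dlltcqs-idef:auxf2} if $B$ is flexible, or from $\Imc_j\models\aont$ combined with \ref{lem:dlltcqs-idef:auxr} if $B$ is rigid. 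If some flexible conjunct does use \ref{lem:dlltcqs-idef:auxf3}, then $\el\in\adomp[\Imc_j]\un$ with $\wits\aont\el\neq\emptyset$, and I would observe that in this situation \emph{every} $B'\in\Bmc$ already satisfies $\el\in B'^{\Imc_j}$ (the \ref{lem:dlltcqs-idef:auxf2}-certified ones because $\Imc_j\models\aont$); hence $\el\in B^{\Imc_j}$, which lifts to $\el\in B^{\Jmc_i}$ via \ref{lem:dlltcqs-idef:auxr} when $B$ is rigid and via \ref{lem:dlltcqs-idef:auxf3} (reusing the witness of~$\el$) when $B$ is flexible. The delicate point throughout is matching the way each conjunct's membership is certified in $\Imc_j$ with the correct clause of Lemma~\ref{lem:dlltcqs-idef}\ref{lem:dlltcqs-idef:auxf} needed to re-derive membership of the head~$B$ in $\Jmc_i$.
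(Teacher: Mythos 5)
Your proposal is correct and follows essentially the same route as the paper's proof: both reduce everything to the canonical interpretations $\Imc_j$ (models of \KR[j] by \cond\ref{def:dlltcqs-rc:consistent} and Lemma~\ref{lem:dll-iomodel}) and transport memberships via Fact~\ref{fact:dlltcqs-deltas}, Lemma~\ref{lem:dlltcqs-idef-role}, and Lemma~\ref{lem:dlltcqs-idef}, with the identical case split for concept inclusions (named elements; $j=i$; and $j\neq i$ divided into the all-rigid/\ref{lem:dlltcqs-idef:auxf2} case, where one takes $\abcset:=\bigcup_\ell\abcset_\ell$, versus the case where some conjunct is certified via a witness as in \ref{lem:dlltcqs-idef:auxf3}). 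The only cosmetic differences are that you verify role inclusions by a direct inclusion of each set contributing to $S^{\Jmc_i}$ into the corresponding part of $R^{\Jmc_i}$, where the paper argues by contradiction, and that the paper makes the (immediate) impossibility of the head $B=\bot$ explicit while you leave it implicit.
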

\begin{proof}
For every assertion $\aaxiom\in\Amc_i$, by Lemma~\ref{lem:dll-iomodel}, we have $\Imc_i\models\aaxiom$, and thus Lemmas~\ref{lem:dlltcqs-idef-role} and~\ref{lem:dlltcqs-idef}\ref{lem:dlltcqs-idef:ind} yield
$\Jmc_i\models\aaxiom$.

We consider a CI $B_1\sqcap\dots\sqcap B_m\sqsubseteq B\in\aont$ and element \el such that $\el\in B_1^{\Jmc_i}\cap\dots\cap B_m^{\Jmc_i}$; note that $B_1,\dots,B_m$ are basic concepts and $B$ is either a basic concept or $\bot$. 
For the case that $\el\in\NI(\atkb)$, Lemma~\ref{lem:dlltcqs-idef}\ref{lem:dlltcqs-idef:ind} yields $\el\in B_1^{\Imc_i}\cap\dots\cap B_m^{\Imc_i}$. Since $\Imc_i\models\aont$, this implies that $\el\in B^{\Imc_i}$, which is impossible if $B=\bot$. Otherwise, we get $\el\in B^{\Jmc_i}$, again by Lemma~\ref{lem:dlltcqs-idef}\ref{lem:dlltcqs-idef:ind}.

Let now  $\el\in\adomp[\Imc_j]\aux\cup \adomp[\Imc_j]\tree\cup \adomp[\Imc_j]\un$ for some $j\in[0,n+k]$.
If $i=j$, then we get the same conclusion as in the previous case since, given that $\Imc_j\models\aont$, Items~\ref{lem:dlltcqs-idef:auxf2} and~\ref{lem:dlltcqs-idef:auxf3} collapse to~\ref{lem:dlltcqs-idef:auxf1}. More precisely, we can argue analogously by referring to Lemma~\ref{lem:dlltcqs-idef}\ref{lem:dlltcqs-idef:auxr} and~\ref{lem:dlltcqs-idef:auxf} instead of Lemma~\ref{lem:dlltcqs-idef}\ref{lem:dlltcqs-idef:ind}.
In case that $i\neq j$, Lemma~\ref{lem:dlltcqs-idef} implies that, for each $B_\ell$ with $\ell\in[1,m]$, we have either that 
(i)~there is a $\abcset_\ell\subseteq\BCr(\aont)$ such that
$\el\in(\bigsqcap\abcset_\ell)^{\Imc_j}$ and
$\aont\models\bigsqcap\abcset_\ell\sqsubseteq B_\ell$ or that
(ii)~$\el\in B_{\smash\ell}^{\smash{\Imc_j}}\cap\adomp[\Imc_j]\un$ and
$\wits\aont\el\neq\emptyset$.
$\Imc_j\models\aont$, following from Lemma~\ref{lem:dll-iomodel}, thus leads to $\el\in B_\ell^{\smash{\Imc_j}}$ for~(i) and to $\el\in B^{\Imc_j}$ for both cases.
If $B=\bot$, this is again impossible.
If $B$ is rigid, then Lemma~\ref{lem:dlltcqs-idef}\ref{lem:dlltcqs-idef:auxr} yields
$\el\in B^{\Jmc_i}$, as required.
If $B$ is flexible and Case~(ii) applies to at least one~$B_\ell$ with $\ell\in[1,m]$, then
Item~\ref{lem:dlltcqs-idef:auxf3} of Lemma~\ref{lem:dlltcqs-idef}\ref{lem:dlltcqs-idef:auxf} yields the claim.
Otherwise, it is easy to see that we can define $\abcset:=\bigcup_{\ell=1}^m\abcset_\ell$ and have $\el\in(\bigsqcap\abcset)^{\Imc_j}$, $\aont\models\abcset\sqsubseteq B_1\sqcap\dots\sqcap B_m$, and $\aont\models\abcset\sqsubseteq B$.
Hence, the Item~\ref{lem:dlltcqs-idef:auxf2} of Lemma~\ref{lem:dlltcqs-idef}\ref{lem:dlltcqs-idef:auxf} applies, and we also get $\el\in B^{\Jmc_i}$.

It remains to consider role inclusions of the form $S\sqsubseteq R$. We consider a tuple $(\elone,\eltwo)\in S^{\Jmc_i}$, \ie there is a $j\in[0,n+k]$ such that $(\elone,\eltwo)\in S^{\Imc_j}$. Since $\Imc_j\models\aont$, we get $(\elone,\eltwo)\in R^{\Imc_j}$. For the case that $R$ is rigid, we immediately get $(\elone,\eltwo)\in R^{\Jmc_i}$. For the case that $R$ is flexible, we assume $(\elone,\eltwo)\notin R^{\Jmc_i}$. Then, we must have that $i\neq j$, that $R$ has no rigid subrole~$S'$ with $(\elone,\eltwo)\in S'^{\Imc_j}$, and that neither \elone nor \eltwo have a witness w.r.t.~\aont. By the second observation, $S$ cannot be rigid nor have a rigid subrole~$S'$ with $(\elone,\eltwo)\in (S')^{\Imc_j}$, because then $S'$ would be a rigid subrole of~$R$. But this means $(\elone,\eltwo)\notin S^{\Imc_j}$, which contradicts our assumption.
\end{proof}

%
%

\begin{lemma}
	\label{lem:dlltcqs-qi}
	For all $i\in[0,n+k]$, $\Jmc_i$ is a model of~\ctwo{$\chi_{\iota(i)}$}.
\end{lemma}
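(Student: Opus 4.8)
The statement splits along the two kinds of literals in $\chi_{\iota(i)}=\bigwedge_{p_j\in\ax_{\iota(i)}}\acq_j\land\bigwedge_{p_j\in\overline{\ax_{\iota(i)}}}\lnot\acq_j$. For the positive literals the argument is immediate: if $p_j\in\ax_{\iota(i)}$, then $\acq_j\in\Q_{\iota(i)}$, so its instantiation is contained in $\Amc_{\Q_{\iota(i)}}$ and hence in $\KR[i]$; since $\Imc_i$ is the canonical model of $\KR[i]$, the assignment sending each variable~$x$ of~$\acq_j$ to~$\aiel{x}$ and each individual name to itself is a homomorphism of $\acq_j$ into $\Imc_i$, and because $X^{\Imc_i}\subseteq X^{\Jmc_i}$ holds for every concept and role name~$X$ by the definition of~$\Jmc_i$, the same map witnesses $\Jmc_i\models\acq_j$. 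The work lies in the negative literals. I would fix $p_j\in\overline{\ax_{\iota(i)}}$, assume towards a contradiction that $\ahom$ is a homomorphism of $\acq_j$ into $\Jmc_i$, and first localise the image: every role edge of $\Jmc_i$ lies inside a single $\adom[\Imc_{j'}]$, for rigid roles because $R^{\Jmc_i}=\bigcup_{j'}R^{\Imc_{j'}}$ and for flexible roles by the analogous but more restrictive definition. Combined with Fact~\ref{fact:dlltcqs-deltas} (distinct $\adom[\Imc_{j'}]$ meet only in $\NI(\atkb)$, and the anonymous parts are pairwise disjoint) and the connectedness of $\acq_j$, this yields the dichotomy of the sketch: either (I)~$\ahom$ maps every term into $\adomp[\Imc_{j'}]\un$ for one fixed $j'$, or (II)~the image of $\ahom$ meets $\NI(\atkb)\cup\NIA\cup\NIT$.

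In Case~(I), Lemma~\ref{lem:dlltcqs-idefj-role} shows that every role atom of $\acq_j$ already holds in $\Imc_{j'}$, and Lemma~\ref{lem:dlltcqs-idef}\ref{lem:dlltcqs-idef:auxr},\ref{lem:dlltcqs-idef:auxf} together with $\Imc_{j'}\models\aont$ shows the same for every concept atom, so $\ahom$ is in fact a homomorphism into $\Imc_{j'}$. If $j'=i$, then $\Imc_i\models\acq_j$ contradicts $\KR[i]\not\models\acq_j$ (\cond\ref{def:dlltcqs-rc:negcqs}, via Lemma~\ref{lem:iosatq}). If $j'\neq i$, the crucial point is that the match is \emph{rigidly caused}: by Lemma~\ref{lem:dlltcqs-idef}\ref{lem:dlltcqs-idef:auxf} the flexible concept atoms are visible in $\Jmc_i$ only through a rigid subsumption or through a nonempty witness $\wits\aont{\uel{\apath}}$ (Definition~\ref{def:dlltcqs-witness}), and likewise every flexible edge used is a rigid subrole edge or has an endpoint with a witness. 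Taking the topmost image element (more precisely, the topmost witness root of the image) and collecting the rigid basic concepts it satisfies into a set $\awit\subseteq\BCr(\aont)$, I would set $\acqtwo:=\exists y.\bigwedge_{B\in\awit}B(y)$. This $\acqtwo$ uses only rigid names, has $|\NT(\acqtwo)|=1\le|\NT(\acq_j)|$, is satisfied in $\Imc_{j'}$ (mapping~$y$ to that element), and its canonical model regenerates the rigidly forced subtree, so $\langle\aont,\cqinst{\{\acqtwo\}}\rangle\models\acq_j$; hence $\acqtwo$ is a rigid witness query for $\acq_j$. Since $p_j\in\overline{\ax_{\iota(i)}}$ gives $\acq_j\in\QRn$ (\cond\ref{def:dlltcqs-rc:qrn}), this contradicts \cond\ref{def:dlltcqs-rc:witnesses}.

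In Case~(II), I would transform $\ahom$ into a homomorphism $\ahomtwo$ of $\acq_j$ into $\Imc_i$, again contradicting \cond\ref{def:dlltcqs-rc:negcqs}. The elements of $\NI(\atkb)$ are mapped to themselves and agree with $\Imc_i$ on the rigid assertions recorded in $\AR\subseteq\KR[i]$. The decisive observation is that $\Amc_{\RF}$, $\rigcons{\QR}$, and the instantiation ABoxes are all part of $\KR[i]$, so that every rigid or flexible successor of a named (including $\NIA$/$\NIT$) element that $\ahom$ can traverse is itself rigidly forced---whence, for flexible role successors, $\exists S(b)\in\RF$ by \cond\ref{def:dlltcqs-rc:rf}---and is therefore already present in $\Imc_i$. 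Following the connected query from the named elements into these forced successors, I redirect each term onto the corresponding element of $\Imc_i$, obtaining $\ahomtwo$ and thus $\Imc_i\models\acq_j$.

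The main obstacle is Case~(I): verifying that a match living entirely in the anonymous part of $\Imc_{j'}$ can be regenerated from a bounded, purely rigid seed, i.e.\ that the chosen $\acqtwo$ really meets all three requirements of Definition~\ref{def:dlltqcs-rigid-witness-query}. This is exactly where the witnesses of Definition~\ref{def:dlltcqs-witness} and the size bound built into the notion of rigid witness query are needed. Case~(II) is conceptually easier but demands careful bookkeeping to show that no genuinely new successor of a named element appears in $\Jmc_i$ beyond those already forced in $\Imc_i$ through $\Amc_{\RF}$, $\rigcons{\QR}$, and $\AR$.
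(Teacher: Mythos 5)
Your decomposition is the same as the paper's: positive literals via $\Amc_{\Q_{\iota(i)}}\subseteq\KR[i]$ and the containment of $\Imc_i$ in $\Jmc_i$; negative literals by contradiction, split into a match lying entirely in $\adomp[\Imc_{j'}]\un$ for a single~$j'$ (played off against \cond\ref{def:dlltcqs-rc:witnesses}) and a match touching $\NI(\atkb)\cup\NIA\cup\NIT$ (played off against \cond\ref{def:dlltcqs-rc:negcqs}). Your Case~(II) outline is the paper's three-phase construction in compressed form, and the ingredients you name (mapping named elements to their $\Imc_i$-counterparts; $\AR$, $\rigcons{\QR}$, $\Amc_{\RF}$ as carriers of the rigid information; \cond\ref{def:dlltcqs-rc:rf} to pull flexible successors of named elements into~$\Imc_i$) are the right ones.

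The genuine gap is in Case~(I) with $j'\neq i$. You build the rigid witness query as a single-variable CQ $\acqtwo=\exists y.\bigwedge_{B\in\awit}B(y)$ seeded by the rigid basic concepts of ``the topmost witness root'' of the match. This covers only the subcase in which the minimal-path element of the match has a nonempty witness set, i.e., the whole match hangs below one rigidly generated chain. But a match can survive into~$\Jmc_i$ with \emph{no} witnesses anywhere: flexible role edges also transfer when they are instances of \emph{rigid subroles}, and rigid atoms transfer unconditionally. Concretely, let $A,A_1,S_0,S$ be flexible, $R,B$ rigid, $\aont=\{A\sqsubseteq\exists S_0,\ \exists S_0^-\sqsubseteq A_1,\ A_1\sqsubseteq\exists S,\ S\sqsubseteq R,\ \exists S^-\sqsubseteq B\}$, $\acq_j=\exists x,y.\,R(x,y)\land B(y)$, and $A(a)\in\Amc_{j'}$ for some $j'\neq i$. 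Then $\Imc_{j'}$ contains unnamed elements $\uel{aS_0}$ and $\uel{aS_0S}$ with $(\uel{aS_0},\uel{aS_0S})\in R^{\Imc_{j'}}$ and $\uel{aS_0S}\in B^{\Imc_{j'}}$, and neither has a witness, since no conjunction of rigid basic concepts entails $\exists S_0$ or $\exists S$. As $R$ and~$B$ are rigid, this match appears in~$\Jmc_i$, entirely inside $\adomp[\Imc_{j'}]\un$. Your construction now has no witness root to choose, and even anchored at the match root it fails: the only rigid basic concept satisfied by $\uel{aS_0}$ is $\exists R$, and $\langle\aont,\{\exists R(a_y)\}\rangle\not\models\acq_j$ because $\aont\not\models\exists R^-\sqsubseteq B$. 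So your $\acqtwo$ violates the first requirement of Definition~\ref{def:dlltqcs-rigid-witness-query}, no contradiction with \cond\ref{def:dlltcqs-rc:witnesses} is obtained, and the argument stalls on exactly the configuration it is supposed to exclude.

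This is why the paper's proof of Case~(I) has a second, substantially harder subcase ($\wits\aont{\ujel{\apath}}=\emptyset$): there the rigid witness query is built with the \emph{same} term structure as~$\acq_j$, by inductively replacing each flexible atom either by a rigid (sub)role atom or by a conjunction $\abcset(x)$ of rigid concepts seeding a witness, and cutting off the subtrees hanging below witnesses. In the example above this procedure returns $\acq_j$ itself (it is already all-rigid), which is a rigid witness query satisfied in~$\Imc_{j'}$, giving the desired contradiction. The size bound $|\NT(\acqtwo)|\le|\NT(\acq_j)|$ in Definition~\ref{def:dlltqcs-rigid-witness-query} exists precisely to make room for such multi-variable witness queries; a single-seed construction cannot be patched, because no conjunction of rigid basic concepts at any one element of the match entails the whole match.
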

\begin{proof}
We show that $\Jmc_i$ is a model of every CQ literal in \ctwo{$\chi_{\iota(i)}$}.
Let \acqalpha first be a positive such literal.
Since $\afb_{\Q_{i}}$ contains an instantiation of~\acqalpha and $\Imc_i\models\afb_{\Q_{i}}$ by Lemma~\ref{lem:dll-iomodel}%
\footnote{In the remaining parts of the proof, we do not always explicitly refer to Lemma~\ref{lem:dll-iomodel} to justify the argument that $\Imc_i\models\KR[i]$ for all $i\in[0,n+k]$.} 
we know that there is a homomorphism~\ahom of~\acqalpha into~$\Imc_i$ that maps all variables in \acqalpha to elements of~$\adomp[\Imc_i]\aux$; that is, \ahom maps each such variable $x$ to $\ael{x}$.
By the fact that $\adomp[\Imc_i]\aux\subseteq\adom$ and Lemmas~\ref{lem:dlltcqs-idef-role} and~\ref{lem:dlltcqs-idef}, \ahom is then also a homomorphism of~\acqalpha into~$\Jmc_i$.	

Let now $\lnot\acqalpha$ be a negative literal in~$\chi_{i}$.
We proceed by contradiction and assume \ahom to be a homomorphism of~\acqalpha into~$\Jmc_i$.
First, observe the following. 
\begin{enumerate}[label=(O\arabic*)]
\item\label{obs:1} \cond\ref{def:dlltcqs-rc:qrn} implies $\acqalpha\in\QRn$, and hence \cond\ref{def:dlltcqs-rc:witnesses} and Lemma~\ref{lem:iosatq} yield that none of the rigid witness queries of \acqalpha is satisfied in any of the canonical interpretations.
\item\label{obs:2} \cond\ref{def:dlltcqs-rc:negcqs} implies $\KR[i]\not\models\acqalpha$, and thus Lemma~\ref{lem:iosatq} yields that $\Imc_i\models\lnot\acqalpha$.
\end{enumerate}
%
%
%
We derive contradictions to these observations by distinguishing the two cases~(I) and~(II) outlined in Section~\ref{sec:dlltcqs-r-sat-charact}.

\paragraph{(I)} 
Let first \ahom be such that it maps no terms into $\ctwo{\adomp\nam:=}\NI(\atkb)\cup\bigcup_{j=0}^{n+k} \adomp[\Imc_j]\aux\cup\adomp[\Imc_j]\tree$. Because of the UNA, we thus have $\NI(\acqalpha)\cap\NI(\atkb)=\emptyset$, which yields that $\NI(\acqalpha)=\emptyset$ since $\acqalpha$ does not contain names from \NIA or \NIT. 
%
Moreover, we can then assume that there is a single index~$j\in[0,n+k]$ such that \ahom maps all terms of~\acqalpha to elements of~$\adomp[\Imc_j]\un$, by the definition of~\adom.
To see this, note that \acqalpha is connected and that, for a role $R$, a tuple $(\elone,\eltwo)\in R^{\Jmc_i}$ without named individuals exists only if the elements belong to the same domain~$\adom[\Imc_j]$ by the definition of $R^{\Jmc_i}$ and Fact~\ref{fact:dlltcqs-deltas}. 

Given \ref{obs:2}, we then directly get a contradiction for the case that $j=i$ by Lemmas~\ref{lem:dlltcqs-idef-role} and~\ref{lem:dlltcqs-idef}, which imply that \ahom is a homomorphism of~\acqalpha into~$\Imc_i$.

For the case $j\neq i$, we show that there is a rigid witness query~\acqtwo for~\acqalpha such that $\Imc_j\models\acqtwo$, in contradiction \ctwo{to~\ref{obs:1}}.
Since \acqalpha is connected, by Lemma~\ref{lem:dlltcqs-idefj-role}, and by considering how the elements in $\adomp[\Imc_j]\un$ are related by roles within~$\Imc_j$ (see Definition~\ref{def:dll-io}), it is easy to see that there is a variable $x\in\NV(\acqalpha)$, for which $\ahom(x)=\ujel{\apath}$ is such that the length of \apath is minimal compared to the paths of all elements of $\range(\ahom)$; further, all other $\ahom(\vtwo)$ for $\vtwo\in\NV(\acqalpha)$ must then be of the form $\ujel{\apath\apathtwo}$ with $\apathtwo\in\NRM^*$.
Moreover, we know by the definition of~$\Jmc_i$ on the elements
of~$\adomp[\Imc_j]\un$ that $\pi$ is also a homomorphism of~$\varphi$
into~$\Imc_j$.

We now construct the rigid witness query~\acqtwo for~\acqalpha, distinguishing
two cases.
   We first consider the case that
    $\wits\aont{\ujel{\apath }}\neq\emptyset$.
    By Definition~\ref{def:dlltcqs-witness}, \apath must be of the form
    $\apath_0R_0\dots R_\ell$, and there are a set
    $\abcset\subseteq\BCr(\aont)$ such that
    $\aont\models\bigsqcap\abcset\sqsubseteq \exists R_0$ and an element
    $\ujel{\apath_0}\in\adomp[\Imc_j]\un$ that satisfies $\bigsqcap\abcset$
    in~$\Imc_j$. By Definition~\ref{def:dll-io} and
    Lemma~\ref{lem:dll-deltaun}, $\ujel{\apath}\in\adomp[\Imc_j]\un$ implies
    $\aont\models\exists R^-_i\sqsubseteq \exists R_{i+1}$,
    for all $i\in[0,\ell-1]$.
    Consider now the ABox $\Amc_\abcset:=\{B(a)\mid B\in\abcset\}$, where $a$ is an
    arbitrary individual name. By Definition~\ref{def:dll-io} and the above
    observations \ctwo{on~$\pi$}, the canonical model~$\Imc'$ of $\langle\Omc,\Amc_\abcset\rangle$
    is isomorphic to a subtree of~$\Imc_j$ starting at~$\ujel{\rho_0}$.
    In particular, it contains the elements
    $u_{aR_0},\dots,u_{aR_0\dots R_\ell}$ and similar elements corresponding to
    all other successors of $\ujel{\rho_0R_0}$ in~$\Imc_j$. Hence, $\Imc'$
    also satisfies~$\varphi$, namely via the homomorphism~$\pi'$ given by
    $\pi'(x):=u_{a\sigma}$ whenever $\pi(x)=\ujel{\rho_0\sigma}$, for
    each variable $x\in\NV(\varphi)$.
    By Lemma~\ref{lem:iosatq}, 
    $\langle\Omc,\Amc_\abcset\rangle\models\varphi$, which shows that the CQ
    $\exists x.\abcset(x)$ is a rigid witness query for~$\varphi$.
  
	In the remaining case that $\wits\aont{\ujel{\apath}}=\emptyset$, then we cannot find a rigid witness
    query so easily. Instead, we iteratively replace atoms of~$\varphi$ by
    rigid atoms until no more flexible atoms remain. We do this by induction on
    the length~$|\sigma|$ of the elements $\pi(x)=\ujel{\rho\sigma}$,
    $x\in\NV(\varphi)$. That is, we start with those variables~$x$ that are
    mapped to the \enquote{root} $\ujel{\rho}$, and then proceed along the
    tree-like structure of~$\adomp[\Imc_j]\un$. Initially, we set
    $\psi:=\varphi$, and maintain the invariants that $\pi$ is a homomorphism
    of~$\psi$ into~$\Imc_j$, and that every homomorphism of~$\psi$ into an
		interpretation~\Imc can be extended to a homomorphism of~$\varphi$
		into~\Imc (in the end, this implies that $\psi$ is a rigid witness query
		for~$\varphi$ that is satisfied by~$\Imc_j$).

		In the first step, we consider all concept atoms $A(x)$ in~$\psi$ such that
		$\pi(x)=\ujel{\rho}$. Since $\ujel{\rho}\in A^{\Jmc_i}$ and
		$\wits\aont{\ujel{\rho}}=\emptyset$, by Lemma~\ref{lem:dlltcqs-idef} either (i)~$A$ is rigid, or (ii)~there is a
		$\abcset\subseteq\BCr(\Omc)$ such that $\ujel{\rho}\in(\bigsqcap\abcset)^{\Imc_j}$
		and $\Omc\models\bigsqcap\abcset\sqsubseteq A$.
		In case~(i), we do not have to replace~$A(x)$ since it is already rigid, and
		in case~(ii) we replace it by~$\abcset(x)$, which satisfies our invariants.

		Next, we consider the role atoms $R(x,y)$ in~$\psi$ such that either~$x$
		or~$y$ is mapped by~$\pi$ to~$\ujel{\rho}$. Without loss of generality, we
		assume that $\pi(x)=\ujel{\rho}$ and $\pi(y)=\ujel{\rho S}$; if this is not
		the case, then we simply consider~$R^-$ instead of~$R$. If $\wits\aont{\ujel{\rho S}}\neq\emptyset$, then, since $\ujel{\rho}$ has
		no witnesses, there must exist some $\abcset\subseteq\BCr(\Omc)$ such that
		$\ujel{\rho}\in(\bigsqcap\abcset)^{\Imc_j}$ and
		$\Omc\models\bigsqcap\abcset\sqsubseteq\exists S$.
		But then the canonical model for $\langle\Omc,\Amc_\abcset\rangle$ already satisfies all
		atoms of~$\psi$ that are mapped by~$\pi$ to the
		elements below~$\ujel{\rho S}$, namely via the
		homomorphism~$\pi'$ defined by $\pi'(y):=u_{a\sigma}$ whenever
		$\pi(y)=\ujel{\rho\sigma}$. Moreover, $\pi'$ also satisfies $R(x,y)$ since,
		by Definition~\ref{def:dll-io}, $\Omc\models S\sqsubseteq R$. This
		means that we can replace all these atoms by the conjunction~$\abcset(x)$ (for
		all~$x$ with $\pi(x)=\ujel{\rho}$).
		It remains to consider the case that neither $\ujel{\rho}$ nor
		$\ujel{\rho S}$ have any witnesses.
		Since $(\ujel{\rho},\ujel{\rho S})\in R^{\Jmc_i}$, by the definition
		of~$\Jmc_i$, there is a rigid role~$S_2$ such that
		$\Omc\models S_2\sqsubseteq R$ and
		$(\ujel{\rho},\ujel{\rho S})\in\smash{S_2^{\Imc_j}}$. Hence, we can replace
		$R(x,y)$ by the rigid atom $S_2(x,y)$, which is satisfied in~$\Imc_j$ and
		implies $R(x,y)$.

		We have thus dealt with all atoms involving a variable~$x$ with
		$\pi(x)=\ujel{\rho}$.
		Assume now that we have continued this process up to
		some~$\ujel{\rho\sigma}$, and consider all atoms involving variables~$x$
		mapped to one of its direct successors~$\ujel{\rho\sigma S}$. By our
		construction above, we can assume that $\ujel{\rho\sigma S}$ has no
		witnesses; otherwise, we would already have replaced all atoms
		involving~$x$.
		Hence, we are in the same position as in the case above, and can continue
		replacing the atoms in~$\psi$ by rigid atoms in exactly the same way.
		We can do this until $\psi$ contains no more flexible atoms, and hence is a
		rigid witness query for~$\varphi$ \ctwo{that is satisfied in~$\Imc_j$}.

\ctwo{This contradicts~\ref{obs:1}, and hence} finishes the proof of Case~(I) of Lemma~\ref{lem:dlltcqs-qi}.

\paragraph{(II)}
In the remainder of the proof, let \ahom be such that at least one term is mapped into $\adomp\nam=\NI(\atkb)\cup\bigcup_{j=0}^{n+k}
\big(\adomp[\Imc_j]\aux\cup\adomp[\Imc_j]\tree\big)$.
We directly define a homomorphism~\ahomtwo of \acqalpha into $\Imc_i$ to contradict \ref{obs:2}, \ie $\Imc_i\models\lnot\acqalpha$.
This is done in three phases, by considering the terms \ahom maps to elements from $\adomp\nam$, those that are directly connected to the latter, and all others. After phase one, all terms considered are thus mapped to elements from $\bigcup_{j=0}^{n+k} \adomp[\Imc_j]\un$ by \ahom.

\medskip
\ctwo{\noindent\emph{Phase~1)}} For all $t\in\NT(\acqalpha)$, where $\ahom(t)\in\adomp\nam$, and assuming $\ahom(t)=e^j$, 
let $\ahomtwo(t):=e^i$.
%
We first prove an auxiliary result and subsequently show that, for the terms mapped so far, \ahomtwo is a homomorphism of \acqalpha into $\Imc_i$.
\begin{claim}\label{eq:match-basic-concepts}
For all $B\in\BC(\aont)$, $t\in\NT(\acqalpha)$ with $\ahom(t)\in\adomp\nam$, $\ahom(t)\in B^{\Jmc_i}$ implies $\ahomtwo(t)\in B^{\Imc_i}$.
\end{claim}
\begin{proof}[Proof of the claim]
%
By Lemma~\ref{lem:dlltcqs-idef}, $\ahom(t)\in B^{\Jmc_i}$ implies two options: either
(i)~$\ahom(t)\in B^{\Imc_i}$, or (ii)~$\ahom(t)\in\adomp[\Imc_j]\aux\cup\adomp[\Imc_j]\tree$ and there
is a $\abcset\subseteq\BCr(\aont)$ with $\ahom(t)\in(\bigsqcap\abcset)^{\Imc_j}$ and $\aont\models\bigsqcap\abcset\sqsubseteq B$.
In Case~(i), we have $\ahomtwo(t)=\ahom(t)$ by definition, hence the claim holds.
In Case~(ii), the claim follows if $i=j$, as in Case~(i); otherwise, we distinguish the following two cases.

The first case is for $\ahom(t)\in\adomp[\Imc_j]\tree$; then $\ahom(t)$ is of the form
$a_{b\apath}^j$. Since $a_{b\apath}$ then occurs in some ABox by Definition~\ref{def:dll-io}, only $\Amc_{\RF}$ contains assertions on elements of \NIT, and $\Amc_{\RF}$ is the same w.r.t.\ all time points, the element $a_{b\apath}^i$ also exists.
By Lemma~\ref{lem:dll-io-elements} and the definition of~$\Amc_{\RF}$,
$\ahom(t)\in B^{\Imc_j}$ implies that $B$ subsumes the conjunction of all
rigid basic concepts satisfied by $\uel{b\apath}$ in some
$\Imc_{\langle\aont,\{\exists S(b)\}\rangle}$ where $\exists S(b)\in\RF$.
Another application of Lemma~\ref{lem:dll-io-elements} then yields that $B$
is also satisfied by $\ahomtwo(t)=a_{b\apath}^i$ in~$\Imc_i$.

In the second case, we consider $\ahom(t)\in\adomp[\Imc_j]\aux$; then $\ahom(t)$ is of the form $a_x^j$. 
Let $\acqalphatwo\in\tcqcqs\atcq$ be the (unique) CQ containing the variable~$x$.
By the definition of $\Imc_j$ and \cond\ref{def:dlltcqs-rc:cqcons}, the existence of the element~$a_x^j$ implies that $\acqalphatwo\in\QR$.
Hence, the element~$a_x^i$ must also exist, and
$\ahomtwo(t)=a_x^i$ is well-defined.
Since $\ahom(t)\in(\bigsqcap\abcset)^{\Imc_j}$; $\abcset\subseteq\BCr(\aont)$; and $\rigcons{\QR}$ and $\Amc_{\RF}$ contain all rigid assertions on~$a_x$, taking \aont into account (i.e., in particular, all following from $\Amc_{\Q_{\iota(j)}}$ are also in $\rigcons{\QR}$); Definition~\ref{def:dll-io} and
Lemma~\ref{lem:dll-io-elements} yield that the elements of \abcset are
implied by the conjunction of
\begin{itemize}
\item all basic concepts for which there are assertions on $a_x$ in $\AS(\langle\aont,\Amc_{\{ \acqalpha\}}\rangle)$ and
\item all rigid concepts $\exists R$ for which there is an assertion
$\exists S(a_x)\in\RF$ with $\aont\models S\sqsubseteq R$.
\end{itemize}
But, for all concepts of the latter form, \cond\ref{def:dlltcqs-rc:rf} yields that
$\exists R(a_x)$ is (already) implied by some KB
$\langle\aont,
\AR\cup\rigcons{\QR}\cup\Amc_{\Q_{\iota(i')}}\cup\Amc_{i'}\rangle$,
and must hence be contained in the set of basic concepts from item one, because $\AR$ and $\Amc_{i'}$ do not contain assertions on $a_x$. 
Denoting the set of these basic concepts by $\BC(\acqalphatwo,a_x)$, we thus obtain 
$\aont\models\bigsqcap\BC(\acqalphatwo,a_x)\sqsubseteq\bigsqcap\abcset$. Given $\acqalphatwo\in\QR$, the definition of $\rigcons{\QR}$ yields
$B'(a_x)\in\rigcons{\QR}$ for all $B'\in\abcset$.
From $\Imc_i\models\rigcons{\QR}$, we obtain $a_x^i\in(\bigsqcap\abcset)^{\Imc_i}$ which together with $\Imc_i\models\aont$ leads to $a_x^i\in B^{\Imc_i}$, as required. 
\qedhere
\end{proof}
We continue with the proof of Lemma~\ref{lem:dlltcqs-qi}.

As a consequence, all concept atoms $A(t)$ in~\acqalpha with $\ahom(t)\in\adomp\nam$ are satisfied by~\ahomtwo in~$\Imc_i$.
We next show that this also holds for the role atoms that only contain such terms.
Let hence $R(t,t')\in\acqalpha$ be such that $\ahom(t),\ahom(t')\in\adomp\nam$.
If $\ahom(t)$ and $\ahom(t')$ are both contained in $\adom[\Imc_i]$, which especially holds for the elements in $\NI(\atkb)$, the claim
follows immediately from Lemma~\ref{lem:dlltcqs-idef-role} and the fact that
$\ahomtwo(t)=\ahom(t)$ and $\ahomtwo(t')=\ahom(t')$.
Otherwise, both $\ahom(t)$ and $\ahom(t')$ belong to some
$\NI(\atkb)\cup\adomp[\Imc_j]\aux\cup\adomp[\Imc_j]\tree$ for a fixed $j\in[0,n+k]$ such that $j\neq i$; note that there are no role connections between elements of different sets 
$\adomp[\Imc_j]\aux\cup\adomp[\Imc_j]\tree$ and $\adomp[\Imc_{j'}]\aux\cup\adomp[\Imc_{j'}]\tree$ in $\Jmc_i$ by definition.
\ctwo{We can thus distinguish the following cases:
(A)~$R$ is rigid, $\ahom(t)$ or $\ahom(t')$ is contained in
$\adomp[\Imc_j]\aux$, and neither is contained in
$\adomp[\Imc_j]\tree$; (B)~$R$ is rigid and one of $\ahom(t)$ and $\ahom(t')$ is contained in
$\adomp[\Imc_j]\tree$; and (C)~$R$ is flexible.}

\ctwo{In Case~(A),}
$(\ahom(t),\ahom(t'))\in R^{\Jmc_i}$ implies $(\ahom(t),\ahom(t'))\in R^{\Imc_j}$.
By Definition~\ref{def:dll-io}, considering relations between named elements, and the fact that \AR and \atkb do not contain assertions on elements from \NIA, there must be an assertion
$S(\tau(t),\tau(t'))\in\rigcons{\QR}\cup\Amc_{Q_{\iota(j)}}$ such that
$\aont\models S\sqsubseteq R$; $\tau(t)=e$ if $\ahom(t)=e^j$. By the definition of $\rigcons{\QR}$ (see Definition~\ref{def:dlltcqs-consequences}), we get
$R(\tau(t),\tau(t'))\in\rigcons{\QR}$.
Hence, $\Imc_i\models\Amc_{\QR}$ implies $(\ahomtwo(t),\ahomtwo(t'))\in R^{\Imc_i}$.

\ctwo{In Case~(B),}
we argue similarly to the previous case.
Again, from $(\ahom(t),\ahom(t'))\in R^{\Jmc_i}$, we obtain $(\ahom(t),\ahom(t'))\in$ $R^{\Imc_j}$. By Definition~\ref{def:dll-io}, since only $\Amc_{\RF}$ contains assertions on elements from \NIT, and because $\Amc_{\RF}$ contains all rigid assertions on the elements of \NIT that follow by \aont (see the definition of $\Amc_{\RF}$), there must be an assertion $R(\tau(t),\tau(t'))\in\Amc_{\RF}$.
From $\Imc_i\models\Amc_{\RF}$, we then get $(\ahomtwo(t),\ahomtwo(t'))\in R^{\Imc_i}$.

\ctwo{In Case~(C),}
given $(\ahom(t),\ahom(t'))\in R^{\Jmc_i}$ and the fact that witnesses are not defined for elements of~$\adomp\nam$, there must be a rigid role~$S$ such that $(\ahom(t),\ahom(t'))\in S^{\Imc_j}$ and $\aont\models S\sqsubseteq R$ by the definition of ${\Jmc_i}$.
As in the respective previous case (i.e., based on the kind of $\ahomtwo(t)$ and $\ahomtwo(t')$ here), it follows that $(\ahomtwo(t),\ahomtwo(t'))\in S^{\Imc_i}$.
Since $\Imc_i\models\aont$, we then obtain $(\ahomtwo(t),\ahomtwo(t'))\in R^{\Imc_i}$.

It remains to define \ahomtwo for the variables of~\acqalpha that are mapped by~\ahom into $\bigcup_{j=0}^{n+k}\adomp[\Imc_j]\un$. 
Since the relations in $\Jmc_i$ are based on those in the canonical interpretations, Definition~\ref{def:dll-io} implies that all variables that occur in role atoms together with a term mapped to an element $e\in\adomp\nam$ by \ahom are of the form $\ujel{eS_0}$, and the role atom must be an $R$-atom such that $\aont\models S_0\sqsubseteq R$. 
Moreover, since \acqalpha is connected, if any variable is mapped to an element $\ujel{eS_0\dots S_\ell}\in\adomp[\Imc_j]\un$, 
then there is a variable that is mapped to $\ujel{eS_0}$, one directly connected to it and mapped to $\ujel{eS_0S_1}$, etc. We hence can proceed as follows.

\medskip
\ctwo{\noindent\emph{Phase~2)}} We consider all $y\in\NV(\acqalpha)$ for which there is an atom $R(t,y)\in\acqalpha$ where $\ahom(t)\in\adomp\nam$ and $\ahom(y)\in\adomp[\Imc_j]\un$, and assume $\ahom(t)=e^j$, $\ahom(y)=\ujel{eS_0}$, $S_0\in\NRM$, and $\aont\models S_0\sqsubseteq R$. Recall that $\ahomtwo(t)=e^i$. 
The goal is to choose an element of $\adom[\Imc_i]$ as value for $\ahomtwo(y)$ so that \ahomtwo can be (extended to) a homomorphism of \acqalpha into $\Imc_i$. For now, we however only show that our definition of $\ahomtwo(y)$ satisfies $(\ahomtwo(t),\ahomtwo(y))\in R^{\Jmc_i}$ for all these role atoms. The remaining atoms that contain $y$ are then covered in Part 3).

If $i=j$, then we can directly define $\ahomtwo(y):=\ahom(y)$ by Lemmas~\ref{lem:dlltcqs-idef-role} and~\ref{lem:dlltcqs-idef}.
Otherwise, we distinguish the following two cases. Note that $(e^j,\ujel{eS_0})\in R^{\Jmc_i}$ implies $(e^j,\ujel{eS_0})\in R^{\Imc_j}$ by Lemma~\ref{lem:dlltcqs-idefj-role}.

If $\wits\aont{\ujel{eS_0}}\neq\emptyset$, then $(e^j,\ujel{eS_0})\in R^{\Imc_j}$ implies $(e^j,\ujel{eS_0})\in R^{\Jmc_i}$ by the definition of $\Jmc_i$.
%
We then get $e^i=\ahomtwo(t)\in(\exists S_0)^{\Imc_i}$ by Claim~\ref{eq:match-basic-concepts}. 
According to Definition~\ref{def:dll-io}, the element $\uiel{eS_0}$ then exists and the pair $(e^i,\uiel{eS_0})$ is related in $\Imc_i$
as $(e^j,\ujel{eS_0})$ is related in $\Imc_j$. 
And the latter tuple is interpreted in the same way in $\Jmc_i$. 
We can thus set $\ahomtwo(y):=\uiel{eS_0}$.

If $\wits\aont{\ujel{eS_0}}=\emptyset$, then the definition of~$\Jmc_i$ yields that, for all atoms $R(t,y)$ as above, there is a rigid role~$S$ such that $\aont\models S_0\sqsubseteq S$, $\aont\models S\sqsubseteq R$, and $(e^j,\ujel{eS_0})\in S^{\Jmc_i}$.
Note that $S_0$ must be flexible since otherwise $\exists S_0$ would be a witness for $\ujel{eS_0}$. 
Furthermore, since $(e^j,\ujel{eS_0})\in S_0^{\Imc_j}$, 
Lemma~\ref{lem:dll-io-elements} yields that the assertion
$\exists S_0(e)$ is a consequence of the basic concepts obtained from the assertions involving~$e$ in~\KR[j]. 
\ctwo{We now show that this is still the case if $\Amc_{\RF}$ is disregarded, by a case distinction on whether~$e$ belongs to $\NI(\atkb)$, $\NIA$, or $\NIT$.}

If $e\in\NI(\atkb)$, then any (rigid) basic concept assertion on $e$ that is a consequence of $\Amc_{\RF}$ and~\Omc must be contained in \AR, since \AR is a rigid ABox type and $\Imc_j$ is a model of both these ABoxes. 
Since $\Amc_{\RF}$ does not contain flexible assertions, $\exists S_0(e)$ is also a consequence of \KR[j] if $\Amc_{\RF}$ is disregarded.

If $e\in\NIA$ is of the form $e=a_x$ and \acqalphatwo is the CQ in which $x$ appears, then $\exists S_0(e)$ is a consequence of the assertions in $\rigcons{\QR}\cup\Amc_{\Q_{\iota(j)}}\cup\Amc_{\RF}$ (i.e., again, taking \aont into account). 
For $\Amc_{\RF}$, we thus consider a rigid concept assertions $\exists R'(a_x)$, $R'\in\NRM$, for which there is a flexible assertion $\exists S'(a_x)\in\RF$ such that $\aont\models S'\sqsubseteq R'$. By~\cond\ref{def:dlltcqs-rc:rf}, all those assertions $\exists R'(a_x)$ follow however from some set of assertions $\rigcons{\QR}\cup\Amc_{\Q_{\iota(j')}}$, $j'\in[0,n+k]$, and hence from $\rigcons{\QR}$. 
By the definition of $\rigcons{\QR}$, they are thus contained in $\rigcons{\QR}$, which means that $\Amc_{\RF}$ can be disregarded.

If $e\in\NIT$, then $\exists S_0(e)$ must similarly follow from ABox assertions by Lemma~\ref{lem:dll-io-elements}; particularly, it follows exclusively from $\Amc_{\RF}$ (and~\aont) because elements from \NIT do not occur in other ABoxes. Since $\Amc_{\RF}$ contains only rigid assertions, the corresponding rigid basic concepts constitute a witness for $\ujel{eS_0}$, which contradicts our assumption and yields $e\not\in\NIT$.

\ctwo{In all three cases, we have shown} the entailment required to apply the ``only if''-direction of \cond\ref{def:dlltcqs-rc:rf} to infer
that $\exists S_0(e)\in\Amc_{\RF}$.
Since $\Imc_i\models\Amc_{\RF}$, $(e^i,a_{eS_0}^i)\in S^{\Imc_i}$ holds for all rigid roles~$S$ as above. Given $\Imc_i\models\aont$, the above assumptions on \aont, and $\wits\aont{\ujel{eS_0}}=\emptyset$, $(e^i,a_{eS_0}^i)$ satisfies all the 
role atoms $R(t,y)$ in $\Imc_i$ that are mapped to $(e^j,\ujel{e S_0})$ by~\ahom. 
We can therefore define $\ahomtwo(y):=a_{e S_0}^i$.

It thus remains to consider the satisfaction of the atoms we left out in Phase~2) and, in particular, the other variables of~\acqalpha mapped by \ahom to elements of $\bigcup_{j=0}^{n+k}\adomp[\Imc_j]\un$. As described above, we can assume them to be related in a tree structure and also to the elements we focused on in 2).

\medskip
\ctwo{\noindent\emph{Phase~3)}} We finish the definition of \ahomtwo using an induction over the structures of unnamed elements in the image of the homomorphism~\ahom, starting with the elements we considered in Phase~2). 
For all variables $y$ with $\ahom(y)\in\adomp[\Imc_i]\un$, we can obviously set $\ahomtwo(y):=\ahom(y)$. We therefore only consider the case that $\ahom(y)\in\adomp[\Imc_j]\un$ and $j\neq i$ in the following. Note that this is valid for the induction by Fact~\ref{fact:dlltcqs-deltas} and the interpretation of roles in $\Jmc_i$, which show that elements from different sets $\adomp[\Imc_i]\un$ and $\adomp[\Imc_j]\un$ cannot be related in $\Jmc_i$.

Given the latter observations, we can also maintain the following invariant while finishing the construction of~\ahomtwo for the remaining variables $y$ (i.e., we do not have to satisfy the invariant at all for variables mapped by \ahom to elements from $\adomp[\Imc_i]\un$ since \ahomtwo is already defined for all variables directly connected to them); let $m$ denote the number of variables for which \ahomtwo is defined already at the respective moments in the induction: If $\ahom(y)=\ujel{\apath S_1}$, then either
\ctwo{%
\begin{align}
	&\text{$\wits\aont{\ujel{\apath S_1}}=\emptyset$, $\ahomtwo(y)=a_{\apath S_1}^i$, and $|\apath|< m$, or} \label{t:i} \\
	&\text{$\wits\aont{\ujel{\apath S_1}}\neq\emptyset$, $\ahomtwo(y)$ is of the form $\uiel{\apathtwo S_1}$, and $|\apathtwo|< m$.} \label{t:ii}
\end{align}%
}
As induction hypothesis, we assume that the (partial) definition of \ahomtwo satisfies all role atoms that only contain variables for which it is already defined and the invariant.
It can readily be checked that our definitions from Phase~2), which represent the base case, satisfy both these requirements.

To show that \ahomtwo is a homomorphism of \acqalpha into $\Imc_i$ 
for the variables mapped so far, it remains to consider the concept atoms. 
We assume $\ahom(y)=\ujel{\apath S_1}\in\adomp[\Imc_j]\un$ and $\ahomtwo(y)$ to be defined already and consider all concept atoms $A(y)\in\acqalpha$.
Since $\ujel{\apath S_1}\in A^{\Jmc_i}$, by Lemma~\ref{lem:dlltcqs-idef} either
\ctwo{%
\begin{align}
  &\text{there is a $\abcset\subseteq\BCr(\aont)$ with
$\ujel{\apath S_1}\in(\bigsqcap\abcset)^{\Imc_j}$ and
$\aont\models\bigsqcap\abcset\sqsubseteq A$, or} \label{t:ip} \\
  &\text{$\ujel{\apath S_1}\in A^{\Imc_j}$ and
$\wits\aont{\ujel{\apath S_1}}\neq\emptyset$.} \label{t:iip}
\end{align}%
}%
If Case~\eqref{t:ii} applies, meaning $\ahomtwo(y)=\uiel{\apathtwo S_1}$, then two applications of Lemma~\ref{lem:dll-deltaun} yield that $\aont\models\exists S_1^-\sqsubseteq A$ and
$\ahomtwo(y)=\uiel{\apathtwo S_1}\in A^{\Imc_i}$.
Otherwise, \eqref{t:ip} must hold because of $\ujel{\apath S_1}\in A^{\Jmc_i}$ by the definition of $\Jmc_i$. $\Imc_i\models\aont$ then implies $\ujel{\apath S_1}\in A^{\Imc_j}$, and Lemma~\ref{lem:dll-deltaun} yields $\aont\models\exists S_1^-\sqsubseteq\bigsqcap\abcset$. From $\ahomtwo(y)=a_{\apath S_1}^i$, we then get $a_{\apath S_1}^i\in(\bigsqcap\abcset)^{\Imc_i}$ by the definition of~$\Amc_{\RF}$.
Given $\Imc_i\models\aont$, we conclude that $\ahomtwo(y)\in A^{\Imc_i}$.

To continue the definition of~\ahomtwo, we consider an element $\ujel{\apath S_1S_2}\in\range(\ahom)$
and all role atoms $R(y,z)\in\acqalpha$ with $\ahom(y)=\ujel{\apath S_1}$ and
$\ahom(z)=\ujel{\apath S_1S_2}$. 
We hence can assume that \acqalpha contains a variable $x$ such that $\ahom(x)=\ujel{\apath S_1}$ for which \ahomtwo has been defined already. 
%
%
For all these role atoms, 
we have that $\aont\models S_2\sqsubseteq R$ by the definition of $\Jmc_i$ and Definition~\ref{def:dll-io}.
\ctwo{Once again, we make a case distinction whether $\wits\aont{\ujel{\apath S_1S_2}}$ is empty or not.}

If $\wits\aont{\ujel{\apath S_1S_2}}=\emptyset$, then 
$\wits\aont{\ujel{\apath S_1}}=\emptyset$ holds, and hence Case~\eqref{t:i} of our invariant applies, meaning $\ahomtwo(y)=a_{\apath S_1}^i$.
In addition, for every of the role atoms $R(y,z)$ under consideration, there is a rigid role~$S$ such that $\aont\models S\sqsubseteq R$ and $(\ujel{\apath S_1},\ujel{\apath S_1S_2})\in S^{\Imc_j}$ by the definition of~$\Jmc_i$. Definition~\ref{def:dll-io} then yields $\aont\models S_2\sqsubseteq S$.
Together with $\aont\models\exists S_1^-\sqsubseteq \exists S_2$ and the given bound on the length of \apath, this implies that the element $a_{\apath S_1S_2}^i$ exists in $\Amc_{\RF}$, in all the rigid assertions $S(a_{\apath S_1}^i,a_{\apath S_1S_2}^i)$. Thus, $\Imc_i\models\Amc_{\RF}$ and $\Imc_i\models\aont$, because of the fact that $\aont\models S\sqsubseteq R$ yield that
%
%
$(a_{\apath S_1},a_{\apath S_1S_2})$ satisfies all 
all relevant role atoms $R(y,z)$.
We set $\ahomtwo(z):=a_{\apath S_1S_2}$ for all such variables $z$ and obtain Case~\eqref{t:i} of our invariant.

\ctwo{In the remaining case that $\wits\aont{\ujel{\apath S_1S_2}}\neq\emptyset$, we know that} 
$(\ujel{\apath S_1},\ujel{\apath S_1S_2})\in S_2^{\Jmc_i}$ since the pair
is contained in $S_2^{\Imc_j}$ and also
$\aont\models\exists S_1^-\sqsubseteq\exists S_2$ by the definitions of the two interpretations.
\ctwo{We make one final case distinction on whether $\wits\aont{\ujel{\apath S_1}}\neq\emptyset$ is empty or not.}

If $\wits\aont{\ujel{\apath S_1}}\neq\emptyset$, then \eqref{t:ii} implies that $\ahomtwo(y)$ is of the form $\ahomtwo(y)=\uiel{\apathtwo S_1}$.
Given $\Imc_i\models\aont$, the element $\uiel{\apathtwo S_1S_2}$ exists, and $(\uiel{\apathtwo S_1},\uiel{\apathtwo S_1S_2})$ satisfies all role atoms
$R(y,z)$ of the above form in~$\Imc_i$ by Definition~\ref{def:dll-io}.
Hence, the definition $\ahomtwo(z):=\uiel{\apathtwo S_1S_2}$ for all such variables~$z$ maintains the invariant (Case~\eqref{t:ii}).

\ctwo{Finally,} if $\wits\aont{\ujel{\apath S_1}}=\emptyset$, then there must be a set $\abcset\subseteq\BCr(\aont)$ with
$\aont\models\bigsqcap\abcset\sqsubseteq\exists S_2$ and
$\ujel{\apath S_1}\in(\bigsqcap\abcset)^{\Imc_j}$ by Definition~\ref{def:dlltcqs-witness}, which implies
$\aont\models\exists S_1^-\sqsubseteq\bigsqcap\abcset$ by
Lemma~\ref{lem:dll-deltaun}.
Since \eqref{t:i} yields that $\ahomtwo(y)$ is of the form $a_{\apath S_1}^i$, the definition of~$\Amc_{\RF}$ implies $\abcset(a_{\apath S_1})\subseteq\Amc_{\RF}$, and $\Imc_i\models\Amc_{\RF}$ then yields
$\ahomtwo(y)\in(\bigsqcap\abcset)^{\Imc_i}$.
Together with $\Imc_i\models\aont$, we get
$\ahomtwo(y)\in(\exists S_2)^{\Imc_i}$.
\ctwo{We can thus argue as in the previous case and set $\ahomtwo(z):=\uiel{a_{\apath S_1}S_2}$.}

This concludes the construction of~$\ahomtwo$ and shows that it is a homomorphism
of~$\acqalpha$ into~$\Imc_i$, which contradicts~\ref{obs:2}.
\end{proof}

\section{Proofs for Section~\ref{sec:dlltcqs-r-sat-rew}}

	\lemFOrSattwo*
	\begin{proof}
		($\Leftarrow$) This direction is trivial.
		($\Rightarrow$)
		We assume $(\AR,\QR,\QRn,\RF)$ to be an r-complete tuple, define
		$$ \Bphi:=\{ B(a)\in\AR\cup\RF \mid B\in\BC(\aont),a\in\NI(\atcq) \}.$$
		 and show that the tuple $(\ARs,\QRs,\QRsn,$ $\RFs)$ is r-complete as well.
		%
		We focus on the conditions in Definition \ref{def:dlltcqs-r-complete}.
		Our tuple obviously satisfies Conditions~\ref{def:dlltcqs-rc:cqcons} and~\ref{def:dlltcqs-rc:qrn} by construction.
		%

		For Conditions~\ref{def:dlltcqs-rc:consistent}, \ref{def:dlltcqs-rc:negcqs}, and~\ref{def:dlltcqs-rc:witnesses}, we describe a model of \KRis that can be homomorphically embedded into the canonical interpretation of the consistent KB~$\KR[i]$ that exists for the given tuple $(\AR,\QR,\QRn,\RF)$, since it satisfies \cond\ref{def:dlltcqs-rc:consistent}.
		Observe that all positive assertions contained in
		one of the ABoxes of \KRis must also be contained in $\KR[i]$:
		\begin{itemize}
			\item $\rigcons{\QRs}\subseteq\rigcons{\QR}$ follows from the facts that both \QRs and \QR satisfy \cond\ref{def:dlltcqs-rc:cqcons} and \QRs is the minimal set satisfying that condition.
			\item $\Amc_{\RFaux}\cup\Amc_{\RFphi}\cup\Amc_{\RFother}\subseteq\Amc_{\RF}$ is a consequence of the following observations. 
			By definition, every $\exists S(b)\in\RFaux$ is a consequence of a KB $\langle\aont, \Amc_{Q_j}\rangle$, $\ax_j\in\as$. Since the given tuple satisfies \cond\ref{def:dlltcqs-rc:rf} and we have $\iota(n+j)=j$ in that definition, the assertion is also contained in~\RF.
			$\Amc_{\RFphi}\subseteq\Amc_{\RF}$ follows from the construction.
			Each $\exists S(b)\in\RFother$ follows, 
			by definition, from \ARs together with some~$\Amc_i$ (and~\aont). Since this entailment does not depend on~\Bphi or $\rigcons{\QRs}$, and the given tuple satisfies \cond\ref{def:dlltcqs-rc:consistent}, we know that \AR must contain all assertions relevant for this entailment; since \cond\ref{def:dlltcqs-rc:rf} is satisfied, $\exists S(b)$ is thus also contained in~\RF.
			
			\item 
			All positive assertions in \ARs have to be positive in \AR, too, by the
			definition of \ARs and the observations in the previous items. More
			precisely, $\ASr(\atkb)\cap\big(\Bphi\cup\rigcons{\QRs}\big)\subseteq\AR$,
			and all other positive assertions in~\ARs are implied by these initial
			assertions together with some~$\Amc_i$, $i\in[0,n]$. Since each~$\KR[i]$
			is consistent by assumption, the rigid ABox type \AR also contains the
			latter rigid consequences.
		\end{itemize}
		
		Hence, any difference between $\KR[i]$ and \KRis (i.e., focusing on the assertions in \KRis and disregarding additional assertions in $\KR[i]$) must be due to negative rigid assertions in \ARs that occur positively in \AR (because \AR is a rigid ABox type) and may cause the inconsistency of \KRis. By providing a model for \KRis, we show that such assertions cannot exist.
		Since the given tuple satisfies \cond\ref{def:dlltcqs-rc:consistent} and $\KR[i]$ contains 
		all positive assertions occurring in \KRis, the KB \ctwo{$\KRisp$}, obtained from \KRis by 
		dropping the negative assertions, is also consistent. We focus on the canonical interpretation~\Imc of that KB and show that it also satisfies \KRis. We consider negative role and basic concept assertions in \KRis.
		\begin{itemize}
			\item 
			Let $\lnot R(\indone,\indtwo)\in\ARs$.
			We prove $\Imc\not\models R(\indone,\indtwo)$ by contradiction, assuming that some of the ABoxes in $\KRisp$ contains a role assertion $S(\indone,\indtwo)$ such that $\aont\models S\sqsubseteq R$. We thus consider the positive assertions in $\ARs$, $\rigcons{\QRs}$, $\Amc_{Q_{\iota(i)}}$, $\Amc_{\RFs}$, and~$\Amc_i$.
			
			If $S$ is rigid, then we can disregard $\Amc_{Q_{\iota(i)}}$ and $\Amc_{\RFs}$, since all rigid assertions in the former ABox are also contained in $\rigcons{\QRs}$ and because $\Amc_{\RFs}$ does not contain assertions on two elements of~$\NI(\atkb)$.
			Hence, by definition, the assertion $R(a,b)$ is contained in~\ARs. Since \ARs is a rigid ABox type (\ie exactly one of $R(\indone,\indtwo)$ and $\lnot R(\indone,\indtwo)$ is contained in it), that contradicts the assumption.
			
			If $S$ is flexible, then $S(\indone,\indtwo)$ occurs in $\Amc_i$ or $\Amc_{Q_{\iota(i)}}$, which implies that $\ax_{\iota(i)}\in\as$. By the definition of~\ARs and $\rigcons{\QRs}$, based on \QRs and Definition~\ref{def:dlltcqs-consequences}, we then get $R(\indone,\indtwo)\in\ARs$ or $R(\indone,\indtwo)\in\rigcons{\QRs}$, which also implies $R(\indone,\indtwo)\in\ARs$ and thus a contradiction to the assumption.
			
			\item Let $\lnot B(a)\in\ARs$. If $a\in\NI(\atcq)$, then $\lnot B(a)\in\AR$ by the definitions of~\Bphi and~\ARs. Since \AR is a rigid ABox type and the given tuple satisfies \cond\ref{def:dlltcqs-rc:consistent}, Lemma~\ref{lem:dll-iomodel} yields $\Imc'\not\models B(a)$, assuming
			$\Imc'$ to be the canonical interpretation of \KR[i]. By our above observation on the 
			positive assertions in \KRis, this interpretation must also satisfy $\KRisp$. Hence, $B(a)$ cannot be a consequence of that KB, and Lemma~\ref{lem:iosatq} yields $\Imc\not\models B(a)$.
			
			If $a\not\in\NI(\atcq)$, then we again proceed by contradiction and assume $\Imc\models B(a)$. Lemma~\ref{lem:dll-io-elements} then yields that there are positive assertions about~$a$ in $\ARs\cup\Amc_{\RFother}\cup\Amc_j$, $j\in[0,n]$, that together imply~$B$; the other ABoxes do not contain assertions on such names. By that lemma, we can also disregard $\Amc_{\RFother}$ since \ARs already contains all relevant assertions, \ctwo{\ie if \smash{$\Amc_{\RFother}$} contains a rigid role assertion $R(a,e)$, then it must follow from some flexible concept assertion $\exists S(a)$ entailed by $\langle\aont,\ARs\cup\Amc_i\rangle$ for some~$i$, and hence $\exists R(a)$ is already included in~\ARs.}
			Then, Lemma~\ref{lem:ars-other-consequences} implies that we can actually focus on \ARs alone and obtain $B(a)\in\ARs$. This contradicts the assumption since \ARs is a rigid ABox type.
		\end{itemize}
		Since \Imc is a model of $\KRisp$ by Lemma~\ref{lem:dll-iomodel} and we have shown that it satisfies all negative assertions in \ARs, it is also a model of \KRis. Hence, our tuple satisfies  \cond\ref{def:dlltcqs-rc:consistent}.
		
		If one of Conditions~\ref{def:dlltcqs-rc:negcqs} and~\ref{def:dlltcqs-rc:witnesses} is contradicted, then Lemma~\ref{lem:iosatq} yields that there is a homomorphism of the CQ that causes the contradiction into \Imc.
		Again, the above observation that the positive assertions contained in
		\KRis must be contained in $\KR[i]$ is important.
		By Definition~\ref{def:dll-io} and the semantics, every such homomorphism into \Imc is also a homomorphism into the canonical interpretation of the positive part of $\KR[i]$. This contradicts the assumption that $\KR[i]$ satisfies Conditions~\ref{def:dlltcqs-rc:negcqs} and~\ref{def:dlltcqs-rc:witnesses}, again by 
		Lemma~\ref{lem:iosatq}.
		
		It remains to consider \cond\ref{def:dlltcqs-rc:rf}, and we make a case distinction between the three parts of~\RF. Observe that, w.r.t.\ the ABoxes considered in that condition, the individual names occurring in \RFaux can only occur within 
		$\rigcons{\QRs}\cup\bigcup_{\ax\in\as}\Amc_{Q_{\ax}}$, and those in \RFother only in $\ARs\cup\bigcup_{0\le i\le n}\Amc_i$.
		\begin{itemize}
			\item We consider the assertions in \RFaux.
			($\Rightarrow$) For every $\exists S(a_x)\in\RFaux$, and thus $a_x\in\NIA$, the definition of~\RFaux directly yields that there is a $\ax_j\in\as$ such that $\langle\aont,\Amc_{Q_j}\rangle\models\exists S(a_x)$. This solves the claim given that \cond\ref{def:dlltcqs-rc:rf} considers $\iota$ to be such that $j=\iota(n+j)$.
			
			($\Leftarrow$) If there is a world $\ax\in\as$ such that $\langle\aont,\rigcons{\QRs}\cup\Amc_{Q_\ax}\rangle\models\exists S(a_y)$, $a_x\in\NIA$, then Lemma~\ref{lem:dll-io-elements} implies that $\exists S(a_x)$ can only follow from assertions involving~$a_x$. But $a_x$ can be associated to a unique query $\acqalpha_j\in\tcqcqs\atcq$ that contains the variable~$x$ and corresponding ABox $\Amc_{\acqalpha_j}$; no other such ABoxes contains assertions on $a_x$. This implies $\acqalpha_j\in\QRs$. By the definition of \QRs, there is a $\ax'\in\as$ with $p_j\in \ax'$ and, in particular, $\Amc_{Q_{\ax'}}$ implies all
			assertions on~$a_x$ in $\rigcons{\QRs}$. This shows that $\exists S(a_x)$
			already follows from $\Amc_{Q_{\ax'}}$ (and~\aont), which yields $\exists S(a_x)\in\RFaux$.
			
			\item For \RFother, its definition directly yields the claim.

			\item We consider the assertions in \RFphi.
			Since the given tuple satisfies \cond\ref{def:dlltcqs-rc:rf} and, by the definition of \RFphi, \RF and \RFphi coincide \wrt $\NI(\atcq)$, we have that there is $i\in[0,n]$ such that $\langle\aont,\AR\cup\rigcons{\QR}\cup\Amc_{Q_{\iota(i)}}\cup\Amc_i\rangle\models \exists S(a)$ iff $\exists S(a)\in\RFphi$ for all $a\in\NI(\atcq)$.
			
			($\Leftarrow$) This direction then directly follows from the above observations that all positive assertions in \ARs occur in \AR and $\rigcons{\QRs}\subseteq\rigcons{\QR}$.
			
			($\Rightarrow$) Since \AR is a rigid ABox type, the fact that the given tuple 
			satisfies \cond\ref{def:dlltcqs-rc:consistent} yields that all basic concept assertions that can be derived from assertions in \AR or $\rigcons{\QR}$ are also contained in \AR; note that these ABoxes both contain only rigid assertions. Moreover, by definition, \Bphi contains all rigid basic concept assertions on elements of $\NI(\atcq)$ from~\AR. Hence, Lemma~\ref{lem:dll-io-elements} yields that 
			$\exists S(a)\in\RFphi$ implies that there is an $i\in[0,n]$ with
			$\langle\aont,\ARs\cup\rigcons{\QRs}\cup\Amc_{Q_{\iota(i)}}\cup\Amc_i\rangle\models \exists S(a).$
		\end{itemize}
		Thus, \cond\ref{def:dlltcqs-rc:rf} is also satisfied.
	\end{proof}

\LemPrefThreeARs*
\begin{proof}
	By Lemma~\ref{lem:dll-iomodel}, we have
	$\langle\aont,\ARs\cup\Amc_i\rangle\models\ground(\acq)$ iff
	$\langle\aont,\ARs_{\ctwo{\NBC}}\cup\Amc_i\rangle\models\ground(\acq)$.
	Thus, it is sufficient to show the claim for all $\ARs_j$ by induction
	on~$j\in[0,\ctwo{\NBC}]$.
	Moreover, since our rewriting is based on $\dllhhornPerfRef\acq\aont$ by
	replacing all atoms individually, by Definition~\ref{def:db-int},
	Lemma~\ref{lem:dllhhorn-perfect-ref-correct}, and the substitution lemma for
	first-order logic it suffices to show that
	\[
		\ahom(\aatom(\vec{t}))\in\ARs_j\cup\Amc_i
		\text{ iff }
		\tdbfbs\models\ahom(\alpha^j(\vec{t},i))
	\]
	holds for all $i\in[-1,n]$, all atoms $\alpha(\vec{t})$, where
	$\alpha^j(\vec{t},i)$ is the corresponding replacement used for the definition
	of $\prefth{j}(i)$, and all functions
	$\ahom\colon\NV(\alpha(\vec{t}))\to\NI(\Kmc)$.

	For the base case $j=0$, this is easy to see by Definition~\ref{def:tdb-int}
	and a case analysis on whether
	$\ahom(\aatom(\vec{t}))\in \ARs_0 =
		\ASr(\atkb) \cap \big(\Bphi\cup\rigcons{\QRs}\big)$
	or $\ahom(\aatom(\vec{t}))\in\Amc_i$.

	Assume that the claim holds for an arbitrary $j\in[0,\ctwo{\NBC}-1]$.
	Then the case $\ahom(\aatom(\vec{t}))\in\Amc_i$ is again captured by $\Asf(\vec{t},i)$ and, by induction, the satisfaction of
	$ \pi\big(\exists p.\prefth[\alpha(\vec{t})]{j}(p)\big) $
	in \tdbfbs is equivalent to the existence of a $p\in[0,n]$ such that
$ \langle\Omc,\ARs_{j}\cup\Amc_p\rangle\models\ahom(\alpha(\vec{t})), $
	which is exactly the condition for $\ahom(\alpha(\vec{t}))$ being included in
	$\ARs_{j+1}$.
\end{proof}

\lemdlltcqsRews*
\begin{proof}
	By Lemma~\ref{lem:dllhhorn-perfect-ref-correct},
	$\langle\aont,\AKRs\cup\Amc_i\rangle$ is inconsistent iff we have
	$\db{\AKRs\cup\Amc_i}\models\dllhhornQUnsat\aont$, and it entails \acq iff
	$\db{\AKRs\cup\Amc_i}\models\dllhhornPerfRef\acq\aont$.
	It is thus sufficient to show that, for any Boolean CQ
	$\omega=\exists x_0.\ldots\exists x_{\ell-1}.\psi$,
	we have $\db{\AKRs\cup\Amc_i}\models\omega$ iff
	$\tdbfbs\models\omega_1\lor\dots\lor\omega_{2^\ell-1}$, where the latter disjunction represents our rewriting of $\omega$.

	($\Rightarrow$) We assume that there is a homomorphism~$\ahom$ of~$\omega$
	into $\db{\AKRs\cup\Amc_i}$, and show that \ahom is also a homomorphism of one of the formulas~$\omega_k$ into \tdbfbs.
	We choose the disjunct
	\begin{align*}
	\omega_k:={}&\exists'x_0.\dots\exists'x_{\ell-1}.
	\rep{\acqtwo}\land\acqtwo_{\mathsf{filter}}\text{ where}\\
	k={}&b_0\cdot 2^0+\ldots+b_j\cdot 2^j+\ldots+b_{\ell-1}\cdot 2^{\ell-1}
	\end{align*}
	with $b_j=0$ iff $\ahom(x_j)\in\NIA\cup\NIT$, for all $j\in[0,\ell-1]$.
	Moreover, for each $j$ with $b_j=0$, we consider the disjunct of~$\omega_k$
	in which~$x_j$ is equal to
	\begin{itemize}
		\item the prototype $a_{[S]\rho}$, if
			$\pi(x_j)=a_{b\rho}\in\NI(\Amc_{\exists S(b)})\setminus\{b\}$ with
			$b\in\NI(\Kmc)\setminus\NI(\atcq)$, or
		\item the auxiliary individual name $\pi(x_j)\in\NIA\cup\NITm$.
	\end{itemize}

	We show that, the formula
	$\ahom\big(\rep{\acqtwo}\land\acqtwo_{\mathsf{filter}}\big)$ is satisfied in
	\tdbfbs.
	First, consider a conjunct $\repo{s=t}$ of~$\psi_\filter$, for which there
	must exist two role atoms $R(x_j,s)$, $S(x_j,t)$ in~$\omega$ such that $x_j$ is a prototype element in the disjunct of~$\omega_k$ we
	consider, and $x_k,x_p$ are not.
	By construction, this means that
	$\pi(x_j)\in\NI(\Amc_{\exists S(b)})\setminus\{b\}$ for some
	$b\in\NI(\Kmc)\setminus\NI(\atcq)$, and that this does not hold for~$s$
	and~$t$.
	Since $R(\pi(x_j),\pi(s))$ and $S(\pi(x_j),\pi(t))$ are satisfied in
	$\db{\AKRs\cup\Amc_i}$, the elements of $\NI(\Amc_{\exists S(b)})\setminus\{b\}$ only
	occur in $\Amc_{\RFother}$, and the only other individual name that occurs
	in~$\Amc_{\exists S(b)}$ is~$b$, we must have $\pi(s)=b=\pi(t)$, \ie the
	formula $\repo{s=t}$ is satisfied by~$\pi$ in \tdbfbs.

	It remains to consider each atom~$\aatom(\vec{t})$ in~$\omega$ and show that
	$\ahom\big(\rep{\aatom(\vec{t})}\big)$ is satisfied in \tdbfbs.
	By assumption, $\ahom(\aatom(\vec{t}))\in\AKRs\cup\Amc_i$, and we now consider the specific parts of $\AKRs\cup\Amc_i$ that $\aatom(\vec{t})$ is mapped into.
	\begin{itemize}
		\item If $\ahom(\aatom(\vec{t}))\in\ARs\cup\Amc_i$, then 
			$\vec{t}$ cannot contain a variable~$x_j$ with $b_j=1$, since these ABoxes
			only contain individual names in~$\NI(\Kmc)$.
			If $\aatom$ is rigid, then due to
			$\langle\aont,\ARs\cup\Amc_i\rangle\models\ahom(\aatom(\vec{t}))$
			the part $\repone{\aatom(\vec{t})}=\prefth[\aatom(\vec{t})]{}(i)$
			of the formula $\rep{\aatom(\vec{t})}$ is satisfied in \tdbfbs by
			Lemma~\ref{lem:pref3-ars}.
			If $\aatom$ is flexible, then we must have
			$\ahom(\aatom(\vec{t}))\in\Amc_i$, and thus a corresponding disjunct $\B(t_1,i)$
			or $\R(t_1,t_2,i)$ (depending on the shape of~\aatom) is satisfied in
			\tdbfbs by Definition~\ref{def:tdb-int}.
		\item If
			$\ahom(\aatom(\vec{t})) \in
				\rigcons{\QRs}\cup\Amc_{Q_\ax}\cup\Amc_{\RFaux}\cup\Amc_{\RFphi}$,
			then there is a disjunct of $\reptwo{\aatom(\vec{t})}$ that is of the form
			$\repo{t_1=\ahom(t_1)}$ or
			$\repo{t_1=\ahom(t_1)}\land\repo{t_2=\ahom(t_2)}$,
			which is obviously satisfied under~\ahom.
		\item If $\ahom(\aatom(\vec{t}))\in\Amc_{\RFother}$, then either
			(i)~$\pi(\vec{t})$ contains only elements from 
			$\NI(\Amc_{\exists S(b)})\setminus\{b\}$ with 
			$b\in\NI(\Kmc)\setminus\NI(\atcq)$, or (ii)~it contains both one element
			from this set and~$b$ itself.
			In both cases, we have $\exists S(b)\in\RFother$.
			Moreover, by our construction, in~$\omega_k$ the elements of
			$\NI(\Amc_{\exists S(b)})\setminus\{b\}$ are considered via the
			corresponding prototypes.
			In case~(i), we hence have $\aatom(\vec{t})\in\Amc_{\exists S}$, and thus
			\[ \repthree{\aatom(\vec{t})} =
				\exists x.\rep[x]{\exists S} =
				\exists x.\exists p.\prefth[\exists S(x)]{}(p) \land
					\bigwedge_{a\in\NI(\atcq)}(x\neq a)
			\]
			can be satisfied by mapping~$x$ to~$b$ and~$p$ to~$j$ such that
			$\langle\aont,\ARs\cup\Amc_j\rangle\models\exists S(b)$; note that such a time point must exist
			by the definition of~\RFother. This is correct due to
			Lemma~\ref{lem:pref3-ars}.
			In case~(ii), we can similarly show that
			$\repthree{\aatom(\vec{t})}=\rep[t_1]{\exists S}$ is satisfied since
			$\pi(t_1)=b$ holds without loss of generality (if $\pi(t_2)=b$, we can
			consider the equivalent atom $\alpha^-(t_2,t_1)$ instead of
			$\alpha(t_1,t_2)$).
	\end{itemize}

	($\Leftarrow$)
	We assume that \tdbfbs satisfies one of the disjuncts~$\omega_k$,
	$k\in[0,2^\ell-1]$, via a homomorphism~\ahom, and extend~\ahom to the
	variables~$x_j$ with $b_j=1$ in such a way that $\ahom(\omega)$ is satisfied
	in \db{\AKRs\cup\Amc_i}.
	We consider a satisfied disjunct of~$\omega_k$ that corresponds to some
	assignment of the variables~$x_j$ with~$b_j=1$ to elements of
	$\NIA\cup\NITm\cup\NIP$.
	If $b_j=1$ and $x_j$ is considered to be an element of $\NIA\cup\NITm$ in this
	disjunct, then we set $\ahom(x_j)$ to this element.

	If $b_j=1$ and $x_j\in\NIP$ in this disjunct, then we consider the largest
	connected subset of atoms in~$\omega$ that contains~$x_j$ and for which at
	least one variable in each atom is considered to be an element of~\NIP.
	Since, for each of these atoms, the corresponding formula in
	$\repthree{\aatom(\vec{t})}$ must be satisfied, we know that they can all be
	mapped into an ABox of the form~$\Amc_{\exists S}$.
	If this mapping does not involve the root of that ABox, all these rewritings
	are of the form $\exists x.\rep[x]{\exists S}$, which by
	Lemma~\ref{lem:pref3-ars} and the definition of~$\RFother$ implies that there
	is an element~$b\in\NI(\Kmc)\setminus\NI(\atcq)$ such that the above atoms can
	be satisfied in~$\Amc_{\exists S(b)}$ instead of~$\Amc_{\exists S}$.
	In this case, we arbitrarily choose one such~$b$ and let $\ahom$ map each such
	variable~$x_j$ to the corresponding element in~$\NI(\Amc_{\exists S(b)})$.
	Otherwise, for at least one of the atoms, a formula of the form
	$\rep[t_1]{\exists S}$ must be satisfied in \tdbfbs, which means that
	$\pi(t_1)$ is an element of $\NI(\Kmc)\setminus\NI(\atcq)$.
	Similar arguments as above yield that the ABox $\Amc_{\exists S(\pi(t_1))}$ is
	part of $\Amc_{\RFother}$.
	Moreover, by $\psi_\filter$, any terms~$t_1$ occurring in such a way
	in~$\omega$ must be mapped by~$\pi$ to the \emph{same} element~$b$ of
	$\NI(\Kmc)\setminus\NI(\atcq)$.
	Hence, we can extend~$\pi$ to the variables above by considering the
	corresponding elements in~$\Amc_{\exists S(b)}$.

	To see that this definition of~$\ahom$ is correct, consider an arbitrary atom
	$\aatom(\vec{t})$ in~$\omega$ and its rewriting $\rep{\aatom(\vec{t})}$ in the
	disjunct of~$\omega_k$ that we considered above, \ie it is satisfied in
	\tdbfbs under~\ahom.
	If this is the case because $\repthree{\aatom(\vec{t})}$ is satisfied in
	\tdbfbs, we have argued \ctwo{in the previous paragraph} that $\ahom(\aatom(\vec{t}))$ is
	satisfied in \db{\AKRs\cup\Amc_i}.
	If $\reptwo{\aatom(\vec{t})}$ is satisfied, then we directly obtain
	$\ahom(\aatom(\vec{t})) \in
		\rigcons{\QRs}\cup\Amc_{Q_\ax}\cup\Amc_{\RFaux}\cup\Amc_{\RFphi}$.
	Since all of these ABoxes are part of $\AKRs\cup\Amc_i$, the claim follows.

	Finally, consider the case that $\repone{\aatom(\vec{t})}$ is satisfied in
	\tdbfbs.
	This can only be the case if no element of $\vec{t}$ is considered to be in
	$\NIA\cup\NITm\cup\NIP$, \ie $\vec{t}$ contains only variables or individual
	names from $\NI(\atcq)$.
	If $\alpha$ is flexible, the definition of the rewriting and
	Definition~\ref{def:tdb-int} yield that $\ahom(\aatom(\vec{t}))\in\Amc_i$.
	Similarly, if $\alpha$ is rigid, we obtain $\ahom(\aatom(\vec{t}))\in\ARs$
	from Lemmas~\ref{lem:pref3-ars} and~\ref{lem:ars-other-consequences}.
\end{proof}



\section{Proofs for Section~\ref{sec:dlltcqs-dc}}

\dlltcqsDCLB*
\begin{proof}
It is well-known that every finite monoid~\amonoid (i.e., a finite, closed set having an associative binary operation and an identity element) can be directly translated (in logarithmic time) to a deterministic finite automaton (DFA) that decides the word problem for that monoid, by treating the elements of \amonoid as states and considering transitions according to the associative operation.%
%
\footnote{We refer the reader to \cite{BaISt90:nc1uniformity} for details about monoids, groups, and the word problem in that context.} 
Moreover, for some such monoids (e.g., the group S5), this problem is complete for \DLogTime-uniform \NCone under \DLogTime-uniform \ACzero reductions \cite[\citecor10.2]{BaISt90:nc1uniformity}; and \DLogTime-uniform \NCone equals \ALogTime \cite[\citelem7.2]{BaISt90:nc1uniformity}. 

We hence can establish \ALogTime-hardness by considering an arbitrary DFA~\aautom 
and reducing its word problem to TCQ entailment in logarithmic time.
For that, we adapt a construction of \cite[\citethm9]{AKKRWZ-IJCAI15:omtqs}. 

Let $\aautom$ be a tuple of the form $(\automQ,\automA,\automT,\automqs,\automF)$, specifying the set of states~\automQ, the alphabet~\automA, the transition relation~\automT, the initial state~\automqs, and the set of final states~\automF.
Because we consider data complexity, the task is to specify 
a TCQ~$\atcq_{\aautom}$ based on~\aautom and an ABox sequence~$\afbs_w$ based on an arbitrary input word $w\in\automA^*$ such that
\aautom accepts $w$ iff $\langle\emptyset,\afbs_w\rangle\models\atcq_{\aautom}.$
%
%
We consider concept names $A_\sigma$ and $Q_q$ for all characters $\sigma$ of the input
alphabet~\automA and states $q\in\automQ$, respectively, and
define the following TCQ:
\[ \atcq_{\aautom}:=
  \Boxm\Big(\bigwedge_{q\rightarrow_{\sigma}q'\in\automT}\big(\big(Q_q(a)\land A_\sigma(a)\big)
    \rightarrow\Next Q_{q'}(a)\big)\Big)
      \rightarrow\bigvee_{q_f\in\automF} Q_{q_f}(a). \]
      
For a given input word $w = \sigma_0 \dots\sigma_{n-1} $, we then define the sequence $\afbs_w=(\afb_i)_{0\le i< n}$ of ABoxes as follows: $\afb_0:=\{Q_{\automqs}(a)\}$ and, for all $i\in[0,n]$, $\afb_i:=\{ A_{\sigma_i}(a)\}$.
It is easy to see that this reduction can be computed in logarithmic time.%

Given that the semantics of TCQ entailment focus on time point $n$, it can readily be checked that the model of $\langle\emptyset,\afbs_w\rangle$ that satisfies the premise of~$\atcq_{\aautom}$ at $n$ represents the run of~\aautom on~$w$. Observe that there is only one such model relevant for entailment since \aautom is deterministic.
Hence, \aautom accepts $w$ iff all models of $\langle\emptyset,\afbs_w\rangle$ that
satisfy the premise also satisfy the disjunction~$\bigvee_{q_f\in\automF}Q_{q_f}(a)$ at $n$. This is equivalent
to the entailment $\langle\emptyset,\afbs_w\rangle\models\atcq_{\aautom}$.
\end{proof}

\thmdlltcqsDCAalogtime*
\begin{proof}
  The ATM \atm accepts the input~$n$ and~\tdbfbs (in logarithmic time) iff there are sets
	$\as=\{\ax_1,\dots,\ax_k\}\subseteq 2^{\pv}$ and
	$\Bphi\subseteq\{B(a)\mid B\in\BC(\aont), a\in\NI(\atcq)\}$, a valuation $v\in\Vmc$, and types $\atype_0,\dots,\atype_n$ as follows, where $w_i:=\atype_i\cap\pv$:
	\begin{itemize}
		\item $\atype_0$ is initial and $\Pmc^v\subseteq\atype_n$;
		\item for every $i\in[0,n]$, the pair $(\atype_i,\atype_{i+1})$ is
		t-compatible;
		\item $w_n\in\atmfut{\as}{v}$;
    \item for every $i\in[0,n]$, we have $w_i\in\as$;
		\item for every $i\in[0,n]$, we have $\tdbfbs\models\rsat[w_i](i)$;
		\item for all $S\in\NFRM(\aont)$ and $a\in\NI(\atcq)$, we have:\\
		$\exists S(a)\in\Bphi$ iff there is an $i\in[0,n]$ such that
		$\tdbfbs\models\rewPRef{\exists S(a)}$ (w.r.t.~$w_i$). 
    \item for every $\ax\in\as$, we have $\tdbfbs\models\rsat(-1)$;
	\end{itemize}
	By Lemmas~\ref{lem:ltl-periodic-model} and~\ref{lem:dlltcqs-atm-split}, the first four points are equivalent to the existence of a set~\as and worlds~$w_i$ as above and an LTL-structure~\altlint as follows:
	\begin{itemize}
		\item \altlint only contains worlds from~\as,
		\item \altlint starts with $w_0,\dots,w_n$, 
		\item $\altlint,n\models\pa{\atcq}$.
	\end{itemize}
	Moreover, because of the condition that each $w_i$ is an element of~\as, the sequence
	$w_0,\dots,w_n$ can equivalently be expressed by a mapping
	$\iota\colon[0,n]\to[1,k]$ such that that $w_i=\ax_{\iota(i)}$ for all $i\in[0,n]$.
  Finally, by Definition~\ref{def:tcqs-t-sat} and Lemmas~\ref{lem:tcq-sat-iff}, \ref{lem:dlltcqs-iff-s-r-consistent}, \ref{lem:fo-r-sat2}, and~\ref{lem:fo-r-sat1}, the above items characterize the satisfiability of~\atcq w.r.t.~\Kmc.
	The claim now follows from the fact that the class \ALogTime is closed under
	complement (see~\cite[\citethm2.5]{alternation}).
\end{proof}


\section{Proofs for Section~\ref{sec:exdlltcqs}}

\lemKromTransf*
\begin{proof}
	($\Rightarrow$) We assume $\aint\not\models\lnot\acq$, which yields $\aint\models\acq$, and hence that there is a corresponding homomorphism by Definition~\ref{def:tcqs-semantics}.
	Observe that the atoms in the CQ~\acq always refer to the concepts and roles of the corresponding CI $C\sqsubseteq D$ in the same way, so that $C$ and $\lnot D$ are modeled in the CQ. Thus, the shape of~\acq together with our assumption that \aint satisfies the CIs w.r.t.\ $\overline{D}$ and the semantics of the constructor $\forall$%
	yield that there is an element~\el in the domain of \aint such that $\el\in C^\aint$ and $\el\not\in D^\aint$. This directly yields $C^\aint\not\subseteq D^\aint$, and thus $\aint\not\models C\sqsubseteq D.$
	($\Leftarrow$) The proof for this direction is by dual arguments.
\end{proof}

\thmexdllRigidConUB*
\begin{proof}
	We consider \DLLitekrom (see Corollary~\ref{cor:dlltcqs-krombool-is-the-same}) and use Lemma~\ref{lem:tcq-sat-iff} for checking satisfiability of \atcq w.r.t.\ \atkb.
  As in the proof of Theorem~\ref{thm:dlltcqs-krombool-cc:ub-w/o-rigid}, we can assume \atkb to be of the form $\langle\aont,\emptyset\rangle$, since integrating the ABoxes into the TCQ does not influence combined complexity.
  This means that $\iota$ is irrelevant.
  We can then guess a set~$\as\subseteq 2^{\pv}$ in exponential time and check t-satisfiability of \pa{\atcq} w.r.t.\ this set in \ExpTime~\cite{BaBL-JWS15}.

  For r-satisfiability, we adapt a technique from~\cite{BaBL-JWS15,BaGL-TOCL12}.
	We guess a set $\rigctypes\subseteq2^{\NRC(\aont)}$, which specifies the combinations of rigid concept names that are allowed to be satisfied by domain elements in the models of the conjunctions~$\chi_i$, and a mapping $\tau\colon\NI(\atcq)\to\rigctypes$ that fixes the rigid concepts for each individual occurring in~\atkb---similar to the rigid ABox types we considered in previous sections.
	Based on $\tau$, we define a polynomial-sized ontology~$\aont_\tau$ and CQ~$\acqtwo_\tau$ as follows:
	\begin{align*}
	\aont_\tau &:= \{A_{\tau(a)}\equiv C_{\tau(a)} \mid a\in\NI(\atcq)\} 
	\cup\{  \top\sqsubseteq A\sqcup\overline{A},\ A\sqcap\overline{A}\sqsubseteq\bot \mid A\in\NRC(\aont) \},\\
	\acqtwo_\tau &:= \bigwedge_{a\in\NI(\atcq)}A_{\tau(a)}(a)
	\end{align*}
	where 
	$\equiv$ is an abbreviation for both $\sqsubseteq$ and $\sqsupseteq$, and
	$C_\rigctype$ with $\rigctype\subseteq\NRC(\aont)$ is defined as
	$C_\rigctype := \bigsqcap_{A\in \rigctype}A\sqcap
	\bigsqcap_{A\in\NRC(\aont)\setminus \rigctype}\overline{A}$.
	We further say that an interpretation~$\Jmc=(\adom[\Jmc],\cdot^\Jmc)$ \emph{respects~\rigctypes} if
	\[ \rigctypes=\{\rigctype\subseteq\NRC(\aont) \mid
	\text{$
		(C_\rigctype)^\Jmc\neq\emptyset$}\}. \] 
	In \cite[\citelem6.2]{BaBL-JWS15}, it is shown that \as is r-satisfiable w.r.t.~\atkb iff there are a set \rigctypes and mapping $\tau$ as above such that each conjunction
	$\chi_i\land\acqtwo_\tau$ with $i\in[1,k]$ has a model w.r.t.\ $\aont\cup\aont_\tau$ that
	respects~\rigctypes.
	The proof considers the DL \SHQ, but similarly holds for \DLLitekrom.
	
	Although it seems that the \NExpTime result now directly follows from Lemma~\ref{lem:dll-cq-literal-conj-sat} stating that satisfiability of conjunctions of CQ literals can be decided in exponential time, this is not the case. The restriction that \rigctypes should be respected causes an exponential blowup.
	For that reason, we consider the proof of~\cite[\citethm8]{BoMP-IJCAI13}, which addresses UCQ entailment, in more detail.
	In that paper, an exponentially large looping tree automaton is constructed that recognizes exactly those (forest-shaped) canonical models of the KB---in a wider sense---that do not satisfy the given UCQ.
	We integrate the check that the interpretations respect \rigctypes into the automaton.
	To this end, we restrict the state set to consider only models where every
	domain element satisfies some $C_\rigctype$ with $\rigctype\in\rigctypes$.
	To ensure that each $\rigctype\in\rigctypes$ is represented somewhere in the model, we check $|\rigctypes|$ variants of this automaton for emptiness, each of which considers an ABox of the form $\{A(a)\mid A\in \rigctype\}\cup\{\overline{A}(a)\mid A\in\NRC(\aont)\setminus\rigctype\}$,
	where $a$ is a fresh individual name.
	The disjoint union of all resulting interpretations is still a model of
	the original KB that does not satisfy the UCQ.
	It can readily be checked that this modified procedure for deciding UCQ non-entailment is sound and complete given the result of~\cite[\citethm8]{BoMP-IJCAI13}.
	Satisfiability of the conjunctions $\chi_i\land\acqtwo_\tau$ can thus be decided in exponential time, because the constructed automata are of exponential size and emptiness of looping tree automata can be decided in polynomial time~\cite[\citethm2.2]{VaWo86:looptreeautomataptime}.
\end{proof}

\thmKromRigidLB*
\begin{proof}[Proof Sketch]
We adapt a reduction proposed in~\cite{BaGL-TOCL12} (see the proof of Theorem~4.1), where the word problem for exponentially space-bounded ATMs is reduced to the satisfiability problem in $\ALC$-LTL with global CIs and rigid names. 
While the assertions in the proposed $\ALC$-LTL formula can be directly viewed as conjuncts of a TCQ, the global CIs cannot all be transferred into a \DLLitekrom ontology since \ALC is much more expressive than \DLLitekrom. However, we show how some of the critical CIs can be adapted to comply with the shapes given in Table~\ref{tab:dlltcqs-krom-transf}, and how the remaining ones---with qualified existential restrictions on the right-hand-side---can be replaced by equivalent new constructions. %
%
The latter are inspired by~\cite{KRH-TOCL2013:horndls} (see Section~6.2 in that paper).
	
	As in the original proof,
	we assume w.l.o.g.\ 
	that the ATM never moves to the left when it is on the left-most tape cell;
	that it has an accepting state~$q_a$ and a rejecting state~$q_r$, designating accepting and rejecting configurations, respectively;
	that any configuration where the state is neither $q_a$ nor $q_r$ has at least one successor configuration;	
and that all computations of the ATM are finite (see \cite[\citethm2.6]{alternation}).
	%
	We disregard transitions that do not move the head ($N$).
	Further, we \ctwo{assume for simplicity} that the length of every computation on a word
	$w\in\Sigma^k$ is bounded by $2^{2^k}$, and that every configuration in such a
	computation can be represented using $\le 2^k$ symbols, plus one to represent
	the state.%
	\footnote{\ctwo{Strictly speaking, the space bound is $2^{p(k)}$ for a polynomial~$p$, but omitting~$p$ does not affect the proof.}}
	
	According to \cite[\citecor3.5]{alternation}, there is an exponentially
	space-bounded alternating TM $\atm= (Q,\Sigma, \Gamma, q_0, \Delta)$
	whose word problem is \TwoExpTime-hard;
	\ctwo{$Q$ represents the set of \emph{states}, which is partitioned into the sets $Q_\exists$ of \emph{existential states} and $Q_\forall$ of \emph{universal states}; 
	$\Sigma$ is the \emph{input alphabet};
	$\Gamma$ is the \emph{work alphabet} of the TM, containing the \emph{blank symbol}~$B$ and all symbols from $\Sigma$;
	$q_0$ is the \emph{initial state};
	and $\Delta$ denotes the \emph{transition relation}.}
	We show that this problem can be reduced
	to TCQ satisfiability
	in \DLLitekrom with rigid role names.
	
	To this end, let $w = \sigma_0 \dots\sigma_{k-1} \in\Sigma^*$ be an arbitrary 
	input word given to \atm. We next construct a TCQ $\atcq_{\atm,w}$ and a TKB 
	$\langle\aont_{\atm,w},(\Amc_0)\rangle$ in \DLLitekrom 
	such that \atm accepts $w$ iff $\phi_{\atm,w}$  is satisfiable w.r.t.\  
	$\langle\aont_{\atm,w},(\Amc_0)\rangle$.
	We use two counters modulo $2^k$, $A$ and $A'$.
	We can consider a tree describing the computations of the ATM in that one path describes one computation. The individual configurations are represented explicitly, one after the other, and each as a chain, such that every tree node represents one of the $2^k$ tape cells of a configuration; these cells are numbered by the \emph{rigid} counter $A$.
	Each tree node or cell is represented by an individual in the reduction and, since these individuals are related by rigid roles, the computation tree ``exists'' at all time points; the time points are numbered by the $A'$ counter.
	Branching models the universal transitions. Different from usual computation trees, the tree however splits at the node representing the cell under the head of the machine; the remaining parts of the configuration where the splitting occurs are replicated in each of the subtrees.
	Before 
	specifying the TCQ and ontology, we introduce all symbols we use below:
	\begin{itemize}
		\item A single named individual $a$ identifies the root of the tree.
		\item Rigid role names $R_{q,\varrho,M}$ where $q\in Q$, 
		$\varrho\in\Gamma$, and $M\in\{L ,R \}$ represent the edges of the tree.
		We collect all these role names in the set~\Rmc.
		
		Note that these roles represent the major difference to the reduction of
		\cite{BaGL-TOCL12}, where a single rigid role fulfills this purpose, but is
		used within qualified existential restrictions on the right-hand side of CIs.
				\item Rigid concept names $A_0,\ldots, A_{k-1}$ are used to model the bits of
				a binary counter numbering the tape cells in the configurations.
			
				\item Rigid concept names $I$ and $H$ point out special cells. In particular,
				$I$ is satisfied by the nodes representing the initial configuration, and $H$
				is satisfied by all nodes representing a tape cell that is located (anywhere)
				to the right of the head in the current configuration.
				
		\item 
		A rigid concept name, for each element in $Q\cup\Gamma$, represents
		the tape content, the current state, and the head position in each 
		configuration in the tree: if \atm is in a state $q$ and the head is 
		on the $i$-th tape cell, then the individual (tree node) representing this 
		cell satisfies the concept name $q$; we correspondingly represent the symbols in 
		$\Gamma$.
						\item The rigid concept names $T_{q,\varrho,M}$, for all $q\in Q$,
						$\varrho\in\Gamma$, and $M\in\{L ,R \}$, are satisfied by an individual,
						representing a cell, if the head is on the left neighboring cell and the ATM
						executes the transition $(q,\sigma,M)$ in the described configuration.
						
	\end{itemize}
	We use the temporal dimension to synchronize successor configurations in
	accordance with the chosen transition in order to model the change in the tape
	contents, the head position, and the state from one configuration to the next:
	\begin{itemize}
		\item 
		Flexible concept names $A'_0,\ldots, A'_{k-1}$ are used to model a counter in
		the temporal dimension.
		Its value is incremented (modulo 
		$2^{k}$) similar to the counter $A_0,\ldots, A_{k-1}$ but along the temporal dimension and, at every time point, all individuals of the domain share the value of this counter.
		It is used for the synchronization of successor configurations: if the $A'$ 
		counter has value $i$, then
		the symbol in the $i$-th tape cell of any configuration (where $i$ is not the 
		head position) is 
		propagated to the $i$-th tape cell of its successor configuration.
		Similarly, the state is propagated from the cells~$c$ directly right of the
		head position,
		each pointing out a specific transition (via the symbols $T_{q,\varrho,M}$),
		to the corresponding cells of the successor configurations (i.e., these cells 
		have the same position on the tape as~$c$ for right-moves and otherwise lie 
		two to the left).
		\item We further use a flexible concept name, for each element in
		$Q\cup\Gamma$, which as above is distinguished from the rigid version by a prime.
		Considering a fixed time point, these names are used for the propagation 
		of the state $q$ or cell content $\sigma$ of a cell $c$ to the corresponding
		cell in the successor configuration(s).
		This propagation happens via the right neighboring cells of that configuration,
		which then satisfy $q'$ and $\sigma'$, respectively, at the time point whose
		$A'$-counter corresponds to the $A$-counter at~$c$.
	\end{itemize}
	We may further use concept names of the form $\overline{A}$ for given concept
	names $A$ as detailed in Lemma~\ref{lem:krom-transf}.

	In the remainder of the proof, we define the TCQ $\atcq_{\atm,w}$ and the TKB
	$\langle\aont_{\atm,w},(\Amc_0)\rangle$ by describing the conjuncts of
	$\atcq_{\atm,w}$ and listing the CIs contained in $\aont_{\atm,w}$.
	To enhance readability, we may use CIs that are not in \DLLitekrom, but can be
	transformed as described in the beginning of this section (see Table~\ref{tab:dlltcqs-krom-transf} and Example~\ref{ex:exdlltcqs-gci-trans}).
	We first express the tree structure in general.
	
		We enforce all elements to have some successor except if they 
		satisfy $q_a$ or $q_r$.
		Since the only elements satisfying a symbol from $Q$ are the ones 
		representing the position of the head, the tree generation thus is only 
		stopped if we meet a halting configuration:
		$$\overline{q_a}\sqcap\overline{q_r}\sqsubseteq
		 \bigsqcup_{R_\delta\in\Rmc}\exists R_\delta.$$
		Using a big disjunction over all possible roles, we can correctly represent the nondeterminism of the machine.
		
		The $A$-counter is incremented alongside the tree modulo $2^k$ and modeled using the following CIs for all $i\in[0, k-1]$:
		\begin{align*}
		\bigsqcap_{0\le j\le i}A_j\sqsubseteq& \bigsqcap_{R_\delta\in\Rmc}\forall R_\delta.\overline{A}_i,\\
		\bigsqcap_{0\le j< i}A_j\sqcap\overline{A}_i\sqsubseteq&\bigsqcap_{R_\delta\in\Rmc}\forall R_\delta.{A}_i,\\
		\Big(\bigsqcup_{0\le j<i}\overline{A}_j\Big)
		\sqcap{A}_i\sqsubseteq& \bigsqcap_{R_\delta\in\Rmc}\forall R_\delta.{A}_i,\\
		\Big(\bigsqcup_{0\le j<i}\overline{A}_j\Big)
		\sqcap\overline{A}_i\sqsubseteq&\bigsqcap_{R_\delta\in\Rmc}\forall 
		R_\delta.\overline{A}_i.
		\end{align*}
		For example, if the bits $A_0$, \dots, $A_i$ are all true in the current tape
		cell, then in the successor cell these bits are all false.
		
		We thus have described a sequence of configurations where we can address
		single tape cells in all the configurations using the A-counter. The latter restarts every time it has reached $2^k-1$, and thus with each new configuration.

		The counter is initialized with value $0$ at $a$. Hence, all elements representing the first tape cell in
		some configuration in the tree satisfy the auxiliary concept name $\C{A=0}$,
		defined as follows:
		\[ \C{A=0}\equiv\overline{A}_0\sqcap\ldots\sqcap \overline{A}_{k-1}. \]
		Below, we use additional concept names of the form~$\C{A=i}$, for
		(polynomially many) different values~$i$, which we assume to be defined similarly. \ctwo{Moreover, we assume that $\C{A\neq i}$ is defined as the negation of $\C{A=i}$ as described in Lemma~\ref{lem:krom-transf}.}
		
		We further add the assertion
		\[ \C{A=0}(a) \]
		to $\Amc_0$. Since the names $\overline{A}_0,\ldots,\overline{A}_{k-1}$ are rigid, this assertion must be satisfied at every time point.

		We now enforce basic conditions which help to ensure that the tree actually 
		represents a successful computation of \atm on $w$.
%
		To formulate these conditions, we use the rigid concept name $H$ to identify
		the tape cells that are to the right of the head:
		\[ \Big(H\sqcup\bigsqcup_{q\in Q}q\Big)\sqcap\C{A\neq 2^k-1}
		\sqsubseteq \bigsqcap_{R_\delta\in\Rmc}\forall R_\delta.H. \]
		Thus, the propagation stops at tree levels whose elements represent the last cell in a configuration, since these elements satisfy $\C{A=2^k-1}.$
		
		
		There is only one head position per configuration:
		$$H\sqsubseteq\bigsqcap_{q\in Q}\overline{q}.$$
		Note that we do not have to consider the elements representing the cells left
		to the head since, if such a cell satisfies a concept name from $Q$, then all
		its successors in the tree are enforced to satisfy $H$.
		
		Each tape cell is associated with at most one state (which, at the same time,
		represents the position of the head):
		\[ \top\sqsubseteq\bigsqcap_{q_1,q_2\in Q,q_1\neq q_2}
		\overline{q_1}\sqcup\overline{q_2}. \]
		
		Each tape cell contains exactly one symbol:
		\[ \top\sqsubseteq\bigsqcup_{\sigma\in\Gamma}\Big(\sigma\sqcap
		\bigsqcap_{\sigma'\in\Gamma\setminus\{\sigma\}}\overline{\sigma'}\big). \]
	
	Before specifying the remaining, more intricate conditions for the 
	synchronization of the configurations, we describe the first configuration 
	in the tree (starting at $a$) as the initial configuration.
	
		In particular, we mark the corresponding elements by
		adding the assertion $I(a)$ to $\Amc_0$ and by propagating the concept alongside the first configuration 
		as follows:
		$$I\sqcap\C{A\neq 2^k-1}\sqsubseteq \bigsqcap_{R_\delta\in\Rmc}\forall R_\delta.I.$$
		
		The first configuration is modeled by adding the assertion $q_0(a)$ to $\Amc_0$
		and by considering the following CIs for all $i\in[0,k-1]$:
		\begin{align*}
		I \sqcap \C{A=i}&\sqsubseteq \sigma_i,\\ 
		I \sqcap \C{A=k}&\sqsubseteq B,\\
		I \sqcap B\sqcap \C{A\neq 2^k-1}&\sqsubseteq 
		\bigsqcap_{R_\delta\in\Rmc}\forall R_\delta.B
		\end{align*}
		where $w=\sigma_0\dots\sigma_{k-1}$ is the input word.
	
	We finally come to the most involved part, the synchronization of the 
	configurations, which includes the modeling of the transitions.
	%
		
		We first introduce the $A'$-counter, which is incremented along the
		temporal dimension.
		For every possible value of this counter, there is a time point where $a$
		belongs to the concepts from the corresponding subset of
		$\{A'_0,\ldots,A'_{k-1}\}$. This is expressed using the following conjunct
		of~$\atcq_{\atm,w}$:
		\[ \Boxf\bigwedge_{0\le i<k}
		\Big(
		\Big(\bigwedge_{0\le j<i}A'_j(a)\Big)
		\leftrightarrow
		\big(A'_i(a)\leftrightarrow\Next\lnot A'_i(a)\big)
		\Big). \]
		This formula expresses that the $i$-th bit of the $A'$-counter is flipped
		from one world to the next iff all preceding bits are true. Thus, the
		value of the $A'$-counter at the next world is equal to the value at the
		current world incremented by one.
		
	Note that it is not necessary to initialize this counter to $0$ in $\Amc_0$; we only
		need to know that all possible counter values are represented at some time
		point.
		
		The value of the $A'$-counter is always shared by all individuals:
		$$\Boxf
		\Big(
		\bigwedge_{0\le i<k}\exists x.A'_i(x)\to\lnot\exists 
		x.\overline{A}'_i(x)
		\Big).$$
		
		For the application of the $A'$-counter, we introduce the abbreviation $\C{A=A'}$ describing the equality of the two counters:
		\begin{align*}
		\C{A_i=A+i'}
		&\equiv
		\left(A_i\sqcap A'_i\right)\sqcup
		\left(\overline{A}_i\sqcap \overline{A}'_i\right),\\
		\C{A=A'}&\equiv \bigsqcap_{0\le i<k}\C{A_i=A_i'}.
		\end{align*}
		Furthermore, we define similar abbreviations as follows (note that we consider addition modulo $2^k$):
		\begin{align*}
		\C{A=A'+1}\equiv {}&
		\bigsqcap_{0\le j< k}\left(A'_{j}\sqcap \overline{A}_{j}\right)
		\sqcup\\&\bigsqcup_{0\le i<k}\left(
		\bigsqcap_{0\le j< i}\left(A'_{j}\sqcap \overline{A}_{j}\right)
		\sqcap\overline{A}'_{i}\sqcap {A}_{i}
		\sqcap\bigsqcap_{i+1\le j<k}\C{A_j=A_j'}
		\right),\\
		\C{A=A'+2}\equiv{}&
		\C{A_0=A_0'}\sqcap\left(\bigsqcap_{1\le j<k}\left(A'_{j}\sqcap 
		\overline{A}_{j}\right)\sqcup
		\right.\\&\left.
		\bigsqcup_{1\le i<k}\left(
		\bigsqcap_{1\le j< i}\left(A'_{j}\sqcap \overline{A}_{j}\right)
		\sqcap\overline{A}'_{i}\sqcap {A}_{i}
		\sqcap\bigsqcap_{i+1\le j<k}\C{A_j=A_j'}
		\right)\right).
		\end{align*}
		We now can use the temporal dimension to propagate information from one level
		of the tree to the next one as outlined above,
		and hence specify the transitions.
		
		Symbols not under the head are copied:
		\begin{align*}
		\sigma\sqcap\bigsqcap_{q\in Q}\overline{q}\sqcap \C{A=A'}
		&\sqsubseteq \bigsqcap_{R_\delta\in\Rmc}\forall R_\delta.\sigma',\\
		\sigma'\sqcap\C{A\neq A'}
		&\sqsubseteq \bigsqcap_{R_\delta\in\Rmc}\forall R_\delta.\sigma',\\
		\sigma'\sqcap\C{A=A'}
		&\sqsubseteq \sigma.
		\end{align*}
		
		To describe the transitions,
		we explicitly store chosen transitions with the help of the rigid 
		concepts $T_{p,\varrho,M}$, by enforcing them to be satisfied by the elements 
		representing the cells directly right-neighbored to the head position.
		Recall that there may be several such cells; we are now at the point where we specify the branching of the tree.
		Hence, we model the transitions for all $q\in Q$ and $\sigma\in\Gamma$
		using the following CIs:
		\begin{align*}
		q\sqcap\sigma 
		&\sqsubseteq \bigsqcup_{\delta\in\Delta(q,\sigma)}\exists R_\delta,
		\quad\text{if $q\in Q_\exists$,} \\
		q\sqcap\sigma
		&\sqsubseteq \bigsqcap_{\delta\in\Delta(q,\sigma)}\exists R_\delta,
		\quad\text{if $q\in Q_\forall$,} \\
		q\sqcap\sigma
		&\sqsubseteq\bigsqcap_{\delta\in\Delta(q,\sigma)}\forall R_\delta.T_\delta.
		\end{align*}
		Observe that our main adaptation of the proof in~\cite{BaGL-TOCL12} is that we, instead of considering a single role $R$, deal with all those in $\Rmc$ and, instead of considering one $R$-successor per successor configuration $\delta$, consider an $R_\delta$-successor. This enables us to simulate the qualified existential restriction of the form $\bigsqcap_{\delta\in\Delta(q,\sigma)}\exists R. T_\delta$ on the right-hand side of a CI in the original proof, via the last two of the above CIs.
		
		The (possible) replacement of the symbols under the head is described with 
		the help of the transition concepts $T_{q,\sigma,M}$ for all $q\in Q,$ 
		$\sigma\in\Gamma,$ and $M\in\{L,R\}$:
		$$T_{q,\sigma,M}\sqcap \C{A=A'+1}\sqsubseteq
		\bigsqcap_{R_\delta\in\Rmc}\forall R_\delta.\sigma'. $$
		Recall that the transition concepts are only enforced to hold at the cell to
		the right of the current head position (hence the $+1$).
		
		The state information is similarly propagated for all $q\in Q$ and
		$\sigma\in\Gamma$ as follows:
		\begin{align*}
		T_{q,\sigma,R}\sqcap \C{A=A'}
		&\sqsubseteq \bigsqcap_{R_\delta\in\Rmc}\forall R_\delta. q', \\
		T_{q,\sigma,L}\sqcap \C{A=A'+2}
		&\sqsubseteq \bigsqcap_{R_\delta\in\Rmc}\forall R_\delta. q', \\
		q'\sqcap\C{A\neq A'}
		&\sqsubseteq \bigsqcap_{R_\delta\in\Rmc}\forall R_\delta.q',\\
		q'\sqcap\C{A=A'}
		&\sqsubseteq q.
		\end{align*}
	
	We lastly enforce the computation to be an accepting one by disallowing the state $q_r$ entirely using the CI $q_r\sqsubseteq\bot.$
	Note that this is correct since we assume all the computations of \atm to be terminating. 
	This finishes the definition of the Boolean TCQ~$\atcq_{\atm,w}$ and the global 
	ontology~$\aont_{\atm,w}$, which consist of the conjuncts and CIs specified above.
	We further collect all assertions in the ABox $\Amc_0$.
	Given our descriptions above, it is easy to see that the size of $\atcq_{\atm,w}$, $\aont_{\atm,w},$ and $\Amc_0$ is
	polynomial in~$k$. Moreover, it can readily be checked that our constructions are equivalent to those in the proof of \cite[\citethm4.1]{BaGL-TOCL12}. 
	Hence, $\atcq_{\atm,w}$ is satisfiable w.r.t.\ 
	$\langle\aont_{\atm,w},(\Amc_0)\rangle$ iff \atm accepts~$w$.
\end{proof}

\LemExdlltqcsRSat*
\begin{proof}
	($\Rightarrow$)
	Let $\Jmc_0,\dots,\Jmc_{n+k}$ be the interpretations over the common
	domain~$\Delta$ that exist by the r-satisfiability of~\as w.r.t.~$\iota$.
	As in the proof of Lemma~\ref{lem:dlltcqs-iff-s-r-consistent}, we can assume
	that they interpret the individual names in~\NIA in such a way that
	each~$\Jmc_i$ satisfies~$\Amc_{\Q_{\iota(i)}}$.
	We can thus already fix the ABox types
	$\aboxtype{i}:=\{\alpha\mid\Jmc_i\models\alpha\}$ for all $i\in[0,n+k]$, where
	$\alpha$ ranges over all assertions formulated over $\NI(\atkb)\cup\NIA$,
	$\NC(\Omc)$, and $\NR(\Omc)$ (see Definition~\ref{def:exdlltcqs-aboxtype}).
	To find the types~\typeai{a}{i}, we first unravel the interpretations~$\Jmc_i$
	into tree-shaped models~$\Jmc^\ast_i$.
	However, in contrast to classical (atemporal) unraveling techniques, we need
	to construct a common domain for all time points, and hence need to unravel the
	interpretations $\Jmc_0,\dots,\Jmc_{n+k}$ simultaneously.

	For this purpose, we iteratively extend the interpretations in a sequence of interpretations
	$\Jmc^j_0,\dots,\Jmc^j_{n+k}$ over a common domain~$\Delta^j$, for
	increasing $j\ge 0$.
	At each step~$j$ of this construction, we also maintain a function
	$g\colon\Delta^j\to\Delta$ that maps the domain elements of the tree-shaped
	interpretations to the ``original'' domain elements from~$\Delta$, such that, \ctwo{for all $d\in\Delta^j$ and $i\in[0,n+k]$, $d$ satisfies the same basic concepts in~$\Jmc_i^j$ as $g(d)$ does in the original interpretation~$\Jmc_i$}.
	We start with the domain $\Delta^0:=\{u_a\mid a\in\NI(\atkb)\cup\NIA\}$, the
	function~$g$ that maps each~$u_a$ to~$a^{\Jmc_0}$, and the interpretations
	$\Jmc^0_i$, $i\in[0,n+k]$, defined such that $a^{\Jmc^0_i}=u_a$ for all
	$a\in\NI(\atkb)\cup\NIA$, and otherwise uniquely determined
	by~$\aboxtype{i}$.
	However, the elements~$u_a$ may not yet satisfy all existential
	restrictions in~\aboxtype{i}.
	At each index $j> 0$ and for each domain element
	$u_\rho\in\Delta^j$ added in the previous step, 
	we therefore introduce a number of fresh
	(i)~flexible role successors~$u_{\rho R^i}$ for all $i\in[0,n+k]$ and
	$R\in\NFRM(\Omc)$, and (ii)~rigid role successors~$u_{\rho R}$ for
	$R\in\NRRM(\Omc)$
%
if $g(u_\rho)\in(\exists R)^{\Jmc_i}$. 
Then, there must also be a $d\in\Delta$ such that
	$(g(u_\rho),d)\in R^{\Jmc_i}$, and we can add $u_{\rho R^i}$ to $\Delta^{j+1}$ and
	set $g(u_{\rho R^i}):=d$. We also add the pair $(u_\rho,u_{\rho R^i})$ to all
	$S^{\Jmc^{j+1}_{i'}}$ for which we have $(g(u_\rho),d)\in S^{\Jmc_{i'}}$, for
	all $i'\in[0,n+k]$; hence, 
	$u_\rho\in(\exists R)^{\Jmc^{j+1}_i}$, meaning that the existential restriction
	at~$u_\rho$ is satisfied.
	We further interpret the basic concepts on~$u_{\rho R^i}$ in~$\Jmc^{j+1}_{i'}$
	as on~$d$ in~$\Jmc_{i'}$ (pending the introduction of role successors for
	$u_{\rho R^i}$).
	For all rigid roles~$R\in\NRRM(\Omc)$ (case~(ii)), we proceed in the same way,
	but only choose one successor~$u_{\rho R}$ for all time points.
	The unraveled interpretations~$\Jmc^\ast_i$, $i\in[0,n+k]$, over the domain
	$\Delta^\ast$ are obtained as the limit of this construction.
	It is easy to show that these interpretations still satisfy all the properties
	required for Definition~\ref{def:tqa-r-sat}; in particular, none of the
	negative CQ literals in the conjunctions~$\chi_{\iota(i)}$ can become
	violated by unraveling.
	Moreover, by construction, each~$\Jmc^\ast_i$ still satisfies~$\aboxtype{i}$.
	We denote by~$\Delta^\ast|_a$ the \emph{subtree of~$\Delta^\ast$ starting
	at~$u_a$}, \ie the set of all domain elements of the form~$u_{a\rho}$, and
	by~$\Jmc^\ast_i|_a$ the interpretation~$\Jmc^\ast_i$ restricted to this
	subtree.

	We now extract the types
	$\typeai{a}{i}:=\Tmc(\Jmc^\ast_i|_a)$ of these
	subtrees, which are triples of the form $(\Bmc^a_i,\Qmc^a_i,\Mmc^a_i)$,  where
	\begin{itemize}
		\item $\Bmc^a_i$ contains exactly those (negated) basic concepts~\ctwo{$(\lnot)B$} for
		which $\Jmc^\ast_i|_a\models \ctwo{(\lnot)B}(a)$,
		\item $\Qmc^a_i$ contains all CQs $\acq\in\tcqcqs\atcq$ that are satisfied
		in~$\Jmc^\ast_i|_a$, and
		\item $\Mmc^a_i$ contains all $S\subseteq\NT(\acq)$ for which there is a
		partial homomorphism~$\pi$ of~$\acq$ into~$\Jmc^\ast_i|_a$ with
		$u_a\in\range(\pi)$ and $S=\domain(\pi)$.
	\end{itemize}
	We thus immediately obtain the temporal types
	$\ttypea{a}=\{\typeai{a}{i}\mid i\in[0,n+k]\}$ for all individual names
	$a\in\NI(\atkb)\cup\NIA$.

	The next step is to make the structure of the models independent of the time
	points $i\in[0,n+k]$, by grouping the time points according to the
	types~$\typeai{a}{i}$, of which there are constantly many (for a fixed~$a$).
	Hence, we define a mapping $f_a\colon\ttypea{a}\to[0,n+k]$ that arbitrarily
	chooses, for each $\Tmc\in\ttypea{a}$, one representative time point
	$f_a(\Tmc)$ such that $\Tmc=\typeai{a}{f_a(\Tmc)}$.
	We now replace each subtree $\Jmc^\ast_i|_a$ in~$\Jmc^\ast_i$ by the subtree
	$\Jmc^\ast_{f_a(\typeai{a}{i})}|_a$ that is representative for the
	type~$\typeai{a}{i}$; that is, we interpret all concept (role) names on (pairs
	of) elements from~$\Delta^\ast|_a$ in the same way as at time point
	$f_a(\typeai{a}{i})$.
	These changes are \enquote{local} to each individual name~$a$, \ie the other
	individual names~$b$ do not affect the subtrees below~$a$.
	We denote the resulting interpretations by~$\Jmc^\lozenge_i$, $i\in[0,n+k]$,
	which have the same domain $\Delta^\lozenge=\Delta^\ast$ as before.
	It is easy to show that the types of the time points remain the same as
	in~$\Jmc^\ast_i$, \ie we have $\Tmc(\Jmc^\lozenge_i|_a)=\Tmc(\Jmc^\ast_i|_a)$.
	Moreover, \aont and $\aboxtype{i}$ clearly remain satisfied at each time point
	$i\in[0,n+k]$, which implies that the positive CQ literals
	in~$\chi_{\iota(i)}$ are also satisfied.
	For a negative CQ literal~$\lnot\acq$ in $\chi_{\iota(i)}$, assume that it
	becomes satisfied when replacing~$\Jmc^\ast_i|_a$ with
	$\Jmc^\lozenge_i|_a=\Jmc^\ast_{f_a(\typeai{a}{i})}|_a$.
	Since $i$ and~$f_a(\typeai{a}{i})$ have the same type~$\typeai{a}{i}$, and in
	particular agree on the second component~$\Qmc^a_i$ of that type, the
	homomorphism that maps~$\acq$ into~$\Jmc^\lozenge_i$ must map some terms to
	elements outside of~$\Delta^\ast|_a$.
	However, this also means that the set of all terms of~$\acq$ that are mapped
	by~$\pi$ into~$\Delta^\ast|_a$ is included in the third component of the type,
	$\Mmc^a_i$.
	Thus, there is a similar partial homomorphism~$\pi'$ also
	into~$\Jmc^\ast_i|_a$, which can be merged with the remainder of~$\pi$ to
	obtain a homomorphism of~$\acq$ into~$\Jmc^\ast_i$, contradicting the fact
	that this interpretation satisfies~$\chi_{\iota(i)}$.

	We can now remove the domain elements $u_{a\rho R^i}$ (and all their
	successors) that refer to any time point~$i$ that is not contained in
	$\range(f_a)$, since those elements were introduced only to satisfy
	existential restrictions at~$i$, which are now replaced by those of time point
	$f_a(\typeai{a}{i})$.
	We further assume that all time points~$i$ occurring in the names of the
	remaining domain elements from~$\Delta^\lozenge$ are replaced by the
	corresponding types~$\typeai{a}{i}$, \eg $u_{aR^i}$ is replaced by
	$u_{aR^{\typeai{a}{i}}}$.
	Hence, our domain already has the form required for the tree ABoxes in
	Definition~\ref{def:tree-aboxes}.
	The only remaining obstacle to obtain models of~$\KR[i]$ that we can use
	for~\ref{exdll-rc-consistent} and~\ref{exdll-rc-negcqs} is the fact that
	different named elements~$a$ and~$b$ having the same temporal type
	$\ttypea{a}=\ttypea{b}$ may still have non-isomorphic
	subtrees~$\Jmc^\lozenge_i|_a$ and~$\Jmc^\lozenge_i|_b$ of unnamed successors.

	We follow a similar approach as above, and take a partial function
	$h\colon\TTypes\to\NI(\atkb)\cup\NIA$ that maps each temporal type
	$\tau\in\TTypes$ to an arbitrary individual name~$a$ with $\ttypea{a}=\tau$,
	if such an individual exists, and that is undefined on all other temporal types
	(\ie those that are not realized by any individual name in our
	interpretations).
	We now replace all subtrees $\Delta^\lozenge|_a$ by
	$\Delta^\lozenge|_{h(\ttypea{a})}[a]$, which denotes the set obtained from
	$\Delta^\lozenge|_{h(\ttypea{a})}$ by replacing the name~$h(\ttypea{a})$
	with~$a$, \ie each~$u_{h(\ttypea{a})\rho}$ becomes~$u_{a\rho}$.
	We denote the resulting domain by~$\Delta^\dagger$.
	Correspondingly, we interpret the concept and role names
	on~$\Delta^\dagger|_a=\Delta^\lozenge|_{h(\ttypea{a})}[a]$
	in corresponding interpretations~$\Jmc^\dagger_i|_a$ as on $\Delta^\lozenge|_{h(\ttypea{a})}$
	in~$\Jmc^\lozenge_j|_{h(\ttypea{a})}$, where
	$j=f_{h(\ttypea{a})}(\typeai{a}{i})$ 
	is the time point corresponding to
	$\typeai{a}{i}$ in the subtree belonging to the individual~$h(\ttypea{a})$;
	this is well-defined since $a$ and~$h(\ttypea{a})$ have the same temporal
	type.
	For example, we have $(u_a,u_{aR^\Tmc})\in S^{\Jmc^\dagger_i}$ iff
	$(u_{h(\ttypea{a})},u_{h(\ttypea{a})R^\Tmc})\in S^{\Jmc^\lozenge_j}$ etc.
	We first show that the types remain the same, \ie we have $\Tmc(\Jmc^\dagger_i|_a)=\Tmc(\Jmc^\lozenge_i|_a)$ for all $a\in\NI(\atkb)\cup\NIA$ and $i\in[0,n+k]$.
	Since $\Jmc^\dagger_i|_a=\Jmc^\lozenge_j|_{h(\ttypea{a})}[a]$, where $j$ is
	defined as above, and
	$\Tmc(\Jmc^\lozenge_j|_{h(\ttypea{a})})=\Tmc(\Jmc^\lozenge_i|_a)$, it is clear
	that the basic type remains the same.
	Moreover, any (partial) homomorphism~$\pi$ of some $\acq\in\tcqcqs\atcq$ into
	$\Jmc^\lozenge_j|_{h(\ttypea{a})}$ that does not include the individual
	name~$h(\ttypea{a})$ in its domain is clearly valid also in~$\Jmc^\dagger_i$
	(after renaming), and vice versa.
	But if~$\pi$ does refer to~$h(\ttypea{a})$ (or~$a$), then it follows from
	$\Tmc(\Jmc^\lozenge_j|_{h(\ttypea{a})})=\Tmc(\Jmc^\lozenge_i|_a)$ that
	$a=h(\ttypea{a})$, since the individual name must be mentioned explicitly in
	either the second or third component of that type.
	In this case, it follows that $\Jmc^\dagger_i|_a=\Jmc^\lozenge_i|_a$, and the
	claim trivially holds.
	It is easy to show that the interpretations~$\Jmc^\dagger_i$ still
	satisfy~\Omc and~$\aboxtype{i}$.
	The fact that $\Jmc^\dagger_i\models\chi_{\iota(i)}$ follows as for
	$\Jmc^\lozenge_i$ from what we have shown above, namely that the types remain
	the same.

	Our interpretations now have the following properties.
	For each temporal type $\tau\in\TTypes$ that is realized in
	$\Jmc^\dagger_0,\dots,\Jmc^\dagger_{n+k}$, there is a representative
	individual name~$a=h(\tau)$ such that $\ttypea{a}=\tau$ and all other
	individual names~$b$ with the same temporal types have
	subtrees~$\Delta^\dagger|_b$ that behave in the same way as
	$\Delta^\dagger|_a$ (after reshuffling the time points to match the types,
	using~$f_a$).
	Similarly, for each type $\Tmc\in\tau$, there is a representative time point
	$i=f_a(\Tmc)$ such that $\typeai{a}{i}=\Tmc$ and all other time points~$j$
	with the same type have subtrees~$\Jmc^\dagger_j|_a$ that are isomorphic
	to~$\Jmc^\dagger_i|_a$.
	Hence, we can define the complete tree ABoxes
	$(\tabox{\tau}{\Tmc})_{\Tmc\in\tau}$ by introducing a new individual
	name~$u_\rho$ for each~$u_{a\rho}$ that appears in~$\Jmc^\dagger_i|_a$, and
	collecting all assertions about these individual names that hold
	in~$\Jmc^\dagger_i|_a$---until the termination condition~\ref{ta:bounded} of
	Definition~\ref{def:complete-tree-aboxes} is satisfied.
	It is easy to show that $(\tabox{\tau}{\Tmc})_{\Tmc\in\tau}$ are actually tree
	ABoxes for~$\tau$ according to Definition~\ref{def:tree-aboxes}.
	Moreover, we have constructed~$\Jmc^\dagger_i$, $i\in[0,n+k]$, in such a way
	that they satisfy~\aont, \aboxtype{i}, \tabox{\atuple}{i},
	and~$\chi_{\iota(i)}$, which means that~\ref{exdll-rc-consistent}
	and~\ref{exdll-rc-negcqs} are satisfied.

	($\Leftarrow$)
	Let \atuple be a tuple for which~\ref{exdll-rc-consistent}
	and~\ref{exdll-rc-negcqs} are satisfied.
	We iteratively construct tree-shaped interpretations $\Jmc_0,\dots,\Jmc_{n+k}$
	to satisfy Definition~\ref{def:tqa-r-sat}.
	To start, we define these interpretations up to the point where they are
	uniquely determined by the ABox types~\aboxtype{i} and the renamed complete
	tree ABoxes in~\tabox{\atuple}{i}.
	It remains to extend these interpretations to actual (possibly infinite)
	models of~\Omc, $\Amc_i$, and~$\chi_{\iota(i)}$.
	The only thing we have to be careful about in this process is to not
	accidentally satisfy some CQ that occurs negatively in~$\chi_{\iota(i)}$.
	Hence, consider a path from a root~$a$ to a leaf~$u_{a\rho}$, in these tree
	ABoxes, for which $u_{a\rho}$ is lacking some necessary role successors.
	By~\ref{ta:bounded}, we know that there exist an ancestor~$u_{a\rho'}$
	of~$u_{a\rho}$ such that $\rho=\rho'\sigma$ with $|\sigma|=d$, and an
	ancestor~$u_{a\rho''}$ of~$u_{a\rho'}$ such that the subtree of depth
	$d=\max\{|\NT(\acqalpha)|\mid\acqalpha\in\tcqcqs\atcq\}$ below $u_{a\rho''}$
	is isomorphic to the subtree of depth~$d$ below~$u_{a\rho'}$ (in case that
	$\rho''=\eps$, the element $u_{a\rho''}$ is actually equal to~$a$).
	We now extend the subtree below~$u_{a\rho'}$ by a copy of the subtree
	below~$u_{a\rho''}$, which in particular introduces the required role
	successors of~$u_{a\rho}$.
	Note that no existing assertions are replaced by this operation, due to the
	requirements that the subtrees are isomorphic up to depth~$d$, and the fact that 
	$u_{a\rho'}$ does not contain any successors beyond depth~$d$,
	by~\ref{ta:bounded}.
	We can continue this process indefinitely to obtain the desired models.
	Moreover, a CQ~$\acq$ that occurs negatively in~$\chi_{\iota(i)}$ can never
	become satisfied by the copied domain elements, because it could only
	be mapped into a subtree of maximal depth~$v$, which, by our construction,
	must have an isomorphic image in the original tree ABoxes
	in~\tabox{\atuple}{i}, which is a contradiction to~\ref{exdll-rc-negcqs}.
\end{proof}

\end{appendix}
\bibliographystyle{spmpsci}
\bibliography{ref}

\end{document}